%% file: arxiv.tex
\documentclass[acmsmall,screen,nonacm]{acmart}
\usepackage[T1]{fontenc}
\usepackage{graphicx}
\usepackage{braket}
\usepackage{amsmath}
\allowdisplaybreaks
\usepackage{physics}
\usepackage{nicefrac}
\usepackage{hyperref}
\usepackage{stmaryrd}
\usepackage{color}
\usepackage{todonotes}
\usepackage{booktabs}
\usepackage[most]{tcolorbox}
\usepackage{enumitem}
\usepackage{bm}
\usepackage{mathtools}

\usepackage{attachfile2}
\usepackage{embedfile}
\usepackage{fontawesome5} 

\usepackage{thmtools}
\usetikzlibrary{automata, arrows.meta,positioning,shapes,shapes.geometric, shapes.multipart}
\begin{document}
\include{macro}
\title[Quantum Pre-expectations for Expected Runtime]{Quantum Weakest Preconditions Revisited: Pre-expectations for Expected Runtime Analysis}
\author{Christina Gehnen}
\orcid{0000-0002-6548-3432}
\affiliation{ \institution{RWTH Aachen University} \country{Germany}}
\email{christina.gehnen@cs.rwth-aachen.de}

\author{Dominique Unruh}
\orcid{0000-0001-8965-1931}
\affiliation{ \institution{RWTH Aachen University} \country{Germany}}
\email{unruh@cs.rwth-aachen.de}

\author{Joost-Pieter Katoen}
\orcid{0000-0002-6143-1926}
\affiliation{ \institution{RWTH Aachen University} \country{Germany}}
\email{katoen@cs.rwth-aachen.de}

\begin{abstract}
Quantum weakest preconditions are a fundamental tool for program verification of quantum programs. Many variations have been reported in the literature. We revisit quantum weakest preconditions from the perspective of expected runtime analysis of quantum programs and introduce a novel pre-expectation framework that enables to reason about the preconditions of quantum programs without the need of an upper bound. This is particularly interesting for quantum programs involving reward statements. The overall goal is to analyze runtime behavior even in the case of programs with potentially infinite expected runtime. This paper presents several ways to do so, e.g., a program transformation such that the expected runtime of a quantum program can be expressed using the weakest pre-expectation calculus with rewards.

\keywords{Quantum programming \and Quantum verification \and Unbounded operators \and Expected runtime}
\end{abstract}
\maketitle

\input{1_introduction}
\input{2_preliminaries}
\input{3_expectations.tex}

\input{4_programs.tex}
\input{5_wp.tex}
\input{6_ert.tex}

\input{7_examples.tex}
\input{8_conclusion.tex}

\begin{acks}
    This work is supported by the European Union's Horizon 2020 research and innovation programme under the Marie Skłodowska-Curie grant agreement \href{https://cordis.europa.eu/project/id/101008233}{No. 101008233} (MISSION), and the Interdisciplinary Doctoral Program in Quantum Systems Integration funded by the BMW group.
\end{acks}

\bibliographystyle{splncs04}
\bibliography{references}
\appendix

\input{appendix.tex}
\end{document}

%% file: macro.tex
\newcommand*{\density}{\densitygen{\hilbert} }
\newcommand*{\densitygen}[1]{\mathcal{D}(#1)}
\newcommand*{\traceclass}{\mathcal{T}(\hilbert)}
\newcommand*{\domain}[1]{\mathfrak{D}(#1)}
\newcommand*{\expect}[2]{\mathbb{E}_{#1}\left(#2\right)}
\newcommand*{\loewner}{\sqsubseteq}

\newcommand*{\hilbert}{\mathcal{H}}

\newcommand{\programs}{Prog(\hilbert)}

\newcommand*{\predicate}{\mathcal{E}(\hilbert)}
\newcommand*{\formPredicate}{\mathcal{F}(\hilbert)}

\newcommand*{\hoare}[3]{\{#1\} #2 \{#3\}}

\newcommand*{\semantics}[1]{\llbracket #1 \rrbracket}
\newcommand{\semanticsOrig}[1]{\semantics{#1}_{\text{orig}}}
\newcommand*{\semanticsR}[1]{\semantics{#1}_{R}}
\newcommand*{\semanticsD}[1]{\semantics{#1}_D}

\newcommand*{\R}{\mathbb{R}}
\newcommand*{\C}{\mathbb{C}}
\newcommand*{\extR}{\R_{\geq 0}^\infty}

\newcommand*{\skipbf}{\textbf{skip}}
\newcommand*{\qzero}{q:=0}
\newcommand*{\Uq}{\overline{q}:= U\overline{q}}
\newcommand*{\measure}{\textbf{measure } M[\overline{q}]:\overline{S}}
\newcommand*{\measurePrime}{\textbf{measure } M[\overline{q}]:\overline{S'}}
\newcommand*{\concat}{S_1;S_2}
\newcommand*{\while}{\textbf{while } M[\overline{q}]=1 \textbf{ do } S}
\newcommand*{\reward}[1]{\textbf{reward} (#1)}
\newcommand*{\rewardR}{\reward{ c}}

\newcommand*{\qwp}[2]{wp\llbracket #1 \rrbracket (#2)}
\newcommand*{\qwlp}[2]{wlp\llbracket #1 \rrbracket (#2)}
\newcommand*{\ert}[2]{ert\llbracket #1 \rrbracket (#2)}
\newcommand*{\qwpForm}[2]{wp_f\llbracket #1 \rrbracket (#2)}
\newcommand*{\ERT}[2]{ERT\llbracket #1 \rrbracket (#2)}
\newcommand*{\expReward}[1]{\llbracket #1 \rrbracket_R}

\newcommand*{\intType}{Int}
\newcommand*{\boolType}{Bool}

\newcommand*{\transform}[1]{T({#1})}

\newcommand*{\df}{:=}

\newcommand*{\identityOp}{\textbf{I}}
\newcommand*{\identityOpVar}[1]{\identityOp_{#1}}
\newcommand*{\zeroOp}{\textbf{0}}
\newcommand*{\zeroOpVar}[1]{\zeroOp_{#1}}
\newcommand*{\inftyOp}{\bm{\infty}}

\newcommand*{\half}[1]{#1^{\nicefrac{1}{2}}}
\newcommand*{\sumform}{\oplus}
\newcommand{\formsandwich}[2]{#1^\dagger \odot #2 \odot #1}

\tcbset{
  theoremstyle/.style={
    enhanced jigsaw,
    colback=white,           
    colframe=#1,   
    borderline west={3pt}{0pt}{#1},
    boxrule=0pt,
    sharp corners,
    left=10pt,
    right=10pt,
    top=6pt,
    bottom=6pt,
    before skip=10pt,
    after skip=10pt,
  }
}

\definecolor{rwthblue}{RGB}{0,84,159}
\definecolor{rwthyellow}{RGB}{255,237,0}
\definecolor{rwthgreen}{RGB}{87,171,39}
\definecolor{rwthorange}{RGB}{246,168,0}
\definecolor{rwthred}{RGB}{204,7,30}
\definecolor{rwthpetrol}{RGB}{0,97,101}
\definecolor{rwthbordeaux}{RGB}{161,16,53}
\definecolor{rwthlila}{RGB}{122,111,172}
\definecolor{rwthturquoise}{RGB}{0,152,161}
\definecolor{rwthmaigreen}{RGB}{189,205,0}
\definecolor{rwthmagenta}{RGB}{227,0,102}

\theoremstyle{remark}
\newtheorem{remark}{Remark}

%% file: 1_introduction.tex
\section{Introduction}
\label{sec:introduction}
Quantum programming is an emerging field that combines the principles of quantum mechanics with computer science to develop algorithms and software for quantum computers. As quantum computing continues to advance, there is a growing need for rigorous methods to analyze and verify quantum programs. Besides of correctness, runtime analysis is an important factor in quantum algorithms, as many of them are designed to solve problems more efficiently than classical algorithms. One fundamental principle in this area is deductive verification, in particular the concept of weakest preconditions, which allows to reason about the behavior of quantum programs and their expected outcomes.
The notion of (quantitative) weakest preconditions in the quantum setting was first introduced by D'Hondt and Panangaden in \cite{DHondtWeakestPreconditions} and since then there have been many variations and extensions, but all of them require the weakest preconditions to be bounded. In this paper, we challenge this requirement and introduce a novel framework for reasoning about quantum weakest preconditions without the need for an upper bound. Our approach is motivated by the analysis of expected runtimes of quantum programs as in \cite{LiuRuntime} but we do not want to restrict ourselves to (1) finite dimensional Hilbert spaces and (2) almost surely terminating (AST) programs.
By allowing infinite-dimensions and unbounded operators, we can analyze a wider range of quantum programs and provide insights into their expected runtimes. An interesting phenomenon that arises in the infinite-dimensional context only is that there can be programs which are almost surely terminating but that still have an infinite expected runtime (an example can be found in Section \ref{sec:examples} and the equivalence for finite-dimensions in \cite[Thm. 2]{LiuRuntime}).
The expected runtime of a (non-AST) program cannot be expressed naively using weakest preconditions and a counter-variable as this small example illustrates: $q:= \ket{0}; \textbf{while } \identityOp[q] \textbf{ do }\{ \skipbf \}$. This program has an expected runtime of $\infty$ but the weakest precondition of the loop is $\zeroOp$ (as it never terminates) and thus of the whole program as well. This shows the need for a more general framework to analyze expected runtimes of quantum programs that are not almost surely terminating.

A more interesting example is the following variation of the quantum walk which cannot be analyzed using existing work like \cite{LiuRuntime, olmedoRuntime}. We have a quantum integer variable $q$ which tracks the current position and a qubit $c$ which is used to control the speed of the walk.
\begin{align*}
    & c:= \ket{0}; \\
    & \textbf{while } M[q] \textbf{ do }\{ \\
    & \hspace*{1cm} c := H c; \\
    & \hspace*{1cm} q c := S q c; \\
    & \}
\end{align*}
with $M_0 = \ket{0}\bra{0}$, $M_1 = \identityOp - \ket{0}\bra{0}$ and the (controlled) shift operator $S\ket{n}\ket{0} = \ket{n-1}\ket{0}, S\ket{n}\ket{1}  =\ket{n}\ket{1}$. In comparison to the quantum walk, the shift operator $S$ does not shift in both directions but instead either shifts to the left (decrease the value of $q$) or stays in the same position depending on the value of $c$. The program terminates as soon as $q$ reaches $\ket{0}$.
For states $\ket{n}$ with $n\geq 0$ this program is almost surely terminating, as it will eventually reach the state $\ket{0}$ with probability 1. In Section \ref{sec:examples} we will see that the expected runtime for $\ket{n}$ is finite for all $n\geq 0$. In contrast to the probabilistic setting, where this implies that the expected runtime is finite for all possible states, we have to be more careful here: for superpositions the expected runtime can still be infinite. This shows that we cannot carry over all methods and results from the probabilistic setting to the quantum setting and that we need to generalize the framework to analyze expected runtimes of quantum programs like this.

As we want to provide a comprehensive and expressive framework for analyzing the expected runtime of quantum programs, we do this in terms of weakest pre-expectations using reward statements. This allows not only to express expected runtimes but also to reason about expected values of other observables.

Our main contributions can be summarized as follows:
\begin{itemize}
    \item We define semantic properties for quantum programs with rewards in infinite-dimensional spaces and give a concrete syntax and denotational semantics for qrWhile programs with rewards.
    \item We define weakest pre-expectations both semantically and syntactically for qrWhile programs. We drop the boundedness condition of predicates which was required in previous work and thus, our expectations are strictly more expressive.
    \item As we consider unbounded expectations, we can analyze expected runtimes without restricting to almost surely terminating programs. We show different ways to compute the expected runtime using syntactical transformations and weakest pre-expectations, giving a concrete forward and backward reasoning approach and show that all of them coincide.
    \item We give some simple rules to bound both the weakest preconditions and the expected runtimes of loops that apply in the bounded case as well.
    \item We consider example programs and analyze their expected runtime and termination probability using the tools we developed which would not have been possible using existing work.
\end{itemize}
We allow infinite-dimensional Hilbert spaces to enable the analysis of a wider range of quantum programs, e.g., we also allow quantum integers and not only qubits. As mentioned before, this is particularly interesting for expected runtime analysis, as there are programs which are almost surely terminating but have an infinite expected runtime.

\subsection*{Challenges}
Extending the previous work to unbounded operators is not straightforward and requires several technical challenges to be overcome. One of the most important differences between bounded and unbounded operators is that the latter may not be defined on the entire Hilbert space, i.e., one has to deal with states outside of the domain.
Properties like Hermitian or self-adjoint are equivalent for bounded operators and are used as observables / predicates. However, in the unbounded case, these properties do not coincide and one has to carefully choose the right operator class to ensure desired properties like convergence and expected values. The biggest technical challenge is to find a set of operators that is expressive enough, in the best case includes all bounded predicates and is $\omega$-complete together with a natural order. The last property is needed to express the weakest pre-expectation of a while-loop. This set of operators should be closed under the weakest pre-expectation of all program statements, which is not trivial to show, as one needs, e.g., a special kind of addition to ensure that self-adjointness is preserved. Additionally, those operators should induce a natural expected value which satisfies our requirements, i.e., its value should be in $\extR$.

\subsection*{Related Work}
Weakest preconditions for classical programs were first developed by Dijkstra \cite{Dijkstra76,Dijkstra75}, then extended to probabilistic programs by \cite{KOZEN1985162,McIverWpProb} and later to quantum programs by \cite{DHondtWeakestPreconditions}. As mentioned above, D'Hondt and Panangaden \cite{DHondtWeakestPreconditions} considered quantitative predicates bounded by $\identityOp$. 
Ying \cite{floydHoareLogic} extended this approach to capture partial correctness and gave an explicit representation of the predicate transformer for the quantum qWhile language.
Alternatively, a qualitative approach that allows simpler reasoning is given by \cite{ZhouAppliedQHL}. Several extensions of weakest preconditions like adding classical variables \cite{DENG202273, FengQHLClassicalVars}, non-determinism \cite{FengNondeterministicQuantumVerification} or conditional statements \cite{bayesianInf} have been considered.

Expected runtime analysis using weakest precondition style was developed first in the probabilistic setting \cite{probRuntime, ProbERTJournal} and later extended to quantum programs \cite{LiuRuntime, olmedoRuntime}. Liu et al. \cite{LiuRuntime} defined a real-valued function $ERT$ that determines the expected runtime of a program on input state $\rho$.
They also gave a weakest precondition style calculus $ert$ to compute $ERT(\rho)$ and showed that it coincides with the definition of $ERT$.
Their approach is limited to finite-dimensional Hilbert spaces and $ert$ is only defined for almost surely terminating programs. In contrast, our approach allows us to analyze expected runtimes in infinite-dimensional Hilbert spaces and for non-terminating programs, which can give rise to infinite expected runtimes. This was already mentioned in \cite{LiuRuntime} as an interesting topic for future research: "the infinite-dimensional case is certainly an interesting (and challenging, we believe) topic for future research".
Similarly, Olmedo and Díaz-Caro \cite{olmedoRuntime} define expected runtimes for quantum programs but consider only finite-dimensional Hilbert spaces. They do not define weakest preconditions in an observable way but rather use a more probabilistic approach where observables are functions from states to real numbers.

Another different way of reasoning about expected runtimes and more general expected values is given in \cite{expectationTransformer}. They define a quantum expectation transformer $qet$ for classical-quantum programs that does not depend on the denotational semantics. Their expectations do not operate on density operators but instead on the cost structure. However, their approach is also limited to qubits, i.e., to finite-dimensional Hilbert spaces.

Barthe et al. \cite{relationalHoareLogicTransport,dualityTheorem} define a relational Hoare logic for quantum programs. They also allow for infinite-valued predicates but only on finite-dimensional Hilbert spaces. Their approach is restricted to AST programs. They mention that the approach can be extended to the infinite-dimensional case but further details are not provided. At the same time as we finished this paper, \cite{LicsNewPaper} was published which captures the infinite-dimensional case. However, they still restrict to AST programs ("The restriction to AST programs applies to all logics in the paper") and they consider neither expected runtimes nor reward statements. The weakest precondition calculus is only stated for the bounded case and they decided to not use unbounded operators as predicates but instead relations on the state spaces.

\subsection*{Outline}
The rest of the paper is structured as follows. In Section \ref{sec:preliminaries}, we give some preliminaries on quantum computing and the mathematical background.
Section \ref{sec:expectations} introduces the notion of expectations, the expected value and their properties.
In Section \ref{sec:programs}, we define the syntax and semantics of quantum programs with rewards. In Section \ref{sec:wp}, we define the weakest pre-expectation semantics for quantum programs with rewards and give a syntactic weakest precondition calculus for our concrete qrWhile language. In Section \ref{sec:ertRewards}, we show how to analyze expected runtimes of quantum programs by transforming the program syntactically and using our weakest pre-expectation framework. Section \ref{sec:ertCalculus} continues with runtime analysis, here we give concrete ERT transformers. In Section \ref{sec:examples}, we give two example programs and analyze their expected runtimes and termination probabilities using the tools we developed. Finally, in Section \ref{sec:conclusion}, we summarize our contributions and discuss future work.

%% file: 2_preliminaries.tex
\section{Preliminaries}
\label{sec:preliminaries}
This section introduces the mathematical background and notation used in the paper. We advise the reader to skip this section on the first read and refer back to it when necessary.

We use $\mathbb{R}$ for the real numbers and $\mathbb{R}_{\geq 0}^\infty$ for the set of non-negative extended real numbers, i.e., including infinity.
$\mathbb{C}$ denotes the set of complex numbers and we write $\overline{x+yi}=x-yi$ for the complex conjugate of $x+yi \in \mathbb{C}$.
The infinite sum $\sum_{i \in I} u_i$ in a topological additive semigroup $V$ is $u \in V$ iff for every open neighborhood $U$ of $V$ there exists a finite $J \subseteq I$ such that $\sum_{i \in K} u_i \in U$ for every finite $K \subseteq I$ with $J \subseteq K$ \cite[p. 15/16]{conway1994}. Then we call $I$ \emph{summable}. (This includes $\R, \extR$, and all vector spaces we consider in this work.)

\subsection{Hilbert Spaces}
The \emph{inner product} over a vector space $\mathcal{V}$ is denoted by $\langle \cdot \mid \cdot \rangle$ and
the \emph{norm (or length) of a vector} $u \in \mathcal{V}$ is defined by $\Vert u \Vert:= \sqrt{\langle u \mid u \rangle}$.
When $\Vert u \Vert = 1$ holds, $u$ is a \emph{unit vector}. Vectors $u,v \in \mathcal{V}$ are \emph{orthogonal} (denoted $u \bot v$) if $\langle u \mid v \rangle = 0$.

If for any $\epsilon>0$, there exists a positive integer $N$ such that $\Vert u_n - u_m \Vert < \epsilon$ for all $n,m\geq N$ then the sequence $\{u_i\}_{i \in \mathbb{N}} \subseteq \mathcal{V}$ is a \emph{Cauchy sequence}.
The limit $\lim_{i \to \infty} u_i$ of such a sequence is $u$ if for any $\epsilon >0$, there exists a positive integer $N$ such that $\Vert u_n - u \Vert < \epsilon$ for all $n \geq N$.

A complete inner product space $\hilbert$ is a \emph{Hilbert space}, i.e., a vector space with inner product where every Cauchy sequence of vectors in $\hilbert$ has a limit \cite{floydHoareLogic}.

An orthonormal \emph{basis} of a Hilbert space $\hilbert$ is a (possibly infinite) family $\{u_i\}_{i \in I}$ of pairwise orthogonal unit vectors such that every $v\in \hilbert$ can be written as $v = \sum_{i \in I} \langle u_i \mid v\rangle \cdot u_i$. By $\abs{I}$ (the cardinality of $I$) we denote the dimension of $\hilbert$. We consider only separable Hilbert spaces, i.e., those with a countable (but potentially infinite) orthonormal basis. Hilbert spaces and their elements can be combined using the \emph{tensor product} $\otimes$ \cite[Def. IV.1.2]{takesaki1979theory}.

\begin{example}
    \label{example:HS}
    For a set $X$, the space
    \begin{align*}
        l^2(X) = \{\sum_{n\in X} \alpha_n \ket{n}\mid \alpha_n \in \mathbb{C} \text{ for all }n\in X\text{ and } \sum_{n\in X} \abs{\alpha_n}^2 < \infty\}
    \end{align*}
    is a Hilbert space with the inner product defined as $
        \langle\sum_{n\in X} \alpha_n \ket{n} \mid \sum_{n\in X}\alpha'_n \ket{n}\rangle := \sum_{n\in X} \overline{\alpha_n} \alpha'_n$.
    An orthonormal basis, the \emph{computational basis}, is $\{\ket{n}\mid n\in X\}$.
    The infinite-dimensional space $l^2(\mathbb{Z})$ can be used for quantum integers and is also denoted $\hilbert_\infty$.
    For qubits, we use the finite-dimensional space $l^2(\{0,1\})$ and denote it as $\hilbert_2$.
\end{example}

\subsection{Operators}
In the following, all vector spaces will be over $\mathbb{C}$.
A function $f:\mathcal{V}\to \mathcal{W}$ between vector spaces $\mathcal{V}$ and $\mathcal{W}$ is called \emph{linear} if $f(ax+y)=af(x)+f(y)$ for $x,y\in \mathcal{V}$ and $a\in \mathbb{C}$.

\subsubsection{Bounded Operators}
\label{sec:PrelimsBoundedOps}
Let $\mathcal{V}, \mathcal{W}$ be normed vector spaces. Then $f$ is called \emph{bounded linear} if $f$ is linear and $\norm{f(x)} \leq c \cdot \norm{x}$ for some constant $c \geq 0$ for all $x \in \mathcal{V}$.
If $\mathcal{V} = \mathcal{W}$ is a Hilbert space, we call $f$ a \emph{bounded operator}. We use $A\ket{\phi}$ to denote the result of applying operator $A$ to $\ket{\phi}\in \hilbert$.

The \emph{identity operator} $\identityOpVar{\hilbert}$ on $\hilbert$ is defined by $\identityOpVar{\hilbert} \ket{\phi}=\ket{\phi}$. The \emph{zero operator} on $\hilbert$, denoted by $\zeroOpVar{\hilbert}$, maps every vector to the zero vector. We omit $\hilbert$ if it is clear from the context.

A bounded operator $A$ on $\hilbert$ is called \emph{positive} if $\bra{\psi}A\ket{\psi}\geq 0$ for all $\ket{\psi}\in \hilbert$ and \emph{completely positive} if for every Hilbert space $\hilbert'$, the operator $\identityOpVar{\hilbert'}\otimes A$ is positive on $\hilbert' \otimes \hilbert$ \cite{Nielsen_Chuang_2010}.

Now we consider special kinds of bounded operators: A \textit{unitary operator} $U$ is an operator where $U^\dagger U = \identityOp$ and $U U^\dagger = \identityOp$.
$A$ is a \emph{trace class} operator if there exists an orthonormal basis $\{\ket{\psi_i}\}_{i \in I}$ such that $\{\bra{\psi_i} \cdot \abs{A} \cdot \ket{\psi_i} \}_{i \in I}$ is summable where $\abs{A}$ is the unique positive operator $B$ with $B^\dagger B = A^\dagger A$. The set of all trace class operators on $\hilbert$ is denoted by $\traceclass$. Then the trace of $A$ is defined as $tr(A) = \sum_{i \in I} \bra{\psi_i} \cdot A \cdot  \ket{\psi_i}$ where $\{\ket{\psi_i}\}_{i \in I}$ is an orthonormal basis.
It can be shown that $tr(A)$ is independent of the chosen basis \cite{floydHoareLogic}. The trace is cyclic, i.e., $tr(AB)=tr(BA)$ \cite{heisenbergdualityUnruh}, linear, i.e., $tr(A+B) = tr(A) + tr(B)$, scalar, i.e., $tr(cA) = c\cdot tr(A)$ for a constant $c$ \cite{conway2000courseOperator}, and multiplicative, i.e., $tr(A \otimes B)= tr(A)tr(B)$ \cite{heisenbergdualityUnruh}.
A positive trace class operator $\rho$ with $tr(\rho)\leq 1$ is called \textit{density operator} (sometimes also referred to as \textit{partial density operator}). The set of density operators is denoted by $\density$. Every trace class operator can be written as a linear combination of density operators \cite{isabelleproofTraceclassSum}. In the following we define semantics etc. on density operators but we implicitly mean the extensions of them to trace class operators by linearity.
Any density operator can be written as a weighted sum, i.e. $\rho = \sum_i p_i \ket{\psi_i}\bra{\psi_i}$ by the spectral decomposition (where the sum is usually meant w.r.t. (trace) norm topology but the topologies coincide in this case \cite[Lem. 30]{heisenbergdualityUnruh}). The decomposition is not unique in general.

To order bounded operators, the \textit{Loewner order} is used: for operators $A,B$, $A\loewner B$ holds iff $B-A$ is a positive operator. This is equivalent to $tr(A\rho)\leq tr(B \rho)$ for all density operators $\rho \in \density$ \cite{floydHoareLogic}. It is a partial order.
The Loewner order is compatible w.r.t. addition (also known as monotonic), i.e., $A \loewner B$ implies $A+C \loewner B+C$ for any $C$, and w.r.t. multiplication of non-negative scalars, i.e., $A\loewner B$ implies $c A \loewner c B$ for $c \geq 0 $ \cite{boyd2004convex}.
The Loewner partial order is an $\omega$-complete partial order ($\omega$-cpo) on the set of partial density operators \cite[Prop. 8.2.1]{YingPredicateTranserformerSemantics}, i.e., each increasing sequence of partial density operators has a least upper bound.

\subsubsection{Unbounded Operators}
An \emph{(unbounded) operator} $A:\domain{A} \to \hilbert$ is a linear function from a linear subspace $\domain{A}\subseteq \hilbert$ to Hilbert space $\hilbert$. We call $\domain{A}$ the domain of $A$. The range of an operator $A$ is defined as $ran(A) = \{A\ket{\psi} \mid \ket{\psi} \in \domain{A}\}$.

Let $A: \domain{A} \to \hilbert, B: \domain{B} \to \hilbert$. We can define the (operator) sum, the multiplication with a scalar and the product of $A$ and $B$ as follows \cite{kato}:
\begin{align*}
    (A+B)\ket{\psi} &= A\ket{\psi} + B\ket{\psi} \text{ for }\ket{\psi} \in \domain{A+B} = \domain{A} \cap \domain{B},\\
    (cA)\ket{\psi} &= c \cdot A\ket{\psi} \text{ for }\ket{\psi} \in \domain{cA} = \domain{A},\\
    (AB)\ket{\psi} &= A(B\ket{\psi}) \text{ for }\ket{\psi} \in \domain{AB} = \{\ket{\psi} \in \domain{B} \mid B\ket{\psi} \in \domain{A}\}.
\end{align*}

An operator $A: \domain{A} \to \hilbert$ is called \emph{densely defined} if $\domain{A}$ is dense in $\hilbert$, i.e., for every $\ket{\psi} \in \hilbert$ there exists a sequence $\{\ket{\psi_n}\}_{n\in \mathbb{N}}$ in $\domain{A}$ such that $\lim_{n\to \infty} \ket{\psi_n} = \ket{\psi}$. This is equivalent to say that the closure of $\domain{A}$ is equal to $\hilbert$ where
the \emph{closure} of a subset $X \subseteq \hilbert$ is the smallest closed set $\overline{X}$ with $X \subseteq \overline{X} \subseteq \hilbert$.

If $A: \domain{A} \to \hilbert, B: \domain{B} \to \hilbert$ are densely defined, then $A\otimes B: \domain{A \otimes B} \to \hilbert$ is dense in $\hilbert \otimes \hilbert$ where $\domain{A \otimes B}$ contains all finite linear combinations of vectors $\psi \otimes \phi$ with $\psi \in \domain{A}$ and $\phi \in \domain{B}$. It is $(A\otimes B)(\psi \otimes \phi) = A\psi \otimes B\phi$, extended by linearity. If $A$ and $B$ are closeable, then $A\otimes B$ is closable as well \cite[VIII. 10]{reed1980methods}.

Let $A: \domain{A} \to \hilbert$ be a densely defined operator. Then we set $\domain{A^\dagger} = \{\psi \in \hilbert \mid \exists u \in \hilbert: \braket{A\phi}{\psi} = \braket{\phi}{u} \text{ for }\phi \in \domain{A}\}$. As $\domain{A}$ is dense, $u$ is uniquely defined by $\psi$. Then we define the \emph{adjoint operator} $A^\dagger: \domain{A^\dagger} \to \hilbert$ by $A^\dagger \psi = u$. $A^\dagger$ is a linear operator \cite[Def. 1.6]{unboundedSelfAdjointBook}.

A densely defined operator $A$ is called \emph{symmetric} if $A \subseteq A^\dagger$, that means $\langle A\psi \mid \phi \rangle = \langle \psi \mid A \phi\rangle$ for all $\psi,\phi \in \domain{A}$ \cite[Def. 3.1]{unboundedSelfAdjointBook}. A symmetric operator $A$ is called \emph{positive} (written $A \geq 0$) if $\langle \psi \mid A \mid \psi \rangle \geq 0$ for all $\psi \in \domain{A}$ \cite[Def. 3.2]{unboundedSelfAdjointBook}.
A densely defined operator $A$ is called \emph{self-adjoint} if $A = A^\dagger$ \cite[Sec. 1.2]{unboundedSelfAdjointBook}. In the bounded case, self-adjointness and symmetry coincide, in the unbounded case self-adjointness implies symmetry.

According to the spectral theorem for unbounded self-adjoint operators \cite[Theorem 5.7]{unboundedSelfAdjointBook},
for every self-adjoint operator $A: \domain{A} \to \hilbert$ there exists a unique spectral measure $E$ on the Borel $\sigma$-algebra $B(\mathbb{R})$ such that $
    A = \int_{\mathbb{R}}\lambda d E_\lambda$.
By \cite[Theorem 10.4]{hall2013}, the spectral measure $E$ is a projection-valued measure, i.e. $E_\lambda$ are projections by definition of the spectral measure.

For a positive self-adjoint operator on $\hilbert$, there exists a unique positive self-adjoint operator $\half{A}$ on $\hilbert$ such that $\half{A} \half{A} = A$ \cite[Prop 5.13]{unboundedSelfAdjointBook} and $\domain{A} \subseteq \domain{\half{A}}$ \cite[Section 10.2]{unboundedSelfAdjointBook}. $\half{A}$ is called the \emph{square root} of $A$.

Similar as we introduced $\zeroOp$ and $\identityOp$ as special bounded operators, we use $\inftyOp$ to denote the unbounded operator defined by $\domain{\inftyOp} = \{0\}$ and $\inftyOp 0 = 0$.

\subsection{Sesquilinear Forms}
\label{sec:PrelimsForms}
Let $\hilbert$ be a Hilbert space. Then a \emph{sesquilinear form} (or \emph{form}) $a: \domain{a} \times \domain{a} \to \mathbb{C}$ with $\domain{a} \subseteq \hilbert$ is a function that is linear in the second and conjugate linear in the first argument \cite[Def. 10.1]{unboundedSelfAdjointBook}.

$a[u] = a[u,u]$ is called the \emph{quadratic form} associated with $a$. $a[u]$ determines $a[u,v]$ uniquely by the polarization identity \cite[Eq. 6.1.1]{kato} and \cite[Eq. 10.1]{unboundedSelfAdjointBook}.

For forms $a,b$ and $c \in \mathbb{C}$ the sum and multiplication with a scalar are defined as in \cite[p. 222]{unboundedSelfAdjointBook}:
\begin{align*}
    (a+b)[u,v] &= a[u,v] + b[u,v] \text{ for all }u,v \in \domain{a+b} =\domain{a} \cap \domain{b},\\
    (ca)[u,v] &= c \cdot a[u,v] \text{ for all }u,v \in \domain{ca} = \domain{a}.
\end{align*}

A form $a$ is called \emph{densely defined} if $\domain{a}$ is dense in $\hilbert$ \cite[Chapter 6]{kato}.

The adjoint $a^\dagger$ of a form $a$ is defined by $a^\dagger[u,v] = \overline{a[v,u]}$ for all $u,v \in \domain{a}$ \cite[Chapter 6]{kato}.
$a$ is called \emph{symmetric} if $a=a^\dagger$, i.e. $a[u,v] = a^\dagger[v,u]$ for all $u,v \in \domain{a}=\domain{a^\dagger}$ \cite[Chapter 6]{kato}.
A symmetric form $a$ is called \emph{lower semi-bounded} if $a[u] \geq c\cdot \norm{u}^2$ for all $u \in \domain{a}$ with $\norm{u} = 1$ \cite[Chapter 6]{kato} with $c\in \mathbb{R}$. If $c=0$, we call the form $a$ \emph{non-negative} or \emph{positive} \cite[Chapter 6]{kato}.

A form $a$ is called \emph{closed} if for every sequence $\{u_n\}_{n\in \mathbb{N}}$ in $\domain{a}$ with $u_n \to u$ and $\lim_{n,m \to \infty} a[u_n-u_m] = 0$, we have $u \in \domain{a}$ and $\lim_{n\to\infty} a[u_n-u] = 0$ \cite[Prop. 10.1]{unboundedSelfAdjointBook}.

For closed symmetric forms $a,b$ on the same Hilbert space, $a\leq b$ is defined as $\domain{b}\subseteq \domain{a}$ and $\forall x \in \domain{b}: a[x] \leq b[x]$ \cite[VIII, Chap. 3]{kato}.

We define a special case, the zero form $0$ with $0[u,v] = 0$ for all $u,v \in \domain{0} = \hilbert$.

\subsection{Quantum Specific Math}
The state space of a quantum system can be described as a Hilbert space where states correspond to unit vectors \cite{floydHoareLogic}. We consider only separable Hilbert spaces.
A quantum state is called \textit{pure} if it can be described by a vector in the Hilbert space; otherwise \textit{mixed}, i.e., it is a probability distribution over pure states.
We use density operators (see Section \ref{sec:PrelimsBoundedOps}) to describe (mixed) states, in particular to capture the current state of a program.
If a quantum system is in a pure state $\ket{\psi_i}$ with probability $p_i$ (with $\sum_i p_i \leq 1$), then this is represented by the density operator $\rho=\sum_i p_i \ket{\psi_i}\bra{\psi_i}$\footnote[1]{As in \cite{heisenbergdualityUnruh}, we mean convergence of sums with respect to SOT if not stated otherwise.\label{fn_1}}.

In quantum mechanics, a measurement can impact the current state of a quantum system.
Formally, a \emph{measurement} is a (possibly countably infinite) family of operators $\{M_m\}_{m\in I}$ where $m$ is the measurement outcome and $\sum_{m\in I} M_m^\dagger M_m = \identityOp$ \textsuperscript{\ref{fn_1}}.
If the current state of a quantum system is $\rho$ before the measurement $\{M_m\}$, then the probability for obtaining result $m$ is $p(m)=tr(M_m \rho M^\dagger_m)$ and the post-measurement state is $\rho_m = \frac{M_m\rho M^\dagger_m}{p(m)}$.

We call $f: \density \to \density$ a quantum subchannel if it is a (linear, bounded) completely positive and trace non-increasing map. Completely positive is defined according to \cite[Def. 34.2]{takesaki1979theory}.
Trace non-increasing means that $tr(f(\rho)) \leq tr(\rho)$ for all $\rho \in \density$.
According to \cite[Theorem 9]{krausChoi}, an equivalent definition (for separable Hilbert spaces) of a quantum subchannel is that it can be represented using Kraus operators, i.e., there exists a family of bounded operators $\{E_j\}$ such that $\sum_{j = 1}^\infty E_j^\dagger E_j \loewner \identityOp$ and $f(\rho) = \sum_{j = 1}^\infty E_j \rho E_j^\dagger$ for all $\rho \in \density$\footnote[2]{Convergence w.r.t. SOT (not WOT as in \cite{krausChoi}), see footnote 36 in \cite{heisenbergdualityUnruh}.}.

%% file: 3_expectations.tex
\section{Expectations}
\label{sec:expectations}
One of the important notions in deductive verification is the concept of a predicate or an expectation \cite{hoare69, McIverWpProb,DHondtWeakestPreconditions}. Those predicates are used to express properties that we are interested in, i.e., that we want to verify. Allowing a broader class of predicates allows us to express more properties. In this paper we want to express \emph{unbounded} expectation values, thus we need to change the definition from predicates to expectations. However, this leads to some technical challenges that we will address in this section.

A predicate is a $1$-bounded positive (hence self-adjoint) operator \cite{DHondtWeakestPreconditions}. Simply dropping the bound-condition does not work as the set of positive self-adjoint operators is not an $\omega$-cpo with respect to $\loewner$ (meaning the order for unbounded operators) which we need for defining the weakest precondition of the while-loop. This was already noted in \cite{DHondtWeakestPreconditions}: "We need to restrict to positive operators and -- in order to obtain least upper bounds for increasing sequences -- we need to bound them." To overcome this issue, we introduce a novel definition of expectations that enables to capture unbounded expectation values and show that the set of expectations forms an $\omega$-cpo with respect to $\loewner$.

There are equivalent definitions in the bounded case that describe the same class of operators (self-adjoint, symmetric) which are distinct for unbounded operators. We thus first need to figure out which of those definitions lead to the desired properties (e.g., we always want the expected value to be a positive real number or infinite).

Omitted proofs of this section can be found in Appendix \ref{sec:appendix_expectations}.

\subsection{Definitions}

A first naive attempt would be to require the unbounded operator to be self-adjoint. That implies a dense domain and we actually want to allow operators with non-dense domains as well. This is why we introduce the notion of $\infty$-self-adjointness. This is a weaker condition than self-adjointness but still ensures that the expected value is a real number or infinite:
\begin{definition}
    Let $\hilbert$ be a Hilbert space and $A: \domain{A} \to \hilbert$ an operator.
    \begin{itemize}
        \item $A$ is \emph{$\infty$-square} if $ran(A)\subseteq \overline{\domain{A}}$.
        \item If $A$ is $\infty$-square, its \emph{$\infty$-restriction} is the operator $A_\infty: \domain{A_\infty} \to \overline{\domain{A}}$ with $\domain{A_\infty} = \domain{A}$ and $A_\infty \psi = A\psi$ for all $\psi \in \domain{A_\infty}$.
        \item $A$ is called \emph{$\infty$-self-adjoint} if its $\infty$-restriction $A_\infty$ is self-adjoint.
    \end{itemize}
\end{definition}
Intuitively speaking, we require $\infty$-square and the $\infty$-restriction to properly define self-adjointness without requiring the operator to be dense. The $\infty$-restriction of an $\infty$-square operator is the part of the operator that is well-defined and thus can be used to define the expected value. $\infty$-self-adjointness ensures that the expected value is a real number or infinite. To define expectations, we add the condition that the operator is positive to ensure that the expected value is non-negative:
\begin{definition}
    The set of \emph{expectations} over a Hilbert space $\hilbert$ is the set of all positive, $\infty$-self-adjoint operators on $\hilbert$, denoted by $\predicate$.
\end{definition}

Expectations together with a suitable order need to form an $\omega$-cpo to handle while-loops correctly. To establish this, we define an order for expectations and show that $\predicate$ forms an $\omega$-cpo with respect to this order. The order is strongly inspired by \cite[Def. 10.5]{unboundedSelfAdjointBook} and is a naive extension of the Loewner order for bounded operators.

The Loewner order for bounded operators is defined by $A \loewner B$ if $\bra{\psi}A\ket{\psi} \leq \bra{\psi}B\ket{\psi}$ for all $\ket{\psi} \in \hilbert$. This definition is not applicable to unbounded operators as $\bra{\psi}A\ket{\psi}$ is not well-defined for all $\ket{\psi} \in \hilbert$ if $\ket{\psi} \notin \domain{A}$.

For unbounded self-adjoint operators, a constraint regarding the domain must be added. We adapt the order from \cite[Def. 10.5]{unboundedSelfAdjointBook} to order expectations:
\begin{definition} \label{def:expectationOrder}
    Let $A, B \in \predicate$ be expectations with self-adjoint $\infty$-restrictions $A_\infty, B_\infty$.
    Then $A \loewner B$ iff $\domain{\half{B_\infty}} \subseteq \domain{\half{A_\infty}}$ and $\norm{\half{A_\infty}\ket{\psi}}^2 \leq \norm{\half{B_\infty}\ket{\psi}}^2$ for all $\psi \in \domain{\half{B_\infty}}$.
\end{definition}
Note that $\half{A_\infty}$ means $\half{(A_\infty)}$ and not $(\half{A})_\infty$ and that the domain set-inclusion holds in the opposite direction.
In the bounded case, this order coincides with the standard Loewner order.

\subsection{Convergence of Expectations}
The aim is to show that the set of expectations over $\hilbert$ forms an $\omega$-cpo with respect to $\loewner$.
To that end, we introduce form-expectations, show their equivalence to expectations and then use Kato's theorem \cite[VIII, Thm. 3.13a]{kato} to show that the limit of an increasing sequence of form-expectations exists and is a form-expectation.
We start by defining form-expectations:
\begin{definition} \label{def:formExpectation}
    A \emph{form-expectation} is a closed, symmetric and positive form $a: \domain{a} \times \domain{a} \to \mathbb{C}$. The set of form-expectations over $\hilbert$ is denoted by $\formPredicate$.
\end{definition}

Every expectation is equivalent to a form-expectation:
\begin{proposition}
    \label{prop:OpstoForms}
    For every expectation $A\in \predicate$, there is a corresponding form-expectation $a \in \formPredicate$ with $a[x,y] = \langle \half{A_\infty}x, \half{A_\infty}y\rangle$ for all $x,y \in \domain{a} = \domain{\half{A_\infty}}$ and the other way round.
\end{proposition}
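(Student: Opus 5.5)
The proof works in both directions by passing to the closed subspace $\hilbert_0 := \overline{\domain{\cdot}}$. On that subspace everything is densely defined, so the classical correspondence between positive self-adjoint operators and closed positive forms applies (Kato's representation theorems, \cite[VI, Thm. 2.1, 2.23]{kato}, or \cite[Thm. 10.7]{unboundedSelfAdjointBook}).

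\emph{From expectations to forms.} Let $A \in \predicate$ and set $\hilbert_0 := \overline{\domain{A}}$, which is itself a Hilbert space. By definition $A_\infty : \domain{A} \to \hilbert_0$ is self-adjoint on $\hilbert_0$. For $\psi \in \domain{A}$ we have $\langle \psi \mid A_\infty \psi\rangle = \langle \psi \mid A\psi\rangle \geq 0$, so $A_\infty$ is positive. Hence the square root $\half{A_\infty}$ exists as a positive self-adjoint operator on $\hilbert_0$ with $\domain{A} \subseteq \domain{\half{A_\infty}} \subseteq \hilbert_0$.

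Define $a[x,y] := \langle \half{A_\infty}x, \half{A_\infty}y\rangle$ on $\domain{a} := \domain{\half{A_\infty}}$. Sesquilinearity, symmetry and positivity are immediate from the properties of the inner product.

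For closedness, take $u_n \in \domain{a}$ with $u_n \to u$ in $\hilbert$ and $a[u_n-u_m] = \norm{\half{A_\infty}(u_n-u_m)}^2 \to 0$. Since $\hilbert_0$ is closed, $u \in \hilbert_0$. The sequence $\half{A_\infty}u_n$ is Cauchy, so it converges to some $v$. A self-adjoint operator is closed, so $u \in \domain{\half{A_\infty}}$ and $\half{A_\infty}u = v$. This gives $a[u_n-u] \to 0$.

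Note that the form is taken on $\hilbert$: it is not densely defined there, and \autoref{def:formExpectation} does not require it to be.

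\emph{From forms to expectations.} Let $a \in \formPredicate$ and set $\hilbert_0 := \overline{\domain{a}}$. Viewed inside $\hilbert_0$, the form $a$ is densely defined, closed (the closedness condition only involves limits, and these lie in $\hilbert_0$), symmetric and positive. By the second representation theorem there is a unique positive self-adjoint operator $T$ on $\hilbert_0$ with
\[
\domain{\half{T}} = \domain{a} \quad\text{and}\quad a[x,y] = \langle \half{T}x, \half{T}y\rangle \text{ for all } x,y \in \domain{a}.
\]
Define $A : \domain{T} \to \hilbert$ by $A\psi := T\psi$.

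\emph{Checking that $A$ is an expectation.} This step carries the real bookkeeping, and I expect it to be the main obstacle. I would check the following in order.
\begin{enumerate}
\item $\overline{\domain{T}} = \hilbert_0$. This holds because a self-adjoint operator on $\hilbert_0$ is densely defined in $\hilbert_0$.
\item $A$ is $\infty$-square: $ran(A) = ran(T) \subseteq \hilbert_0 = \overline{\domain{A}}$.
\item The $\infty$-restriction satisfies $A_\infty = T$, since both the domain and the codomain $\overline{\domain{A}}$ agree. In particular $A_\infty$ is self-adjoint.
\item $A$ is positive. Its symmetry in the sense of the preliminaries is inherited from $T$ via $\langle A\psi \mid \phi\rangle = \langle T\psi \mid \phi\rangle$ for $\psi,\phi \in \domain{A}$. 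The caveat is that $A$ need not be densely defined in $\hilbert$, so "positive" here means $\langle \psi \mid A\psi\rangle \geq 0$ on $\domain{A}$, which holds because $T \geq 0$.
\end{enumerate}
Finally, $\half{A_\infty} = \half{T}$, so the form associated with $A$ by the first direction is exactly $a$. Together with the uniqueness in the representation theorem and the uniqueness of square roots, this shows the two constructions are mutually inverse, which gives the claimed correspondence.
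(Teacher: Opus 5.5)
Your proof is correct and takes the same route as the paper: pass to the closed subspace $\overline{\domain{\cdot}}$, where everything is densely defined, apply the classical correspondence between positive self-adjoint operators and densely defined closed positive forms, and extend back to $\hilbert$. You supply details the paper leaves implicit, namely the direct closedness argument via closedness of $\half{A_\infty}$, the checks that the reconstructed operator is $\infty$-square with $A_\infty = T$, and the verification that the two constructions are mutually inverse.
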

\begin{proof}[Proof Sketch]
    The main ingredient of the proof is the following theorem (\cite[VI, Thm. 2.1, 2.7]{kato} and \cite[Corr. 10.8, Eq. 10.12]{unboundedSelfAdjointBook}):
    "For every positive self-adjoint operator $A$, there is a corresponding densely defined, closed, symmetric, positive form $a: \domain{a} \times \domain{a} \to \mathbb{C}$ with $a[x,y] = \langle \half{A}x, \half{A}y\rangle$ for all $x,y \in \domain{a} = \domain{\half{A}}$ and the other way round".
\end{proof}

The next step is to show that the order of expectations as defined in Definition \ref{def:expectationOrder} corresponds to the order for form-expectations. Recall that forms are ordered by $a \leq b$ if $\domain{b} \subseteq \domain{a}$ and $a[x] \leq b[x]$ for all $x \in \domain{b}$ where $a[x] = a[x,x]$.
\begin{proposition}
    \label{prop:orderEquivalence}
    Let $A,B\in \predicate$ and $a,b$ be their corresponding form-expectations. Then $a \leq b$ if and only if $A \loewner B$.
\end{proposition}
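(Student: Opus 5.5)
This is essentially a matter of unfolding definitions. The plan is to show that the two orders are the same condition written in two notations, using the correspondence of Proposition \ref{prop:OpstoForms}.

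By Proposition \ref{prop:OpstoForms}, the form-expectations corresponding to $A$ and $B$ are
\begin{align*}
a[x,y] &= \langle \half{A_\infty}x, \half{A_\infty}y\rangle \quad \text{on } \domain{a} = \domain{\half{A_\infty}},\\
b[x,y] &= \langle \half{B_\infty}x, \half{B_\infty}y\rangle \quad \text{on } \domain{b} = \domain{\half{B_\infty}}.
\end{align*}
The associated quadratic forms are therefore
\begin{align*}
a[x] = \norm{\half{A_\infty}x}^2, \qquad b[x] = \norm{\half{B_\infty}x}^2 .
\end{align*}

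Substituting these identities into the definition of $a \leq b$ gives two clauses: $\domain{\half{B_\infty}} \subseteq \domain{\half{A_\infty}}$, and $\norm{\half{A_\infty}x}^2 \leq \norm{\half{B_\infty}x}^2$ for all $x \in \domain{\half{B_\infty}}$. These are exactly the two clauses of Definition \ref{def:expectationOrder}, so both directions of the equivalence follow at once.

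The only delicate point is well-definedness, and I would check two things about the ambient spaces. First, $A_\infty$ and $B_\infty$ may act on different closed subspaces $\overline{\domain{A}}$ and $\overline{\domain{B}}$. This does not matter, because their square roots are still maps into $\hilbert$, so their domains are subsets of $\hilbert$ and the norms are taken in $\hilbert$. The definition of the form order for forms on the same Hilbert space therefore applies with ambient space $\hilbert$. Second, the forms need not be densely defined in $\hilbert$. Since the order on forms is stated in the preliminaries only in terms of domain inclusion and quadratic values, density is never used, and the argument goes through unchanged.
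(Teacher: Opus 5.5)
Your proposal is correct and is the same argument as the paper's: both substitute $a[x]=\norm{\half{A_\infty}x}^2$ on $\domain{a}=\domain{\half{A_\infty}}$ (and likewise for $b$) into the definition of the form order, which then reads exactly as Definition~\ref{def:expectationOrder}. Your remarks on the ambient space and on the forms not being densely defined are accurate, and they spell out points the paper leaves implicit.
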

\begin{proof} Follows immediately from the definitions:
    \begin{align*}
            a \leq b
             \Leftrightarrow & \text{ }\forall x \in \domain{b}: a[x] \leq b[x] \land \domain{b} \subseteq \domain{a} \\
             \Leftrightarrow & \text{ }\forall \psi \in \domain{\half{B_\infty}}: \norm{\half{A_\infty}\ket{\psi}}^2 \leq \norm{\half{B_\infty}\ket{\psi}}^2 \land \domain{\half{B_\infty}} \subseteq \domain{\half{A_\infty}}
             \Leftrightarrow  A \loewner B \qedhere
         \end{align*}
\end{proof}

We now establish that $\predicate$ constitutes an $\omega$-complete partial order w.r.t. $\loewner$:
\begin{theorem}
    \label{thm:omegaCompleteness}
    An increasing sequence of expectations $\{A_n\}_n$ with $A_n: \domain{A_n} \to \hilbert$ converges to an expectation $A: \domain{A} \to \hilbert$ which is also the least upper bound of the sequence, i.e., $A = \bigvee_{n} A_n = \lim_{n \to \infty} A_n$\footnote{Convergence in WOT.}. Thus $\predicate$ is an $\omega$-cpo w.r.t. $\loewner$.
\end{theorem}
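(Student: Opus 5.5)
The plan is to move the whole problem to forms, solve it there with Kato's monotone convergence theorem, and then come back to operators. Let $\{A_n\}_n$ be increasing in $\loewner$. By Proposition~\ref{prop:OpstoForms} each $A_n$ has a form-expectation $a_n \in \formPredicate$, and by Proposition~\ref{prop:orderEquivalence} the sequence $\{a_n\}_n$ is increasing as forms: $\domain{a_{n+1}} \subseteq \domain{a_n}$ and $a_n[x] \leq a_{n+1}[x]$ on $\domain{a_{n+1}}$.

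On the form side I will define the candidate limit directly. Set
\[
\domain{a} = \{x \in \bigcap_n \domain{a_n} \mid \sup_n a_n[x] < \infty\}, \qquad a[x] = \lim_n a_n[x],
\]
and recover $a[x,y]$ by polarization. I then check the following in order:
\begin{itemize}
\item $\domain{a}$ is a linear subspace, using $a_n[x+y] \leq 2a_n[x] + 2a_n[y]$;
\item $a$ is symmetric and positive, since these properties pass to pointwise limits;
\item $a$ is closed.
\end{itemize}
Closedness is exactly the content of Kato's monotone convergence theorem for nondecreasing closed forms \cite[VIII, Thm.~3.13a]{kato}. The one technical point is that Kato states it for densely defined forms, whereas here $\domain{a}$ need not be dense. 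I handle this by proving closedness by hand, which amounts to lower semicontinuity. Each $x \mapsto a_n[x]$, extended by $+\infty$ outside $\domain{a_n}$, is lower semicontinuous on $\hilbert$ because $a_n$ is closed. A supremum of lower semicontinuous functions is again lower semicontinuous, so $x \mapsto a[x]$ is lower semicontinuous. A positive form whose extended quadratic form is lower semicontinuous is closed. If preferred, I can instead apply Kato on the Hilbert space $\overline{\domain{a}}$.

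Next I translate back. Proposition~\ref{prop:OpstoForms} in the reverse direction gives an expectation $A \in \predicate$ whose form is $a$. Its $\infty$-restriction $A_\infty$ is the positive self-adjoint operator on $\overline{\domain{a}}$ associated with $a$, with $A$ acting on $\domain{A_\infty}$.

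To see that $A$ is an upper bound: for $x \in \domain{a}$ we have $x \in \domain{a_n}$ and $a_n[x] \leq a[x]$, so $a_n \leq a$, and by Proposition~\ref{prop:orderEquivalence} $A_n \loewner A$. To see that $A$ is the least upper bound: let $B$ be any upper bound with form $b$. Then $\domain{b} \subseteq \domain{a_n}$ for all $n$ and $a_n[x] \leq b[x] < \infty$, so $\domain{b} \subseteq \domain{a}$ and $a[x] \leq b[x]$. Hence $a \leq b$ and $A \loewner B$. This also gives uniqueness of the supremum, and the $\omega$-cpo property follows. The order is a partial order because the correspondence between operators and forms is a bijection and $\leq$ on forms is antisymmetric.

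The main obstacle is making precise the claim $A = \lim_n A_n$ in WOT for unbounded operators on shrinking domains. My interpretation is convergence of expectation values: $\bra{\psi}A_n\ket{\psi} = a_n[\psi] \to a[\psi]$ for $\psi \in \domain{a}$. Outside $\domain{a}$ the values diverge to $\infty$, which matches the convention that the expected value is $\infty$ there. A stronger, resolvent-type statement, that $(\identityOp + A_{n,\infty})^{-1} \to (\identityOp + A_\infty)^{-1}$ strongly, is the second part of \cite[VIII, Thm.~3.13a]{kato}. I will cite it, carrying over the same non-density caveat by working on $\overline{\domain{a}}$ and extending the resolvents by $0$ on its orthogonal complement.
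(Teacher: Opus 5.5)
Your proposal is correct and takes essentially the same approach as the paper. You transfer the chain to form-expectations via Propositions~\ref{prop:OpstoForms} and~\ref{prop:orderEquivalence}, get a closed, positive, symmetric limit form from Kato's monotone convergence theorem \cite[VIII, Thm.~3.13a]{kato}, translate back, and check the upper bound and leastness at the level of forms. Your lower-semicontinuity proof of closedness, which sidesteps the density hypothesis, and your explicit reading of the WOT limit as $a_n[\psi]\to a[\psi]$ are sound refinements. The paper handles these points only briefly, by noting that $\domain{a}$ is dense in $\overline{\domain{a}}$ and reading off WOT convergence from pointwise convergence of the forms.
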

\begin{proof}[Proof Sketch]
    The increasing sequence of expectations $\{A_n\}_n$ corresponds to an increasing sequence of form-expectations $\{a_n\}_n$ by Propositions \ref{prop:OpstoForms} and \ref{prop:orderEquivalence}. By Kato's theorem, the limit form $a$ of the sequence $\{a_n\}_n$ exists and is a form-expectation. Then we can construct the corresponding expectation $A$ from the limit form $a$ by Proposition \ref{prop:OpstoForms}. Finally, we show that $A$ is indeed the least upper bound of the sequence $\{A_n\}_n$.
\end{proof}
Thus, our definition of expectations provides the foundation for defining the semantics of while-loops in the presence of unbounded expectation values.

\subsection{Addition, Scalar Multiplication, Operator Application via Forms}
In this section we give some technical results about operations on expectations as, for example, the (operator) sum of two self-adjoint operators is not self-adjoint again, so we cannot define the sum of two expectations as usual.
Instead we define addition, scalar multiplication and bounded operator application of expectations via their form-expectations. Addition and scalar multiplication of forms are defined in Section \ref{sec:PrelimsForms} (\cite[p. 222]{unboundedSelfAdjointBook}) and application of bounded operator $M$ on form $a$ we define as
\begin{equation*}
a_M[u, v] = a[M u, M v] \text{ for all }u,v \in \domain{a_M} = \{u \mid M u \in \domain{a}\}.
\end{equation*}
Note that it suffices to consider the quadratic form $a_M[u] = a[M u, M u] = a[Mu]$.
We start by showing that form-expectations are closed under those operations:

\begin{proposition}
    \label{prop:FormAddMult}
    For $a,b \in \formPredicate,\alpha \geq 0$ and bounded operator $M$ is $a+b, \alpha \cdot a, a_M \in \formPredicate$.
\end{proposition}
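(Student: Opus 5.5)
We have to check three properties (symmetric, positive, closed) for each of $a+b$, $\alpha\cdot a$ and $a_M$. Note that Definition~\ref{def:formExpectation} does not require dense domains, so densely-definedness never has to be verified. Symmetry and positivity are pointwise algebraic facts in all three cases. For the sum, $(a+b)[u,v]=a[u,v]+b[u,v]=\overline{a[v,u]}+\overline{b[v,u]}$, and $(a+b)[u]=a[u]+b[u]\ge 0$ on $\domain{a}\cap\domain{b}$. For $\alpha\cdot a$ the same holds because $\alpha\ge 0$ is real. For $a_M$ we have $a_M[u,v]=a[Mu,Mv]=\overline{a[Mv,Mu]}=\overline{a_M[v,u]}$ and $a_M[u]=a[Mu]\ge0$. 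We also note that the domains $\domain{a}\cap\domain{b}$, $\domain{a}$ and $\{u\mid Mu\in\domain{a}\}$ are linear subspaces, the last one because $M$ is linear. So all three objects are sesquilinear forms.

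The substantive part is closedness, which we check directly with the sequential characterisation from Section~\ref{sec:PrelimsForms}.

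\emph{Scalar multiple.} For $\alpha>0$, the condition $\alpha a[u_n-u_m]\to0$ is equivalent to $a[u_n-u_m]\to0$, so closedness transfers immediately. For $\alpha=0$ the form is the zero form restricted to $\domain{a}$. This is a subtle point: if $\domain{a}$ is not closed, this form is not closed. I would therefore read $0\cdot a$ as the zero form $0$ on $\hilbert$, in line with the convention that $0\cdot\infty=0$, and state this convention explicitly. Alternatively one restricts to $\alpha>0$.

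\emph{Sum.} Let $u_n\in\domain{a}\cap\domain{b}$ with $u_n\to u$ and $(a+b)[u_n-u_m]\to0$. Since both forms are positive, $0\le a[u_n-u_m]\le(a+b)[u_n-u_m]$, and likewise for $b$. Hence each of the sequences is Cauchy for $a$ and for $b$ separately. Closedness of $a$ gives $u\in\domain{a}$ and $a[u_n-u]\to0$, and closedness of $b$ gives the same for $b$. Together, $u\in\domain{a+b}$ and $(a+b)[u_n-u]\to0$.

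\emph{Operator application.} Let $u_n\in\domain{a_M}$ with $u_n\to u$ and $a[M(u_n-u_m)]\to0$. Set $v_n=Mu_n\in\domain{a}$. Boundedness of $M$ gives $v_n\to Mu$, and linearity gives $a[v_n-v_m]=a_M[u_n-u_m]\to0$. Closedness of $a$ then yields $Mu\in\domain{a}$, that is $u\in\domain{a_M}$, and $a_M[u_n-u]=a[v_n-Mu]\to0$.

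The main obstacle, insofar as there is one, is the degenerate case $\alpha=0$ just discussed. Beyond that, the key observation is that positivity lets us dominate each summand by the sum in the Cauchy condition, and that continuity of the bounded operator $M$ transports norm convergence through $M$. Everything else is routine bookkeeping of domains.
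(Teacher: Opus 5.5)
Your proof is correct and has the same structure as the paper's: symmetry and positivity are checked pointwise, closedness is checked via the sequential criterion, and for $a_M$ you use the same step of pushing the sequence through the continuous $M$ and invoking closedness of $a$. Your write-up of that last step is cleaner than the paper's.

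Two things differ. For $a+b$, the paper only cites Kato's theorem that a sum of closed forms is closed. You prove it directly from $0\le a[u_n-u_m]\le(a+b)[u_n-u_m]$, which is exactly the positive case of that theorem and makes the argument self-contained.

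More substantively, your $\alpha=0$ caveat is justified, and the paper misses it. Its closedness argument for $\alpha\cdot a$ transfers Cauchy sequences for $\alpha\cdot a$ to Cauchy sequences for $a$, and that step silently needs $\alpha>0$. With $\domain{0\cdot a}=\domain{a}$ the claim genuinely fails whenever $\domain{a}$ is not closed. An example is the form of $\sum_n n\ket{n}\bra{n}$, whose domain is dense but proper.

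So either the restriction to $\alpha>0$ or a convention such as yours is needed. Yours takes $0\cdot a$ to be the zero form on $\hilbert$, which matches $0\cdot\infty=0$ in $\expect{cA}{\rho}=c\cdot\expect{A}{\rho}$. This matters because the paper later uses scalar multiplication with arbitrary $c\ge 0$, for instance in the linearity of $wp$ and in constant propagation for $ert$.
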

\begin{proof}[Proof Sketch]
    For each operation, we show that the resulting form is closed, symmetric and positive, thus a form-expectation.
\end{proof}

Using these operations we can define addition, scalar multiplication and bounded operator application on expectations such that the resulting operators are still expectations:
\begin{definition}
    \label{def:FormAddMult}
    Let $A,B \in \predicate$ with corresponding form-expectations $a,b$. Let:
    \begin{itemize}
        \item $A \sumform B \in \predicate$ be the expectation that corresponds to form $a+b$,

        \item $\alpha \cdot A \in \predicate$ (with $\alpha \geq 0$) be the expectation that corresponds to the form $\alpha \cdot a$, and

        \item $\formsandwich{M}{A} \in \predicate$ (with bounded operator $M$ on $\hilbert$) be the expectation that corresponds to the form $a_M$.
    \end{itemize}
\end{definition}
We use $\sum A_n$ or similar for $A_n \in \predicate$ to refer to the sum $\sumform$ as defined in Definition \ref{def:FormAddMult}.
The fact that these operations are well-defined, i.e., that the resulting forms correspond to expectations, follows immediately from Propositions \ref{prop:OpstoForms} and \ref{prop:FormAddMult}.
This generalizes to infinite sums:
\begin{lemma} \label{lem:infSumExpect}
    Let $A_n \in \predicate$ for all $n$. Then $\sum_{n=0}^\infty A_n \in \predicate$\footnote{Convergence in WOT.}.
\end{lemma}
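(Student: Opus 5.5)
The plan is to reduce the infinite sum to the $\omega$-completeness result, Theorem~\ref{thm:omegaCompleteness}. Define the partial sums $S_N = \sum_{n=0}^{N} A_n$, using the addition $\sumform$ of Definition~\ref{def:FormAddMult}. By induction on $N$, together with Proposition~\ref{prop:FormAddMult} and Proposition~\ref{prop:OpstoForms}, each $S_N$ is an expectation. Its form-expectation is $s_N = a_0 + \dots + a_N$, with domain $\domain{s_N} = \bigcap_{n\le N}\domain{a_n}$. We then define $\sum_{n=0}^\infty A_n$ as $\lim_{N\to\infty} S_N$ and show that this limit exists in $\predicate$.

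The key step is to show that $\{S_N\}_N$ is increasing w.r.t.\ $\loewner$. By Proposition~\ref{prop:orderEquivalence} it suffices to show $s_N \le s_{N+1}$ for the forms. First, the domains shrink: $\domain{s_{N+1}} = \domain{s_N}\cap \domain{a_{N+1}} \subseteq \domain{s_N}$. Second, for $x \in \domain{s_{N+1}}$ we have $s_{N+1}[x] = s_N[x] + a_{N+1}[x] \ge s_N[x]$, because $a_{N+1}$ is positive. Hence $S_N \loewner S_{N+1}$. Theorem~\ref{thm:omegaCompleteness} then gives an expectation $A \in \predicate$ with $A = \bigvee_N S_N = \lim_{N} S_N$, where the convergence is in WOT. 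So the infinite sum exists and is an element of $\predicate$.

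The main obstacle is making precise that ``$\sum_{n=0}^\infty A_n$'' in WOT means exactly this limit of partial sums. The partial sums are defined through forms, not operator sums, so the footnoted WOT convergence must be inherited from Theorem~\ref{thm:omegaCompleteness}. If a more concrete description is wanted, I would use Kato's monotone convergence theorem \cite[VIII, Thm.~3.13a]{kato}, as in the proof of Theorem~\ref{thm:omegaCompleteness}. It identifies the limit form $a$ with domain $\{x \in \bigcap_n \domain{a_n} \mid \sum_n a_n[x] < \infty\}$ and values $a[x] = \sum_n a_n[x]$. This agrees with the intuitive pointwise definition of the sum and keeps all values in $\extR$ on the quadratic-form level. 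The remaining details reduce to results already established.
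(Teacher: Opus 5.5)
Your proposal is correct and follows the paper's own proof. Both show that the partial sums built with $\sumform$ are expectations forming an increasing chain (by positivity), and both then invoke Theorem~\ref{thm:omegaCompleteness} to get the WOT limit in $\predicate$. Your explicit form-level check of monotonicity and your description of the limit form via Kato just spell out what the paper states briefly.
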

This also carries over to infinite sums of form-expectations, i.e., $\sum_{n=0}^\infty a_n = a \in \formPredicate$ for $a_n \in \formPredicate$.

We can also show continuity and distributivity of our newly defined operations:
\begin{proposition}
    \label{prop:ContinuityDistributivityFormSum}
    Let $A, A_n, B, B_n \in \predicate$ with $A_{n-1}\loewner A_n, B_{n-1} \loewner B_n$ and $M$ be a bounded operator on $\hilbert$. Then
    \begin{align}
        & \left(\bigvee_n A_n \right)\sumform B = \bigvee_n \left(A_n \sumform B\right) \\
        & B \sumform \left(\bigvee_n A_n \right) = \bigvee_n \left(B \sumform A_n \right) \\
        & \left(\bigvee_n A_n \right)\sumform \left( \bigvee_n B_n \right) = \bigvee_n \left(A_n \sumform B_n \right) \\
        & \formsandwich{M}{\left(\bigvee_n A_n \right)} = \bigvee_n \left(\formsandwich{M}{A_n} \right)\\
        & \formsandwich{M}{(A \sumform B)} = \formsandwich{M}{A} \sumform \formsandwich{M}{B}
    \end{align}
\end{proposition}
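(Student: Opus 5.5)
All five identities will be proved on the level of form-expectations and then transferred back to expectations. By Propositions \ref{prop:OpstoForms} and \ref{prop:orderEquivalence}, the correspondence $A \mapsto a$ is an order isomorphism between $\predicate$ and $\formPredicate$. Hence suprema correspond, and $\sumform$ and $\formsandwich{M}{\cdot}$ correspond to $a+b$ and $a_M$ by Definition \ref{def:FormAddMult}. It therefore suffices to prove the statements for forms.

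The key tool is an explicit description of the supremum of an increasing sequence $a_n$ of form-expectations, which I take from Kato's monotone convergence theorem as used in the proof of Theorem \ref{thm:omegaCompleteness}. The supremum is the form $a$ with
\begin{equation*}
\domain{a} = \{x \mid \sup_n a_n[x] < \infty\}, \qquad a[x] = \lim_n a_n[x] = \sup_n a_n[x].
\end{equation*}
I will check (or extract from the proof of Theorem \ref{thm:omegaCompleteness}) that this limit form is closed and is the least upper bound. With that description every identity reduces to a pointwise computation with domains.

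\emph{Identities (1) and (2).} The form $a+b$ has domain $\domain{a}\cap\domain{b}$. Also $x$ satisfies $\sup_n (a_n+b)[x]<\infty$ iff $x\in\domain{b}$ and $\sup_n a_n[x]<\infty$. Here I use that $a_n\ge 0$, so $a_n[x] \le (a_n+b)[x]$, and similarly $b[x]\le (a_n+b)[x]$. On this domain the values agree, since $\lim_n (a_n[x]+b[x]) = \lim_n a_n[x] + b[x]$. Identity (2) is then commutativity of form addition.

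\emph{Identity (3).} This is the same argument with both sequences varying. Since $a_n[x], b_n[x]\ge 0$ and both are increasing, $\sup_n(a_n[x]+b_n[x])<\infty$ iff both suprema are finite, and the limit of the sum is the sum of the limits.

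\emph{Identity (4).} The domain of $(\bigvee a_n)_M$ is $\{u \mid Mu\in\domain{\bigvee a_n}\} = \{u\mid \sup_n a_n[Mu]<\infty\}$, which is exactly the domain of $\bigvee_n (a_n)_M$. The values are $\lim_n a_n[Mu]$ on both sides.

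\emph{Identity (5).} Both domains equal $\{u \mid Mu\in\domain{a}\cap\domain{b}\}$, and $(a+b)[Mu]=a[Mu]+b[Mu]$.

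In every case the quadratic forms agree, so by polarization the sesquilinear forms agree.

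The main obstacle is justifying the explicit supremum description, because Kato's theorem is usually stated for densely defined forms, while our forms may have non-dense domains. I would handle this by restricting to the closure $\overline{\domain{a}}$, in the spirit of the $\infty$-restriction, or by citing the non-densely-defined version of the monotone convergence theorem (e.g.\ Simon's) underlying Theorem \ref{thm:omegaCompleteness}. I also need to verify that the candidate form is the least upper bound in the order $\le$ with its reversed domain inclusion. This holds because any upper bound $c$ satisfies $\domain{c}\subseteq\domain{a_n}$ and $a_n[x]\le c[x]$ for all $n$, so $\sup_n a_n[x] \le c[x]<\infty$ on $\domain{c}$, and hence $c \ge a$.
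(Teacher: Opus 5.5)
Your proposal is correct and follows the paper's proof: the paper also passes to the corresponding form-expectations and uses the limit-form description $\domain{a}=\{x\in\bigcap_n\domain{a_n}\mid \bigvee_n a_n[x]<\infty\}$ from Kato's theorem. It then checks, identity by identity, that both sides have the same domain and the same values, exactly as you do (your domain description should be read with $x\in\bigcap_n\domain{a_n}$ included, or with $a_n[x]=\infty$ off $\domain{a_n}$). Your explicit least-upper-bound check and the polarization remark add detail the paper leaves implicit; for the former it relies on Theorem \ref{thm:omegaCompleteness}.
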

Note that for an increasing sequence $A_n$ of expectations, the supremum $\bigvee_n A_n$ is the limit of the sequence and an expectation according to Theorem \ref{thm:omegaCompleteness}.

We also have that distributivity applies to infinite sums:
\begin{lemma}\label{lem:distrInfSum}
    Let $ A_n\in \predicate$. Then $
        \formsandwich{M}{(\sum_{n=0}^\infty A_n)} = \sum_{n=0}^\infty \formsandwich{M}{A_n}$.
\end{lemma}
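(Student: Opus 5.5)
The plan is to reduce the statement to finite distributivity plus continuity. By definition (Lemma \ref{lem:infSumExpect}) the infinite sum $\sum_{n=0}^\infty A_n$ is the supremum of the increasing sequence of partial sums $S_N = \sum_{n=0}^N A_n$. Each $S_N$ lies in $\predicate$, and $S_N \loewner S_{N+1}$ because the form of $S_{N+1}$ is $s_N + a_{N+1}$, and $a_{N+1}$ is positive with $\domain{s_N + a_{N+1}} \subseteq \domain{s_N}$. Proposition \ref{prop:orderEquivalence} turns this form inequality into the operator inequality. Theorem \ref{thm:omegaCompleteness} then identifies $\sum_{n=0}^\infty A_n = \bigvee_N S_N$.

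With this in place the computation is a chain. First, by the fourth identity of Proposition \ref{prop:ContinuityDistributivityFormSum},
\[\formsandwich{M}{\Bigl(\bigvee_N S_N\Bigr)} = \bigvee_N \formsandwich{M}{S_N}.\]
Second, by induction on $N$ using the fifth identity of Proposition \ref{prop:ContinuityDistributivityFormSum}, $\formsandwich{M}{S_N} = \sum_{n=0}^N \formsandwich{M}{A_n}$. Third, these finite sums are again increasing partial sums of expectations, since each $\formsandwich{M}{A_n}\in\predicate$. Their supremum is therefore, by the same argument as in the first paragraph, $\sum_{n=0}^\infty \formsandwich{M}{A_n}$.

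The main obstacle is to justify that the infinite sum of Lemma \ref{lem:infSumExpect}, stated as a WOT limit, really coincides with the supremum $\bigvee_N S_N$ in the order $\loewner$. Theorem \ref{thm:omegaCompleteness} gives exactly this, since it asserts that the least upper bound of an increasing sequence equals its WOT limit. If one prefers to avoid the WOT formulation, the alternative is to work entirely at the level of forms. By Kato's monotone convergence, the limit form $s$ of $s_N$ has domain $\{x \in \bigcap_N \domain{s_N} : \sup_N s_N[x]<\infty\}$ and value $s[x]=\sum_n a_n[x]$. One then checks directly that $s_M[x] = s[Mx] = \sum_n a_n[Mx] = \sum_n (a_n)_M[x]$, with equal domains, namely those $x$ for which $\sum_n a_n[Mx]<\infty$. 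Since forms determine expectations uniquely by Proposition \ref{prop:OpstoForms}, this yields the claim.
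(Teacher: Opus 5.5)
Your proposal is correct and follows the paper's proof essentially step for step. You identify $\sum_{n=0}^\infty A_n$ with $\bigvee_N \sum_{n=0}^N A_n$ via Theorem~\ref{thm:omegaCompleteness}, pull $\formsandwich{M}{\cdot}$ through the supremum and the finite sums using Proposition~\ref{prop:ContinuityDistributivityFormSum}, and re-identify the resulting supremum as $\sum_{n=0}^\infty \formsandwich{M}{A_n}$. Your optional form-level check, with both domains equal to $\{x : Mx \in \bigcap_n \domain{a_n},\ \sum_n a_n[Mx]<\infty\}$, is also sound; it only spells out what the paper leaves implicit in the proof of Lemma~\ref{lem:infSumExpect}.
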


All definitions above coincide with the usual notions for bounded expectations:
\begin{proposition}
    \label{prop:BoundedAddition}
    Addition, scalar multiplication and application of bounded operators as defined in Definition \ref{def:FormAddMult} coincide with the usual operations for bounded expectations.
\end{proposition}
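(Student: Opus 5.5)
The plan is to reduce everything to quadratic forms and then use that, for bounded positive operators, a form determines its operator uniquely. Let $A,B$ be bounded positive operators on $\hilbert$, hence expectations. Then $\domain{A}=\hilbert$, $A$ is trivially $\infty$-square, and $A_\infty=A$. First I check that the form-expectation attached to $A$ by Proposition~\ref{prop:OpstoForms} is $a[x,y]=\langle \half{A}x,\half{A}y\rangle=\bra{x}A\ket{y}$ on $\domain{a}=\hilbert$. This holds because the bounded square root is defined everywhere and $\half{A}\half{A}=A$.

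The key step is a uniqueness statement. Suppose $c$ is a form-expectation with $\domain{c}=\hilbert$ and $c[x]=\bra{x}C\ket{x}$ for some bounded positive $C$. Then the expectation corresponding to $c$ is exactly $C$. To see this, note that $C$ is self-adjoint and positive, so $\langle \half{C}x,\half{C}y\rangle=c[x,y]$, and the two sesquilinear forms agree by polarization. The correspondence in Proposition~\ref{prop:OpstoForms} is a bijection (Kato's representation theorem, \cite[VI, Thm.~2.1, 2.7]{kato}), so the expectation must be $C$.

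With this in hand, each operation follows by a short computation.
\begin{itemize}
\item For addition, $(a+b)[x]=\bra{x}A\ket{x}+\bra{x}B\ket{x}=\bra{x}(A+B)\ket{x}$ on $\hilbert\cap\hilbert=\hilbert$. Since $A+B$ is bounded and positive, we get $A\sumform B=A+B$.
\item For scalar multiplication with $\alpha\ge 0$, $(\alpha a)[x]=\bra{x}\alpha A\ket{x}$, so $\alpha\cdot A$ coincides with the usual scalar multiple.
\item For operator application, $a_M[x]=a[Mx]=\bra{Mx}A\ket{Mx}=\bra{x}M^\dagger AM\ket{x}$ with $\domain{a_M}=\{u\mid Mu\in\hilbert\}=\hilbert$. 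The operator $M^\dagger AM$ is bounded and positive, so $\formsandwich{M}{A}=M^\dagger AM$.
\end{itemize}

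The main obstacle is the uniqueness step. It must rely on injectivity of the operator--form correspondence, which is not literally stated in Proposition~\ref{prop:OpstoForms}. The cleanest route is to argue directly: $\domain{a}=\hilbert$ forces $\domain{\half{C'_\infty}}=\hilbert$, where $C'$ is the expectation corresponding to $c$. By the closed graph theorem $\half{C'_\infty}$ is then bounded, so $C'$ is bounded and everywhere defined, and $\bra{x}C'\ket{x}=\bra{x}C\ket{x}$ for all $x$. Polarization then gives $C'=C$. A further point to check is that in the bounded case $\overline{\domain{C'}}=\hilbert$, so the $\infty$-restriction introduces no subtlety.
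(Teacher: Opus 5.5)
Your proposal is correct and follows essentially the same route as the paper. Both arguments note that for bounded $A,B$ the associated forms are defined on all of $\hilbert$, compute the forms of $A\sumform B$, $\alpha\cdot A$ and $\formsandwich{M}{A}$ directly as $\bra{x}(A+B)\ket{y}$, $\bra{x}\alpha A\ket{y}$ and $\bra{x}M^\dagger AM\ket{y}$, and read off the corresponding operator; the only difference is that you justify explicitly, via the closed graph theorem and polarization, that an everywhere-defined form-expectation determines a unique bounded expectation, which the paper uses without comment.
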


Now we show some technical properties of the order $\loewner$ that we defined on expectations (Definition \ref{def:expectationOrder}). In particular, we show that it is a partial order and that it is preserved under addition and the application of bounded operators.
\begin{proposition}
    \label{prop:orderproperties}
    Let $A,B \in \predicate$.
    \begin{itemize}
        \item $\loewner$ is a partial order, i.e., antisymmetric, reflexive and transitive.
        \item $A \loewner B \Rightarrow A\sumform C \loewner B \sumform C$ for any $C \in \predicate$.
        \item $A \loewner B \Rightarrow \formsandwich{M}{A} \loewner \formsandwich{M}{B}$ for any bounded operator $M$ on $\hilbert$.
    \end{itemize}
\end{proposition}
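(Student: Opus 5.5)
The plan is to transfer everything to form-expectations via Propositions \ref{prop:OpstoForms} and \ref{prop:orderEquivalence}. By Proposition \ref{prop:orderEquivalence}, $A \loewner B$ holds iff $a \leq b$ for the corresponding forms. By Definition \ref{def:FormAddMult}, $A \sumform C$ and $\formsandwich{M}{A}$ correspond to the forms $a+c$ and $a_M$. So each claim reduces to the matching statement about the form order $\leq$ on $\formPredicate$. The only non-formal ingredient is that the correspondence $A \mapsto a$ of Proposition \ref{prop:OpstoForms} is a bijection. We need this because distinct expectations must give distinct forms. The key point is that $A_\infty$ is determined by its form: $\domain{a} = \domain{\half{A_\infty}}$ and $a$ fix the positive self-adjoint operator $A_\infty$ on the closed subspace $\overline{\domain{a}}$, by uniqueness in Kato's representation theorem. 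Also, $A$ is determined by $A_\infty$, since they share domain and action.

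\emph{Partial order.} Reflexivity is immediate from the definition. For transitivity, suppose $a \leq b$ and $b \leq c$. Then $\domain{c} \subseteq \domain{b} \subseteq \domain{a}$, and for $x \in \domain{c}$ we have $a[x] \leq b[x] \leq c[x]$. For antisymmetry, $a \leq b$ and $b \leq a$ give $\domain{a} = \domain{b}$ and $a[x] = b[x]$ on this domain. By polarization the sesquilinear forms then agree, so $a = b$. Injectivity of the correspondence, discussed above, then yields $A = B$. I expect this injectivity step to be the main subtlety, because the domain $\domain{A}$ need not be dense. I would handle it by applying the first representation theorem on the Hilbert space $\overline{\domain{a}}$, where $a$ is densely defined. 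Here I use that $\overline{\domain{\half{A_\infty}}} = \overline{\domain{A}}$, since $\domain{A} \subseteq \domain{\half{A_\infty}} \subseteq \overline{\domain{A}}$.

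\emph{Addition.} Suppose $a \leq b$ and $c \in \formPredicate$. Then $\domain{b+c} = \domain{b} \cap \domain{c} \subseteq \domain{a} \cap \domain{c} = \domain{a+c}$. For $x$ in this set, $(a+c)[x] = a[x] + c[x] \leq b[x] + c[x] = (b+c)[x]$. Hence $a+c \leq b+c$. By Proposition \ref{prop:FormAddMult} both forms lie in $\formPredicate$, so Proposition \ref{prop:orderEquivalence} gives $A \sumform C \loewner B \sumform C$.

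\emph{Bounded operator application.} Suppose $a \leq b$. If $u \in \domain{b_M}$, then $Mu \in \domain{b} \subseteq \domain{a}$, so $u \in \domain{a_M}$. Moreover $a_M[u] = a[Mu] \leq b[Mu] = b_M[u]$. Hence $a_M \leq b_M$, and by Proposition \ref{prop:FormAddMult} and Proposition \ref{prop:orderEquivalence} we get $\formsandwich{M}{A} \loewner \formsandwich{M}{B}$.
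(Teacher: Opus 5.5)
Your proposal is correct and follows essentially the same route as the paper: reflexivity and transitivity come from chaining the domain inclusions and inequalities, and the $\sumform$ and $\formsandwich{M}{A}$ cases use the same form-level domain and inequality computations. The one difference is that you explicitly justify the last step of antisymmetry (equal quadratic forms imply $A = B$, via polarization and uniqueness in the representation theorem on $\overline{\domain{A}}$), which the paper simply asserts with ``so $A=B$''.
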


\subsection{Expectation Value}
So far, we considered expectation values in an abstract way, i.e., we only required them to be positive, real and possibly infinite.
Now we give a concrete definition of the expectation value for a given state based on the spectral measure of the corresponding operator.
We adapt the definition from \cite{ExpectationValuesKraus}:
\begin{definition} \label{def:expectationValue}
    Let $A \in \predicate$ and $A_\infty$ be its $\infty$-restriction. We define the \emph{expectation value} $\mathbb{E}_A: \density \to \mathbb{R}_{\geq 0}^\infty$ for $\rho \in \density$ as
    \begin{equation*}
        \expect{A}{\rho} = \begin{cases}
            \int \lambda d (tr(E_\lambda \rho)) & \text{if } \int \abs{\lambda} d (tr(E_\lambda \rho)) < \infty \\
            \infty & \text{otherwise}
            \end{cases}
    \end{equation*}
    where $E_\lambda$ is the spectral measure of $A_\infty$.
\end{definition}

For (bounded) expectation $A \loewner \identityOp$, the integral always exists and the expectation value is equivalent to $tr(A\rho)$ \cite[Remark 2]{ExpectationValuesKraus}, the satisfaction of a predicate as in \cite{DHondtWeakestPreconditions}.

As this is a quite technical definition, we present two equivalent definitions of $\mathbb{E}_A$ which are more intuitive and easier to use in proofs (e.g., in the proof of Lemma \ref{lem:orderExpectation}). The first one is based on $\half{A}$ and aligns with the definition of the order $\loewner$; the second one is based on the corresponding form-expectation.
\begin{lemma}
    \label{lem:expectAlternativeDef}
    Let $A \in \predicate$ with form $a$. Let $\rho = \sum_i p_i \ket{\psi_i}\bra{\psi_i}\in\density$ be a decomposition with $p_i > 0$. Then $\mathbb{E}_A(\rho) = \sum_i p_i \expect{A}{\ket{\psi_i}\bra{\psi_i}}$ where either
    \begin{enumerate}
        \item \begin{equation*}
        \expect{A}{\ket{\psi}\bra{\psi}} = \begin{cases}
            \norm{\half{A_\infty} \ket{\psi}}^2 & \text{if } \ket{\psi} \in \domain{\half{A_\infty}} \\
            \infty & \text{otherwise}
            \end{cases} \text{ or }
        \end{equation*}
        \item \begin{equation*}
            \expect{A}{\ket{\psi}\bra{\psi}} = \begin{cases}
                a[\ket{\psi}] & \text{if } \ket{\psi} \in \domain{a} \\
                \infty & \text{otherwise.}
            \end{cases}
        \end{equation*}
    \end{enumerate}
\end{lemma}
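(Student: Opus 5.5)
We first reduce everything to the spectral measure of the self-adjoint $\infty$-restriction $A_\infty$, which acts on the closed subspace $K \df \overline{\domain{A}}$. Let $P$ denote the orthogonal projection onto $K$, and let $E$ be the spectral measure of $A_\infty$, a projection-valued measure on $K$. We view each $E_\Delta$ as an operator on $\hilbert$ by setting it to zero on $K^\perp$. Since $A_\infty$ is positive, $E$ is supported on $[0,\infty)$, so the integrand $\lambda$ is non-negative. Definition~\ref{def:expectationValue} therefore simply says
\begin{equation*}
\expect{A}{\rho} = \int_{[0,\infty)} \lambda \, d\mu_\rho(\lambda) \in \extR, \qquad \mu_\rho(\Delta) \df tr(E_\Delta \rho),
\end{equation*}
with the value $\infty$ whenever the integral diverges. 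A caveat needs checking here: $E$ has total mass $P$ rather than $\identityOp$. The convention implicit in Definition~\ref{def:expectationValue} must be that a state with a component outside $K$ receives the value $\infty$. I would make this explicit by treating $E$ as having an atom ``at $\infty$'' of weight $\identityOp - P$, and verify that this matches both claimed formulas: a vector with a nonzero component in $K^\perp$ lies neither in $\domain{\half{A_\infty}} \subseteq K$ nor in $\domain{a}$.

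The first step is the pure-state case, formula~(1). For $\ket{\psi} \in \hilbert$, the measure $\mu_\psi(\Delta) = \norm{E_\Delta \ket{\psi}}^2$ is a finite positive measure. By the functional calculus for unbounded self-adjoint operators, $\half{A_\infty} = \int \lambda^{1/2}\, dE_\lambda$, and
\begin{equation*}
\ket{\psi} \in \domain{\half{A_\infty}} \iff \ket{\psi} \in K \text{ and } \int \lambda \, d\mu_\psi < \infty.
\end{equation*}
In that case $\norm{\half{A_\infty}\ket{\psi}}^2 = \int \lambda \, d\mu_\psi$. Together with the $K^\perp$ convention, this gives formula~(1) exactly.

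Formula~(2) then follows from Proposition~\ref{prop:OpstoForms}. It gives $\domain{a} = \domain{\half{A_\infty}}$ and $a[\ket{\psi}] = \norm{\half{A_\infty}\ket{\psi}}^2$, so the two case distinctions coincide.

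The main step is linearity over the decomposition $\rho = \sum_i p_i \ket{\psi_i}\bra{\psi_i}$ with $p_i > 0$. For each Borel set $\Delta$, $\mu_\rho(\Delta) = tr(E_\Delta\rho) = \sum_i p_i \norm{E_\Delta\ket{\psi_i}}^2$. This uses trace-norm convergence of the decomposition (available by the preliminaries) and the boundedness of $E_\Delta$, so that $\rho \mapsto tr(E_\Delta \rho)$ is continuous. Hence $\mu_\rho = \sum_i p_i \mu_{\psi_i}$ as measures, and countable additivity in $i$ holds because all terms are non-negative. Since $\lambda \geq 0$, monotone convergence (Tonelli for the counting measure times $\mu$) gives
\begin{equation*}
\int \lambda \, d\mu_\rho = \sum_i p_i \int \lambda \, d\mu_{\psi_i}
\end{equation*}
in $\extR$, including the case where one side is infinite. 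For the $K^\perp$ part: $tr((\identityOp - P)\rho) = \sum_i p_i \norm{(\identityOp-P)\ket{\psi_i}}^2$, which is positive iff some $\ket{\psi_i}$ has a component outside $K$, since every $p_i > 0$. Thus both sides equal $\infty$ simultaneously.

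The main obstacle I expect is not the measure theory but fixing rigorously how Definition~\ref{def:expectationValue} treats non-dense domains. I would first try the extended-measure convention above, since it is what makes the lemma true. I would then check that it agrees with the original definition whenever $\domain{A}$ is dense.
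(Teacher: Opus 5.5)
Your proposal is correct and follows the same outline as the paper's proof: pure states via the spectral characterization of $\domain{\half{A_\infty}}$, formula~(2) from Proposition~\ref{prop:OpstoForms}, then additivity over the decomposition; the only difference in route is that you derive the mixed-state step yourself from $\mu_\rho=\sum_i p_i\mu_{\psi_i}$ and monotone convergence, where the paper cites Kraus--Schr\"oter. Your explicit handling of $\overline{\domain{A}}^{\perp}$ is a genuine improvement, because the paper applies $\domain{\half{A_\infty}}=\{\psi \mid \int\abs{\lambda}\,d(tr(E_\lambda\ket{\psi}\bra{\psi}))<\infty\}$ to all $\psi\in\hilbert$, which fails for vectors with a component outside $\overline{\domain{A}}$ under the literal reading of Definition~\ref{def:expectationValue} (it would even give $\expect{\inftyOp}{\rho}=0$), so your ``weight at $\infty$'' convention is exactly what makes both the lemma and the paper's claim $\expect{\inftyOp}{\rho}=\infty$ hold.
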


Note that whenever $\ket{\psi} \in \domain{A}$, then $\expect{A}{\ket{\psi}\bra{\psi}} = \bra{\psi} A \ket{\psi}$ \cite[Lemma 1.11]{ExpectationValuesKraus}.

The Loewner order on 1-bounded expectations can be characterized using expected values, i.e., $A \loewner B$ if $tr(A\rho) \leq tr(B\rho)$ for all $\rho$ \cite[Prop. 2.5]{DHondtWeakestPreconditions}. This also holds for our definition of expectation values:
\begin{lemma}
    \label{lem:orderExpectation}
     For $A , B \in \predicate$:
     \begin{align*}
         A \loewner B \Leftrightarrow \text{ for all } \rho \in \density: \expect{A}{\rho} \leq \expect{B}{\rho}
    \end{align*}
\end{lemma}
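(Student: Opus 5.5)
We prove the two directions separately. In both we use Lemma \ref{lem:expectAlternativeDef}, second form, which reduces expectation values to the forms $a,b$ corresponding to $A,B$ (Proposition \ref{prop:OpstoForms}). By Proposition \ref{prop:orderEquivalence}, $A \loewner B$ is equivalent to $a \leq b$, i.e. $\domain{b}\subseteq\domain{a}$ and $a[x]\leq b[x]$ for all $x\in\domain{b}$. So we show that $a\leq b$ holds iff $\expect{A}{\rho}\leq\expect{B}{\rho}$ for all $\rho\in\density$.

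For ``$\Rightarrow$'', fix $\rho\in\density$ and choose a decomposition $\rho=\sum_i p_i\ket{\psi_i}\bra{\psi_i}$ with $p_i>0$, for instance the spectral one. By Lemma \ref{lem:expectAlternativeDef},
\[
\expect{A}{\rho}=\sum_i p_i\,\expect{A}{\ket{\psi_i}\bra{\psi_i}} \quad\text{and}\quad \expect{B}{\rho}=\sum_i p_i\,\expect{B}{\ket{\psi_i}\bra{\psi_i}},
\]
both taken in $\extR$. It therefore suffices to compare the pure terms. If $\ket{\psi_i}\in\domain{b}$, then $\ket{\psi_i}\in\domain{a}$ and $a[\psi_i]\leq b[\psi_i]$. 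If $\ket{\psi_i}\notin\domain{b}$, the right-hand term is $\infty$ and the inequality is trivial. Since all terms are nonnegative and the weights are positive, the termwise inequality passes to the (possibly infinite) sums. We use monotonicity of summation of nonnegative families in $\extR$ here.

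For ``$\Leftarrow$'', we specialize to pure states. For a unit vector $\psi$ (we may normalize, as forms are quadratic and both domains are subspaces), take $\rho=\ket{\psi}\bra{\psi}$. The trivial decomposition gives $\expect{A}{\rho}=a[\psi]$ or $\infty$, and likewise for $B$. Now let $\psi\in\domain{b}$. Then $\expect{B}{\rho}=b[\psi]<\infty$, so the hypothesis forces $\expect{A}{\rho}<\infty$. Hence $\psi\in\domain{a}$ and $a[\psi]\leq b[\psi]$. For arbitrary nonzero $x\in\domain{b}$ we rescale: $x/\norm{x}\in\domain{b}$, and homogeneity $a[cx]=\abs{c}^2a[x]$ gives the claim. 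The case $x=0$ is trivial. This yields $\domain{b}\subseteq\domain{a}$ and $a[x]\leq b[x]$ on $\domain{b}$, so $a\leq b$ and hence $A\loewner B$ by Proposition \ref{prop:orderEquivalence}.

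The main obstacle is the domain part of ``$\Leftarrow$''. It relies on Lemma \ref{lem:expectAlternativeDef} assigning the value $\infty$ exactly to vectors outside the form domain. That is what lets finiteness of $\expect{A}{\rho}$ certify domain membership. A secondary point is that Lemma \ref{lem:expectAlternativeDef} must be independent of the chosen decomposition, so that we may pick the trivial one for pure states. Both facts are given by the stated lemma, so the argument reduces to these bookkeeping steps.
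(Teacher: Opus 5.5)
Your proof is correct and follows the paper's argument: the forward direction compares the pure-state terms of a decomposition, with the $\infty$ case being trivial, and the backward direction tests the hypothesis on $\rho=\ket{\psi}\bra{\psi}$ for $\psi$ in the domain of $B$'s form to get both the domain inclusion and the inequality. The only differences are that you work with the forms $a,b$ via Proposition \ref{prop:orderEquivalence} rather than directly with $\norm{\half{A_\infty}\psi}^2$, and that you normalize $\psi$ so that $\ket{\psi}\bra{\psi}\in\density$ genuinely holds, a step the paper's proof leaves implicit.
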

The correspondence to expected values is a healthiness condition indicating that our order $\loewner$ is indeed the right order to use for expectations.

Using this result, it follows that $\zeroOp$ and $\inftyOp$ are the least and greatest element of $\predicate$ with respect to $\loewner$ as $\expect{\zeroOp}{\rho} = 0 \leq \expect{A}{\rho} \leq \expect{\inftyOp}{\rho} = \infty$ for all $A \in \predicate, \rho \in \density$ ($\rho \neq 0$, otherwise $\expect{\inftyOp}{\rho} = 0 = \expect{A}{\rho}$).

Now that we have established expectation values we can show that they are linear and continuous in both the expectation and the state:
\begin{proposition}
    \label{prop:expectationProperties}
    \text{  }
    \begin{itemize}
        \item Expectation values are linear (in expectation and state), i.e., $\expect{A\sumform B}{\rho} = \expect{A}{\rho} + \expect{B}{\rho}$ and $\expect{A}{\rho + \sigma} = \expect{A}{\rho} + \expect{A}{\sigma}$
         for $A, B \in \predicate$ and $\rho, \sigma \in \density$ such that $\rho+\sigma \in \density$.
        \item Expectation values are scalar multiplicative (in expectation and state), i.e.,  $\expect{cA}{\rho} = c \cdot \expect{A}{\rho}$ and $\expect{A}{c\rho} = c \cdot \expect{A}{\rho}$ for $c \geq 0, \rho, \sigma \in \density$ such that $c \rho, \rho+\sigma \in \density$.
        \item Expectation values are continuous (in expectation and state), i.e., $\expect{\bigvee_i A_i}{\rho} = \bigvee_i \expect{A_i}{\rho}$ and $\expect{A}{\bigvee_i \rho_i} = \bigvee_i \expect{A}{\rho_i}$ for an increasing chain of expectations $\{A_i\}_i \subseteq \predicate$ and states $\{\rho_i\}_i \subseteq \density$ respectively.
    \end{itemize}
\end{proposition}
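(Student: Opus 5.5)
The plan is to reduce every claim to pure states using Lemma~\ref{lem:expectAlternativeDef}, and then to argue with the form-expectation characterization, item (2) of that lemma.

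\emph{Linearity and scalar multiplicativity in the expectation.} Let $a,b$ be the forms of $A,B$. By Definition~\ref{def:FormAddMult}, $A\sumform B$ corresponds to $a+b$, whose domain is $\domain{a}\cap\domain{b}$. Fix a unit vector $\ket{\psi}$ and split into cases:
\begin{itemize}
  \item if $\ket{\psi}\in\domain{a}\cap\domain{b}$, then $(a+b)[\psi]=a[\psi]+b[\psi]$;
  \item otherwise $\ket{\psi}\notin\domain{a+b}$, and one summand on the right is $\infty$, so both sides equal $\infty$.
\end{itemize}
Now take a decomposition $\rho=\sum_i p_i\ket{\psi_i}\bra{\psi_i}$. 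Summing over $i$ gives the claim, since sums in $\extR$ of non-negative terms may be rearranged freely. The scalar case $\alpha\cdot a$ is the same argument, with the convention $0\cdot\infty=0$. This convention has to be checked against the paper's definition: for $\alpha=0$ the form $0\cdot a$ has domain $\domain{a}$, so for $\psi\notin\domain{a}$ the value is $\infty$. So I will either treat $c>0$ or read $0\cdot A$ with the paper's convention.

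\emph{Linearity in the state.} Choose decompositions $\rho=\sum_i p_i\ket{\psi_i}\bra{\psi_i}$ and $\sigma=\sum_j q_j\ket{\phi_j}\bra{\phi_j}$. Their union is a (non-orthogonal) decomposition of $\rho+\sigma$ with positive weights. Lemma~\ref{lem:expectAlternativeDef} holds for arbitrary decompositions, not only spectral ones, so $\expect{A}{\rho+\sigma}$ equals the combined sum, which splits as required. For $\expect{A}{c\rho}$, rescale the weights $p_i\mapsto cp_i$.

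\emph{Continuity.} This is the main part.
\begin{itemize}
  \item \emph{In the expectation.} For an increasing chain $A_i$ with forms $a_i$, Theorem~\ref{thm:omegaCompleteness} (via Kato's monotone convergence theorem) says the supremum corresponds to the form $a$ with
  \[ \domain{a}=\{\psi : \sup_i a_i[\psi]<\infty\}, \qquad a[\psi]=\lim_i a_i[\psi]. \]
  Hence, for each pure state, $\expect{\bigvee A_i}{\ket\psi\bra\psi}=\sup_i \expect{A_i}{\ket\psi\bra\psi}$, with both sides $\infty$ off $\domain{a}$. Summing over a fixed decomposition of $\rho$, monotone convergence for series in $\extR$ (the terms increase in $i$ by Lemma~\ref{lem:orderExpectation}) exchanges $\sup$ and $\sum$.
  \item \emph{In the state.} For an increasing chain $\rho_i\to\rho$, I would write $\rho=\rho_0+\sum_{k\ge0}(\rho_{k+1}-\rho_k)$, where each difference is a positive trace-class operator, hence a scaled density operator. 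Countable additivity in the state then gives $\expect{A}{\rho}=\sum_k \expect{A}{\rho_{k+1}-\rho_k}+\expect{A}{\rho_0}$, whose partial sums are exactly $\expect{A}{\rho_n}$, using finite additivity.
\end{itemize}

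The obstacle is establishing this countable additivity. It follows from Lemma~\ref{lem:expectAlternativeDef} once the union of decompositions of the countably many differences is shown to be a valid decomposition of $\rho$ in the required topology. I would justify this by trace-norm convergence of the telescoping sum, which holds because the partial sums are increasing and bounded by $\rho$, together with the coincidence of topologies cited in the preliminaries.
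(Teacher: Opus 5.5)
Your arguments for linearity, scalar multiplicativity and continuity in the expectation are essentially the paper's. In each case you split on the form domains for pure states, sum over a decomposition, and use monotone convergence to exchange $\sup$ and $\sum$.

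Your $c=0$ caveat is justified, and the problem is sharper than you state. Because $\domain{0\cdot a}=\domain{a}$, the expectation half needs $0\cdot\infty=\infty$. The state half needs $0\cdot\infty=0$, since $\expect{A}{0\cdot\rho}=0$. So restricting to $c>0$ is the only consistent reading; the paper's proof passes over this when it writes $\expect{cA}{\rho}=\infty=\expect{A}{\rho}$.

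The genuine difference is continuity in the state. The paper approximates the \emph{expectation}:
\begin{itemize}
\item it takes the bounded cut-offs $A_n=\int_0^n\lambda\,dE_\lambda$;
\item it uses $\expect{A_n}{\rho}=tr(A_n\rho)$ and trace-norm continuity of $\rho\mapsto tr(A_n\rho)$;
\item it gets $\expect{A}{\cdot}=\bigvee_n\expect{A_n}{\cdot}$ from the already proven continuity in the expectation, then swaps two monotone suprema.
\end{itemize}

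You telescope the \emph{state} instead, and get countable additivity directly from Lemma~\ref{lem:expectAlternativeDef} applied to the concatenated decompositions of the differences $\rho_{k+1}-\rho_k$. This is sound. The differences are positive with summable traces, so the concatenated family converges absolutely in trace norm to $\rho$, and Tonelli handles the double sum in $\extR$.

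What your route buys:
\begin{itemize}
\item It needs only that Lemma~\ref{lem:expectAlternativeDef} holds for arbitrary countable, non-orthogonal decompositions, as stated there.
\item It proves the state half of Lemma~\ref{lem:infSumLinExpectation} directly, rather than deriving it from continuity as the paper does.
\item It never approximates $A$ from below by bounded operators. For non-densely-defined expectations such as $\inftyOp$, the cut-offs of $A_\infty$ alone cannot do this, because they do not see vectors outside $\overline{\domain{A}}$. One would have to add, for example, $n$ times the projection onto $\overline{\domain{A}}^\perp$.
\end{itemize}
The paper's route, in turn, reduces the argument to the familiar bounded case and reuses the first half of the continuity claim, with no bookkeeping over infinitely many decompositions.
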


Linearity can also be extended to infinite sums analogously to before:
\begin{lemma} \label{lem:infSumLinExpectation}
    Let $A_n \in \predicate$ and $\rho_n \in \density$, for all $n$. Then $\expect{\sum_n A_n}{\rho} = \sum_n \expect{A_n}{\rho}$ for all $\rho \in \density$ and $\expect{A}{\sum_n \rho_n} = \sum_n \expect{A}{\rho_n}$ for all $A \in \predicate$ and $\rho_n, \sum_n \rho_n \in \density$.
\end{lemma}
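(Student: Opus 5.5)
The plan is to reduce both statements to the finite case (Proposition \ref{prop:expectationProperties}) together with the continuity statements of Proposition \ref{prop:expectationProperties} and Theorem \ref{thm:omegaCompleteness}. The key device is to write each infinite sum as the supremum of its increasing sequence of partial sums.

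\textbf{Sum in the expectation.} Let $S_N = \sum_{n=0}^N A_n$, where the sum is taken with $\sumform$.
\begin{itemize}
\item Since every $A_n$ is positive, the forms satisfy $s_N \leq s_{N+1}$: the form domain can only shrink under addition, and $a_{N+1}[x] \geq 0$. By Proposition \ref{prop:orderEquivalence}, the sequence $\{S_N\}_N$ is therefore increasing in $\predicate$.
\item I then identify $\sum_{n=0}^\infty A_n$ (Lemma \ref{lem:infSumExpect}) with $\bigvee_N S_N$ from Theorem \ref{thm:omegaCompleteness}. Both are defined as WOT or form limits of the same partial sums, and limits are unique. If Lemma \ref{lem:infSumExpect} is proved exactly this way, the identification is immediate; otherwise I argue on the form level that both forms have the same domain and the same quadratic values, via Kato's monotone convergence.
\item Continuity (the third item of Proposition \ref{prop:expectationProperties}) gives $\expect{\bigvee_N S_N}{\rho} = \bigvee_N \expect{S_N}{\rho}$.
\item Finite linearity gives $\expect{S_N}{\rho} = \sum_{n\le N}\expect{A_n}{\rho}$.
\end{itemize}
Taking the supremum in $\extR$ yields $\sum_n \expect{A_n}{\rho}$, because a series of nonnegative terms converges to the supremum of its partial sums.

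\textbf{Sum in the state.} Let $\sigma_N = \sum_{n\le N}\rho_n$.
\begin{itemize}
\item Each $\sigma_N$ is positive, and $\sigma_N \loewner \sum_n \rho_n$. Hence $tr(\sigma_N) \le tr(\sum_n\rho_n) \le 1$, so $\sigma_N \in \density$ and finite linearity in the state applies.
\item The sequence $\sigma_N$ is Loewner-increasing. Its supremum in the $\omega$-cpo of partial density operators is its limit, which is $\sum_n \rho_n$. Convergence of an increasing bounded sequence of positive operators holds in trace norm here, since the traces converge; so the supremum agrees with the given sum.
\item Continuity in the state from Proposition \ref{prop:expectationProperties} gives $\expect{A}{\sum_n\rho_n} = \bigvee_N \expect{A}{\sigma_N} = \bigvee_N\sum_{n\le N}\expect{A}{\rho_n}$, which is the desired series.
\end{itemize}

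\textbf{Main obstacle.} The delicate step is confirming that the infinite sums in the statement really are the suprema used in the continuity results. For operators, the WOT-limit notion of Lemma \ref{lem:infSumExpect} must coincide with the $\loewner$-supremum of Theorem \ref{thm:omegaCompleteness}. For states, SOT or trace-norm convergence must coincide with the Loewner supremum.

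I would settle both by passing to forms. There, Kato's monotone convergence theorem characterizes the limit form as having domain $\{x : \sup_N s_N[x] < \infty\}$ with value $\sup_N s_N[x]$. This matches the formula in Lemma \ref{lem:expectAlternativeDef}(2) applied termwise, with the value $\infty$ outside the domain. This even gives a direct alternative proof of the first claim for pure states $\ket{\psi}\bra{\psi}$; the mixed case then follows from the decomposition in Lemma \ref{lem:expectAlternativeDef} by exchanging two sums of nonnegative terms (Tonelli).
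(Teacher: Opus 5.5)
Your proposal is correct and follows essentially the paper's proof. You write $\sum_n A_n$ (resp.\ $\sum_n \rho_n$) as the supremum of its increasing partial sums, apply the continuity part of Proposition \ref{prop:expectationProperties}, then finite linearity, and identify the supremum of the nonnegative partial sums with the series. Your extra checks fill in details the paper only asserts: that the partial sums of states lie in $\density$, and that the WOT and trace-norm sums coincide with the Loewner suprema.
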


We conclude with a result that will be useful in the following sections:
\begin{proposition}
    \label{prop:boundedOpinExpectation}
    Let $A \in \predicate$ and $M$ a bounded operator on $\hilbert$. Then $\expect{\formsandwich{M}{A}}{\rho} = \expect{A}{M \rho M^\dagger}$ for all $\rho \in \density$.
\end{proposition}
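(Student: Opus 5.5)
The plan is to reduce both sides to pure states via Lemma~\ref{lem:expectAlternativeDef}, using the form-based characterization (2), and then compare pointwise.

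\emph{Step 1: decompose the state.} Fix a decomposition $\rho = \sum_i p_i \ket{\psi_i}\bra{\psi_i}$ with $p_i>0$. Then $M\rho M^\dagger = \sum_i p_i M\ket{\psi_i}\bra{\psi_i}M^\dagger$. This sum still converges, because $X \mapsto MXM^\dagger$ is bounded (continuous) on trace class operators. Moreover $M\rho M^\dagger$ is positive and trace class, but its trace may exceed $1$ if $\norm{M}>1$.

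\emph{Step 2: handle the trace bound.} To stay within $\density$, I would scale. Write $c = \max(1,\norm{M}^2)$. Then $\frac1c M\rho M^\dagger \in \density$, and by scalar multiplicativity (Proposition~\ref{prop:expectationProperties}) it suffices to prove the identity for this rescaled state. Alternatively, I would note that Definition~\ref{def:expectationValue} makes sense verbatim for any positive trace-class operator, so the statement can be read that way.

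Next, rewrite each term $M\ket{\psi_i}\bra{\psi_i}M^\dagger$ as $\norm{M\psi_i}^2\,\ket{\phi_i}\bra{\phi_i}$ with unit vectors $\phi_i$, dropping the terms with $M\psi_i=0$. Lemma~\ref{lem:expectAlternativeDef} then gives
\[
\expect{A}{M\rho M^\dagger} = \sum_i p_i \,\expect{A}{M\ket{\psi_i}\bra{\psi_i}M^\dagger}.
\]
Here the pure-state value for a nonnormalized vector $v$ is $a[v]$ if $v\in\domain{a}$ and $\infty$ otherwise. This follows by homogeneity of the quadratic form, and it gives $0$ when $v=0$, consistent with the fact that $0\in\domain{a}$.

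\emph{Step 3: compare pure states.} Let $a$ be the form of $A$. By Definition~\ref{def:FormAddMult}, the form of $\formsandwich{M}{A}$ is $a_M$, with $\domain{a_M}=\{u\mid Mu\in\domain{a}\}$ and $a_M[u]=a[Mu]$. Applying Lemma~\ref{lem:expectAlternativeDef}(2) to $\formsandwich{M}{A}$ gives $\expect{\formsandwich{M}{A}}{\ket{\psi_i}\bra{\psi_i}} = a[M\psi_i]$ if $M\psi_i\in\domain{a}$, and $\infty$ otherwise. This is exactly the pure-state value of $A$ at $M\psi_i$ from Step 2. Summing with weights $p_i$ in $\extR$ yields the claim.

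The main obstacle is the bookkeeping in Steps 1 and 2. One must ensure that Lemma~\ref{lem:expectAlternativeDef} is applicable to the pushed-forward decomposition: it need not be a spectral decomposition, and the vectors $M\psi_i$ are neither normalized nor orthogonal. Since the lemma is stated for an arbitrary decomposition with positive weights, I would invoke it directly, after normalizing and discarding the zero vectors. I would also check carefully that rescaling by $c$ and the convention at $\rho$ with $M\rho M^\dagger=0$ cause no issues.
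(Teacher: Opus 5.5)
Your proof is correct and follows the paper's argument. Both decompose $\rho$, use the form characterization of Lemma~\ref{lem:expectAlternativeDef} with $\domain{a_M}=\{u\mid Mu\in\domain{a}\}$ and $a_M[u]=a[Mu]$, and compare termwise, with the paper splitting on whether some $\psi_i\notin\domain{a_M}$. Your extra bookkeeping goes beyond the paper, which leaves these points implicit: normalizing $M\psi_i$, discarding zero vectors, and handling $\mathrm{tr}(M\rho M^\dagger)$ possibly exceeding $1$.
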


%% file: 4_programs.tex
\section{A Quantum Programming Language with Rewards}
\label{sec:programs}

Our aim is to extend existing quantum while languages to include rewards to enable reasoning about costs such as runtime. In this section, we first present some desirable properties about the semantics of such a programming language. Then we define the programming language qrWhile and show that it satisfies these properties. Omitted proofs can be found in Appendix \ref{sec:appendix_pl}.

\subsection{Quantum Programs with Rewards}
\label{sec:programGeneral}

We consider a separable Hilbert space $\hilbert$ as the state space of our quantum programs and positive (real) rewards only. Our aim is to define a semantics for quantum programs $S$ as two separate maps $\semanticsD{S}: \density \to \density$ and $\semanticsR{S}: \density \to \extR$, where $\density$ is the set of density operators on $\hilbert$.
The map $\semanticsD{S}$ describes how the quantum state changes during the execution of the program (the usual denotational semantics), while $\semanticsR{S}$ describes the expected reward collected during the program execution. The use of two separate maps is similar to \cite{bayesianInf} and allows us to reason about the expected reward without modifying the quantum state. If we would integrate the expected reward as a variable into the quantum state, we have to be careful about diverging loops similar to the first example from the introduction (these problems occur in the probabilistic setting as well \cite[Fig. 5]{philippHigherMoments}). Desirable properties of these two maps are:
\begin{definition}\label{def:generalSemantics}
    A semantics for quantum programs with rewards is a pair of maps $\semanticsD{S}: \density \to \density$ and $\semanticsR{S}: \density \to \extR$ such that
    \begin{itemize}
        \item $\semanticsD{S}$ is linear, trace non-increasing and completely positive
        \item if $S$ is reward-free, then $\semanticsR{S}(\rho) = 0$ for all $\rho \in \density$
        \item $\semanticsR{S}$ is linear
        \item the form $a$ defined by $a[\ket{\psi}] = \semanticsR{S}(\ket{\psi}\bra{\psi})$ with $\domain{a} =\{\ket{\psi} \mid \semanticsR{S}(\ket{\psi}\bra{\psi}) < \infty\}$ is a positive, symmetric, closed form.
    \end{itemize}
    Note that we implicitly mean the extensions to trace class operators $\traceclass$ and $\C$.
    The set of all quantum programs over $\hilbert$ with such a semantics is denoted by $\programs$.
\end{definition}
Requiring $\semanticsD{S}$ to be linear, trace non-increasing and completely positive are the common properties that we expect from the denotational semantics of quantum programs / a quantum channel (see, e.g., \cite{floydHoareLogic} and \cite{bayesianInf}). This also implies that $\semanticsD{S}$ can be represented as a Kraus decomposition \cite[Theorem 9]{krausChoi}.
The second constraint $\semanticsR{S}(\rho) = 0$ for reward-free $S$ asserts that the expected reward is zero if the program does not contain any reward statements, and is quite natural. Similarly, the third constraint, that $\semanticsR{S}$ is linear, is also natural as the expected reward should be linear in the input state. The last constraint probably needs more motivation. For finite programs (e.g., without loops), the expected reward $\semanticsR{S}^{fin}: \density \to \extR$ can be computed as $\semanticsR{S}^{fin}(\rho) = tr(A_R \rho)$ which is a positive functional where $A_R$ corresponds to the observable that keeps track of the earned reward\footnote{As mentioned earlier, we implicitly mean the extension to $\traceclass \to \C$.}.
The expected reward for infinite programs (e.g., with loops) can then be defined as a least upper bound of the expected rewards of all its finite approximations (i.e., abort the program after a finite number of steps).
Such a least upper bound corresponds to a positive, symmetric and closed form, thus we require this property for $\semanticsR{S}$:
\begin{proposition} \label{prop:supCPmaps}
    Let $g_i: \density \to \mathbb{R}_{\geq 0}$ be a (pointwise) increasing sequence of maps. Then $g: \density \to \extR$ defined by $g(\rho) = \sup_i g_i(\rho)$ corresponds to a positive, symmetric and closed form.
\end{proposition}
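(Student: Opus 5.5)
The plan is to translate the maps $g_i$ into quadratic forms and show that the pointwise supremum of those forms is a closed, symmetric, positive form. I read the statement as in the surrounding discussion: each $g_i$ is the expected reward of a finite approximation, hence \emph{linear}, extended to $\traceclass$ as usual. The linearity assumption is essential and is not written explicitly in the statement.

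\textbf{Step 1 (each $g_i$ as a bounded form).} A positive linear functional $g_i$ on $\traceclass$ with values in $\R_{\geq 0}$ on $\density$ is bounded, since trace-class operators are combinations of density operators. By trace duality it has the form $g_i(\rho)=tr(A_i\rho)$ for a bounded positive operator $A_i$. I then set $a_i[\psi,\phi]=\bra{\psi}A_i\ket{\phi}$ on $\domain{a_i}=\hilbert$. This is a bounded positive symmetric form with $a_i[\psi]=g_i(\ket{\psi}\bra{\psi})$. Since $g_i$ increases pointwise, $a_i[\psi]\le a_{i+1}[\psi]$ for all $\psi$.

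\textbf{Step 2 (define the limit form).} I define $q(\psi)=\sup_i a_i[\psi]\in\extR$ and $\domain{a}=\{\psi\mid q(\psi)<\infty\}$. This agrees with the form attached to $g$ in Definition~\ref{def:generalSemantics}, because by linearity $g(\ket{\psi}\bra{\psi})=q(\psi)$.
\begin{itemize}
\item \emph{$\domain{a}$ is a subspace.} This follows from $a_i[\psi+\phi]\le 2a_i[\psi]+2a_i[\phi]$ and $a_i[c\psi]=\abs{c}^2a_i[\psi]$, both of which survive taking the supremum.
\item \emph{$a$ is sesquilinear.} On $\domain{a}$, I define $a[\psi,\phi]=\lim_i a_i[\psi,\phi]$ through the polarization identity. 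Each of the four terms $a_i[\psi+i^k\phi]$ is a convergent increasing real sequence, so the limit exists. Sesquilinearity and the symmetry $a[\psi,\phi]=\overline{a[\phi,\psi]}$ pass to the limit.
\item \emph{$a$ is positive.} Positivity is immediate.
\end{itemize}

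\textbf{Step 3 (closedness, the main obstacle).} I will use the standard characterization that a positive symmetric form is closed if and only if its quadratic form, extended by $\infty$ outside the domain, is lower semicontinuous on $\hilbert$. The proof of this equivalence takes a sequence $u_n\to u$ that is $a$-Cauchy and passes to the limit in $a[u_n-u_m]$ using lower semicontinuity. It follows the usual argument, e.g.\ Kato's proof of \cite[VIII, Thm.~3.13a]{kato}.

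Each $\psi\mapsto a_i[\psi]$ is norm-continuous because $A_i$ is bounded, and a pointwise supremum of continuous functions is lower semicontinuous, so $q$ is lower semicontinuous. The delicate part is the direction from lower semicontinuity to closedness when $\domain{a}$ is not dense. I will check it directly rather than cite Kato, since Kato assumes densely defined forms.

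Concretely, let $u_n\in\domain{a}$ with $u_n\to u$ and $a[u_n-u_m]\to 0$.
\begin{itemize}
\item First, $a[u_n]$ is bounded, so $q(u)\le\liminf_n q(u_n)<\infty$ by lower semicontinuity, hence $u\in\domain{a}$.
\item Next, for fixed $n$ we have $u_m-u_n\to u-u_n$ as $m\to\infty$. Lower semicontinuity gives $a[u-u_n]\le\liminf_m a[u_m-u_n]$, which is at most $\epsilon$ for large $n$.
\end{itemize}
Hence $a[u_n-u]\to 0$, so $a$ is closed, and therefore $g$ corresponds to a form-expectation.
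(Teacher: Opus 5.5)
Your proof is correct, but it differs from the paper's at the decisive step. The paper also starts from the bounded, everywhere-defined forms $a_i[\ket{\psi}]=g_i(\ket{\psi}\bra{\psi})$. It only checks that each $a_i$ is closed, using continuity of $g_i$. Closedness of the supremum then comes from Theorem~\ref{thm:omegaCompleteness}, i.e.\ from Kato's monotone convergence theorem for nondecreasing sequences of closed symmetric forms \cite[VIII, Thm.~3.13a]{kato}, combined with Proposition~\ref{prop:OpstoForms}.

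You instead construct the limit form by hand (subspace property, polarization). You then prove closedness directly from lower semicontinuity of $\psi\mapsto\sup_i a_i[\psi]$. This amounts to reproving the special case of Kato's theorem in which every $a_i$ is bounded.

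What your route buys:
\begin{itemize}
\item It is self-contained.
\item It handles a possibly non-dense $\domain{a}$ directly, rather than through the extension-to-the-closure step behind Proposition~\ref{prop:OpstoForms}.
\item It makes explicit the linearity hypothesis that the statement and the paper's proof both leave implicit; the paper simply asserts that $g_i$ is bounded.
\end{itemize}

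What the paper's route buys:
\begin{itemize}
\item It is shorter and reuses the $\omega$-cpo machinery.
\item It immediately identifies the associated expectation as $\bigvee_i A_i$.
\item It would work unchanged for increasing sequences of merely closed, unbounded forms. There your continuity argument for each $a_i$ is unavailable, and you would first need the separate fact that closed positive forms are lower semicontinuous.
\end{itemize}

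One small repair is needed. Your reason for boundedness of $g_i$ (trace-class operators are combinations of density operators) does not by itself give a uniform bound on $\density$. Add the standard argument. Suppose $g_i(\rho_n)\ge 4^n$ for some $\rho_n\in\density$. Then $\rho=\sum_{n\ge 1}2^{-n}\rho_n\in\density$. Additivity and nonnegativity of $g_i$ give $g_i(\rho)\ge g_i(2^{-N}\rho_N)\ge 2^N$ for every $N$, which contradicts finiteness of $g_i(\rho)$. Trace duality then yields your bounded positive $A_i$ as claimed.
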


\subsection{QrWhile - A Concrete Syntax and Semantics}
\label{sec:qrWhile}
In this section we present a concrete programming language, called qrWhile, and show that it satisfies the properties defined in the previous section (Definition \ref{def:generalSemantics}). The syntax and semantics of qrWhile is based on \cite{floydHoareLogic} but we extend it by reward statements inspired by \cite{philippHigherMoments, expectationTransformer, lenaPOPL2023}.

We assume a countable number of variables that are either qubits or quantum integers. Qubits are represented by the 2-dimensional Hilbert space $\hilbert_2$, quantum integers by infinite-dimensional Hilbert spaces $\hilbert_\infty$ as defined in Example \ref{example:HS}.
The state space $\mathcal{H}$ of a quantum program is the tensor product of the Hilbert spaces of all its variables.

Additionally to the usual program statements as in \cite{floydHoareLogic} we introduce a reward statement that collects rewards during the execution of the program and allows us to reason about expected rewards and expected runtime, similar to the consume statement in \cite{expectationTransformer}.
Note that in comparison to \cite{expectationTransformer}, we do not have a separate (classical) variable to store the reward and we also allow infinite values. We allow rewards from a countable set $R \subseteq \extR$.

The $\rewardR$ statement does not change the quantum state (as it does not correspond to any operation on the quantum computer), but adds $c \cdot p$ to the expected reward where $p$ is the probability of the current execution.

\begin{definition}
    The syntax of qrWhile programs $S$ over the Hilbert space $\hilbert$ is defined as follows:
    \[S ::= \skipbf \mid \qzero\mid \Uq \mid \rewardR \mid \measure \mid \concat \mid \while \]
    where
    \begin{itemize}
        \item $\skipbf$ is the skip statement
        \item $\qzero$ initializes the quantum variable $q$ to the basis state $\ket{0}$
        \item $\Uq$ applies the unitary operator $U$ to the quantum variables $\overline{q}$
        \item $\rewardR$ adds a reward $c \in R$
        \item $\measure$ performs a measurement $M = \{M_m\}$ on the quantum variables $\overline{q}$ and executes the subprogram $S_m \in \overline{S}$ according to the measurement outcome $m$
        \item $\concat$ is the sequential composition of programs $S_1$ and $S_2$
        \item $\while$ is a while-loop that measures the quantum variables $\overline{q}$ with the measurement $M = \{M_0, M_1\}$ and executes the body $S$ if the outcome is $1$, otherwise it terminates.
    \end{itemize}
\end{definition}

The denotational semantics of qrWhile is defined as follows. We need to handle the rewards separately, yielding pairs (similar as in \cite{bayesianInf}). The first element is the density operator (as in \cite[Prop. 5.1]{floydHoareLogic}, \cite{bayesianInf}), the second element is the expected reward collected during the program execution.
We denote $\semantics{S}(\rho) = (\semanticsD{S}(\rho), \semanticsR{S}(\rho))$:
\begin{definition}
    \label{def:semantics}
    The \emph{denotational semantics} of a qrWhile program $S$ is given by a map $\semantics{S}: \density \to \density \times \extR$:
    \begin{itemize}
        \item $\semantics{\skipbf}(\rho) = (\rho,0)$
        \item $\semantics{\qzero}(\rho) = (\sum_{n=0}^\infty \ket{0}\bra{n} \rho \ket{n}\bra{0}, 0)$
        \item $\semantics{\Uq}(\rho) = (U \rho U^\dagger,0)$
        \item $\semantics{\rewardR}(\rho) = (\rho, c \cdot tr(\rho))$
        \item $\semantics{S_1; S_2}(\rho) = (\semanticsD{S_2}(\semanticsD{S_1}(\rho)), \semanticsR{S_2}(\semanticsD{S_1}(\rho)) + \semanticsR{S_1}(\rho))$
        \item $\semantics{\measure}(\rho) = \sum_m \semantics{S_m}(M_m \rho M_m^\dagger)$
        \item $\semantics{\while}(\rho) =$\\
        $(\bigvee_{i=0}^\infty \semanticsD{\while^i}(\rho), \lim_{i \to \infty} \semanticsR{\while^i}(\rho))$ where \newpage
        \begin{align*}
        \while^0 &= \Omega \text{ with } \semantics{\Omega}(\rho) = (\zeroOp,0)\\
        \while^{i+1} &= \measure \\
        & \text{ with } S_0 = \skipbf\\
        & \text{ and } S_1 = S; \while^i
        \end{align*}
    \end{itemize}
\end{definition}
where addition is entry-wise.
Well-definedness follows from \cite{floydHoareLogic} and the fact that $\semanticsR{\while^i}(\rho)$ is increasing.

When considering programs without reward statements, the semantics is the same as in \cite{floydHoareLogic}. This follows immediately from the definition.
\begin{proposition} \label{prop:rewardFreeOriginalSem}
    For all qrWhile programs $S$ without reward statements and all $\rho \in \density$, it holds that $\semantics{S}(\rho) = (\semanticsOrig{S}(\rho), 0)$ where $\semanticsOrig{S}$ is the semantics defined in \cite{floydHoareLogic}.
\end{proposition}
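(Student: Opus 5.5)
The proof is a structural induction on the reward-free program $S$. We show simultaneously that $\semanticsD{S}(\rho) = \semanticsOrig{S}(\rho)$ and $\semanticsR{S}(\rho) = 0$ for all $\rho \in \density$. A subprogram of a reward-free program is again reward-free, so the induction hypothesis applies to every constituent. The semantics of \cite{floydHoareLogic} has exactly the same clauses as the first components of Definition \ref{def:semantics}, so only the second components need real attention.

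\emph{Base cases.} For $\skipbf$, $\qzero$ and $\Uq$ the clauses of Definition \ref{def:semantics} are literally the pair of the original semantics and $0$. The statement $\rewardR$ cannot occur, because $S$ is reward-free.

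\emph{Sequential composition.} For $S_1;S_2$, the induction hypothesis gives $\semanticsD{S_2}(\semanticsD{S_1}(\rho)) = \semanticsOrig{S_2}(\semanticsOrig{S_1}(\rho))$. This is the original semantics of the composition. It also gives $\semanticsR{S_2}(\semanticsD{S_1}(\rho)) + \semanticsR{S_1}(\rho) = 0 + 0$, where the first summand vanishes because the hypothesis holds for all density operators, in particular for $\semanticsD{S_1}(\rho)$.

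\emph{Measurement.} We apply the hypothesis to each branch $S_m$ on the input $M_m\rho M_m^\dagger$. The entry-wise sum then has first component $\sum_m \semanticsOrig{S_m}(M_m \rho M_m^\dagger)$, the original clause, and second component $\sum_m 0 = 0$.

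\emph{While-loop.} We first prove an auxiliary claim by induction on $i$: each approximant $\while^i$ satisfies $\semantics{\while^i}(\rho) = (\semanticsOrig{\while^i}(\rho), 0)$. The case $i = 0$ is immediate from $\semantics{\Omega}(\rho) = (\zeroOp, 0)$. The case $i+1$ follows from the measurement and composition cases above: it uses the outer induction hypothesis for the body $S$ and the inner hypothesis for $\while^i$, since the body is reward-free. Taking the supremum in the first component then yields $\bigvee_i \semanticsOrig{\while^i}(\rho)$, which is precisely how \cite{floydHoareLogic} defines the loop semantics. The limit of the second components is $\lim_i 0 = 0$.

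The main obstacle is bookkeeping rather than mathematics. The approximants $\while^i$ are not syntactic subterms of the loop, so we cannot simply invoke the outer hypothesis on them; this is why they need the separate nested induction on $i$. We also have to check that the original semantics in \cite{floydHoareLogic} uses the same approximation sequence and least upper bound (Loewner order on partial density operators). If it does, the two definitions agree syntactically clause by clause.
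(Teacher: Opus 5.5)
Your proposal is correct and takes the paper's approach: the paper only remarks that the claim follows immediately from Definition~\ref{def:semantics}, and your structural induction, including the separate induction on the approximants $\while^i$ (which are not syntactic subterms), is the clause-by-clause check that remark leaves implicit. One caveat on ``literally the same clause'': for $\qzero$ on an integer variable, Definition~\ref{def:semantics} writes the sum over $n \geq 0$, whereas integers live in $l^2(\mathbb{Z})$ and \cite{floydHoareLogic} sums over all of $\mathbb{Z}$ (read literally, the two differ on $\ket{-1}\bra{-1}$), so you must read that sum as ranging over the full computational basis, as the paper itself tacitly does.
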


The following healthiness properties hold:
\begin{proposition}
    \label{prop:semanticsProperties} Let $S$ be a qrWhile program. Then:
    \begin{enumerate}
        \item $\semanticsD{S}$ is linear
        \item $\semanticsD{S}$ is trace non-increasing, i.e., $tr(\semanticsD{S}(\rho)) \leq tr(\rho)$ for all $\rho \in \density$
        \item $\semanticsD{S}$ is completely positive
        \item $\semanticsR{S}$ is linear
        \item $a[\ket{\psi}] := \semanticsR{S}(\ket{\psi} \bra{\psi})$ for all $\ket{\psi} \in \domain{a} = \{\ket{\psi} \in \hilbert \mid \semanticsR{S}(\ket{\psi} \bra{\psi}) < \infty \}$ is a symmetric, positive and closed form.
    \end{enumerate}
\end{proposition}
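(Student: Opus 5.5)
The plan is to prove all five items simultaneously by structural induction on $S$, with the while-loop handled by passing to the limit of its finite unfoldings $\while^i$.

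\emph{Items (1)--(3).} By Proposition \ref{prop:rewardFreeOriginalSem}, the $D$-component does not see reward statements: replacing every $\rewardR$ in $S$ by $\skipbf$ gives a program $S'$ with $\semanticsD{S}=\semanticsD{S'}=\semanticsOrig{S'}$. This follows by a trivial induction, since the first component of every clause of Definition \ref{def:semantics} ignores rewards. Linearity, trace non-increase and complete positivity of $\semanticsOrig{S'}$ are established in \cite{floydHoareLogic}. I would only double-check that those arguments are not tied to finite dimension; where they use Kraus representations, \cite[Theorem 9]{krausChoi} covers the separable case.

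\emph{Item (4), linearity of $\semanticsR{S}$.} I prove this by induction, understood on $\density$ as additivity and nonnegative homogeneity with values in $\extR$, then extended to $\traceclass$.
\begin{itemize}
\item $\skipbf$, $\qzero$ and unitaries give $0$.
\item $\rewardR$ gives $c\cdot tr(\rho)$.
\item Composition is $\semanticsR{S_2}\circ\semanticsD{S_1}+\semanticsR{S_1}$, which is linear by (1) and the induction hypothesis.
\item Measurement is a countable sum of linear maps precomposed with $\rho\mapsto M_m\rho M_m^\dagger$. Countable sums of nonnegative values in $\extR$ preserve additivity.
\item For the loop, each $\semanticsR{\while^i}$ is linear by an inner induction on $i$, and a pointwise monotone limit of additive $\extR$-valued maps is additive.
\end{itemize}

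\emph{Item (5).} This is the main obstacle, because closedness is not preserved naively under composition. I will not push closedness through the syntax. Instead, I use Proposition \ref{prop:supCPmaps}: it suffices to write $\semanticsR{S}$ as a pointwise supremum of an increasing sequence of maps $g_i:\density\to\mathbb{R}_{\ge 0}$ that are linear and, implicitly, positive and continuous. To do this, I introduce truncations $\semanticsR{S}^{(k)}$, defined exactly like $\semanticsR{S}$ but with the following two changes.
\begin{itemize}
\item Each reward $c$ is replaced by $\min(c,k)$.
\item Each loop is unfolded only $k$ times, and each measurement sum is restricted to its first $k$ outcomes.
\end{itemize}
By induction, $\semanticsR{S}^{(k)}(\rho)=tr(A_k\rho)$ for a bounded positive operator $A_k$. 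Finite sums and compositions with completely positive maps stay bounded, with norm at most $k$ times the number of reward occurrences, the number of unfoldings and the number of outcomes. Each such map is therefore a finite, linear, normal functional, and $\semanticsR{S}^{(k)}$ is increasing in $k$.

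It remains to show $\sup_k\semanticsR{S}^{(k)}=\semanticsR{S}$ pointwise. This is again an induction in which monotone convergence in $\extR$ justifies interchanging suprema with the sums in composition and measurement. In the loop case I additionally need $\sup_k\sup_i=\sup_i\sup_k$ and monotonicity of $\semanticsR{\cdot}$ in its argument, which follows from linearity and positivity. The delicate point is the composition case, $\semanticsR{S_2}^{(k)}(\semanticsD{S_1}^{(k)}(\rho))$: it needs joint monotone continuity of $\semanticsR{S_2}$ in the state. I would obtain this from normality of each truncation together with monotone convergence $\semanticsD{S_1}^{(k)}(\rho)\nearrow\semanticsD{S_1}(\rho)$, using the $\omega$-cpo structure of density operators.

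With the supremum identity in hand, Proposition \ref{prop:supCPmaps} gives the closed, symmetric, positive form, and by construction its quadratic form is $a[\ket{\psi}]=\semanticsR{S}(\ket{\psi}\bra{\psi})$ on the stated domain. If Proposition \ref{prop:supCPmaps} is read as needing the $g_i$ to be given by bounded positive operators, the representation $g_k(\rho)=tr(A_k\rho)$ supplies exactly that, and the form is then the monotone limit of bounded forms, closed by \cite[VIII, Thm. 3.13a]{kato}.
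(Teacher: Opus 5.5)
Your proposal is correct, but for item (5) it takes a genuinely different route from the paper.

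\textbf{Items (1), (2), (4).} These agree with the paper: rewards act like \textbf{skip} on the state component, and $\semanticsR{S}$ is linear by structural induction with a monotone limit for loops.

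\textbf{Item (3).} The paper does not cite \cite{floydHoareLogic} for complete positivity. It gives a direct Kraus-form induction. The loop is handled via $W_i=\sum_{k<i}(\mathcal{M}_1\circ\semanticsD{S})^k\circ\mathcal{M}_0$, the telescoping identity $W_{i+1}-W_i=(\mathcal{M}_1\circ\semanticsD{S})^i\circ\mathcal{M}_0$, and the union of the Kraus families. You should supply this short argument rather than rely on the citation. Your item (5) uses exactly these Kraus representations, and it needs them in the separable infinite-dimensional setting.

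\textbf{Item (5).} This is where the routes differ. You say closedness is not preserved naively under composition. In fact the paper does push closedness through the syntax, and composition is unproblematic once $\semanticsD{S_1}(\rho)=\sum_iE_i\rho E_i^\dagger$ is written in Kraus form. The form becomes $a=a_1+\sum_i(a_2)_{E_i}$, which is closed because form-expectations are closed under sums, bounded pull-backs and countable sums (Proposition~\ref{prop:FormAddMult}, Lemma~\ref{lem:infSumExpect}). Measurement gives $\sum_m(a_m)_{M_m}$, and the loop is a monotone limit.

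You instead approximate $\semanticsR{S}$ from below by bounded normal functionals $tr(A_k\rho)$ coming from a truncated semantics, and apply Kato's monotone convergence theorem once. What your route buys:
\begin{itemize}
\item A genuine sesquilinear form with a linear-subspace domain comes out automatically.
\item It makes explicit the countable additivity of $\semanticsR{S_2}$ that the paper's composition step uses tacitly. The identity $\semanticsR{S_2}(\sum_iE_i\ket{\psi}\bra{\psi}E_i^\dagger)=\sum_ia_2[E_i\ket{\psi}]$ needs countable additivity, not just the finite linearity of item (4).
\item It handles the loop cleanly by a diagonal supremum, where the paper only sketches.
\end{itemize}
The paper's route is more modular and reuses the same form calculus that later builds $wp$. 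Yours costs a second semantics, together with its monotonicity and supremum proofs.

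\textbf{A simplification.} Your ``delicate point'' disappears if you keep the untruncated $\semanticsD{S_1}$ in the composition clause. Its dual $A\mapsto\sum_iE_i^\dagger AE_i$ already maps bounded operators to bounded ones. The supremum identity then becomes just the induction hypothesis at the fixed state $\semanticsD{S_1}(\rho)$. You would never need convergence of a truncated state semantics.
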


Thus our concrete semantics defined in Definition \ref{def:semantics} satisfies the requirements of Definition \ref{def:generalSemantics} and thus it is a semantics for quantum programs with rewards.
\begin{theorem}
    The semantics defined in Definition \ref{def:semantics} satisfies the properties in Definition \ref{def:generalSemantics} and thus every qrWhile program is an element of $\programs$.
\end{theorem}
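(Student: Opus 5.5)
The theorem follows by matching the four requirements of Definition \ref{def:generalSemantics} against the items of Proposition \ref{prop:semanticsProperties}. Items (1)--(3) of that proposition give that $\semanticsD{S}$ is linear, trace non-increasing and completely positive. Item (4) gives linearity of $\semanticsR{S}$. Item (5) is literally the closedness, symmetry and positivity of the form $a[\ket{\psi}] = \semanticsR{S}(\ket{\psi}\bra{\psi})$ on $\domain{a}=\{\ket{\psi} \mid \semanticsR{S}(\ket{\psi}\bra{\psi})<\infty\}$. So the only requirement not already covered is the reward-freeness condition: if $S$ contains no reward statement, then $\semanticsR{S}(\rho)=0$ for all $\rho$.

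For that condition I would invoke Proposition \ref{prop:rewardFreeOriginalSem}, which states $\semantics{S}(\rho) = (\semanticsOrig{S}(\rho),0)$ for reward-free $S$. To be self-contained I would also sketch the structural induction behind it.
\begin{itemize}
\item The base cases $\skipbf$, $\qzero$ and $\Uq$ have second component $0$ by Definition \ref{def:semantics}.
\item For sequential composition, the reward is $\semanticsR{S_2}(\semanticsD{S_1}(\rho)) + \semanticsR{S_1}(\rho) = 0+0$ by the induction hypothesis.
\item A measurement gives a sum of zeros.
\item For a while loop, each syntactic approximant $\while^i$ is reward-free. So $\semanticsR{\while^i}(\rho)=0$ by an inner induction on $i$, and the limit is $0$.
\end{itemize}
The inner induction on $i$ is needed because the approximants are not subterms of the loop. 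The base case is $\Omega$, whose reward is $0$ by definition. The inductive case is a measurement whose branches are $\skipbf$ and $S;\while^i$, with $S$ reward-free.

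Since all four conditions hold, every qrWhile program has a semantics in the sense of Definition \ref{def:generalSemantics}, i.e., lies in $\programs$. I expect no real obstacle here, because all the substantive work sits in Proposition \ref{prop:semanticsProperties}. The main point requiring care is item (5), which was presumably proved via Proposition \ref{prop:supCPmaps} applied to the increasing approximants $\semanticsR{\while^i}$. One should also check that the implicit extensions to $\traceclass$ and $\C$ are consistent with the linearity statements, as the footnote in Definition \ref{def:generalSemantics} requires. This follows because every trace class operator is a linear combination of density operators and the maps are linear on $\density$.
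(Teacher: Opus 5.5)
Your proposal is correct and follows the paper's argument exactly: the paper proves the theorem by combining Proposition~\ref{prop:semanticsProperties} (items (1)--(5)) with Proposition~\ref{prop:rewardFreeOriginalSem}, which supplies the reward-free condition. One small aside: the paper proves item (5) by structural induction, using Proposition~\ref{prop:FormAddMult} and Lemma~\ref{lem:infSumExpect}, and for loops Proposition~\ref{prop:OpstoForms} with Theorem~\ref{thm:omegaCompleteness}, rather than via Proposition~\ref{prop:supCPmaps}; this does not affect your argument.
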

\begin{proof}
    Follows immediately from Propositions \ref{prop:rewardFreeOriginalSem} and \ref{prop:semanticsProperties}.
\end{proof}

Another choice for the reward statement would have been to allow expectations (or other real valued measurements) in the reward statement, but they induce an implicit measurement which changes the quantum state and thus we only consider constants. Note that this is not a restriction as we could measure and then add the reward based on the measurement outcome (using the measurement statement) and thus we can also add rewards that depend on the program state.

%% file: 5_wp.tex
\section{Weakest Pre-Expectations}
\label{sec:wp}

In this section we first define Hoare triples and the weakest pre-expectation in terms of general quantum programs and give a concrete syntactic way of computing them for qrWhile programs. Omitted proofs can be found in Appendix \ref{sec:appendix_wp}.

\subsection{Correctness}
The correctness of Hoare triples with respect to a program $S\in \programs$ and expectations is defined as follows:
\begin{definition} Let $S \in \programs$ and $A,B \in \predicate$.
    A Hoare triple $\{A\} S \{B\}$ is
    \begin{itemize}
        \item \emph{totally correct} if for all $\rho \in \density$:
        \begin{equation*}
        \expect{A}{\rho} \leq \expect{B}{\semanticsD{S}(\rho)} + \semanticsR{S}(\rho)
        \end{equation*}
        \item \emph{partially correct} if $A,B \loewner \identityOp$, $S$ is reward-free and for all $\rho \in \density $:
        \begin{equation*}
            \expect{A}{\rho} \leq \expect{B}{\semanticsD{S}(\rho)} + \underbrace{[tr(\rho) - tr(\semanticsD{S}(\rho))]}_{\substack{\text{probability of non-termination}}}
        \end{equation*}
    \end{itemize}
\end{definition}
\noindent Note that the last definition (partial correctness) requires $1-$bounded expectations $A,B \loewner \identityOp$ as otherwise adding the term $[tr(\rho) - tr(\semanticsD{S}(\rho))]$ does not make sense. Similarly, for partial correctness we only consider reward-free programs (and not add the term $\semanticsR{S}(\rho)$). Otherwise we add probabilities and expected values (possibly $>1$). Thus, partial correctness coincides with the definition in \cite{floydHoareLogic} and is therefore not further considered in this paper.

We obtain similar results as \cite[Prop. 6.1]{floydHoareLogic} but often the restriction to reward-free programs is necessary:
\begin{proposition}\label{prop:hoareProperties} For $S \in \programs$ and $A,B \in \predicate$:
    \begin{enumerate}
    \item If a Hoare triple is totally correct, $S$ reward-free and $A,B \loewner \identityOp$, it is also partially correct.
    \item The Hoare triple $\{\zeroOp\} S \{A\}$ is totally correct.
    \item For all reward-free $S$ and $A \loewner \identityOp$, the Hoare triple $\{A\} S \{ \identityOp\}$ is partially correct.
\end{enumerate}
\end{proposition}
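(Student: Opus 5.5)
Each of the three items follows directly from the definitions of total and partial correctness, using only basic facts about expectation values from Section~\ref{sec:expectations}. I treat them in order.

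For item (1), let $S$ be reward-free and $A,B \loewner \identityOp$, and assume $\{A\}S\{B\}$ is totally correct. Since $S \in \programs$ is reward-free, Definition~\ref{def:generalSemantics} gives $\semanticsR{S}(\rho) = 0$ for all $\rho$. Total correctness therefore reads $\expect{A}{\rho} \leq \expect{B}{\semanticsD{S}(\rho)}$. Because $\semanticsD{S}$ is trace non-increasing, the term $tr(\rho) - tr(\semanticsD{S}(\rho))$ is non-negative. Adding it to the right-hand side yields exactly the partial correctness inequality. The side conditions $A,B\loewner\identityOp$ and reward-freeness are assumed, so nothing else needs checking.

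For item (2), I first show $\expect{\zeroOp}{\rho} = 0$ for every $\rho$. Lemma~\ref{lem:expectAlternativeDef} gives this, since $\half{\zeroOp_\infty} = \zeroOp$ has domain $\hilbert$, so each pure contribution is $\norm{0}^2 = 0$; the paper also notes it after Lemma~\ref{lem:orderExpectation}. The right-hand side $\expect{A}{\semanticsD{S}(\rho)} + \semanticsR{S}(\rho)$ is a sum of two values in $\extR$, hence $\geq 0$. This uses that expectation values take values in $\extR$ (Definition~\ref{def:expectationValue}) and that $\semanticsR{S}$ maps into $\extR$. One small point is that $\semanticsD{S}(\rho) \in \density$, which holds because $\semanticsD{S}$ is completely positive and trace non-increasing, so the expectation value is defined.

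For item (3), let $S$ be reward-free and $A \loewner \identityOp$. I need $\expect{A}{\rho} \leq \expect{\identityOp}{\semanticsD{S}(\rho)} + tr(\rho) - tr(\semanticsD{S}(\rho))$. First, $\expect{\identityOp}{\sigma} = tr(\sigma)$, because for bounded expectations below $\identityOp$ the expectation value equals the trace, by the remark following Definition~\ref{def:expectationValue}. The right-hand side therefore collapses to $tr(\rho)$. Next, $A \loewner \identityOp$ together with Lemma~\ref{lem:orderExpectation} gives $\expect{A}{\rho} \leq \expect{\identityOp}{\rho} = tr(\rho)$, which closes the argument.

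None of these steps is a real obstacle. The only care needed is in item (3): the identification $\expect{\identityOp}{\sigma} = tr(\sigma)$ must be cited correctly, and $A\loewner\identityOp$ must be used through Lemma~\ref{lem:orderExpectation} rather than the bounded Loewner order.
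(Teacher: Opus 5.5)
Your proposal is correct and follows the paper's proof essentially step for step. Item (1) uses $\semanticsR{S}(\rho)=0$ together with $tr(\rho)-tr(\semanticsD{S}(\rho))\geq 0$, item (2) uses $\expect{\zeroOp}{\rho}=0$ against a non-negative right-hand side, and item (3) uses $\expect{A}{\rho}\leq\expect{\identityOp}{\rho}=tr(\rho)=tr(\semanticsD{S}(\rho))+[tr(\rho)-tr(\semanticsD{S}(\rho))]$, exactly as the paper does. Your extra remarks, that $\semanticsD{S}(\rho)\in\density$ and that $\expect{\identityOp}{\sigma}=tr(\sigma)$ follows from the remark after Definition~\ref{def:expectationValue}, only spell out details the paper leaves implicit.
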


\subsection{Existence of WP} \label{sec:wpExistence}
As our aim is to keep the wp-calculus as close as possible to \cite{DHondtWeakestPreconditions}, we define the weakest pre-expectation as follows:
\begin{definition} \label{def:generalwp}
    Let $S \in \programs$.
    The \emph{weakest pre-expectation} with respect to a post-expectation $B \in \predicate$ is the expectation $\qwp{S}{B} = \sup\{A \mid \{A\} S \{B\}\}$ (if it exists).
\end{definition}
The existence of the weakest pre-expectation is shown in Theorems \ref{thm:existenceFormsWp} and \ref{thm:existenceWp} below. An alternative condition to check for weakest pre-expectations is:
\begin{lemma}
    \label{lem:ExpectationValue}
    Let $B, C \in \predicate$ and $S \in \programs$.
    If $\text{ }\expect{C}{\rho} = \expect{B}{\semanticsD{S}(\rho)} + \semanticsR{S}(\rho)$ holds for all $\rho\in \density$, then $C$ is the weakest pre-expectation.
\end{lemma}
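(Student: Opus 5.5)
We have to show that $C$ equals $\sup\{A \mid \{A\} S \{B\}\}$, i.e.\ that $C$ is itself a valid pre-expectation and that it dominates every other valid pre-expectation. The plan is to reduce both parts to comparisons of expectation values via Lemma~\ref{lem:orderExpectation}, and then use antisymmetry of $\loewner$ (Proposition~\ref{prop:orderproperties}).

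\emph{Step 1: $C$ is a valid pre-expectation.} By assumption, $\expect{C}{\rho} = \expect{B}{\semanticsD{S}(\rho)} + \semanticsR{S}(\rho)$ for every $\rho \in \density$. In particular the inequality required by total correctness holds, so $\{C\} S \{B\}$ is totally correct and $C$ lies in the set $\{A \mid \{A\} S \{B\}\}$.

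\emph{Step 2: $C$ is an upper bound.} Let $A \in \predicate$ with $\{A\} S \{B\}$ totally correct. Then for all $\rho \in \density$,
\begin{equation*}
\expect{A}{\rho} \leq \expect{B}{\semanticsD{S}(\rho)} + \semanticsR{S}(\rho) = \expect{C}{\rho}.
\end{equation*}
These are comparisons in $\extR$, so infinite values cause no trouble. By Lemma~\ref{lem:orderExpectation} this yields $A \loewner C$.

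\emph{Step 3: $C$ is the least upper bound, hence the supremum.} Since $C$ belongs to the set and bounds every element of it, $C$ is the greatest element of the set. Any upper bound $D$ of the set satisfies $C \loewner D$, because $C$ is a member. So $C$ is the least upper bound. It is unique by antisymmetry of $\loewner$ from Proposition~\ref{prop:orderproperties}. Hence $\qwp{S}{B}$ exists and equals $C$.

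The only step with real content is Step 2, the passage from the pointwise inequalities on expectation values to the operator order $\loewner$. This is exactly the backward direction of Lemma~\ref{lem:orderExpectation}, so I expect no genuine obstacle. The one point to check is that the supremum in Definition~\ref{def:generalwp} is taken in $(\predicate, \loewner)$, so that a greatest element of the set is automatically its supremum.
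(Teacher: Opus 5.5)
Your proof is correct and uses the same key ingredient as the paper, Lemma~\ref{lem:orderExpectation}, but the logic is organized differently.

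The paper argues by contradiction. It first notes that $C$ is a pre-expectation. It then claims that a pre-expectation $A$ with $C \loewner A$ (and $A \neq C$) would satisfy $\expect{A}{\rho} > \expect{C}{\rho}$ for some $\rho$. That step silently uses the backward direction of Lemma~\ref{lem:orderExpectation} together with antisymmetry. Taken literally, this only shows that $C$ is a \emph{maximal} pre-expectation. In a partial order that is weaker than being the supremum, because it says nothing about pre-expectations incomparable to $C$.

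Your Step~2 instead applies the backward direction of Lemma~\ref{lem:orderExpectation} directly to an arbitrary pre-expectation $A$ to get $A \loewner C$. So $C$ is the \emph{greatest} element of $\{A \mid \{A\} S \{B\}\}$ and hence its supremum, and antisymmetry from Proposition~\ref{prop:orderproperties} is needed only for uniqueness. Your version is both more direct and slightly more complete than the paper's.
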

\begin{proof}
    $\expect{C}{\rho} = \expect{B}{\semanticsD{S}(\rho)} + \semanticsR{S}(\rho)$ implies that $C$ is a pre-expectation of $B$ and if there was another pre-expectation $A$ with $C \loewner A$, it would hold that $\expect{A}{\rho} > \expect{C}{\rho} = \expect{B}{\semanticsD{S}(\rho)} + \semanticsR{S}(\rho)$ for some $\rho$ which contradicts that $A$ is a pre-expectation.
\end{proof}
In the case of bounded operators and reward-free programs, these definitions coincide with \cite{DHondtWeakestPreconditions,floydHoareLogic}.

We can also directly state some properties of weakest pre-expectations that hold whenever the premise in Lemma \ref{lem:ExpectationValue} holds. Note that the first two properties only hold for reward-free programs.
\begin{proposition}
    \label{prop:wpProperties}
    Let $S \in \programs$ and $B \in \predicate$ such that $\expect{\qwp{S}{B}}{\rho} = \expect{B}{\semanticsD{S}(\rho)} + \semanticsR{S}(\rho)$.
    Then
    \begin{itemize}
        \item $\qwp{S}{\zeroOpVar{\hilbert}} = \zeroOpVar{\hilbert}$ for reward-free programs $S$
        \item Linearity: For reward-free programs, $c \geq 0$ with $cB \in \predicate$ it is $\qwp{S}{c B} = c \cdot \qwp{S}{B}$ and for $B_1, B_2 \in \predicate$ it is $\qwp{S}{B_1 \sumform B_2} = \qwp{S}{B_1} \sumform \qwp{S}{B_2}$
        \item Monotonicity: If $B_1, B_2 \in \predicate$ with $B_1 \loewner B_2$, then $\qwp{S}{B_1} \loewner \qwp{S}{B_2}$
        \item Order-continuity: For an increasing chain of expectations $\{B_n\}_n \subseteq \predicate$ is $\qwp{S}{\bigvee_n B_n} = \bigvee_n \qwp{S}{B_n}$
    \end{itemize}
\end{proposition}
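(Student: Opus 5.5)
All four items follow the same pattern. The hypothesis gives, for every relevant post-expectation, the identity $\expect{\qwp{S}{B}}{\rho} = \expect{B}{\semanticsD{S}(\rho)} + \semanticsR{S}(\rho)$ for all $\rho \in \density$. We will assume this identity is available for each post-expectation appearing in a claim (e.g.\ for $B_1$, $B_2$, $B_1 \sumform B_2$, $cB$, $B_n$ and $\bigvee_n B_n$); concretely, it will hold once existence is shown in Theorems \ref{thm:existenceFormsWp} and \ref{thm:existenceWp}. The strategy is then uniform: compute the expectation value of the candidate right-hand side on an arbitrary $\rho$, show it equals $\expect{B'}{\semanticsD{S}(\rho)} + \semanticsR{S}(\rho)$ for the relevant $B'$, and conclude. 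For the conclusion we use either Lemma \ref{lem:ExpectationValue}, or more directly Lemma \ref{lem:orderExpectation}. The latter says that two expectations with equal expectation values on all $\rho$ are $\loewner$-below each other, hence equal by antisymmetry (Proposition \ref{prop:orderproperties}).

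For the first item, $S$ is reward-free, so $\semanticsR{S}(\rho)=0$. Also $\expect{\zeroOp}{\sigma}=0$ for every $\sigma$. Hence $\expect{\qwp{S}{\zeroOp}}{\rho}=0=\expect{\zeroOp}{\rho}$, and the two expectations coincide. For linearity in the scalar, we use Proposition \ref{prop:expectationProperties} and $\semanticsR{S}=0$:
\[
\expect{\qwp{S}{cB}}{\rho}=c\,\expect{B}{\semanticsD{S}(\rho)} = c\,\expect{\qwp{S}{B}}{\rho} = \expect{c\cdot\qwp{S}{B}}{\rho}.
\]
Here we use the convention $0\cdot\infty=0$, which matches the form definition of $0\cdot a$. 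For additivity, linearity of $\mathbb{E}$ in the expectation gives
\[
\expect{\qwp{S}{B_1\sumform B_2}}{\rho}=\expect{B_1}{\semanticsD{S}(\rho)}+\expect{B_2}{\semanticsD{S}(\rho)}+\semanticsR{S}(\rho).
\]
This matches $\expect{\qwp{S}{B_1}\sumform\qwp{S}{B_2}}{\rho}$ only if $\semanticsR{S}(\rho)=0$, since otherwise the reward is counted twice. So additivity also needs reward-freeness, and we will state it that way; the statement's preamble ("the first two properties only hold for reward-free programs") indicates this is intended.

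Monotonicity is immediate. Lemma \ref{lem:orderExpectation} turns $B_1\loewner B_2$ into $\expect{B_1}{\semanticsD{S}(\rho)}\le\expect{B_2}{\semanticsD{S}(\rho)}$. Adding $\semanticsR{S}(\rho)$ preserves this inequality in $\extR$, and applying Lemma \ref{lem:orderExpectation} backwards gives $\qwp{S}{B_1}\loewner\qwp{S}{B_2}$.

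For order-continuity, monotonicity makes $\qwp{S}{B_n}$ an increasing chain, so $\bigvee_n\qwp{S}{B_n}$ exists by Theorem \ref{thm:omegaCompleteness}. Continuity of $\mathbb{E}$ in the expectation (Proposition \ref{prop:expectationProperties}) gives
\[
\expect{\qwp{S}{\bigvee_n B_n}}{\rho}=\sup_n\expect{B_n}{\semanticsD{S}(\rho)}+\semanticsR{S}(\rho),
\]
and, applied to the chain $\qwp{S}{B_n}$, also
\[
\expect{\bigvee_n\qwp{S}{B_n}}{\rho}=\sup_n\bigl(\expect{B_n}{\semanticsD{S}(\rho)}+\semanticsR{S}(\rho)\bigr).
\]
The main obstacle is justifying that adding a constant commutes with the supremum in $\extR$. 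This holds because the sequence is increasing and $x\mapsto x+r$ is order-continuous on $\extR$; the case $r=\infty$ is trivial since both sides are then $\infty$. The two expectation values therefore agree for all $\rho$, and equality follows from Lemma \ref{lem:orderExpectation} and antisymmetry.
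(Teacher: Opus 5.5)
Your proposal follows the paper's proof essentially step for step: each item is reduced to an identity or inequality of expectation values and transported back via Lemma~\ref{lem:orderExpectation} and antisymmetry of $\loewner$. Your remark that additivity also needs $\semanticsR{S}=0$ matches the paper, whose own computation for $B_1\sumform B_2$ silently drops the reward term, and your monotonicity step (adding $\semanticsR{S}(\rho)$ to both sides rather than cancelling a possibly infinite $\semanticsR{S}(\rho)$ as the paper does) is slightly cleaner. One small correction: since $\domain{0\cdot a}=\domain{a}$, Lemma~\ref{lem:expectAlternativeDef} gives $\expect{0\cdot A}{\ket{\psi}\bra{\psi}}=\infty$ for $\ket{\psi}\notin\domain{a}$, so the convention that makes $\expect{cA}{\rho}=c\cdot\expect{A}{\rho}$ true at $c=0$ is $0\cdot\infty=\infty$, not $0\cdot\infty=0$ as you claim. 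With that convention your scalar chain goes through for $c=0$ unchanged.
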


Now we can prove the existence of the weakest pre-expectation (the supremum) for all quantum programs in $\programs$ based on forms. It can be seen as a kind of Schrödinger-Heisenberg duality for unbounded operators, as we show the expectation value equivalence in the proof. In fact, this proposition not only holds for our qrWhile semantics but for all quantum programs in $\programs$, i.e., all quantum programs with semantics that satisfy the properties of Definition \ref{def:generalSemantics}.
\begin{theorem}
    \label{thm:existenceFormsWp}
    Let $S \in \programs$ and $b\in \formPredicate$ with corresponding expectation $B \in \predicate$.
    Then:
    \begin{align*}
        \qwpForm{S}{b} &:= \lambda \psi. b[\semanticsD{S}(\ket{\psi} \bra{\psi})] + \semanticsR{S}(\ket{\psi}\bra{\psi}) \in \formPredicate
    \end{align*}
    and $\qwp{S}{B}$ (as defined in Definition \ref{def:generalwp}) is its corresponding expectation.
    Note that we abuse notation and write $b[\rho]$ to mean $\sum_i p_i b[\ket{\phi_i}]$ for $\rho = \sum_i p_i \ket{\phi_i}\bra{\phi_i}$.
\end{theorem}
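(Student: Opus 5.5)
We split the statement into two parts. First we show that the quadratic expression defining $\qwpForm{S}{b}$ determines a form-expectation. Then we identify its expectation with $\qwp{S}{B}$ through Lemma \ref{lem:ExpectationValue}.

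For the first part, write $\qwpForm{S}{b}$ as a sum of two forms and use closure under addition (Proposition \ref{prop:FormAddMult}). The reward summand $\psi \mapsto \semanticsR{S}(\ket{\psi}\bra{\psi})$ is a positive, symmetric, closed form by the last condition of Definition \ref{def:generalSemantics}; its domain is the set of $\psi$ where the value is finite.

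For the state summand $\psi \mapsto b[\semanticsD{S}(\ket{\psi}\bra{\psi})]$ we use the Kraus representation $\semanticsD{S}(\rho)=\sum_j E_j\rho E_j^\dagger$, with bounded $E_j$ and $\sum_j E_j^\dagger E_j \loewner \identityOp$ \cite[Theorem 9]{krausChoi}. With it,
\begin{equation*}
b[\semanticsD{S}(\ket{\psi}\bra{\psi})]=\sum_j b[E_j\ket{\psi}]=\sum_j b_{E_j}[\ket{\psi}].
\end{equation*}
Each $b_{E_j}$ lies in $\formPredicate$ by Proposition \ref{prop:FormAddMult}, and the countable sum lies in $\formPredicate$ by the infinite-sum version of Lemma \ref{lem:infSumExpect}.

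We must also check that the abuse of notation $b[\rho]$ is well defined, i.e.\ independent of the chosen decomposition of $\rho$. By Lemma \ref{lem:expectAlternativeDef}(2), $b[\rho]$ coincides with $\expect{B}{\rho}$ whenever the latter is finite, and Definition \ref{def:expectationValue} does not depend on any decomposition. The same lemma shows that the Kraus-sum expression equals $\expect{B}{\semanticsD{S}(\ket{\psi}\bra{\psi})}$, with value $\infty$ outside the domain. This is consistent with domain conventions: the domain of the summed form is exactly the set where the value is finite.

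For the second part, let $C$ be the expectation corresponding to $c:=\qwpForm{S}{b}$ (Proposition \ref{prop:OpstoForms}). We verify the hypothesis of Lemma \ref{lem:ExpectationValue}, namely
\begin{equation*}
\expect{C}{\rho}=\expect{B}{\semanticsD{S}(\rho)}+\semanticsR{S}(\rho) \quad\text{for all } \rho\in\density.
\end{equation*}
For pure states $\rho=\ket{\psi}\bra{\psi}$ this follows from Lemma \ref{lem:expectAlternativeDef}(2) applied to $C$, combined with the computation above; both sides are $\infty$ precisely when $\psi\notin\domain{c}$.

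For general $\rho=\sum_i p_i\ket{\psi_i}\bra{\psi_i}$ we decompose. On the left, Lemma \ref{lem:expectAlternativeDef} gives $\expect{C}{\rho}=\sum_i p_i\expect{C}{\ket{\psi_i}\bra{\psi_i}}$. On the right we use three facts: $\semanticsD{S}$ is linear and continuous (as a Kraus map, it commutes with the countable sum); $\expect{B}{\cdot}$ is countably additive (Lemma \ref{lem:infSumLinExpectation}); and $\semanticsR{S}$ is linear. Lemma \ref{lem:ExpectationValue} then yields $C=\qwp{S}{B}$, which also shows that the supremum in Definition \ref{def:generalwp} exists.

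The main obstacle is countable additivity of $\semanticsR{S}$. Definition \ref{def:generalSemantics} only grants linearity, not additivity over infinite decompositions. We plan to obtain it from the closed-form condition: a positive, closed, symmetric form $r$ corresponds to an expectation $R$ (Proposition \ref{prop:OpstoForms}). Polarization and linearity should then give $\semanticsR{S}(\rho)=\expect{R}{\rho}$ on finite-rank $\rho$. The extension to all $\rho$ would use monotone approximation from finite truncations, which $\expect{R}{\cdot}$ respects by Proposition \ref{prop:expectationProperties}; for $\semanticsR{S}$ itself this requires lower semicontinuity. If that is not available from the hypotheses, we would need to read ``linear'' in Definition \ref{def:generalSemantics} as including countable sums of positive terms.
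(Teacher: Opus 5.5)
Your construction of the form follows the paper's proof. The shared steps are: a Kraus family $E_j$, the forms $b_{E_j}$, closure of $\formPredicate$ under finite and countable sums, the reward form supplied by Definition \ref{def:generalSemantics}, and the pure-state identity $\expect{C}{\ket{\psi}\bra{\psi}}=\expect{B}{\semanticsD{S}(\ket{\psi}\bra{\psi})}+\semanticsR{S}(\ket{\psi}\bra{\psi})$. Your appeal to the decomposition-independence of $\expect{B}{\cdot}$ does the work of the paper's closing argument that the result does not depend on the chosen Kraus family. The paper then passes to mixed states ``as $\semanticsD{S}$, $\semanticsR{S}$ and $\mathbb{E}$ are linear''. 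That is, it silently uses exactly the countable additivity of $\semanticsR{S}$ that you single out.

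Your worry is justified, and the repair you sketch cannot succeed, because closedness of the pure-state form says nothing about $\semanticsR{S}$ on infinite-rank states. Let $\semanticsD{S}$ be the identity, and set $\semanticsR{S}(\rho)=0$ if $\rho$ has finite rank and $\infty$ otherwise (values in $\extR$ are allowed). This map is additive and positively homogeneous on $\density$, since a sum of two positive operators has finite rank iff both do. Its pure-state form is the zero form on $\hilbert$, so $S\in\programs$. For $B=\zeroOp$ you get $C=\zeroOp$. Yet $\expect{\zeroOp}{\rho}=0\neq\infty=\expect{\zeroOp}{\semanticsD{S}(\rho)}+\semanticsR{S}(\rho)$ for every infinite-rank $\rho$. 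So the hypothesis of Lemma \ref{lem:ExpectationValue} fails, and your route through it cannot be completed from the stated assumptions.

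The missing observation is that the theorem does not need that equality.
(a) For any decomposition and any $N$, $\rho-\sum_{i\le N}p_i\ket{\psi_i}\bra{\psi_i}\in\density$. Finite linearity and nonnegativity therefore give $\sum_i p_i\,\semanticsR{S}(\ket{\psi_i}\bra{\psi_i})\le\semanticsR{S}(\rho)$. The map $\rho\mapsto\expect{B}{\semanticsD{S}(\rho)}$ is legitimately countably additive (Kraus maps commute with the sum; Lemma \ref{lem:infSumLinExpectation}). Combining the two yields $\expect{C}{\rho}\le\expect{B}{\semanticsD{S}(\rho)}+\semanticsR{S}(\rho)$, so $\{C\}S\{B\}$ is totally correct.
(b) Suppose $\{A\}S\{B\}$ is totally correct. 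Evaluating at pure states and using your pure-state identity gives $\expect{A}{\ket{\psi}\bra{\psi}}\le\expect{C}{\ket{\psi}\bra{\psi}}$ for all $\psi$. The $\Leftarrow$ direction in the proof of Lemma \ref{lem:orderExpectation} uses only pure states, so $A\loewner C$.

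Thus $C$ is the greatest pre-expectation, i.e.\ $C=\qwp{S}{B}$, with no extra hypothesis. Only the mixed-state equality asserted in Theorem \ref{thm:existenceWp} genuinely needs your fallback assumption (countable additivity of $\semanticsR{S}$); the present statement does not.
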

As mentioned above, this result for forms gives us also the existence of the weakest pre-expectation in the operator form:
\begin{theorem} \label{thm:existenceWp}
    Let $S\in \programs$ and $B \in \predicate$. Then $\qwp{S}{B} \in \predicate$ exists with $\expect{B}{\semanticsD{S}(\rho)} + \semanticsR{S}(\rho)= \expect{\qwp{S}{B}}{\rho}$ for all $\rho \in \density$.
\end{theorem}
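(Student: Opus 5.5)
The plan is to derive the statement almost directly from Theorem \ref{thm:existenceFormsWp}, using the correspondence between forms and expectations (Proposition \ref{prop:OpstoForms}) and the form description of expectation values (Lemma \ref{lem:expectAlternativeDef}, item 2). Let $b \in \formPredicate$ be the form-expectation of $B$. Theorem \ref{thm:existenceFormsWp} gives $c := \qwpForm{S}{b} \in \formPredicate$, and says that its corresponding expectation $C \in \predicate$ is the weakest pre-expectation $\qwp{S}{B}$ in the sense of Definition \ref{def:generalwp}. So existence is immediate, and the remaining work is the identity $\expect{C}{\rho} = \expect{B}{\semanticsD{S}(\rho)} + \semanticsR{S}(\rho)$ for all $\rho \in \density$.

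The first step is to check this identity on pure states. For $\rho = \ket{\psi}\bra{\psi}$, Lemma \ref{lem:expectAlternativeDef} (2) gives $\expect{C}{\ket{\psi}\bra{\psi}} = c[\ket{\psi}]$ if $\ket{\psi} \in \domain{c}$, and $\infty$ otherwise. By definition of $\qwpForm{S}{b}$, the value $c[\ket{\psi}]$ is $b[\semanticsD{S}(\ket{\psi}\bra{\psi})] + \semanticsR{S}(\ket{\psi}\bra{\psi})$, with the convention $b[\cdot]=\infty$ outside the domain. I would then decompose $\semanticsD{S}(\ket{\psi}\bra{\psi}) = \sum_j q_j \ket{\phi_j}\bra{\phi_j}$ and apply Lemma \ref{lem:expectAlternativeDef} to $B$. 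This shows $b[\semanticsD{S}(\ket{\psi}\bra{\psi})] = \sum_j q_j\, \expect{B}{\ket{\phi_j}\bra{\phi_j}} = \expect{B}{\semanticsD{S}(\ket{\psi}\bra{\psi})}$, with infinite values matching on both sides. It follows that $\domain{c}$ is exactly the set of $\psi$ where the right-hand side is finite, and the identity holds on pure states.

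The second step extends the identity to mixed states. Given $\rho = \sum_i p_i \ket{\psi_i}\bra{\psi_i}$, Lemma \ref{lem:expectAlternativeDef} gives $\expect{C}{\rho} = \sum_i p_i \expect{C}{\ket{\psi_i}\bra{\psi_i}}$. On the other side, I would use linearity of $\semanticsD{S}$ and $\semanticsR{S}$ (Definition \ref{def:generalSemantics}) together with the infinite-sum linearity of expectation values (Lemma \ref{lem:infSumLinExpectation}). This rewrites $\expect{B}{\semanticsD{S}(\rho)} + \semanticsR{S}(\rho)$ as $\sum_i p_i\bigl(\expect{B}{\semanticsD{S}(\ket{\psi_i}\bra{\psi_i})} + \semanticsR{S}(\ket{\psi_i}\bra{\psi_i})\bigr)$. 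Combined with the pure-state case, this gives the claim.

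The main obstacle I expect is this passage to countable sums, since linearity of $\semanticsD{S}$ and $\semanticsR{S}$ is stated for finite combinations. I would handle it with partial sums $\rho_N = \sum_{i\le N} p_i \ket{\psi_i}\bra{\psi_i}$, which increase to $\rho$. Their images $\semanticsD{S}(\rho_N)$ increase as well, because $\semanticsD{S}$ is positive, and they converge to $\semanticsD{S}(\rho)$ by boundedness of the channel. The $\expect{B}{\cdot}$ term then passes to the limit by the continuity in the state in Proposition \ref{prop:expectationProperties}. For the $\semanticsR{S}$ term, the paper's implicit convention is that $\semanticsR{S}$ is the $\extR$-valued extension determined by its closed form on pure states, consistent with Proposition \ref{prop:supCPmaps}. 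Under that convention it is countably additive over such decompositions, and both sides agree in $\extR$, including the value $\infty$.
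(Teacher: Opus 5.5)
Your proposal is correct and is essentially the paper's argument. The paper proves this theorem simply by combining Theorem~\ref{thm:existenceFormsWp} with Lemma~\ref{lem:ExpectationValue}, and the pure-state-then-mixed-state verification of $\expect{B}{\semanticsD{S}(\rho)} + \semanticsR{S}(\rho) = \expect{\qwp{S}{B}}{\rho}$ that you carry out is exactly what is done inside the proof of Theorem~\ref{thm:existenceFormsWp}. You could therefore cite that proof and let Lemma~\ref{lem:ExpectationValue} supply the weakest-pre-expectation property, rather than re-deriving the identity. Your explicit partial-sum treatment of countable decompositions also helps: it makes precise a step the paper takes silently under ``linearity'', namely the countable additivity convention for $\semanticsR{S}$, which is not obviously implied by Definition~\ref{def:generalSemantics} once $\semanticsR{S}$ may take the value $\infty$.
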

\begin{proof}
    Follows from Theorem \ref{thm:existenceFormsWp} and Lemma \ref{lem:ExpectationValue}.
\end{proof}

\subsection{WP for qrWhile Programs}

For qrWhile programs, we give a concrete representation in Table \ref{tab:wp} which shows how to compute the weakest pre-expectation for a given program and post-expectation. The correctness of this representation is given by:
\begin{proposition}
    \label{prop:wpDefinition}
    Let $S$ be a qrWhile program.
    The transformer $wp \llbracket S \rrbracket$ as given in Table \ref{tab:wp} is well-defined, i.e., $\qwp{S}{B} \in \predicate$ for post-expectation $B \in \predicate$ and $\expect{\qwp{S}{B}}{\rho} = \expect{B}{\semanticsD{S}(\rho)} + \semanticsR{S}(\rho)$ for all $\rho \in \density$. That is, $\qwp{S}{B}$ is the weakest pre-expectation of $S$ with respect to $B$.
\end{proposition}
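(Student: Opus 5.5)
The proof will be a structural induction on the qrWhile program $S$. For each construct we check two things about the Table~\ref{tab:wp} entry: that it yields an element of $\predicate$, and that it satisfies the expectation-value identity $\expect{\qwp{S}{B}}{\rho} = \expect{B}{\semanticsD{S}(\rho)} + \semanticsR{S}(\rho)$. Once the identity holds, Lemma~\ref{lem:ExpectationValue} shows the expression is the weakest pre-expectation, and uniqueness follows from Lemma~\ref{lem:orderExpectation}.

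Closure in $\predicate$ is obtained from the operations of Definition~\ref{def:FormAddMult}, namely $\sumform$, nonnegative scaling and $\formsandwich{M}{\cdot}$, together with Proposition~\ref{prop:FormAddMult} and Lemma~\ref{lem:infSumExpect}. The expected table entries are:
\begin{itemize}
\item $B$ for $\skipbf$;
\item $\sum_n \formsandwich{(\ket{n}\bra{0})}{B}$ for initialisation;
\item $\formsandwich{U}{B}$ for unitaries;
\item $c\cdot\identityOp \sumform B$ for rewards, with the convention $\inftyOp$ when $c=\infty$;
\item composition as $\qwp{S_1}{\qwp{S_2}{B}}$;
\item $\sum_m \formsandwich{M_m}{\qwp{S_m}{B}}$ for measurement.
\end{itemize}

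The base cases are direct. For $\skipbf$ the identity is trivial. For unitaries and initialisation it follows from Proposition~\ref{prop:boundedOpinExpectation}, combined with Lemma~\ref{lem:infSumLinExpectation} for the countable sum. For $\rewardR$, Proposition~\ref{prop:expectationProperties} gives $\expect{c\identityOp\sumform B}{\rho}=c\,tr(\rho)+\expect{B}{\rho}$; the case $c=\infty$ must be checked separately against $\expect{\inftyOp}{\rho}$.

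The inductive cases use the induction hypotheses. For composition, we apply the hypothesis for $S_1$ to the post-expectation $\qwp{S_2}{B}$ and then the hypothesis for $S_2$ at the state $\semanticsD{S_1}(\rho)$, which gives exactly the reward sum in Definition~\ref{def:semantics}. For measurement, we combine Lemma~\ref{lem:infSumLinExpectation}, Proposition~\ref{prop:boundedOpinExpectation} and the hypotheses for each $S_m$ at $M_m\rho M_m^\dagger$, and compare with the sum in the semantics.

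The loop is the main obstacle. The plan is to take $\qwp{\while}{B}$ to be $\bigvee_i \Phi^i(\zeroOp)$, where
$\Phi(X) = \formsandwich{M_0}{B} \sumform \formsandwich{M_1}{\qwp{S}{X}}$.

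First, $\Phi$ is monotone: this uses Proposition~\ref{prop:orderproperties} and the monotonicity of $\qwp{S}{\cdot}$, which follows from the induction hypothesis via Proposition~\ref{prop:wpProperties}. Hence $\Phi^i(\zeroOp)$ is an increasing chain, and its supremum exists by Theorem~\ref{thm:omegaCompleteness}.

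Next, by an inner induction on $i$, reusing the measurement, composition and skip cases for the syntactic unfolding $\while^{i}$, we get $\expect{\Phi^i(\zeroOp)}{\rho} = \expect{B}{\semanticsD{\while^i}(\rho)}+\semanticsR{\while^i}(\rho)$. Here $\Omega$ corresponds to $\zeroOp$.

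Finally we pass to the limit. On the left, continuity of expectation values in the expectation (Proposition~\ref{prop:expectationProperties}) gives the value at $\bigvee_i\Phi^i(\zeroOp)$. On the right, the reward term is a limit by definition. The $B$-term needs continuity in the state, $\expect{B}{\bigvee_i \semanticsD{\while^i}(\rho)}=\bigvee_i\expect{B}{\semanticsD{\while^i}(\rho)}$, again by Proposition~\ref{prop:expectationProperties}. This uses that $\semanticsD{\while^i}(\rho)$ is increasing and that both summands are increasing in $\extR$, so the limit of the sum equals the sum of the limits even when values are infinite.

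If the table instead states the loop as a least fixed point, Kleene's theorem links it to this supremum. That in turn needs $\omega$-continuity of $\Phi$, which comes from Proposition~\ref{prop:ContinuityDistributivityFormSum} and the order-continuity of $\qwp{S}{\cdot}$.
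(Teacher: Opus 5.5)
Your proposal is correct and follows essentially the same route as the paper: a simultaneous structural induction establishing membership in $\predicate$ together with the expectation-value identity. The loop is handled, as in the paper, via the increasing chain $A_n$ (your $\Phi^n(\zeroOp)$), an inner induction matching $A_n$ against the unfoldings $\while^n$, and a passage to the supremum using continuity of expectation values and monotonicity of both summands.

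One slip to fix concerns the initialisation entry. With the convention $\formsandwich{M}{A}=M^\dagger\odot A\odot M$, that entry is $\sum_n \formsandwich{(\ket{0}\bra{n})}{B}$, since the Kraus operators of $\semanticsD{\qzero}$ are $\ket{0}\bra{n}$. Your $\sum_n\formsandwich{(\ket{n}\bra{0})}{B}$ would, via Proposition~\ref{prop:boundedOpinExpectation}, evaluate $B$ on $\sum_n \ket{n}\bra{0}\rho\ket{0}\bra{n}$, which is not $\semanticsD{\qzero}(\rho)$.
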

\begin{proof}[Proof Sketch]
    We show by simultaneous induction on the structure of $S$:
    \begin{enumerate}
        \item $\qwp{S}{B}$ is an expectation, and
        \item $\expect{\qwp{S}{B}}{\rho} = \expect{B}{\semanticsD{S}(\rho)} + \semanticsR{S}(\rho)$ for all $\rho \in \density$.
    \end{enumerate}
\end{proof}

\begin{table}
        \centering
        \renewcommand{\arraystretch}{1.3}
        \setlength{\tabcolsep}{8pt}
        \begin{tabular}{@{}ll@{}}
        \toprule
        \text{Program $S$} & \text{$\qwp{S}{B}$} \\
        \midrule
        $\skipbf$ & $B$ \\
        $\qzero$ & $\displaystyle \sum_{n=0}^\infty \ket{n}\bra{0} \odot B \odot \ket{0}\bra{n}$ \\
        $\Uq$ & $\formsandwich{U}{B}$ \\
        $\rewardR$ & $B \sumform c \cdot \identityOp$ \\
        $\concat$ & $\qwp{S_1}{\qwp{S_2}{B}}$ \\
        $\measurePrime$ & $\displaystyle \sum_{m} \formsandwich{M_m}{\qwp{S_m'}{B}} $ \\
        $\while'$ & $\displaystyle \bigvee_n A_n \quad \text{where }
            A_0 = \zeroOp,\; A_{n+1} = \formsandwich{M_0}{B} \sumform \formsandwich{M_1}{\qwp{S'}{A_n}}$ \\
        \bottomrule\\
    \end{tabular}
        \caption{Weakest pre-expectation transformer $wp \llbracket S \rrbracket:\predicate \to \predicate$ for a qrWhile program $S$ and post-expectation $B\in \predicate$.}
        \label{tab:wp}
    \end{table}

Finally, if we remove the reward from the qrWhile programs and consider only $1-$bounded expectations, the calculus presented above coincides with the standard one \cite{floydHoareLogic}:
\begin{proposition} \label{prop:wpBoundedEquiv}
    If $A \loewner \identityOp$ is $1-$bounded and qrWhile program $S$ is reward-free, then our weakest precondition calculus and definitions are equivalent to \cite{floydHoareLogic,DHondtWeakestPreconditions} as in this case, $\expect{A}{\rho} = tr(A\rho)$ for all partial density operators $\rho \in \density$.
\end{proposition}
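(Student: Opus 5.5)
The plan has three steps: prove the expectation-value identity for $1$-bounded expectations, use it to identify our notions with the standard ones, and then compare the transformers of Table \ref{tab:wp} with the standard ones by structural induction.

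\textbf{Step 1: the identity $\expect{A}{\rho} = tr(A\rho)$.} Let $A \loewner \identityOp$. Its form $a$ satisfies $\domain{\identityOp} = \hilbert \subseteq \domain{a}$ by Proposition \ref{prop:orderEquivalence}, so $a$ is everywhere defined. A closed, positive, everywhere defined form with $a[\psi] \leq \norm{\psi}^2$ corresponds to a bounded positive operator. Hence $A$ is bounded and self-adjoint, $A_\infty = A$, and $\half{A}$ is the usual bounded square root. Take a decomposition $\rho = \sum_i p_i \ket{\psi_i}\bra{\psi_i}$. Lemma \ref{lem:expectAlternativeDef} gives
\[
\expect{A}{\rho} = \sum_i p_i \norm{\half{A}\ket{\psi_i}}^2 = \sum_i p_i \bra{\psi_i}A\ket{\psi_i}.
\]
By linearity and continuity of the trace on trace-class operators, this equals $tr(A\rho)$. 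One can alternatively cite \cite[Remark 2]{ExpectationValuesKraus}, as noted after Definition \ref{def:expectationValue}.

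\textbf{Step 2: the definitions coincide.} For reward-free $S$, Proposition \ref{prop:rewardFreeOriginalSem} gives $\semanticsR{S} = 0$ and $\semanticsD{S} = \semanticsOrig{S}$. Total correctness of $\{A\}S\{B\}$ with $A,B \loewner \identityOp$ then reads $tr(A\rho) \leq tr(B\,\semanticsOrig{S}(\rho))$, which is the standard notion. By Lemma \ref{lem:orderExpectation} and Step 1, our order $\loewner$ restricted to $1$-bounded operators is the usual Loewner order. Theorem \ref{thm:existenceWp} gives $tr(\qwp{S}{B}\rho) = tr(B\,\semanticsOrig{S}(\rho))$ for all $\rho$. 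This determines $\qwp{S}{B}$, so it agrees with the weakest precondition of \cite{DHondtWeakestPreconditions,floydHoareLogic}. One point must be checked here: the supremum in Definition \ref{def:generalwp} is taken over all of $\predicate$, not only over predicates. This is harmless, because the identity above together with $tr(B\,\semanticsOrig{S}(\rho)) \leq tr(\rho)$ shows $\qwp{S}{B} \loewner \identityOp$.

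\textbf{Step 3: the syntactic calculus coincides.} I will argue by structural induction that each rule of Table \ref{tab:wp} reduces to the rule of \cite{floydHoareLogic}. The reward rule does not arise, since $S$ is reward-free. By Proposition \ref{prop:BoundedAddition}, $\sumform$, scalar multiplication and $\formsandwich{M}{B}$ reduce to the ordinary sum and to $M^\dagger B M$ on bounded operators. The infinite sums in the initialisation and measurement rules are bounded by $\identityOp$ and agree with the SOT/WOT sums via Lemma \ref{lem:infSumExpect}. For the while rule, the approximants $A_n$ are $1$-bounded, and the supremum in $\predicate$ from Theorem \ref{thm:omegaCompleteness} is the WOT limit. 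This coincides with the Loewner supremum of bounded predicates.

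The main obstacle is Step 1's claim that $A \loewner \identityOp$ forces $A$ to be an everywhere defined bounded operator, and not merely an unbounded $\infty$-self-adjoint operator. I would settle it through the form correspondence: $\domain{a} = \hilbert$, and a closed form with full domain is bounded by the closed graph theorem. Boundedness is in any case given by $a[\psi] \leq \norm{\psi}^2$, and the Riesz representation theorem then yields a bounded operator.
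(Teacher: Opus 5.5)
Your proposal is correct and takes essentially the paper's route. You identify $\expect{A}{\rho}$ with $tr(A\rho)$, use $\semanticsR{S}=0$ to match the correctness and wp definitions with \cite{floydHoareLogic,DHondtWeakestPreconditions}, and reduce each rule of Table~\ref{tab:wp} to the standard one via Proposition~\ref{prop:BoundedAddition}. Your version is more careful than the paper's sketch: you derive boundedness of $A$ from $\domain{a}=\hilbert$, and you check that the supremum over all of $\predicate$ stays below $\identityOp$. You omit partial correctness, but it matches the standard definition by the same substitution.
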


\subsection{Park Induction}
The reasoning about loops is typically facilitated by proof rules. These rules provide sufficient conditions under which bounds for qrWhile programs can be proven. Those rules also hold in the bounded case \cite{floydHoareLogic}.
First, we define the characteristic function of a loop:
\begin{definition}
    The \emph{characteristic function} $\psi_B: \predicate \to \predicate$ of $\while$ w.r.t. post-expectation $B \in \predicate$ is defined for $A \in \predicate$ by:
    \begin{align*}
        \psi_B(A) = \formsandwich{M_0}{B} \sumform \formsandwich{M_1}{\qwp{S}{A}}
    \end{align*}
\end{definition}

Based on this, we obtain the quantum version of Park induction \cite{park}, which is a well-known proof rule for upper bounds of wp in the probabilistic case:
\begin{proposition} \label{prop:park}
    Let $\while$ be a loop in qrWhile, $A,B \in \predicate$ and $\psi_B$ its characteristic function then
    \begin{equation*}
        \psi_B(A)\loewner A \Rightarrow \qwp{\while}{B} \loewner A
    \end{equation*}
\end{proposition}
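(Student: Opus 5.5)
The plan is the standard Kleene/Park argument on the $\omega$-cpo $(\predicate,\loewner)$. By Table \ref{tab:wp} and Proposition \ref{prop:wpDefinition}, $\qwp{\while}{B} = \bigvee_n A_n$ with $A_0 = \zeroOp$ and $A_{n+1} = \psi_B(A_n)$. So it suffices to show $A_n \loewner A$ for all $n$. The claim $\bigvee_n A_n \loewner A$ then follows because, by Theorem \ref{thm:omegaCompleteness}, the supremum is the least upper bound of the chain. (The chain is increasing by the same monotonicity argument as below, which the paper already uses for well-definedness.)

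We prove $A_n \loewner A$ by induction on $n$. The base case is $A_0 = \zeroOp \loewner A$, since $\zeroOp$ is the least element of $\predicate$. This follows from Lemma \ref{lem:orderExpectation}, because $\expect{\zeroOp}{\rho}=0$. For the step, assume $A_n \loewner A$. We first show that $\psi_B$ is monotone, and then chain the inequalities:
\[
A_{n+1} = \psi_B(A_n) \loewner \psi_B(A) \loewner A .
\]
The second inequality is the hypothesis of the proposition. Transitivity holds by Proposition \ref{prop:orderproperties}.

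Monotonicity of $\psi_B$ is the only real content. First, $\qwp{S}{\cdot}$ is monotone. By Proposition \ref{prop:wpDefinition}, every qrWhile body satisfies the premise of Proposition \ref{prop:wpProperties}, whose monotonicity clause gives $\qwp{S}{A_n} \loewner \qwp{S}{A}$. Alternatively, one can argue directly with Lemma \ref{lem:orderExpectation}: $\expect{\qwp{S}{A_n}}{\rho} = \expect{A_n}{\semanticsD{S}(\rho)} + \semanticsR{S}(\rho) \le \expect{A}{\semanticsD{S}(\rho)} + \semanticsR{S}(\rho)$. Next, the third item of Proposition \ref{prop:orderproperties} gives $\formsandwich{M_1}{\qwp{S}{A_n}} \loewner \formsandwich{M_1}{\qwp{S}{A}}$. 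Finally, the second item of Proposition \ref{prop:orderproperties}, applied with $C=\formsandwich{M_0}{B}$ and using commutativity of $\sumform$ (inherited from commutativity of form addition), gives $\psi_B(A_n)\loewner\psi_B(A)$.

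The main obstacle is not the induction itself. It is checking that the ingredients hold for unbounded expectations, where $\sumform$ and $\odot$ are defined via forms. Here everything needed has already been provided by Propositions \ref{prop:orderproperties} and \ref{prop:wpDefinition} and Theorem \ref{thm:omegaCompleteness}. If any order step were delicate, I would instead compare expectation values pointwise in $\rho$ via Lemma \ref{lem:orderExpectation} together with Propositions \ref{prop:expectationProperties} and \ref{prop:boundedOpinExpectation}. That reduces everything to inequalities in $\extR$, where addition of possibly infinite values is monotone.
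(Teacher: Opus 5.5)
Your proposal is correct and takes essentially the same route as the paper. The paper identifies $\qwp{\while}{B}$ with $\bigvee_n \psi_B^n(\zeroOp)$, derives monotonicity of $\psi_B$ from monotonicity of $wp$ and Proposition~\ref{prop:orderproperties}, proves $\psi_B^n(\zeroOp)\loewner A$ by induction, and concludes via the least-upper-bound property; your remark on commutativity of $\sumform$ is a small extra bit of care, needed because Proposition~\ref{prop:orderproperties} only adds $C$ on the right.
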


Analogously we obtain a dual version for lower bounds of wlp as defined in \cite{floydHoareLogic}:
\begin{proposition} \label{prop:parkWLP}
    Let $\while$ be a loop in qrWhile, $A,B \in \predicate$ with $A,B \loewner \identityOp$ and $\phi_B(A) = \formsandwich{M_0}{B} + \formsandwich{M_1}{\qwlp{S}{A}}$, then
    \begin{equation*}
        A \loewner \phi_B(A) \Rightarrow A \loewner \qwlp{\while}{B}
    \end{equation*}
\end{proposition}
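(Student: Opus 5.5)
The plan is to argue entirely in the bounded setting, since $A,B \loewner \identityOp$. By Proposition \ref{prop:BoundedAddition}, $\sumform$ and $\odot$ reduce there to ordinary operator addition and sandwiching, and $\loewner$ is the usual Loewner order. So $\phi_B$ and $wlp$ are exactly the objects of \cite{floydHoareLogic}. I would use the fixed-point characterisation given there:
\begin{equation*}
\qwlp{\while}{B} = \bigwedge_n W_n, \qquad W_0 = \identityOp, \qquad W_{n+1} = \phi_B(W_n),
\end{equation*}
where $\{W_n\}_n$ is a decreasing chain in the set of predicates $\{P \mid \zeroOp \loewner P \loewner \identityOp\}$. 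Its infimum exists because $\{\identityOp - W_n\}_n$ is an increasing chain of predicates, and predicates form an $\omega$-cpo (the Loewner $\omega$-completeness recalled in the preliminaries, or Theorem \ref{thm:omegaCompleteness} restricted to $1$-bounded operators). Hence $\bigwedge_n W_n = \identityOp - \bigvee_n (\identityOp - W_n)$ is again a predicate.

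The core step is an induction on $n$ proving $A \loewner W_n$ for all $n$. The base case $A \loewner \identityOp = W_0$ is a hypothesis. For the inductive step, assume $A \loewner W_n$. I need monotonicity of $\phi_B$: if $P \loewner Q$ then $\qwlp{S}{P} \loewner \qwlp{S}{Q}$, which is monotonicity of $wlp$ for bounded predicates \cite{floydHoareLogic}. Sandwiching preserves $\loewner$ (Proposition \ref{prop:orderproperties}, third item), and adding the fixed term $M_0^\dagger B M_0$ preserves $\loewner$ (second item). Therefore $\phi_B(A) \loewner \phi_B(W_n) = W_{n+1}$. Combined with the hypothesis $A \loewner \phi_B(A)$ and transitivity, this gives $A \loewner W_{n+1}$.

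Since $A$ is then a lower bound of the whole chain, $A \loewner \bigwedge_n W_n = \qwlp{\while}{B}$. Concretely, $\identityOp - W_n \loewner \identityOp - A$ for all $n$ gives $\bigvee_n(\identityOp - W_n) \loewner \identityOp - A$ by the least-upper-bound property, which rearranges to the claim.

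I expect the main obstacle to be justifying the imported ingredients rather than the induction itself. These are the greatest-fixed-point/limit representation of $wlp$ for loops and the monotonicity of $\qwlp{S}{\cdot}$ for arbitrary (nested) bodies $S$. If $S$ contains loops, the latter needs its own structural induction, using the fact that infima of decreasing chains preserve $\loewner$ pointwise. I would take both from \cite{floydHoareLogic}; if they are not available in exactly this form, I would prove monotonicity by structural induction on $S$, mirroring the argument behind Proposition \ref{prop:park}.
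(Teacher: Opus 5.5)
Your proof is correct and follows the paper's approach. The paper only says the argument is completely analogous to the Park induction proof for $wp$ (Proposition \ref{prop:park}): characterise $\qwlp{\while}{B}$ as the infimum of the iterates $\phi_B^n(\identityOp)$, use monotonicity of $\phi_B$, and show $A \loewner \phi_B^n(\identityOp)$ by induction, which is exactly your dualised argument, with the existence of the infimum and the monotonicity of $wlp$ spelled out more carefully than in the paper.
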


%% file: 6_ert.tex
\section{Expected Runtime using Rewards}
\label{sec:ertRewards}
In this section we show how to use the weakest pre-expectation to obtain the expected runtime -- the number of executed statements as in \cite{LiuRuntime} -- of quantum programs. The main idea is to use expected rewards. Thus this section only handles programs without rewards, as we want to use the reward statements to track the runtime.

\begin{remark}
There are many different ways to track the expected runtime in a program (e.g., we could also add a tick variable that is incremented in each statement or define a forward transformer that simply counts the number of statements as done in \cite{LiuRuntime}), but we choose this transformation using reward statements to make our approach flexible: it is easy to change the notion of runtime, e.g., count the number of iterations in a while-loop instead (then only the program transformation must be changed). More reasons for using reward statements (in the probabilistic case) can be found in \cite{philippHigherMoments}, they also give examples why using counter variables do not work as expected.
\end{remark}

\cite{LiuRuntime} presents an expected runtime transformer for quantum programs with bounded operators only (and finite dimensional spaces), the extension to unbounded operators is one instance of our weakest pre-expectation framework (see Theorems \ref{thm:equivErtQwp} and \ref{thm:ertEquivERT}).
To do so, we transform (reward-free) programs syntactically by adding $\reward{1}$ statements as shown in Table \ref{tab:programTransformation}. Intuitively, we add a $\reward{1}$ statement in front of each statement to count the number of executed statements. For the while-loop, we add a reward statement for each measurement, that means we do it once in the beginning and then after the loop body to account for the next measurement.
Omitted proofs can be found in Appendix \ref{sec:appendix_ert_rewards}.

\begin{table}
        \centering
        \renewcommand{\arraystretch}{1.3}
        \setlength{\tabcolsep}{8pt}
        \begin{tabular}{@{}ll@{}}
        \toprule
        \text{Program $S$} & \text{Transformed Program $\transform{S}$} \\
        \midrule
        $\skipbf$ & $\skipbf$ \\
        $\qzero$ & $\reward{1}; \qzero $ \\
        $\Uq$ & $\reward{1}; \Uq $ \\
        $\concat$ & $\transform{S_1}; \transform{S_2}$ \\
        $\measure$ & $\reward{1}; \textbf{measure } M[\overline{q}]:\overline{\transform{S}}$ \\
        $\while$ & $\reward{1}; \textbf{while } M[\overline{q}]=1 \textbf{ do } \{\transform{S}; \reward{1}\}$\\
        \bottomrule\\
    \end{tabular}
        \caption{The syntactic transformation of a reward-free program $S$ to a program $\transform{S}$ that allows us to track the expected runtime of $S$.}
        \label{tab:programTransformation}
\end{table}

\begin{definition} \label{def:transformedProgram}
    For a reward-free qrWhile program $S$, the program $\transform{S}$ is inductively defined in Table \ref{tab:programTransformation}.
\end{definition}

The added reward statements do not change the quantum state, i.e., $\transform{S}$ is semantically equivalent to $S$ in terms of the quantum state. They enable to track the expected reward which will be used to define the expected runtime of $S$.
\begin{lemma} \label{lem:rewardTransformQuantumState}
    Let $S$ be a reward-free qrWhile program. Then $\semanticsD{\transform{S}}(\rho) =\semanticsD{S}(\rho)$.
\end{lemma}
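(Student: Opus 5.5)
We argue by structural induction on the reward-free program $S$, following the clauses of Table \ref{tab:programTransformation}. The key auxiliary fact is that $\reward{1}$ acts as the identity on the quantum state: by Definition \ref{def:semantics}, $\semanticsD{\reward{1}}(\rho) = \rho$. Hence for any program $S'$, the sequential composition rule gives $\semanticsD{\reward{1}; S'}(\rho) = \semanticsD{S'}(\semanticsD{\reward{1}}(\rho)) = \semanticsD{S'}(\rho)$, and likewise $\semanticsD{S'; \reward{1}}(\rho) = \semanticsD{S'}(\rho)$. Since the first component of the semantics of sequential composition never depends on reward components, the rewards introduced by $T$ never influence the state.

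The base cases are immediate. For $\skipbf$ we have $\transform{S} = S$. For $\qzero$ and $\Uq$, the auxiliary fact reduces $\semanticsD{\transform{S}}$ to $\semanticsD{S}$. For $\concat$, the induction hypothesis gives $\semanticsD{\transform{S_1};\transform{S_2}}(\rho) = \semanticsD{\transform{S_2}}(\semanticsD{\transform{S_1}}(\rho)) = \semanticsD{S_2}(\semanticsD{S_1}(\rho))$. For the measurement, after stripping the leading $\reward{1}$ we obtain $\sum_m \semanticsD{\transform{S_m}}(M_m\rho M_m^\dagger)$, since the semantics of measurement is an entry-wise sum. The induction hypothesis applied to each branch then yields $\semanticsD{\measure}(\rho)$.

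The while loop is the only case needing care. After removing the leading $\reward{1}$, we must compare the loop with body $S'' := \transform{S};\reward{1}$ against the loop with body $S$. By the induction hypothesis and the auxiliary fact, $\semanticsD{S''} = \semanticsD{S}$. We then prove by an inner induction on $i$ that the $i$-th unfoldings agree on the state component:
\begin{equation*}
\semanticsD{(\textbf{while } M[\overline{q}]=1 \textbf{ do } S'')^i}(\rho) = \semanticsD{\while^i}(\rho) \quad \text{for all } \rho.
\end{equation*}
For $i=0$ both sides equal $\zeroOp$. For the step, the $(i+1)$-th unfolding is a measurement with branches $\skipbf$ and $S''; (\cdot)^i$. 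Its state component is $M_0\rho M_0^\dagger + \semanticsD{(\cdot)^i}(\semanticsD{S''}(M_1\rho M_1^\dagger))$, which matches the corresponding expression for $S$ by the inner hypothesis together with $\semanticsD{S''}=\semanticsD{S}$. Taking the supremum $\bigvee_i$ over identical chains then gives equal state components for the two loops. The main point to check, though it poses no real difficulty, is that the state component of the loop semantics is determined solely by the state components of the approximants, which Definition \ref{def:semantics} ensures. The step therefore needs only the pointwise equality of the chains, with no continuity argument.
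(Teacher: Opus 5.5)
Your proof is correct and rests on the same idea as the paper: reward statements act as the identity on the state component, so inserting them cannot change termination. The paper states this in one line; your structural induction, in particular the inner induction showing that the loop approximants agree pointwise and hence have equal suprema, is a careful write-out of that argument.
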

\begin{proof}
    Follows immediately from the definition of $\semanticsD{\reward{r}}$ and we only add finite many reward statements to each statement of $S$, i.e., we do not create non-termination by adding those reward statements.
\end{proof}

Finally we define the expected runtime of a reward-free program $S$ using the expected reward of the transformed program $\transform{S}$:
\begin{definition} \label{def:ert}
    Let $S$ be a reward-free qrWhile program. Then the \emph{expected runtime} of $S$ on input state $\rho$ is defined as $\semanticsR{\transform{S}}(\rho)$.
\end{definition}

One of the main results of this section is, that we can now characterize the expected runtime using weakest preconditions. This brings the usual advantages of weakest preconditions, e.g., we can determine the expected runtime for all input states at once.
\begin{theorem}\label{thm:equivWpSemanticsErt}
    The expected runtime of a reward-free qrWhile program $S$ on input state $\rho$ is given by $\expect{\qwp{\transform{S}}{\zeroOp}}{\rho}$.
\end{theorem}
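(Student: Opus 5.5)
This is essentially a direct instance of the general Schrödinger--Heisenberg duality already established. Since $S$ is a reward-free qrWhile program, $\transform{S}$ is again a qrWhile program, because Table \ref{tab:programTransformation} only inserts $\reward{1}$ statements and $1 \in R$ is an admissible reward. By the theorem following Proposition \ref{prop:semanticsProperties}, $\transform{S} \in \programs$. The plan is therefore to apply Theorem \ref{thm:existenceWp}, or equivalently Proposition \ref{prop:wpDefinition} for the concrete syntactic transformer, with post-expectation $B = \zeroOp \in \predicate$. This gives, for every $\rho \in \density$,
\begin{equation*}
\expect{\qwp{\transform{S}}{\zeroOp}}{\rho} = \expect{\zeroOp}{\semanticsD{\transform{S}}(\rho)} + \semanticsR{\transform{S}}(\rho).
\end{equation*}

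The second step is to show that $\expect{\zeroOp}{\sigma} = 0$ for every $\sigma \in \density$. Here I would use Lemma \ref{lem:expectAlternativeDef}. The form corresponding to $\zeroOp$ is the zero form, whose domain is all of $\hilbert$, so each pure-state term $\expect{\zeroOp}{\ket{\psi}\bra{\psi}} = 0$. Summing over any decomposition of $\sigma$ then gives $0$. Equivalently, one can note that $\zeroOp$ is the least element in Lemma \ref{lem:orderExpectation}. Applied to $\sigma = \semanticsD{\transform{S}}(\rho)$, which lies in $\density$ because the semantics is trace non-increasing, the first summand vanishes. What remains is $\semanticsR{\transform{S}}(\rho)$, which is exactly the expected runtime by Definition \ref{def:ert}.

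There is no genuine obstacle; the only point needing care is well-definedness. One must confirm that $\transform{S}$ is well-formed qrWhile syntax, including the loop body $\transform{S'};\reward{1}$ and the measurement branches, so that the general theorems apply. One must also confirm that the syntactic $wp$ of Table \ref{tab:wp} is the same object as the semantic supremum of Definition \ref{def:generalwp}; Proposition \ref{prop:wpDefinition} guarantees this. Lemma \ref{lem:rewardTransformQuantumState} is not even needed here, since the state part disappears under the zero post-expectation. If one wants the statement phrased via $S$'s own denotation, that lemma gives $\semanticsD{\transform{S}} = \semanticsD{S}$.
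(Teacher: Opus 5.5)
Your proposal is correct and matches the paper's proof: instantiate the duality $\expect{\qwp{\transform{S}}{B}}{\rho} = \expect{B}{\semanticsD{\transform{S}}(\rho)} + \semanticsR{\transform{S}}(\rho)$ (Theorem \ref{thm:existenceWp} / Proposition \ref{prop:wpDefinition}) at $B=\zeroOp$, use $\expect{\zeroOp}{\cdot}=0$, and read off $\semanticsR{\transform{S}}(\rho)$ via Definition \ref{def:ert}. One small caveat: your alternative justification through $\zeroOp$ being the least element is circular, because the paper derives that fact from $\expect{\zeroOp}{\rho}=0$; your primary argument via Lemma \ref{lem:expectAlternativeDef} is sufficient on its own.
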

\begin{proof}
    It is $\expect{\zeroOp}{\semanticsD{\transform{S}}(\rho)} = 0$, thus \[
        \expect{\qwp{\transform{S}}{\zeroOp}}{\rho}
        = \expect{\zeroOp}{\semanticsD{\transform{S}}(\rho)} + \semanticsR{\transform{S}}(\rho)
        = \semanticsR{\transform{S}}(\rho). \qedhere
    \]
\end{proof}

In runtime analysis, we are often interested in determining whether a program terminates with probability 1. This property is called \emph{almost sure termination} \cite{AST78} and can be defined for quantum programs as in \cite{LiuRuntime}:
\begin{definition} \label{def:ast}
    The set of states on which a program $S$ is \emph{almost surely terminating} (AST) is $\{\rho \in \density \mid tr(\semanticsD{S}(\rho)) = tr(\rho)\}$.
    If $S$ is AST on all input states, we say that $S$ is AST.
\end{definition}

One main difference to \cite{LiuRuntime} is, that we consider infinite-dimensional Hilbert spaces. Thus, a program can be AST but has an infinite expected runtime. As in the finite-dimensional case, AST and finite expected runtime coincide \cite[Thm. 2]{LiuRuntime}, the restriction to AST in \cite{LiuRuntime} has the limitation to be unable to deal with infinite runtimes.
In presence of possible infinite expected runtimes, we consider \emph{positive almost sure termination} analogous to \cite[Def. 5]{hardnessQuantitative}:
\begin{definition} \label{def:past}
    The set of states on which a reward-free program $S$ is \emph{positive almost surely terminating} (PAST) is defined as $\{\rho \in \density \mid \semanticsR{\transform{S}}(\rho) < \infty\}$.
    If $S$ is PAST on all input states, we say that $S$ is PAST.
\end{definition}
We can also give alternative characterizations for reward-free programs of AST and PAST using the weakest pre-expectation similar to \cite{probRuntime}:
\begin{proposition}\label{prop:AstPastUsingWp}
    Let $S$ be a reward-free qrWhile program. Then
    \begin{itemize}
        \item $S$ is AST if and only if $\qwp{S}{\identityOp} = \identityOp$.
        \item $S$ is PAST if and only if $\domain{\qwp{\transform{S}}{\zeroOp}} = \hilbert$.
    \end{itemize}
\end{proposition}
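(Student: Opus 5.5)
Both characterizations reduce, via Theorem~\ref{thm:existenceWp} (equivalently Proposition~\ref{prop:wpDefinition}), to statements about expectation values. Those are then turned into operator statements with Lemma~\ref{lem:orderExpectation} and Lemma~\ref{lem:expectAlternativeDef}.

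\emph{AST.} Since $S$ is reward-free, $\semanticsR{S}=0$, so for all $\rho\in\density$
\[
\expect{\qwp{S}{\identityOp}}{\rho}=\expect{\identityOp}{\semanticsD{S}(\rho)}=tr(\semanticsD{S}(\rho)),
\]
using that $\identityOp$ is bounded, so its expectation value is the trace (Proposition~\ref{prop:wpBoundedEquiv}). Suppose $S$ is AST. Then this equals $tr(\rho)=\expect{\identityOp}{\rho}$ for every $\rho$. Lemma~\ref{lem:orderExpectation} then gives $\qwp{S}{\identityOp}\loewner\identityOp$ and $\identityOp\loewner\qwp{S}{\identityOp}$, and antisymmetry (Proposition~\ref{prop:orderproperties}) yields equality. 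Conversely, if $\qwp{S}{\identityOp}=\identityOp$, then $tr(\semanticsD{S}(\rho))=\expect{\identityOp}{\rho}=tr(\rho)$ for all $\rho$, which is AST by definition.

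\emph{PAST.} Put $W=\qwp{\transform{S}}{\zeroOp}$. By Theorem~\ref{thm:equivWpSemanticsErt}, $\expect{W}{\rho}=\semanticsR{\transform{S}}(\rho)$. We must show that $\expect{W}{\rho}<\infty$ for all $\rho$ if and only if $\domain{W}=\hilbert$. The plan runs through the square root of the $\infty$-restriction, $\half{W_\infty}$.

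For ``$\Leftarrow$'': if $\domain{W}=\hilbert$, then $W$ is densely defined and $\infty$-square, so $W_\infty=W$ is a positive self-adjoint operator defined everywhere. Self-adjoint operators are closed, so by the closed graph theorem $W$ is bounded. Hence $\expect{W}{\rho}=tr(W\rho)\le\norm{W}\,tr(\rho)<\infty$.

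For ``$\Rightarrow$'': PAST on all pure states and Lemma~\ref{lem:expectAlternativeDef}(1) give $\domain{\half{W_\infty}}=\hilbert$. In particular $\overline{\domain{W}}=\hilbert$, so $W_\infty=W$. Now $\half{W}$ is self-adjoint, hence closed, and defined everywhere, so by the closed graph theorem it is bounded. Therefore $W=\half{W}\half{W}$ is bounded and defined everywhere, so $\domain{W}=\hilbert$.

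\emph{Main obstacle.} The key subtle step is passing from a full domain of the square root $\half{W_\infty}$ to a full domain of $W$ itself. Without the closed graph argument one only gets $\domain{W}\subseteq\domain{\half{W}}$, which points the wrong way. The argument relies on the fact that, once the domain is dense, $\infty$-self-adjointness is ordinary self-adjointness, so $W$ and its square root are closed. This point needs to be stated carefully.
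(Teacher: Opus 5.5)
Your proposal is correct and follows the same route as the paper: both parts are reduced, via $\expect{\qwp{S}{B}}{\rho}=\expect{B}{\semanticsD{S}(\rho)}+\semanticsR{S}(\rho)$, to statements about expectation values. The paper simply asserts that $\domain{\qwp{\transform{S}}{\zeroOp}}=\hilbert$ is equivalent to $\expect{\qwp{\transform{S}}{\zeroOp}}{\rho}<\infty$ for all $\rho$, so two of your steps supply details it leaves implicit: your closed-graph argument linking the form domain $\domain{\half{W_\infty}}$ to the operator domain $\domain{W}$, and your antisymmetry argument for the ``only if'' direction of the AST part.
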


These two characterizations are especially interesting in the quantum case. We know that the space of states for which a program is AST is a closed subspace of the Hilbert space as it can be characterized by the 1-Eigenspace of the operator $\qwp{S}{\identityOp}$. In contrast, the space of states for which a program is PAST is a subspace but not necessarily closed (as it can be characterized by the domain of the bounded operator $\half{\qwp{\transform{S}}{\zeroOp}}_\infty$). An example of a non-closed subspace is given in Section \ref{sec:examples}.

\section{ERT Calculus}
\label{sec:ertCalculus}
This section generalizes the transformers in \cite{LiuRuntime} by lifting the restriction to finite dimensional Hilbert spaces, AST programs and bounded operators.
First, we give a forward transformer $ERT\llbracket S \rrbracket$ of the expected runtime. Then we give an alternative backward transformer of the expected runtime $ert\llbracket S \rrbracket$ that maps an operator $A$ to an operator that gives the expected runtime plus the "post" $A$ after termination. Finally, we show that this backward definition is equivalent to the forward definition and also to the weakest pre-expectation of the transformed program $\transform{S}$ defined in Table \ref{tab:programTransformation}.
Proofs can be found in Appendix \ref{sec:appendix_ert_calculus}.

First, we adapt Definition 3.1 from \cite{LiuRuntime} to our infinite-dimensional setting:
\begin{lemma}
    \label{lem:ERTForeward}
    The forward expected runtime of reward-free qrWhile program $S$ on input $\rho \in \density$ is given by $\ERT{S}{\rho}$ where $\ERT{S}{\cdot}: \density \to \extR$ is defined by:
    \begin{itemize}
        \item $\ERT{\skipbf}{\rho} = 0$
        \item $\ERT{\qzero}{\rho} = tr(\rho)$
        \item $\ERT{\Uq}{\rho} = tr(\rho)$
        \item $\ERT{\concat}{\rho} = \ERT{S_1}{\rho} + \ERT{S_2}{\semanticsD{S_1}(\rho)}$
        \item $\ERT{\measure}{\rho} = tr(\rho) + \sum_{m} \ERT{S_m}{M_m \rho M_m^\dagger}$
        \item $\ERT{\while}{\rho} = \lim_{n \to \infty} \ERT{while^{[n]}}{\rho}$
        where
        \begin{align*}
            while^{[0]} &= \skipbf \\
            while^{[n+1]} &= \measure \text{ with } S_0 = \skipbf \text{ and } S_1 = S;while^{[n]}
        \end{align*}
    \end{itemize}
\end{lemma}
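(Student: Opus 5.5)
We read the statement as the claim that $\ERT{S}{\rho} = \semanticsR{\transform{S}}(\rho)$ for every reward-free qrWhile program $S$ and every $\rho \in \density$, where the right-hand side is the expected runtime of Definition~\ref{def:ert}. The proof is a structural induction on $S$, and for loops an additional induction on the unfolding index $n$. Throughout we use Lemma~\ref{lem:rewardTransformQuantumState}, i.e.\ $\semanticsD{\transform{S}} = \semanticsD{S}$, to replace the state component of transformed subprograms by the original one.

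\textbf{Non-loop cases.} For $\skipbf$ both sides are $0$. For $\qzero$ and $\Uq$ we apply the sequential rule of Definition~\ref{def:semantics} to $\reward{1}; S$. This gives $\semanticsR{S}(\rho) + 1\cdot tr(\rho) = tr(\rho)$, since $S$ is reward-free and so contributes reward $0$. For $\concat$ the sequential rule gives
\[
\semanticsR{\transform{S_1}}(\rho) + \semanticsR{\transform{S_2}}\bigl(\semanticsD{\transform{S_1}}(\rho)\bigr).
\]
We rewrite $\semanticsD{\transform{S_1}}$ as $\semanticsD{S_1}$ using the lemma above and then apply the induction hypothesis to both terms. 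For $\measure$ the reward component is $tr(\rho) + \sum_m \semanticsR{\transform{S_m}}(M_m\rho M_m^\dagger)$, by entrywise addition in the measurement clause, and the induction hypothesis finishes the case. All sums here are sums of nonnegative terms in $\extR$, so no convergence issues arise.

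\textbf{Loop case.} Write $W = \textbf{while } M[\overline{q}]=1 \textbf{ do } \{\transform{S}; \reward{1}\}$, so $\transform{\while} = \reward{1}; W$. Then $\semanticsR{\transform{\while}}(\rho) = tr(\rho) + \lim_i \semanticsR{W^i}(\rho)$, where the $W^i$ are the syntactic approximants with $W^0=\Omega$. We then prove by induction on $n$, for all $\rho$ simultaneously, the claim
\[
\ERT{while^{[n]}}{\rho} = tr(\rho) + \semanticsR{W^{n}}(\rho) \qquad (n\ge 1),
\]
together with $\ERT{while^{[0]}}{\rho} = 0$. The intended bookkeeping is as follows. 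Unfolding $while^{[n+1]}$ gives
\[
tr(\rho) + \ERT{S}{M_1\rho M_1^\dagger} + \ERT{while^{[n]}}{\semanticsD{S}(M_1\rho M_1^\dagger)}.
\]
Unfolding $W^{n+1}$ gives
\[
\semanticsR{\transform{S}}(M_1\rho M_1^\dagger) + tr\bigl(\semanticsD{S}(M_1\rho M_1^\dagger)\bigr) + \semanticsR{W^n}\bigl(\semanticsD{S}(M_1\rho M_1^\dagger)\bigr).
\]
The extra $tr(\cdot)$ term comes from the trailing $\reward{1}$. The outer structural induction identifies $\semanticsR{\transform{S}}$ with $\ERT{S}{\cdot}$, and the inner hypothesis matches the remaining terms.

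\textbf{Expected obstacle.} The main difficulty is the off-by-one between the two approximation schemes. The approximant $while^{[0]} = \skipbf$ costs nothing, while $W^0 = \Omega$ has also lost the state, so the two sequences are shifted relative to each other. The relation I would actually aim for is $\ERT{while^{[n]}}{\rho} = tr(\rho) + \semanticsR{W^{n}}(\rho) - \epsilon_n$, where $\epsilon_n$ is the $tr$ contribution of the final guard evaluation that $\Omega$ never reaches. One could also aim for a sandwich
\[
\semanticsR{\reward{1};W^{n}}(\rho) \le \ERT{while^{[n]}}{\rho} \le \semanticsR{\reward{1};W^{n+1}}(\rho),
\]
checking the base cases explicitly. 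Both sequences are increasing, because the ERT approximants are monotone in $n$ by the same induction. Taking limits in either formulation then gives equality, and passing to the limit is justified by monotone convergence in $\extR$. Infinite values require no special treatment since every quantity involved lies in $\extR$.
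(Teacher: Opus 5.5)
Your non-loop cases are fine, but the loop step fails as written, because the lower half of your sandwich points the wrong way. Write $E_n(\rho) := \ERT{while^{[n]}}{\rho}$ and $R_n(\rho) := \semanticsR{\reward{1};W^n}(\rho) = tr(\rho) + \semanticsR{W^n}(\rho)$. At $n=0$ your inequality $R_0 \le E_0$ reads $tr(\rho) \le 0$.

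More generally, put $\sigma_0 = \rho$ and $\sigma_{k+1} = \semanticsD{S}(M_1\sigma_k M_1^\dagger)$. Your own two unfoldings then give $R_n(\rho) = E_n(\rho) + tr(\sigma_n)$. The trailing $\reward{1}$ in $W^n$ pays for the guard check after the $n$-th body execution before $\Omega$ is reached, while $while^{[0]} = \skipbf$ is free. So the surplus sits on the $W$ side, which is also why your first claim $E_n = R_n$ for $n \ge 1$ already fails at $n=1$.

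What you need is
\[ \ERT{while^{[n]}}{\rho} \le \semanticsR{\reward{1};W^{n}}(\rho) \le \ERT{while^{[n+1]}}{\rho}, \]
proved for all $\rho$ at once by induction on $n$. Both sequences obey the monotone recursion $X_{n+1}(\rho) = tr(\rho) + \ERT{S}{M_1\rho M_1^\dagger} + X_n(\semanticsD{S}(M_1\rho M_1^\dagger))$, once the outer hypothesis identifies $\semanticsR{\transform{S}}$ with $\ERT{S}{\cdot}$. The base cases are $0 \le tr(\rho) \le tr(\rho) + \ERT{S}{M_1\rho M_1^\dagger}$.

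Your $\epsilon_n$ formulation has the right sign, $\epsilon_n = tr(\sigma_n)$, but there the passage to the limit is not automatic. On states where the loop is not AST, $\epsilon_n$ does not tend to $0$; for $\textbf{while } \identityOp[q] \textbf{ do } \skipbf$ one has $\epsilon_n = tr(\rho)$ for all $n$. You would additionally need $E_n(\rho) \ge \sum_{i<n} tr(\sigma_i)$ to conclude that either both limits are infinite or $\epsilon_n \to 0$.

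For comparison, the paper gives no separate proof of this statement. It is the adaptation of Liu et al.'s forward definition, and its agreement with Definition \ref{def:ert} comes from chaining Theorems \ref{thm:ertEquivERT}, \ref{thm:equivErtQwp} and \ref{thm:equivWpSemanticsErt}. The same off-by-one is resolved in the auxiliary lemma inside the proof of Theorem \ref{thm:equivErtQwp}. There the iterates are restarted at $F'_0 = \identityOp$, and one shows $F_n \loewner F'_n \loewner F_{n+1}$ with $F'_n = A_n \sumform \identityOp$; under $\mathbb{E}$ this is exactly the corrected sandwich above. Once repaired, your argument is more elementary, needing no forms, no $\sumform$ and no operator-valued transformers. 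The paper's route instead yields the stronger operator identity $\ert{S}{A} = \qwp{\transform{S}}{A}$ for every post-expectation $A$.
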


Based on this forward transformer, we can also give a backward definition of the expected runtime using the same idea as used in weakest preconditions analogous to probabilistic expected runtime transformers \cite{probRuntime}\footnote{Note that in comparison to the probabilistic ert calculus \cite{probRuntime}, we do not count the skip statement as a step, as it does not correspond to any operation on the quantum computer and we want to stay comparable to \cite{LiuRuntime}.}. The transformer $ert\llbracket S \rrbracket$ as defined in Table \ref{tab:ert} maps an operator $A$ to an operator that gives the expected runtime plus the "post" $A$ after termination, in this case we set $A=\zeroOp$ initially.

\begin{table}
        \centering
        \renewcommand{\arraystretch}{1.3}
        \setlength{\tabcolsep}{8pt}
        \begin{tabular}{@{}ll@{}}
        \toprule
        \text{Program $S$} & \text{$\ert{S}{B}$} \\
        \midrule
        $\skipbf$ & $B$ \\
        $\qzero$ & $\displaystyle \identityOp \sumform \sum_{n=0}^\infty \ket{n}\bra{0} \odot B \odot \ket{0}\bra{n} $ \\
        $\Uq$ & $\identityOp\sumform \formsandwich{U}{B} $ \\
        $\concat$ & $\ert{S_1}{\ert{S_2}{B}}$ \\
        $\measurePrime$ & $\displaystyle \identityOp \sumform \left(\sum_{m} \formsandwich{M_m}{\ert{S_m'}{B}} \right)$ \\
        $\while'$ & $\displaystyle \bigvee_n F_n \quad \text{where }$\\
        & $F_0 = \zeroOp,\; F_{n+1} = \formsandwich{M_0}{B} \sumform \formsandwich{M_1}{\ert{S'}{F_n}} \sumform \identityOp$ \\
        \bottomrule\\
    \end{tabular}
        \caption{The backward \emph{expected runtime transformer} $ert\llbracket S \rrbracket: \predicate \to \predicate$ for a reward-free qrWhile program $S$. }
        \label{tab:ert}
    \end{table}

Now we show healthiness properties of the expected runtime transformer:
\begin{proposition}
    \label{prop:ERTProperties}
    Let $S$ be a reward-free qrWhile program, $A,B \in \predicate$ and $c\geq 0$.
    The expected runtime transformer $ert$ satisfies:
    \begin{enumerate}
        \item Well-definedness: $\ert{S}{A} \in \predicate$
        \item Connection to wp: $\ert{S}{A} = \qwp{S}{A} \sumform \ert{S}{\zeroOp}$
        \item Monotonicity: $A \loewner B$ implies $\ert{S}{A} \loewner \ert{S}{B}$
        \item Continuity: $\ert{S}{\bigvee_i A_i} = \bigvee_i \ert{S}{A_i}$ for any increasing chain $\{A_i\}_i$ of expectations.
        \item Constant propagation: $\ert{S}{c\cdot A} = c\cdot A \sumform \ert{S}{\zeroOp}$
        \item Preservation of infinity: $\ert{S}{\inftyOp} = \inftyOp$
    \end{enumerate}
\end{proposition}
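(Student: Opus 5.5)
We prove all six items by reducing them to a single expectation-value identity, established by structural induction on $S$ simultaneously with well-definedness:
\begin{equation*}
\text{(W)}\quad \ert{S}{A} \in \predicate \qquad\text{and}\qquad \text{(E)}\quad \expect{\ert{S}{A}}{\rho} = \expect{A}{\semanticsD{S}(\rho)} + \ERT{S}{\rho} \quad \text{for all } \rho \in \density,\ A \in \predicate .
\end{equation*}
Once (E) holds, Lemma \ref{lem:orderExpectation} applied in both directions shows that two expectations with the same expectation values coincide. So every equation in items 2 to 6 can be checked pointwise on expectation values, and item 3 becomes an inequality of expectation values.

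For (W), every clause of Table \ref{tab:ert} is built from $\identityOp$, $\sumform$, finite and infinite sums, $M^\dagger \odot \cdot \odot M$ and suprema of chains. Closure under these operations is given by Proposition \ref{prop:FormAddMult}, Lemma \ref{lem:infSumExpect} and Theorem \ref{thm:omegaCompleteness}. For the loop we additionally need that $F_n$ is increasing. We get this by induction on $n$, using monotonicity of $ert\llbracket S'\rrbracket$ (from the inductive hypothesis (E) for $S'$) and Proposition \ref{prop:orderproperties}.

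For (E), we go through the clauses. For the atomic statements we use Propositions \ref{prop:expectationProperties} and \ref{prop:boundedOpinExpectation}, Lemma \ref{lem:infSumLinExpectation}, and $\expect{\identityOp}{\rho} = tr(\rho)$. Sequential composition follows from the inductive hypothesis applied twice. For measurement we additionally use Lemma \ref{lem:distrInfSum}. The loop is the main obstacle. We first show by induction on $n$ that
\begin{equation*}
\expect{F_{n}}{\rho} = \expect{A}{\semanticsD{\while^{n}}(\rho)} + g_n(\rho),
\end{equation*}
where $g_n$ is the runtime of $\while^n$, with $\Omega$ counted as $0$. This $g_n$ differs from $\ERT{while^{[n]}}{\rho}$ only in the last, cut-off iteration: with $\Omega$ the innermost call contributes $0$, whereas $while^{[n]}$ ends in $\skipbf$ and its $n$-th guard measurement contributes $tr(\cdot)\geq 0$. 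Hence the two sequences interleave,
\begin{equation*}
g_n(\rho) \leq \ERT{while^{[n]}}{\rho} \leq g_{n+1}(\rho),
\end{equation*}
and so have the same limit. Continuity in the expectation argument (Proposition \ref{prop:expectationProperties}) gives $\expect{\bigvee_n F_n}{\rho} = \lim_n \expect{F_n}{\rho}$. In the state argument, the semantics $\semanticsD{\while^n}(\rho)$ is increasing with supremum $\semanticsD{\while}(\rho)$, so $\expect{A}{\cdot}$ also passes to the limit by the state-continuity part of Proposition \ref{prop:expectationProperties}. Because both summands are nondecreasing in $n$, the limit of the sum splits, which yields (E). Care is needed with $\infty$ values, but everything lives in $\extR$ with monotone limits, so no cancellation arises.

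With (E) in hand the remaining items are pointwise checks, each turned into an operator identity by Lemma \ref{lem:orderExpectation}:
\begin{itemize}
\item Item 2: compare (E) with Theorem \ref{thm:existenceWp}, using that $S$ is reward-free so $\semanticsR{S}=0$, and with (E) for $A=\zeroOp$.
\item Item 3: $A\loewner B$ gives $\expect{A}{\semanticsD{S}(\rho)} \leq \expect{B}{\semanticsD{S}(\rho)}$ by Lemma \ref{lem:orderExpectation}, and (E) transfers this to $ert$.
\item Item 4: continuity of $\mathbb{E}$ in the expectation argument, Proposition \ref{prop:expectationProperties}.
\item Item 5: reading the statement as $\qwp{S}{c\cdot A}\sumform\ert{S}{\zeroOp}$ via item 2, this is scalar multiplicativity of $\mathbb{E}$.
\item Item 6: $\expect{\inftyOp}{\sigma}=\infty$ unless $\sigma=0$. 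This case needs care when $\semanticsD{S}(\rho)=0$ but $\rho\neq 0$, i.e.\ the program never terminates. Then we must argue that $\ERT{S}{\rho}=\infty$, because a nonterminating run executes infinitely many loop measurements, each contributing the positive mass still inside the loop. If this direct argument proves awkward, we fall back to an induction on the syntax showing $F_n \sqsupseteq$ the partial runtimes, which diverge.
\end{itemize}
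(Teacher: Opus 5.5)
For items 1--4 your argument is correct, but it takes a different route from the paper. The paper never leaves $\predicate$. It proves well-definedness and the connection to $wp$ by structural induction directly on operators, using linearity of $wp$ and the distributivity and continuity of $\sumform$ and $\formsandwich{M}{\cdot}$ (Proposition~\ref{prop:ContinuityDistributivityFormSum}, Lemma~\ref{lem:distrInfSum}). For the loop it shows $F_n = H_n \sumform G_n$, where $G_n$ and $H_n$ are the approximants of $\ert{\cdot}{\zeroOp}$ and of $wp$. Monotonicity and continuity are then read off from item 2.

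You instead prove the scalar identity (E), which is Theorem~\ref{thm:ertEquivERT} generalised to arbitrary post-expectations, and transport every claim through Lemma~\ref{lem:orderExpectation}. This is non-circular and gives Theorem~\ref{thm:ertEquivERT} as a by-product. The price is that you need the forward transformer and state-continuity of $\mathbb{E}$ in the loop case.

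There is a small slip in the loop case: $g_n$ and $\ERT{while^{[n]}}{\rho}$ are in fact equal. Both unfoldings contain exactly $n$ guard measurements, end in a cost-$0$ statement ($\Omega$ resp.\ $\skipbf$), and satisfy the same recursion. Your interleaving inequalities hold, but the reason you give for a discrepancy is wrong.

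\textbf{Item 5.} This is a real problem, though not one a better argument can fix. Your computation gives $\ert{S}{c\cdot A} = c\cdot \qwp{S}{A} \sumform \ert{S}{\zeroOp}$, whereas the statement has $c\cdot A$ in place of $c\cdot\qwp{S}{A}$. As literally stated, the claim is false. Take $S$ to be $\Uq$ on one qubit with $U\ket{0}=\ket{1}$, $U\ket{1}=\ket{0}$, and let $A=\ket{0}\bra{0}$, $c=1$. Then $\ert{S}{A} = \identityOp\sumform\ket{1}\bra{1}$, but $A\sumform\ert{S}{\zeroOp} = \identityOp\sumform\ket{0}\bra{0}$; on $\ket{1}\bra{1}$ their expectation values are $2$ and $1$.

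The paper's proof reaches your intermediate expression and then replaces $c\cdot\qwp{S}{A}$ by $c\cdot A$ ``by item 2'', which is unjustified. So do not quietly ``read'' the statement differently; say explicitly that you prove the corrected identity. If genuine constant propagation is wanted, the true form is $\ert{S}{c\cdot\identityOp\sumform A} = c\cdot\identityOp\sumform\ert{S}{A}$. Proving it needs $\expect{\qwp{S}{\identityOp}}{\rho}=tr(\rho)$ whenever $\ERT{S}{\rho}<\infty$, which is the fact discussed under item 6.

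\textbf{Item 6.} You locate the crux correctly: by (E) it remains to show $\ERT{S}{\rho}=\infty$ whenever $\rho\neq 0$ and $\semanticsD{S}(\rho)=0$. The paper's own argument, pushing $\inftyOp$ through the table, is too quick here. The form of $\formsandwich{M}{\inftyOp}$ has domain $\ker M$, so it equals $\inftyOp$ only for injective $M$. Moreover, in the loop clause the chain starts at $F_0=\zeroOp$, so what is needed is divergence of the runtime, not $\bigvee\inftyOp=\inftyOp$.

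Your justification, however, is not yet a proof, and the hypothesis ``$\semanticsD{S}(\rho)=0$'' is not inductive. Inside a loop with zero output, the body may lose only part of its input, for instance in an inner loop. What goes through by structural induction is the stronger statement $\ERT{S}{\rho}<\infty\Rightarrow tr(\semanticsD{S}(\rho))=tr(\rho)$, i.e.\ Lemma~\ref{lem:PastImpliesAst}. Its loop case works as follows:
\begin{itemize}
\item set $\sigma_0=\rho$ and $\sigma_{n+1}=\semanticsD{S'}(M_1\sigma_n M_1^\dagger)$;
\item use $\ERT{while^{[n]}}{\rho}\geq\sum_{i<n}tr(\sigma_i)$ to force $tr(\sigma_n)\to 0$;
\item apply the induction hypothesis to the body on the states $M_1\sigma_n M_1^\dagger$, where the body's runtime is finite.
\end{itemize}
That argument can be phrased purely in terms of $\ERT{\cdot}{\cdot}$, so you may use it here without circularity.
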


Now we can show that the expected runtime transformer actually computes the expected runtime of a program as defined in Definition \ref{def:ert} (i.e., that it is equivalent to the weakest pre-expectation of the transformed program $\transform{S}$  as in Theorem \ref{thm:equivWpSemanticsErt}):
\begin{theorem}
    \label{thm:equivErtQwp}
    Let $ert\llbracket S \rrbracket: \predicate \to \predicate$ be defined as in Table \ref{tab:ert} and $A \in \predicate$. Then $\ert{S}{A}= \qwp{\transform{S}}{A}$ for any reward-free qrWhile program $S$.
\end{theorem}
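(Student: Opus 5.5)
We prove $\ert{S}{A} = \qwp{\transform{S}}{A}$ for all $A \in \predicate$ simultaneously, by structural induction on the reward-free program $S$. In each case we unfold Table \ref{tab:programTransformation} for $\transform{S}$ and compute its weakest pre-expectation with Table \ref{tab:wp}, which is justified by Proposition \ref{prop:wpDefinition}. We then rewrite the result into the corresponding entry of Table \ref{tab:ert}. The algebraic tools are the properties of $\sumform$ and $\formsandwich{M}{\cdot}$ on expectations: commutativity and associativity of $\sumform$, which come from the form sums $a+b$ and so follow directly from Definition \ref{def:FormAddMult}; distributivity and continuity from Proposition \ref{prop:ContinuityDistributivityFormSum} and Lemma \ref{lem:distrInfSum}; and the scalar rule $1\cdot\identityOp = \identityOp$.

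\textbf{Base and compositional cases.}
\begin{itemize}
\item For $\skipbf$ both sides equal $A$.
\item For $\qzero$ we have
\begin{align*}
\qwp{\reward{1};\qzero}{A} &= \qwp{\reward{1}}{\qwp{\qzero}{A}} = \Big(\sum_n \ket{n}\bra{0}\odot A\odot\ket{0}\bra{n}\Big)\sumform \identityOp,
\end{align*}
which is $\ert{\qzero}{A}$ by commutativity of $\sumform$.
\item $\Uq$ is handled identically.
\item For $S_1;S_2$ the induction hypothesis gives $\qwp{\transform{S_1}}{\qwp{\transform{S_2}}{A}} = \ert{S_1}{\ert{S_2}{A}}$. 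Applying the hypothesis for $S_2$ first and then for $S_1$ with post-expectation $\ert{S_2}{A} \in \predicate$ is legitimate because the hypothesis is quantified over all post-expectations.
\item For the measurement, $\qwp{\transform{S}}{A} = \sum_m \formsandwich{M_m}{\qwp{\transform{S_m}}{A}} \sumform \identityOp$. The induction hypothesis on each branch $S_m$ turns the inner terms into $\ert{S_m}{A}$.
\end{itemize}

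\textbf{While loop.} We have $\transform{S} = \reward{1}; W'$ with $W' = \textbf{while } M[\overline{q}]=1 \textbf{ do } \{\transform{S'}; \reward{1}\}$. By Table \ref{tab:wp}, $\qwp{W'}{A} = \bigvee_n A_n$, where
\begin{align*}
A_0 &= \zeroOp, &
A_{n+1} &= \formsandwich{M_0}{A} \sumform \formsandwich{M_1}{\qwp{\transform{S'}}{A_n \sumform \identityOp}}.
\end{align*}
By the induction hypothesis for the body, $\qwp{\transform{S'}}{A_n\sumform\identityOp} = \ert{S'}{A_n \sumform \identityOp}$. Together with continuity of $\sumform\identityOp$ (Proposition \ref{prop:ContinuityDistributivityFormSum}), this gives
\[
\qwp{\transform{S}}{A} = \Big(\bigvee_n A_n\Big)\sumform\identityOp = \bigvee_n (A_n \sumform \identityOp).
\]
We then claim $A_n \sumform \identityOp = F_n'$, where $F_n'$ is the ert iteration shifted by one: $F_0' = \identityOp$ and
\[
F_{n+1}' = \formsandwich{M_0}{A}\sumform\formsandwich{M_1}{\ert{S'}{F_n'}}\sumform\identityOp .
\]
This follows by an easy induction on $n$. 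Finally, $F_n \loewner F_n' \loewner F_{n+1}$, so $\bigvee_n F_n' = \bigvee_n F_n$. For $F_0 = \zeroOp$ the sandwich is immediate. The remaining inequalities follow by induction using monotonicity of $ert$ (Proposition \ref{prop:ERTProperties}) and of $\sumform$ and sandwiching (Proposition \ref{prop:orderproperties}).

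\textbf{Main obstacle.} The delicate point is the index shift in the loop case: the transformed loop iterates on $A_n$, while Table \ref{tab:ert} iterates on $F_n$, which already carries the $\identityOp$. The interleaving argument above handles this. If it proves awkward, a fallback is to compare expectation values. By Theorem \ref{thm:existenceWp}, $\expect{\qwp{\transform{S}}{A}}{\rho} = \expect{A}{\semanticsD{S}(\rho)} + \semanticsR{\transform{S}}(\rho)$, and Lemma \ref{lem:rewardTransformQuantumState} lets us replace $\semanticsD{\transform{S}}$ by $\semanticsD{S}$. We would then show that $\ert{S}{A}$ has the same expectation values, using Proposition \ref{prop:expectationProperties} for continuity, and conclude equality by Lemma \ref{lem:orderExpectation} and antisymmetry.
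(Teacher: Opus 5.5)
Your proposal is correct and follows the paper's proof essentially step for step. Both use structural induction in which $\reward{1}$ contributes $\sumform \identityOp$ in the non-loop cases. For the loop, both use the shifted chain $F'_0 = \identityOp$, interleave it with $F_n$ via $F_n \loewner F'_n \loewner F_{n+1}$, and prove $F'_n = A_n \sumform \identityOp$ by induction, so your expectation-value fallback is not needed.
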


We can also show the consistency between Table \ref{tab:ert} and Lemma \ref{lem:ERTForeward} (analogous to \cite[Thm. 1]{LiuRuntime}):
\begin{theorem} \label{thm:ertEquivERT}
    Let $S$ be a reward-free qrWhile program and $\rho \in \density$. Then $\expect{\ert{S}{\zeroOp}}{\rho} = \ERT{S}{\rho}$.
\end{theorem}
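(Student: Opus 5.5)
By Theorem \ref{thm:equivErtQwp}, $\ert{S}{\zeroOp} = \qwp{\transform{S}}{\zeroOp}$. By Theorem \ref{thm:equivWpSemanticsErt}, $\expect{\qwp{\transform{S}}{\zeroOp}}{\rho} = \semanticsR{\transform{S}}(\rho)$. So the plan is to prove
\[
\ERT{S}{\rho} = \semanticsR{\transform{S}}(\rho)\quad\text{for all } \rho\in\density,
\]
by structural induction on the reward-free program $S$. The statement to be proven is then obtained by chaining the two earlier theorems.

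The non-loop cases are direct computations with Definition \ref{def:semantics} and Table \ref{tab:programTransformation}.
\begin{itemize}
\item For $\skipbf$, both sides are $0$.
\item For $\qzero$ and $\Uq$, the program $\reward{1}; \ldots$ has reward $tr(\rho) + 0$, since the reward statement leaves the state unchanged.
\item For sequential composition, $\semanticsR{\transform{S_1};\transform{S_2}}(\rho) = \semanticsR{\transform{S_1}}(\rho) + \semanticsR{\transform{S_2}}(\semanticsD{\transform{S_1}}(\rho))$. Lemma \ref{lem:rewardTransformQuantumState} replaces $\semanticsD{\transform{S_1}}$ by $\semanticsD{S_1}$, and the induction hypothesis, applied at all states, gives the ERT clause.
\item For measurement, the reward is $tr(\rho) + \sum_m \semanticsR{\transform{S_m}}(M_m\rho M_m^\dagger)$, which matches the ERT clause by the induction hypothesis.
\end{itemize}

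The loop case is the substantive one. Write $W = \textbf{while } M[\overline{q}]=1 \textbf{ do } \{\transform{S}; \reward{1}\}$, so that $\transform{\while} = \reward{1}; W$ and
\[
\semanticsR{\transform{\while}}(\rho) = tr(\rho) + \lim_i \semanticsR{W^i}(\rho).
\]
The ERT side is $\lim_n \ERT{while^{[n]}}{\rho}$, where $while^{[0]}=\skipbf$ is not the diverging $\Omega$. I will show by induction on $n$ that
\[
\ERT{while^{[n]}}{\rho} = tr(\rho) + \semanticsR{W^n}(\rho).
\]
For $n=0$ this reads $0 = tr(\rho) + 0$, which fails. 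So the indexing must be shifted. The claim I will prove instead is
\[
\ERT{while^{[n+1]}}{\rho} = tr(\rho) + \semanticsR{W^{n+1}}(\rho) - \delta_n(\rho),
\]
where $\delta_n(\rho)$ is the reward term the unrolling of $W$ adds when the iteration budget runs out. Since both sequences are monotone, the cleaner route is a sandwich. I will prove
\[
tr(\rho) + \semanticsR{W^{n}}(\rho) \le \ERT{while^{[n]}}{\rho} \quad\text{and}\quad \ERT{while^{[n]}}{\rho} \le tr(\rho) + \semanticsR{W^{n+1}}(\rho)
\]
for $n\ge1$, by induction on $n$. Each step unfolds one measurement: $\ERT{while^{[n+1]}}{\rho} = tr(\rho) + \ERT{S; while^{[n]}}{M_1\rho M_1^\dagger}$, and $\semanticsR{W^{n+1}}(\rho) = \semanticsR{\transform{S};\reward{1};W^n}(M_1\rho M_1^\dagger)$. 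These are matched using the outer induction hypothesis for $S$, Lemma \ref{lem:rewardTransformQuantumState}, and the fact that the extra $\reward{1}$ contributes $tr(\semanticsD{S}(M_1\rho M_1^\dagger))$, which equals the leading $tr(\cdot)$ term of $while^{[n]}$ applied to $\semanticsD{S}(M_1\rho M_1^\dagger)$. Taking limits, both bounds converge to $tr(\rho)+\lim_i \semanticsR{W^i}(\rho)$. This also shows that $\lim_n \ERT{while^{[n]}}{\rho}$ exists in $\extR$, as monotonicity in $n$ follows from the sandwich or a direct induction.

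The main obstacle is this off-by-one bookkeeping between the two unrollings: $\skipbf$ versus $\Omega$ at the base, and where the trailing $\reward{1}$ sits. A secondary point is handling $\infty$ values. I would use only additions and monotone limits in $\extR$, never subtraction, which is why the sandwich formulation is preferable to an exact equality with a correction term. Another option is to state an exact identity after a one-step index shift, which I would try first if the inequalities turn out to be exact.
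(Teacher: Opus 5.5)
Your reduction to the identity $\ERT{S}{\rho}=\semanticsR{\transform{S}}(\rho)$ via Theorems \ref{thm:equivErtQwp} and \ref{thm:equivWpSemanticsErt} is legitimate, since neither depends on this theorem. Your non-loop cases are also correct. This is a different route from the paper's. The paper never passes through $\transform{S}$ here. It inducts directly on $\expect{\ert{S}{\zeroOp}}{\rho}$, using linearity of $\mathbb{E}$, Proposition \ref{prop:boundedOpinExpectation} and the decomposition $\ert{S}{A}=\qwp{S}{A}\sumform\ert{S}{\zeroOp}$. Its loop iterates $F_n$, with $F_0=\zeroOp$, satisfy $\expect{F_n}{\rho}=\ERT{while^{[n]}}{\rho}$ exactly, so no index shift arises.

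In your loop case, however, the lower half of the sandwich is false:
\[
tr(\rho)+\semanticsR{W^n}(\rho)\le\ERT{while^{[n]}}{\rho}.
\]
Put $\sigma=M_1\rho M_1^\dagger$ and $\tau=\semanticsD{S}(\sigma)$. At $n=1$ the left side is $tr(\rho)+\semanticsR{\transform{S}}(\sigma)+tr(\tau)$ and the right side is $tr(\rho)+\ERT{S}{\sigma}$. So you would need $tr(\tau)\le 0$. Concretely, take $\textbf{while } M[\overline{q}]=1 \textbf{ do } \skipbf$ with $M_0=\zeroOp$ and $M_1=\identityOp$. Then $tr(\rho)+\semanticsR{W^n}(\rho)=(n+1)\,tr(\rho)$, but $\ERT{while^{[n]}}{\rho}=n\,tr(\rho)$. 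At equal index the inequality goes the other way. The truncation $W^n$ has already paid the trailing $\reward{1}$ for the $(n+1)$-st guard evaluation before reaching $\Omega$, whereas $while^{[n]}$ ends in $\skipbf$ at cost $0$. Your upper bound $\ERT{while^{[n]}}{\rho}\le tr(\rho)+\semanticsR{W^{n+1}}(\rho)$ is true, but without a correct lower bound the sandwich does not close.

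The repair is small. Put $E_n(\rho)=\ERT{while^{[n]}}{\rho}$ and $R_n(\rho)=tr(\rho)+\semanticsR{W^n}(\rho)$. Your own unfolding, together with the structural hypothesis for $S$ and Lemma \ref{lem:rewardTransformQuantumState}, shows that both obey the same recurrence $X_{n+1}(\rho)=tr(\rho)+\ERT{S}{\sigma}+X_n(\tau)$. They differ only in the base: $E_0(\rho)=0\le tr(\rho)=R_0(\rho)\le E_1(\rho)$. Induction on $n$, quantified over all $\rho$, then gives
\[
\ERT{while^{[n]}}{\rho}\;\le\; tr(\rho)+\semanticsR{W^n}(\rho)\;\le\;\ERT{while^{[n+1]}}{\rho}.
\]
This yields monotonicity of both sequences and equality of their limits in $\extR$, using only addition.

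With this correction your argument goes through. As a by-product it proves directly, from the denotational semantics alone, that $\ERT{S}{\rho}$ equals the expected runtime of Definition \ref{def:ert}; this is the content of Lemma \ref{lem:ERTForeward}. The paper's route avoids the off-by-one issue entirely, but it relies on the operator-level $ert$/$wp$ decomposition.
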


This result together with \cite[Thm. 1]{LiuRuntime} shows that our operator-based definition of the expected runtime transformer coincides with the operator-based definition \cite[Def. 4.1]{LiuRuntime} for finite-dimensional, reward-free AST programs.

Using the equivalences above, we can show that whenever the expected runtime of a program is finite (i.e., the program is PAST), it terminates with probability 1 and thus is AST as well:
\begin{lemma} \label{lem:PastImpliesAst}
    Let $S$ be a reward-free qrWhile program and $\rho \in \density$. If $\text{ }S$ is PAST on $\rho$, then $S$ is AST on $\rho$.
\end{lemma}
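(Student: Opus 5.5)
Let $r(\rho)\df\semanticsR{\transform{S}}(\rho)$. The plan is to prove the contrapositive in quantitative form: the expected runtime is at least the expected number of loop iterations, and if mass $\epsilon>0$ never leaves a loop, this number diverges. Concretely, I would show by structural induction on $S$ that for all $\rho\in\density$,
\begin{equation*}
r(\rho) < \infty \;\Rightarrow\; tr(\semanticsD{S}(\rho)) = tr(\rho).
\end{equation*}
By Lemma \ref{lem:rewardTransformQuantumState}, $\semanticsD{\transform{S}}=\semanticsD{S}$, so it does not matter which of the two programs the trace refers to. By Theorem \ref{thm:ertEquivERT} together with Theorems \ref{thm:equivErtQwp} and \ref{thm:equivWpSemanticsErt}, I may compute $r(\rho)=\ERT{S}{\rho}$ using the forward clauses of Lemma \ref{lem:ERTForeward}.

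\textbf{Non-loop cases.} Skip, initialization and unitary application are trace preserving. For $S_1;S_2$, finiteness of $\ERT{S_1;S_2}{\rho}$ gives finiteness of both $\ERT{S_1}{\rho}$ and $\ERT{S_2}{\semanticsD{S_1}(\rho)}$. Applying the induction hypothesis twice yields $tr(\semanticsD{S_2}(\semanticsD{S_1}(\rho)))=tr(\semanticsD{S_1}(\rho))=tr(\rho)$. For a measurement, each summand $\ERT{S_m}{M_m\rho M_m^\dagger}$ is finite, so the induction hypothesis gives $tr(\semanticsD{S_m}(M_m\rho M_m^\dagger))=tr(M_m\rho M_m^\dagger)$. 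Summing and using $\sum_m M_m^\dagger M_m=\identityOp$ together with countable additivity of the trace gives $tr(\rho)$.

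\textbf{Loop case (main step).} Define $\rho_0=\rho$ and $\rho_{k+1}=\semanticsD{S}(M_1\rho_k M_1^\dagger)$, the unnormalised state at the start of the $(k+1)$-th guard evaluation. Unfolding $while^{[n]}$ in Lemma \ref{lem:ERTForeward} and dropping the nonnegative contributions of the body, I would show by induction on $n$ that
\begin{equation*}
\ERT{while^{[n]}}{\rho}\ \ge\ \sum_{k<n} tr(\rho_k),
\end{equation*}
since each guard measurement contributes $tr(\cdot)$ of the current state. Now $tr(\rho_k)$ is nonincreasing and bounded below by $tr(\rho)-tr(\semanticsD{\while}(\rho))$, the mass not yet terminated after $k$ iterations. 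To justify this bound I would use the standard trace-splitting identity
\begin{equation*}
tr(\rho)=tr\big(\semanticsD{\while^{k}}(\rho)\big)+tr(\rho_k)+\sum_{j<k}\delta_j,
\end{equation*}
where each $\delta_j\ge 0$ is the mass lost inside a body execution. If $tr(\semanticsD{\while}(\rho))<tr(\rho)$, the lower bound on $tr(\rho_k)$ is a fixed $\epsilon>0$, so $\ERT{\while}{\rho}\ge n\epsilon\to\infty$, a contradiction.

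\textbf{Expected obstacle.} The delicate point is that mass could instead be lost inside the body rather than by circulating forever. This is handled by the induction hypothesis on $S$. Finiteness of $\ERT{\while}{\rho}$ bounds $\sum_k \ERT{S}{M_1\rho_k M_1^\dagger}$, so every such term is finite. The induction hypothesis then gives $tr(\rho_{k+1})=tr(M_1\rho_kM_1^\dagger)$, and hence every $\delta_j=0$. With that, the identity reduces to $tr(\rho)-tr(\semanticsD{\while^{k}}(\rho))=tr(\rho_k)$, and the divergence argument goes through. Taking the supremum over $k$ and using continuity of the trace on the increasing chain $\semanticsD{\while^k}(\rho)$ finishes the proof.
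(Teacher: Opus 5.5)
Your proof is correct and follows the paper's argument essentially step for step. The paper also uses structural induction with the forward $ERT$ clauses, and in the loop case it uses the same sequence $\rho_k$ (its $\sigma_k$), the bound $\ERT{while^{[n]}}{\rho}\ge\sum_{k<n}tr(\rho_k)$, the induction hypothesis on the body to rule out mass lost inside it, and the identity $tr(\rho)-tr(\semanticsD{\while^{k}}(\rho))=tr(\rho_k)$. The differences are cosmetic: you argue by contradiction with a uniform $\epsilon>0$, where the paper deduces $tr(\sigma_k)\to 0$ directly from summability. You also make explicit the identification $\semanticsR{\transform{S}}=ERT\llbracket S\rrbracket$ via Theorems~\ref{thm:equivWpSemanticsErt}, \ref{thm:equivErtQwp} and \ref{thm:ertEquivERT}, which the paper uses only implicitly.
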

The other direction does not hold, as there are programs that are AST on a state but not PAST on the same state (e.g., the examples presented in Section \ref{sec:examples}).

\paragraph*{Park Induction for Expected Runtime}
Similar as in \cite{probRuntime} for the probabilistic case, we can use the idea of Park Induction (Proposition \ref{prop:park}) for bounds of the expected runtime.
First we define the characteristic functional for the while-loop $\while$:
\begin{equation*}
    \Psi_B (A) = \identityOp \sumform \formsandwich{M_0}{B} \sumform \formsandwich{M_1}{\ert{S}{A}}
\end{equation*}

Based on this, we can consider Park induction for expected runtime. Intuitively, it means that if we can find an operator $A$ such that $\Psi_B(A) \loewner A$ holds, then $A$ is an upper bound on the expected runtime of the while-loop. This can be used to give upper bounds on the expected runtime without having to compute the exact least upper bound, which is often more difficult.
\begin{proposition} \label{prop:ertPark}
    Let $\while$ be a reward-free while-loop in qrWhile and $A, B \in \predicate$. If \text{ }$\Psi_B(A) \loewner A$, then $\ert{\while}{B} \loewner A$.
\end{proposition}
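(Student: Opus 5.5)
We will reduce the claim to the wp Park induction (Proposition \ref{prop:park}) by way of Theorem \ref{thm:equivErtQwp}. Since $\transform{\while} = \reward{1}; \textbf{while } M[\overline{q}]=1 \textbf{ do } \{\transform{S}; \reward{1}\}$, Table \ref{tab:wp} gives
\[\ert{\while}{B} = \qwp{\transform{\while}}{B} = B' \sumform \identityOp\]
up to commutativity of $\sumform$, where $B' := \qwp{W'}{B}$ and $W'$ denotes the loop with body $S' := \transform{S};\reward{1}$. This reduction alone, however, does not match the shape of $\Psi_B$. In $\Psi_B$ the extra $\identityOp$ is added in every iteration and there is no "$+\identityOp$ after the body". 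Rather than routing through the transformed program, I would therefore argue directly with the fixed-point characterization in Table \ref{tab:ert}.

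\textbf{Step 1.} By Table \ref{tab:ert}, $\ert{\while}{B} = \bigvee_n F_n$ with $F_0 = \zeroOp$ and $F_{n+1} = \Psi_B(F_n)$. This identity holds because addition of expectations via forms is commutative and associative, since form addition is. So the $\identityOp$ may be placed first, matching $\Psi_B$.

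\textbf{Step 2.} Show $\Psi_B$ is monotone on $\predicate$. This uses monotonicity of $ert$ (Proposition \ref{prop:ERTProperties}(3)), then monotonicity of $\formsandwich{M_1}{\cdot}$, then monotonicity of $\sumform$ with a fixed summand (Proposition \ref{prop:orderproperties}). Commutativity of $\sumform$ lets us apply the latter in either slot.

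\textbf{Step 3.} Show by induction that $F_n \loewner A$ for all $n$. The base case $F_0 = \zeroOp \loewner A$ holds since $\zeroOp$ is the least element, by Lemma \ref{lem:orderExpectation}. For the step, $F_{n+1} = \Psi_B(F_n) \loewner \Psi_B(A) \loewner A$ by monotonicity and the hypothesis, using transitivity from Proposition \ref{prop:orderproperties}.

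\textbf{Step 4.} By Theorem \ref{thm:omegaCompleteness}, $\bigvee_n F_n$ is the least upper bound of the increasing chain, so it lies below the upper bound $A$; hence $\ert{\while}{B} \loewner A$. The chain is indeed increasing: this follows by the same induction, starting from $F_0 \loewner F_1$ and applying monotonicity.

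The only delicate point is the bookkeeping of $\sumform$ for unbounded expectations. Commutativity, associativity and monotonicity are not literally stated in the paper. They follow from the form definitions, because $\domain{a+b} = \domain{a}\cap\domain{b}$ and values add pointwise, together with Proposition \ref{prop:orderEquivalence}. Alternatively they follow from Lemma \ref{lem:orderExpectation} with the linearity of expectation values (Proposition \ref{prop:expectationProperties}), which is the route I would take, since it gives $\expect{X\sumform Y}{\rho} = \expect{X}{\rho}+\expect{Y}{\rho}$ directly.
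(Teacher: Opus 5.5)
Your proposal is correct and follows essentially the same route as the paper: identify $F_n$ with $\Psi_B^n(\zeroOp)$, show $\Psi_B$ is monotone via monotonicity of $ert$ and Proposition~\ref{prop:orderproperties}, prove $F_n \loewner A$ by induction, and conclude with the least-upper-bound property of $\bigvee_n F_n$. The paper silently reorders the summands when identifying $F_{n+1}$ with $\Psi_B(F_n)$; you make this step explicit through commutativity and associativity of $\sumform$. The detour through the transformed program that you set aside is unnecessary but harmless.
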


%% file: 7_examples.tex
\section{Examples}
\label{sec:examples}

In this section we present two example programs and analyze their expected runtime and termination behavior. The first example is a program with infinite expected runtime but that terminates with probability 1 which consists of two while-loops executed after each other. We will also analyze the runtime of both while-loops separately.

The second example is a variant of the infinite quantum walk where the walker can only move in one direction or stay at its position. We will analyze the expected runtime of this program and show that it is partially non-AST, partially AST but not PAST and partially PAST, i.e., there are input states where the probability of termination is smaller than $1$, there are other states that terminate with probability $1$ but have infinite expected runtime and there are states that terminate with probability $1$ and have finite expected runtime.

Omitted calculations can be found in Appendix \ref{sec:appendix_examples}.
Note that in this section we write $+$ instead of $\sumform$ and leave $\odot$ out for better readability.

\subsection{Probabilistic PAST vs. Quantum PAST}
As mentioned earlier, there are programs in the infinite-dimensional setting with infinite expected runtime but that terminate with probability 1. In this section, we present an example of such a program and analyze its expected runtime and termination behavior.

\subsubsection{The Quantum Program}
We adapt the probabilistic program from \cite{probRuntime} to the quantum setting. It is used to illustrate that probabilistic PAST is not preserved under sequential composition.
In the quantum analog we have two variables, one is a qubit $q_2$ that we use to simulate the coin flip and the other one is a quantum integer $q_1$ that we use as integer variable (allowing only non-negative values).
The quantum program can be found in Figure \ref{fig:PASTcompositional}.
\begin{figure}
    \begin{align*}
    &  q_1:= \ket{1}; \\
    &  q_2:= \ket{1}; \\
    &\colorbox{rwthblue!30}{\parbox{3.5cm}{$\begin{aligned} \text{while} (q_2 = \ket{1}) \{ \\
    & \hspace*{-1.8cm} q_2:= H q_2; \\
    & \hspace*{-1.8cm} q_1 := U_{2} q_1; \\
    & \hspace*{-2.2cm}\}\end{aligned}$}} \hspace*{-0.5cm}\textcolor{white}{L_1}\\
    & \colorbox{rwthblue!30}{\parbox{3.5cm}{$\begin{aligned} \text{while} (q_1 \neq \ket{0}) \{ \\
    & \hspace*{-1.8cm} q_1 := U_{-1} q_1; \\
    & \hspace*{-2.2cm}\}\end{aligned}$}} \hspace*{-0.5cm}\textcolor{white}{L_2}\\
    \end{align*}

    \caption{A quantum program where $U_2$ maps $\ket{n}$ to $\ket{2n}$ and $U_{-1}$ maps $\ket{n}$ to $\ket{n-1}$ (to make them unitary, one can for example add ancilla qubits which we omit to not complicate the analysis). The measurement operators are $M_0 = \ket{0}\bra{0}$, $M_1 = \ket{1}\bra{1}$ for the first loop on variable $q_2$ and $M_0 = \ket{0}\bra{0}$, $M_1 = \identityOp - \ket{0}\bra{0}$ for the second loop on variable $q_1$.}\label{fig:PASTcompositional}
\end{figure}

\subsubsection{Park Induction for the Loops}
First we consider the sub programs and analyze their expected runtime. Computing the concrete runtime of a while-loop is quite complex as it involves computing the least upper bound of a sequence, so we use Park induction.

We start with the first while-loop $L_1$. The characteristic functional is given by
\begin{align*}
    \Psi_\zeroOp(A) =&\text{ } \identityOp + M_0^\dagger \zeroOp M_0 + M_1^\dagger \ert{S}{A} M_1 \\
    =&\text{ } \identityOp + 2 \identityOp \otimes \ket{1}\bra{1} + (\identityOp \otimes \ket{1}\bra{1})(U_2^\dagger \otimes H^\dagger) A (U_2 \otimes H) (\identityOp \otimes\ket{1}\bra{1})
\end{align*}

For the loop invariant $A= \identityOp + 6(\identityOp \otimes \ket{1}\bra{1})$, we have
\begin{align*}
    \Psi_\zeroOp(A) = &\text{ } \identityOp + 2 \identityOp \otimes \ket{1}\bra{1} + (\identityOp \otimes \ket{1}\bra{1})(U_2^\dagger \otimes H^\dagger) (\identityOp + 6(\identityOp \otimes \ket{1}\bra{1})) (U_2 \otimes H) (\identityOp \otimes\ket{1}\bra{1}) \\
     =&\text{ } \identityOp + 6(\identityOp \otimes \ket{1}\bra{1}) = A
\end{align*}
So we have $\Psi_\zeroOp(A) \loewner A$ and thus by Park induction $\ert{L_1}{\zeroOp} \sqsubseteq A$. That means for every initial state with $\ket{1}$ on the second variable, the expected runtime of the first while-loop is at most 7, with $\ket{0}$ on the second variable it is at most $1$. If we assume the initialization to be part of this program, we are in the first case. In any case, the expected runtime is finite, i.e., the subprogram is PAST.

Similarly, we can analyze the second while-loop $L_2$ where we only need to consider one variable. The characteristic functional is given by
\begin{align*}
    \Psi_\zeroOp(A) =&\text{ } \identityOp + M_0^\dagger \zeroOp M_0 + M_1^\dagger \ert{S}{A} M_1 = \identityOp + (\identityOp - \ket{0}\bra{0}) + (\identityOp - \ket{0}\bra{0}) U_{-1}^\dagger A U_{-1} (\identityOp - \ket{0}\bra{0})
\end{align*}
Using the loop invariant $A = \sum_{k=0}^{\infty} (2k+1) \ket{k}\bra{k}$, we have
\begin{align*}
    \Psi_\zeroOp(A) =&\text{ } \identityOp + (\identityOp - \ket{0}\bra{0}) + (\identityOp - \ket{0}\bra{0}) U_{-1}^\dagger \left(\sum_{k=0}^{\infty} (2k+1) \ket{k}\bra{k}\right) U_{-1} (\identityOp - \ket{0}\bra{0})\\
    =& \sum_{k=0}^{\infty} \ket{k}\bra{k} + \sum_{k=1}^{\infty} \ket{k}\bra{k} + \sum_{k=1}^{\infty} (2k - 1) \ket{k}\bra{k}
    = \sum_{k=0}^{\infty} (2k+1) \ket{k}\bra{k} = A
\end{align*}
By Park induction $\ert{L_2}{\zeroOp} \sqsubseteq A$.
This operator is unbounded in general, that means the program is not PAST (or at least we cannot show it with this invariant).
In the probabilistic setting we get a finite runtime because for every (classical) basis state $\ket{n}$, we have $\expect{A}{\ket{n}\bra{n}} = 2n+1$ which is a finite upper bound on the runtime.
As in this algorithm, $q_1$ will always have a classical value (something like $\ket{n}$), we can conclude that the expected runtime of the second while-loop is finite for every reachable state.
The difference is that in the quantum setting, the expected runtime can still be infinite for certain superpositions, e.g., for a state $\sum_{n=1}^{\infty} \frac{\alpha}{n^2} \ket{n}\bra{n}$ (with suitable $\alpha$), the expected runtime could be infinite as shown in the next example.

\subsubsection{Infinite Expected Runtime}
We now show that the whole program has infinite expected runtime. Park induction would only give us an upper bound on the expected runtime, so we have to compute the expected runtime of the whole program directly to show that it is infinite.
To start the analysis, we determine $\ert{L_2}{\zeroOp} = \bigvee_n P_n$. Intuitively, $L_2$ decrements the quantum integer $q_1$ until it reaches $\ket{0}$, i.e. its runtime is the current value of $q_1$. As shown in Appendix \ref{sec:appendix_example1}, we have
\begin{equation*}
    \ert{L_2}{\zeroOp} = \bigvee_{n=0}^\infty P_n = \lim_{n\to\infty} \left((n+1)\identityOp - \sum_{k=0}^{n-1} (n-k)\ket{k}\bra{k}\right):= W_1
\end{equation*}
To compute this limit, we can look at the expectation value on a basis state $\ket{m}$:
\begin{align*}
    \expect{\ert{L_2}{\zeroOp}}{\ket{m}\bra{m}} &= \lim_{n\to\infty} \left((n+1) - (n-m)\right) = \lim_{n\to\infty} (m+1) = m+1
\end{align*}
This coincides with the result one would expect intuitively, as the runtime of the loop is exactly the current value of $q_1$ plus one for the final check. Note that so far we only considered the variable $q_1$, the other variable $q_2$ did not matter for this loop, so we have $\identityOp$ on the other variable.

Now we can analyze the first while-loop $L_1$. As shown in Appendix \ref{sec:appendix_example1}, we have
\begin{align*}
    &\ert{L_1}{W_1 \otimes \identityOp} = \bigvee_{n=0}^\infty Q_n \\
    =& \lim_{n\to\infty} \left( \left(\left(6 - \frac{8}{2^n} \right) \identityOp + \sum_{k=1}^{n-1} (\frac{1}{2} U_2^\dagger)^k W_1 U_2^k \right)\otimes \ket{1}\bra{1} + \identityOp +  W_1 \otimes (\identityOp - \ket{1}\bra{1}) \right) := W_2
\end{align*}

Before this loop, the program only sets $q_1$ and $q_2$ to $\ket{1}$. To get the expected runtime of the entire program, we determine $\ert{q_1:= \ket{1}; q_2:= \ket{1}}{W_2}$ or, equivalently, the expected runtime of the program afterwards with fixed input $\ket{1}$ for both variables which is simpler in this case. That means the expected runtime of the full program is given by $W_2$ evaluated on the state $\ket{1}\otimes \ket{1}$. To compute this, we can look at the expectation value:
\begin{align*}
    &\expect{\ert{L_1}{\zeroOp}}{\ket{1}\bra{1} \otimes \ket{1}\bra{1}} = \lim_{n\to\infty} 7 - \frac{8}{2^n} + n-2 + \frac{1}{2^{n-1}}  = \infty.
\end{align*}
Overall, the expected runtime of this program is infinite. If we would have other initial values of $q_1$ and $q_2$, computing the expected runtime would influence only the last equation in comparison to \cite[Example. 3.3.]{LiuRuntime} where the whole calculation would need to be redone.

\subsubsection{Almost Sure Termination}
To complete our analysis, we show the program to be AST by using the fact that $\qwp{S}{\identityOp} = \identityOp$ implies AST. We compute $\qwp{L_2}{\identityOp} = \bigvee_n P_n$ first only for $q_1$ with
\begin{align*}
    P_0 &= \zeroOp \\
    P_{n+1} &= M_0^\dagger \identityOp M_0 + M_1^\dagger \qwp{S}{P_n} M_1 \\
    &= \ket{0}\bra{0} + (\identityOp - \ket{0}\bra{0}) \qwp{q_1 := U_{-1} q_1}{P_n} (\identityOp - \ket{0}\bra{0}) \\
    &= \ket{0}\bra{0} + (\identityOp - \ket{0}\bra{0}) U_{-1}^\dagger P_n U_{-1} (\identityOp - \ket{0}\bra{0})
\end{align*}
Then we obtain $P_n = \ket{0}\bra{0} + \ket{1}\bra{1} + \dots + \ket{n-1}\bra{n-1}$ and overall we have $\qwp{L_2}{\identityOp} = \bigvee_n P_n = \identityOp$.

For the first while loop, we have $\qwp{L_1}{\identityOp \otimes \identityOp} = \bigvee_n P_n$ with
\begin{align*}
    P_0 &= \zeroOp \\
    P_{n+1} &= M_0^\dagger \identityOp M_0 + M_1^\dagger \qwp{S}{P_n} M_1 \\
    &= \identityOp \otimes \ket{0}\bra{0} + (\identityOp \otimes \ket{1}\bra{1} ) \qwp{q := H q; q_1 := U_2 q_1}{P_n} (\identityOp \otimes \ket{1}\bra{1}) \\
    &= \ket{0}\bra{0} \otimes \identityOp + (\identityOp \otimes \ket{1}\bra{1}) ( U_2^\dagger \otimes H^\dagger) P_n (U_2 \otimes H)  (\identityOp \otimes \ket{1}\bra{1})
\end{align*} with the closed form being
    $P_n = \identityOp \otimes \left(\ket{0}\bra{0} + \left(1 - \frac{1}{2^n}\right) \ket{1}\bra{1}\right)$
and overall $\qwp{L_1}{\identityOp} = \bigvee_n P_n = \identityOp$.
Together with the first two initializations, $\qwp{S}{\identityOp} = \identityOp$.

\subsection{One-Sided Quantum Walk}
In this section we present a variant of the quantum walk on an one-side bounded infinite line that, in comparison to the standard version, does not spread in both direction but only in one direction or stays at its position. The program consists of two variables, one is a quantum integer $q$ (that can also take negative values now) that we use to simulate the position of the walker and the other one is a qubit $c$ that we use to simulate the coin flip. We define the (controlled) shift operator $S$ as follows: $S\ket{n}\ket{0} = \ket{n-1}\ket{0}, S\ket{n}\ket{1}  =\ket{n}\ket{1}$. Then we can define the program $P_1$ as follows:
\begin{align*}
    & c:= \ket{0}; \\
    & \textbf{while } M[q] \textbf{ do }\{ \\
    & \hspace*{1cm} c := H c; \\
    & \hspace*{1cm} q c := S q c; \\
    & \}
\end{align*}
with $M_0 = \ket{0}\bra{0}$, $M_1 = \identityOp - \ket{0}\bra{0}$. In each iteration, we apply a Hadamard gate to the coin to create an equal superposition of $\ket{0}$ and $\ket{1}$ and then apply the shift operator to move left or do nothing according to the coin.

Intuitively, if the program starts in a state $\ket{n}$ with $n<0$, it will never terminate as the walker will always move left and never reach $\ket{0}$. This also holds for superpositions or mixed states that have a non-zero probability of being in a state $\ket{n}$ with $n<0$. We can show that the program is non-AST for those states and AST for all other states by calculating $\qwp{P_1}{\identityOp}$. The characteristic function is given by
\begin{align*}
    \psi_\identityOp(A) =&\text{ } M_0^\dagger \identityOp M_0 + M_1^\dagger (\identityOp \otimes H)S^\dagger A S (\identityOp \otimes H) M_1 \\
    =&\ket{0}\bra{0} \otimes \identityOp + ((\identityOp - \ket{0}\bra{0})\otimes H)S^\dagger A S  ((\identityOp - \ket{0}\bra{0}) \otimes H)
\end{align*}
We can show that $A = (\zeroOp_{<0} + \identityOp_{\geq 0}) \otimes \identityOp = \identityOp_{\geq 0} \otimes \identityOp$ is a fixed point of $\psi_\identityOp$ (see Appendix \ref{sec:appendix_example2}).
Thus $\qwp{P_1}{\identityOp} \sqsubseteq (\zeroOp_{<0} + \identityOp_{\geq 0}) \otimes \identityOp$ and thus it cannot be AST for states $\ket{n}$ with $n<0$. For states $\ket{n}$ with $n\geq 0$, we have just shown $\identityOp$ to be an upper bound, thus we cannot follow directly that the program is AST for those states. However, we show now that the expected runtime is finite for states $\ket{n}$ with $n\geq 0$ thus the program is AST for those states and automatically for all superpositions of those as well.
To do so, we use the ert-calculus with Park induction. The characteristic functional is given by
\begin{align*}
    \Psi_\zeroOp(A) =& \text{ }\identityOp + M_0^\dagger \zeroOp M_0 + M_1^\dagger \ert{S}{A} M_1 \\
    =& \text{ }3 \identityOp - 2 \ket{0}\bra{0} \otimes \identityOp + ((\identityOp - \ket{0}\bra{0})\otimes H)S^\dagger A S  ((\identityOp - \ket{0}\bra{0}) \otimes H)
\end{align*}

Now we want to find $A$ such that $\Psi_\zeroOp(A) \loewner A$. To do so, we consider the ansatz\footnote{Convergence w.r.t. WOT, thus $A \in \predicate$ by Definition \ref{def:FormAddMult} and Lemma \ref{lem:infSumExpect}.}
\begin{align*}
A = \sum_{n=0}^\infty a_n \ket{n}\bra{n} \otimes \ket{0}\bra{0} + \sum_{n=0}^\infty b_n \ket{n}\bra{n} \otimes \ket{1}\bra{1} + \zeroOp|_{\geq 0}
\end{align*} where $\zeroOp|_{\geq 0} = (\sum_{n=-\infty}^{-1} \infty \ket{n}\bra{n}+\sum_{n=0}^\infty 0\ket{n}\bra{n}) \otimes \identityOp$ is added to ensures that negative values are handled correctly as we know that their expected runtime is infinite. Besides of that part, we choose a diagonal ansatz to keep it simple and figured out that $\sum_{n=0}^\infty a_n \ket{n}\bra{n} \otimes \identityOp$ does not work, so we split the qubit into the two cases of $\ket{0}$ and $\ket{1}$, which makes sense as $S$ depends on the value of the qubit. We used the ansatz and solved $\Psi_\zeroOp(A) \loewner A$ for the minimal solution of $a_n \geq 0$ which is $2n(6+3\sqrt{2}) +1$. For this choice of $a_n$, $(2n-1)(6+3\sqrt{2}) +1$ is the only choice for $b_n$. Details can be found in Appendix \ref{sec:appendix_example2}. As these values are surprisingly high (i.e., $a_1>20$ which means that on average we need more than 20 steps to go from $\ket{1}$ to $\ket{0}$ where each loop-iteration takes $3$ steps), we suspect that there is a better invariant using a different ansatz. As for our purpose it is enough to show that the expected runtime is finite, we did not further optimize this invariant.

Using $A$ with the values from above and Park induction, we get $\expect{\ert{P_1}{\zeroOp}}{\ket{n}\bra{n} \otimes \ket{\psi}\bra{\psi}} \leq 2n(6+3\sqrt{2}) + 2$ for $n \geq 0$ and any state $\psi$, i.e., the expected runtime is finite for all states where $q = \ket{n}$ with $n \geq 0$.

It is $B = \sum_{n=0}^{\infty}n \ket{n}\bra{n} \otimes \identityOp \loewner \ert{P_1}{\zeroOp}$ (see Appendix \ref{sec:appendix_example2}), so the runtime (even besides the negative part) is not bounded.
In fact, the same phenomenon occurs as in the previous example: The expected runtime is finite for all basis states $\ket{n}$ with $n>0$ but still infinite for (some) superpositions of these. The expected runtime of state $\sum_{n=1}^{\infty} \frac{\alpha}{n^2} \ket{n}\bra{n}$ (with suitable $\alpha$) is lower bounded by
\begin{align*}
    \expect{B}{\sum_{n=0}^{\infty} \frac{\alpha}{n^2} \ket{n}\bra{n} \otimes \ket{0}\bra{0}} = \sum_{n=0}^{\infty} \frac{\alpha n}{n^2} \to \infty
\end{align*}

Thus we have shown that the program is non-AST for states $\ket{n}$ with $n<0$ (and superpositions including those states), PAST and AST for states $\ket{n}$ with $n\geq 0$ and AST (but not PAST) for all superpositions of those states.

If we consider an AST variant of the program where the loop is only entered with states $\ket{n}$ with $n> 0$, i.e., we change the behavior on the negative positions, we would change the loop guard to $M_0 = \sum_{n=-\infty}^0 \ket{n}\bra{n}$ and $M_1 = \sum_{n=1}^{\infty} \ket{n}\bra{n}$. Then we can still reuse the same invariant by removing $\zeroOp|_{\geq 0}$ and obtain similar results (except for the negative part where the runtime would be finite).

%% file: 8_conclusion.tex
\section{Conclusion}
\label{sec:conclusion}
We considered quantum programs with rewards in infinite-dimensional spaces and defined semantic properties that should hold in this setting. We gave a concrete syntax and denotational semantics for qrWhile programs with rewards. The main contribution of this paper is the definition of weakest pre-expectations both semantically for quantum programs with rewards and syntactically for qrWhile programs. We dropped the condition of boundedness of the expectation which was required in previous work and this more general definition allows us to reason about runtimes without restricting to finite expected runtimes. We gave some simple loop rules to bound both the weakest preconditions and the expected runtimes of loops to avoid heavy calculations. Finally, we considered two example programs and analyzed their expected runtime and termination probability using the tools we developed which would not have been possible using existing work.
Our goal was to keep the same style of weakest preconditions as in the finite-dimensional case \cite{DHondtWeakestPreconditions} and thus allow for unbounded operators as expectations in comparison to \cite{LicsNewPaper}. This allows us to capture expected runtimes and reward statements in a natural way.

For future work, it would be interesting to consider more general semantic properties of quantum programs with rewards, e.g., not only track the expected reward but instead the distribution of rewards. An approach as \cite{philippHigherMoments} allows to analyze higher moments of the reward distribution which gives more information about the behavior of the program.

Of course, it is also interesting to combine our weakest pre-expectations approach and the runtime analysis methods with other features of quantum programming languages, e.g., non-determinism \cite{FengNondeterministicQuantumVerification}, classical variables \cite{FengQHLClassicalVars}, observations \cite{bayesianInf}, etc. for which we already know how to define semantics and (bounded) weakest preconditions. Another interesting direction is to find a suitable representation language for expectations which is one of the first steps towards developing an automated tool.

%% file: appendix.tex
\section{Details of Section \ref{sec:expectations} - Expectations}
\label{sec:appendix_expectations}
In this section we give additional proof details about expectations and expectation values.

Proof of Proposition \ref{prop:OpstoForms}:
\begin{proof}
    We use the following theorem (\cite[VI, Thm. 2.1, 2.7]{kato} and \cite[Corr. 10.8, Eq. 10.12]{unboundedSelfAdjointBook}):
    "For every positive self-adjoint operator $A$, there is a corresponding densely defined, closed, symmetric, positive form $a: \domain{a} \times \domain{a} \to \mathbb{C}$ with $a[x,y] = \langle \half{A}x, \half{A}y\rangle$ for all $x,y \in \domain{a} = \domain{\half{A}}$ and the other way round."

    We show that this theorem implies the proposition:
    \begin{itemize}
        \item $\Rightarrow$: For every expectation $A$ there is the positive, self-adjoint $\infty$-restriction $A_\infty$. By the above theorem, there is a corresponding densely defined, closed, symmetric, positive form $a$, thus also a form-expectation.
        \item $\Leftarrow$: For every form-expectation $a$, we can construct a densely defined form by extending $a$ to the closure of its domain (that means $\domain{a}$ is dense in $\overline{\domain{a}}$). By the above theorem, there is a corresponding positive self-adjoint operator $A_\infty$. We obtain $A \in \predicate$ by extending the $\infty$-restriction $A_\infty$ to $\hilbert$. \qedhere
    \end{itemize}
\end{proof}

Proof of Theorem \ref{thm:omegaCompleteness}:
\begin{proof}
    Let $\{A_n\}_n$ be an increasing sequence of expectations over $\hilbert$.
    That means every $A_n$ is positive and $\infty$-self-adjoint and $A_n \loewner A_{n+1}$ holds for all $n$.

    \begin{enumerate}
        \item \label{enum.opsToForms} For each $A_n$, there is the corresponding form-expectation $a_n$ by Proposition \ref{prop:OpstoForms}.

        \item \label{enum.order} $a_n \leq a_{n+1}$ holds exactly if $A_n \loewner A_{n+1}$ by Proposition \ref{prop:orderEquivalence}. Thus $\{a_n\}_n$ is a monotone non-decreasing sequence of closed, symmetric forms bounded from below.

        \item \label{enum.conv} We use Kato's theorem \cite[VIII, Thm. 3.13a]{kato}: "Let $a_1 \leq a_2 \leq ...$ be a monotone non-decreasing sequence of closed symmetric forms bounded from below. Set $a[x] = \lim a_n[x]$ whenever the finite limit exists. $a$ is a closed symmetric form bounded from below." As each $a_n$ is positive, $a$ is positive as well. Thus $a \in \formPredicate$. Note that $\domain{a}$ might not be dense in $\hilbert$, but it is dense in $\overline{\domain{a}}$.

        \item \label{enum.formToOps} We use this limit form and define the corresponding expectation $A$ according to Proposition \ref{prop:OpstoForms}.

        Then $A_\infty$ is the self-adjoint operator that corresponds to the limit form \cite[VI, Thm. 2.1, 2.7]{kato} and as each $A_n$ is positive, it is also positive.

        Then $A$ is an expectation with $A x = A_\infty x$ for all $x \in \domain{A} = \domain{A_\infty}$.
    \end{enumerate}
    Now we show that $A$ is indeed the least upper bound of the sequence $A_n$:

        By \ref{enum.order}, we have $A_n \loewner A$ for all $n$ because $a_n \leq a$ for all $n$.
        That means $A$ is an upper bound of the sequence $A_n$.

        Now it remains to show that it is the least upper bound.
        Let $B: \domain{B} \to \hilbert$ be another expectation with $A_n \loewner B$ for all $n$ (a possible other smaller upper bound).
        Then by \ref{enum.order}, it is $a_n \leq b$ for all $n$ where $b$ is the form-expectation that corresponds to $B$.
        Thus, by the definition of the limit form, it is $a \leq b$.
        Again by \ref{enum.order}, this implies $A \loewner B$.

        Convergence on the form-level is pointwise convergence, i.e., $a[x] = \lim_n a_n[x]$ for all $x \in \domain{a}$. For the corresponding expectations, this implies convergence in WOT:
        \begin{align*}
            &\text{ }\forall x \in \domain{A}: a_n[x] \to a[x] \\
            \Rightarrow & \text{ }\forall x,y \in \domain{A}: a_n[x,y] \to a[x,y] \\
            \Rightarrow &\text{ }\forall x,y \in \domain{A}: \langle x, A_n y\rangle \to \langle x, A y\rangle \\
            \Rightarrow &\text{ }A_n \to A \text{ in WOT}
        \end{align*}

     Overall, the proof steps are visualized in the diagram in Figure \ref{fig:omegaCompletenessProof}.
     \begin{figure}
        \centering
         \begin{tikzpicture}[
     node distance=2cm,
     every node/.style={rounded corners, align=center, minimum width=2cm, minimum height=1cm},
    arrow/.style={-{Stealth}, thick}]

        \node (A) {monotone non-decreasing \\sequence of \\positive, self-adjoint \\ operators};
        \node (C) [right=of A] {monotone non-decreasing \\sequence of positive,\\symmetric and \\ closed forms};
         \node (G) [below= 1.5cm of C] {limit of form sequence \\ exists and is a positive, \\ symmetric, densely defined\\ and closed form \\ (on $\infty$-restriction)};
         \node (H) [left =2.5cm of G] {least upper bound of \\operator sequence exists\\ and is an expectation};

        \draw[arrow] (A) -- node[above]{\ref{enum.opsToForms} \& \ref{enum.order}}(C);
         \draw[arrow] (C) -- node[left]{\ref{enum.conv}} (G);
         \draw[arrow] (G) -- node[above]{\ref{enum.formToOps} \& \ref{enum.order}} (H);
     \end{tikzpicture}
     \caption{Proof steps for convergence of increasing sequences of expectations}
    \label{fig:omegaCompletenessProof}
     \end{figure}
\end{proof}

Proof of Proposition \ref{prop:FormAddMult}:
\begin{proof} \text{ }
    \begin{itemize}
        \item To show: $a+b$ is a form-expectation. Recall, $a+b$ is defined as the form $c: \domain{c} \times \domain{c} \to \mathbb{C}$ with
        \begin{align*}
            \domain{c} = \domain{a} \cap \domain{b}, c[x,y] = a[x,y] + b[x,y]
        \end{align*}
        The definition of the sum is the same as in \cite[VI, Chap. 2, Sect. 5]{kato} (called 'generalized sum'). By \cite[VI, Thm. 1.31]{kato} is $c$ closed. Symmetry follows from the symmetry of $a$ and $b$ and positivity from the positivity of $a$ and $b$.
        \item To show: $\alpha \cdot a$ is a form-expectation. Recall, $\alpha \cdot a$ is defined as the form $\alpha \cdot a: \domain{\alpha \cdot a} \times \domain{\alpha \cdot a} \to \mathbb{C}$ with
        \begin{align*}
            \domain{\alpha \cdot a} = \domain{a},(\alpha \cdot a)[x,y] = \alpha \cdot a[x,y]
        \end{align*}
        \begin{itemize}
            \item Symmetry follows from the symmetry of $a$, as $(\alpha \cdot a)[x,y] = \alpha \cdot a[x,y] = \alpha \cdot \overline{a[y,x]} = \overline{(\alpha \cdot a)[y,x]}$ and $\alpha$ is real.
            \item Positivity follows from the positivity of $a$, as $(\alpha \cdot a)[x] = \alpha \cdot a[x] \geq 0$ for all $x \in \domain{\alpha \cdot a}$ and $\alpha \geq 0$.
            \item Closeness follows from the closeness of $a$ as every sequence $\{x_n\}_n$ in $\domain{\alpha \cdot a}$ with $x_n \underset{t}{\to} x$ also is a sequence in $\domain{a}$ with $x_n \underset{t}{\to} x$. Thus we have $x\in \domain{\alpha \cdot a} = \domain{a}$ and $(\alpha \cdot a)[x_n - x_m] = \alpha \cdot a[x_n - x_m]\to 0$ for $n,m \to \infty, \alpha\geq 0$. Thus $\alpha \cdot a$ is closed.
        \end{itemize}

        \item To show: $a_M$ is a form-expectation. Recall, $a_M$ is defined as the form $a_M: \domain{a_M} \times \domain{a_M} \to \mathbb{C}$ with \begin{align*}
            \domain{a_M} = \{ \psi \in \hilbert \mid M\psi \in \domain{a}\}, a_M[x,y] = a[Mx, My]
        \end{align*}
        Note that $\bra{\psi} \formsandwich{M}{A}  \ket{\psi} = \bra{M\psi} A \ket{M\psi}$.
         \begin{itemize}
            \item Symmetry follows from the symmetry of $a$, as $a_M[x,y] = a[Mx, My] = \overline{a[My, Mx]} = \overline{a_M[y,x]}$.
            \item Positivity follows from the positivity of $a$, as $a_M[x] = a[Mx] \geq 0$ for all $x \in \domain{a_M}$.
           \item Closeness follows from the closeness of $a$:
        Let $\{x_n\}_n$ be a sequence in $\domain{a_M}$ with $x_n \underset{t}{\to} x$. That means $x_n \in \domain{a_M}$, $x_n \to x$ and $a_M[x_n - x_m] \to 0$ for $n,m \to \infty$. We need to show that $x \in \domain{a_M}$ and $a_M[x_n - x] \to 0$ for $n \to \infty$. It is $a_M[x_n - x_m]=a[M(x_n-x_m)]$, so $a[x_n-M x_m] \to 0$ and $Mx_n \in \domain{a}$ for all $n$. As $M$ is bounded (thus continuous), $Mx_n \to Mx$. By the closeness of $a$, it is $Mx \in \domain{a}$ and $a[Mx_n - Mx] \to 0$ for $n \to \infty$. Thus $x \in \domain{a_M}, Mx_n \to Mx$ and $a_M[x_n - x] \to 0$ for $n \to \infty$. By the closeness of $a$, we get $Mx \in \domain{a}$ and thus $x \in \domain{a_M}$. Thus $a_M$ is closed. \qedhere
        \end{itemize}
    \end{itemize}
\end{proof}

Proof of Lemma \ref{lem:infSumExpect}:

\begin{proof}
    As $A_n \in \predicate$, each $A_n$ is positive. Thus $\{\sum_{n=0}^N A_n \}_N$ with $\sum_{n=0}^N A_n = A_0 \sumform A_1 \sumform ... \sumform A_N$ is an increasing sequence. All sums are finite and thus in $\predicate$ by Definition \ref{def:FormAddMult}.
    By Theorem \ref{thm:omegaCompleteness}, this sequence has a least upper bound/limit $\lim_{N \to \infty} \sum_{n=0}^N A_n \in \predicate$. As convergence in WOT considers sequences of (extended) reals, this limit coincides with $\sum_{n=0}^\infty A_n$ and as the limit is an expectation, $\sum_{n=0}^\infty A_n \in \predicate$.
\end{proof}

Proof of Proposition \ref{prop:ContinuityDistributivityFormSum}:
\begin{proof}
    Let $a, a_n, b, b_n$ be the corresponding form-expectations of $A, A_n, B, B_n$.
    \begin{enumerate}
        \item $\left(\bigvee A_n \right)\sumform B = \bigvee \left(A_n \sumform B\right)$:

        Let $\bigvee A_n = A$.
        By Proposition \ref{prop:orderEquivalence}, $a_n \leq a_{n+1}$ for all $n$ and $\bigvee a_n = a$ by the definition of the limit form. It is $\domain{a} = \{x \in \bigcap_n \domain{a_n} \mid \bigvee a_n[x] < \infty\}$ and $a[x,y] = \bigvee a_n[x,y]$. We show that both sides correspond to the same form and thus to the same operator by Proposition \ref{prop:OpstoForms}:

        The form that corresponds to $\left(\bigvee A_n \right)\sumform B = A \sumform B$ is $c: \domain{c} \times \domain{c} \to \mathbb{C}$ with
        \begin{align*}
            \domain{c} &= \domain{a} \cap \domain{b}\\
             c[x,y] &= a[x,y] + b[x,y]
        \end{align*}for all $x,y \in \domain{c}$.

        The form that corresponds to $\bigvee \left(A_n \sumform B\right)$ is $d: \domain{d} \times \domain{d} \to \mathbb{C}$ with
        \begin{align*}
            \domain{d} &= \{x \in \bigcap_n \domain{a_n} \cap \domain{b} \mid \bigvee a_n[x] < \infty\}\\
             d[x,y] &= \bigvee a_n[x,y] + b[x,y]
        \end{align*}
        for all $x,y \in \domain{d}$.
        Then
        \begin{align*}
        \domain{c} &= \domain{a} \cap \domain{b} = \{x \in \bigcap_n \domain{a_n} \cap \domain{b} \mid \bigvee a_n[x] < \infty\} = \domain{d},\\
        c[x,y] &= a[x,y] + b[x,y] = \bigvee a_n[x,y] + b[x,y] = d[x,y]
        \end{align*} for all $x,y \in \domain{c} = \domain{d}$, so both sides correspond to the same form and thus to the same operator.

        \item $B \sumform \left(\bigvee A_n \right) = \bigvee \left(B \sumform A_n \right)$: Analogous to the previous case.

        \item $\left(\bigvee A_n \right)\sumform \left( \bigvee B_n \right) = \bigvee \left(A_n \sumform B_n \right)$:

        Let $\bigvee A_n = A$ and $\bigvee B_n = B$. It is $\domain{a} = \{x \in \bigcap_n \domain{a_n} \mid \bigvee a_n[x] < \infty\}$, $\domain{b} = \{x \in \bigcap_n \domain{b_n} \mid \bigvee b_n[x] < \infty\}$ and $a[x,y] = \bigvee a_n[x,y], b[x,y] = \bigvee b_n[x,y]$. We show that both sides correspond to the same form and thus to the same operator by Proposition \ref{prop:OpstoForms}:

        The form that corresponds to $\left(\bigvee A_n \right)\sumform \left( \bigvee B_n \right) = A \sumform B$ is $c: \domain{c} \times \domain{c} \to \mathbb{C}$ with
        \begin{align*}
            \domain{c} &= \domain{a} \cap \domain{b}\\
             c[x,y] &= a[x,y] + b[x,y]
        \end{align*}for all $x,y \in \domain{c}$.

        The form that corresponds to $\bigvee \left(A_n \sumform B_n \right)$ is $d: \domain{d} \times \domain{d} \to \mathbb{C}$ with
        \begin{align*}
            \domain{d} &= \{x \in \bigcap_n \domain{a_n}\cap \domain{b_n} \mid \bigvee a_n[x]+ b_n[x] < \infty\} \\
             d[x,y] &= \bigvee a_n[x,y] + b_n[x,y]
        \end{align*}
        for all $x,y \in \domain{d}$.

        As $a_n$ and $b_n$ are monotone non-decreasing sequences of forms, it is $\bigvee a_n[x] + b_n[x] = \bigvee a_n[x] + \bigvee b_n[x]$ for all $x \in \bigcap_n \domain{a_n} \cap \domain{b_n}$.

        Then
        \begin{align*}
            &\domain{c} = \domain{a} \cap \domain{b} = \{x \in \bigcap_n \domain{a_n} \cap \domain{b_n} \mid \bigvee a_n[x] < \infty \land \bigvee b_n[x] < \infty\} = \domain{d},\\
            & c[x,y] = a[x,y] + b[x,y] = \bigvee a_n[x,y] + \bigvee b_n[x,y] = \bigvee a_n[x,y] + b_n[x,y] = d[x,y]
        \end{align*} for all $x,y \in \domain{c} = \domain{d}$, so both sides correspond to the same form and thus to the same operator.

        \item $\formsandwich{M}{\left(\bigvee A_n \right)} = \bigvee \left(\formsandwich{M}{A_n} \right)$:
        Let $\bigvee A_n = A$.
        By Proposition \ref{prop:orderEquivalence}, $a_n \leq a_{n+1}$ for all $n$ and $\bigvee a_n = a$ by the definition of the limit form. It is $\domain{a} = \{x \in \bigcap_n \domain{a_n} \mid \bigvee a_n[x] < \infty\}$ and $a[x,y] = \bigvee a_n[x,y]$. We show that both sides correspond to the same form and thus to the same operator by Proposition \ref{prop:OpstoForms}:

        The form that corresponds to $\formsandwich{M}{\left(\bigvee A_n \right)} = \formsandwich{M}{A}$ is $c: \domain{c} \times \domain{c} \to \mathbb{C}$ with
        \begin{align*}
            \domain{c} &= \{x \in \hilbert \mid Mx \in \domain{a}\}\\
             c[x,y] &= a[Mx, My]
        \end{align*}for all $x,y \in \domain{c}$.

        The form that corresponds to $\bigvee \left(\formsandwich{M}{A_n} \right)$ is $d: \domain{d} \times \domain{d} \to \mathbb{C}$ with
        \begin{align*}
            \domain{d} &= \{x \in \hilbert \mid Mx \in \bigcap_n \domain{a_n} \text{ and } \bigvee a_n[Mx] < \infty\}\\
             d[x,y] &= \bigvee a_n[Mx, My]
        \end{align*}
        for all $x,y \in \domain{d}$.
        Then
        \begin{align*}
            &\domain{c} = \{x \in \hilbert \mid Mx \in \domain{a}\} = \{x \in \hilbert \mid Mx \in \bigcap_n \domain{a_n} \text{ and } \bigvee a_n[Mx] < \infty\} = \domain{d},\\
            & c[x,y] = a[Mx, My] = \bigvee a_n[Mx, My] = d[x,y]
        \end{align*} for all $x,y \in \domain{c} = \domain{d}$, so both sides correspond to the same form and thus to the same operator.
        \item $\formsandwich{M}{(A \sumform B)} = \formsandwich{M}{A} \sumform \formsandwich{M}{B}$:
        The form that corresponds to $\formsandwich{M}{(A \sumform B)}$ is $c: \domain{c} \times \domain{c} \to \mathbb{C}$ with
        \begin{align*}
            \domain{c} &= \{x \in \hilbert \mid Mx \in \domain{a} \cap \domain{b}\}\\
             c[x,y] &= a[Mx, My] + b[Mx, My]
        \end{align*}for all $x,y \in \domain{c}$.

        The form that corresponds to $\formsandwich{M}{A} \sumform \formsandwich{M}{B}$ is $d: \domain{d} \times \domain{d} \to \mathbb{C}$ with
        \begin{align*}
            \domain{d} &= \{x \in \hilbert \mid Mx \in \domain{a} \} \cap \{x \in \hilbert \mid Mx \in \domain{b}\}\\
             d[x,y] &= a[Mx, My] + b[Mx, My]
        \end{align*}
        for all $x,y \in \domain{d}$.
        Then
        \begin{align*}
            \domain{c} &= \{x \in \hilbert \mid Mx \in \domain{a} \cap \domain{b}\} = \domain{d},\\
            c[x,y] &= a[Mx, My] + b[Mx, My] = d[x,y]
        \end{align*} for all $x,y \in \domain{c} = \domain{d}$, so both sides correspond to the same form and thus to the same operator. \qedhere
    \end{enumerate}
\end{proof}

Proof of Lemma \ref{lem:distrInfSum}:
\begin{proof}
    Similar to the proof of Lemma \ref{lem:infSumExpect}, we consider finite sums first, i.e., $\sum_{n=0}^N A_n$. By Proposition \ref{prop:ContinuityDistributivityFormSum}, it is $\formsandwich{M}{\sum_{n=0}^N A_n} = \sum_{n=0}^N \formsandwich{M}{A_n}$ for all $N$.

    By Theorem \ref{thm:omegaCompleteness}, there is a least upper bound of the sequence $\{\sum_{n=0}^N A_n\}_{N}$, which is exactly $\sum_{n=0}^\infty A_n$. Using Proposition \ref{prop:ContinuityDistributivityFormSum}, it is \begin{align*}
        \formsandwich{M}{(\sum_{n=0}^\infty A_n)} =& \text{ }\formsandwich{M}{\left(\bigvee_{N} \sum_{n=0}^N A_n\right)} \\
        =& \bigvee_N \formsandwich{M}{\sum_{n=0}^N A_n} \\
        =& \bigvee_N \sum_{n=0}^N \formsandwich{M}{A_n} \\
        =& \sum_{n=0}^\infty \formsandwich{M}{A_n}\qedhere
    \end{align*}
\end{proof}

Proof of Proposition \ref{prop:BoundedAddition}:
\begin{proof}
    Let $A,B \in \predicate$ be bounded expectations over $\hilbert$, $\alpha\geq 0$ and $M$ be a bounded operator on $\hilbert$.
    As $A,B$ are bounded, their $\infty$-restrictions $A_\infty, B_\infty$ are bounded and thus defined on $\hilbert$. Thus the forms $a,b$ that correspond to $A_\infty, B_\infty$ are defined on $\hilbert$ as well (and bounded by \cite[VI, Thm. 2.7]{kato}).
    We have $a[x,y] = \bra{x} A \ket{y}$ and $b[x,y] = \bra{x} B \ket{y}$ for all $x,y \in \hilbert$. Then
    \begin{itemize}
        \item Addition:
         $\domain{c} = \hilbert$ with $c[x,y] = a[x,y] + b[x,y] = \bra{x} A+B \ket{y}$ for all $x,y \in \hilbert$.
        \item Scalar multiplication:
         $\domain{\alpha \cdot a} = \hilbert$ with $(\alpha \cdot a)[x,y] = \alpha \cdot a[x,y] = \bra{x} \alpha \cdot A \ket{y}$ for all $x,y \in \hilbert$.
        \item Application of bounded operator:
        $\domain{a_M} = \hilbert$ with $a_M[x,y] = a[Mx, My] = \bra{Mx} A \ket{My} = \bra{x} M^\dagger A M \ket{y}$ for all $x,y \in \hilbert$. \qedhere
    \end{itemize}
\end{proof}

Proof of Proposition \ref{prop:orderproperties}:
\begin{proof}
    Assume $A,B,C \in \predicate$ and $M$ bounded.
    \begin{itemize}
        \item \textbf{Antisymmetric:} $A \loewner B \land B \loewner A \Rightarrow A=B$.

        $A \loewner B$ implies $\domain{\half{B_\infty}} \subseteq \domain{\half{A_\infty}}$ and the other way round, so $\domain{\half{A_\infty}} = \domain{\half{B_\infty}}$. Also $\norm{\half{A_\infty} \psi}^2 = \norm{\half{B_\infty} \psi}^2$ for all $\ket{\psi} \in \domain{\half{B_\infty}}$, so $A=B$.

        \textbf{Reflexive:} It is $A \loewner A$.

        Clearly $\domain{\half{A_\infty}} \subseteq \domain{\half{A_\infty}}$ and $\norm{\half{A_\infty} \psi}^2 \leq \norm{\half{A_\infty} \psi}^2$, so $A \loewner A$.

        \textbf{Transitive:} $A \loewner B \land B \loewner C \Rightarrow A \loewner C$.

        $A \loewner B$ means $\domain{\half{B_\infty}} \subseteq \domain{\half{A_\infty}}$ and $\norm{\half{A_\infty} \psi}^2 \leq \norm{\half{B_\infty} \psi}^2$ for all $\ket{\psi} \in \domain{\half{B_\infty}}$.
        $B \loewner C$ means $\domain{\half{C_\infty}} \subseteq \domain{\half{B_\infty}}$ and $\norm{\half{B_\infty} \psi}^2 \leq \norm{\half{C_\infty} \psi}^2$ for all $\ket{\psi} \in \domain{\half{C_\infty}}$.

        So we have $\domain{\half{C_\infty}} \subseteq \domain{\half{B_\infty}} \subseteq \domain{\half{A_\infty}}$ (in particular $\domain{\half{C_\infty}}\subseteq \domain{\half{A_\infty}}$) and $\norm{\half{A_\infty} \psi}^2 \leq \norm{\half{B_\infty} \psi}^2 \leq \norm{\half{C_\infty} \psi}^2$ for all $\ket{\psi} \in \domain{\half{C_\infty}}$. Thus $A \loewner C$.

        \item Let $a,b,c$ be the forms that correspond to $A_\infty, B_\infty, C_\infty$ and $A \loewner B$. Then $\domain{b} = \domain{\half{B_\infty}} \subseteq \domain{\half{A_\infty}} = \domain{a}$ and $\norm{\half{A_\infty} \psi}^2 \leq \norm{\half{B_\infty} \psi}^2$ for all $\ket{\psi} \in \domain{\half{B_\infty}}$.

        It is $\domain{\half{(B \sumform C)_\infty}} = \domain{b} \cap \domain{c} \subseteq \domain{a} \cap \domain{c} = \domain{\half{(A \sumform C)_\infty}}$.
        For all $\ket{\psi} \in \domain{\half{(B \sumform C)_\infty}}$ it is
        \begin{align*}
            \norm{\half{(A \sumform C)_\infty} \psi}^2 &= a[\psi]+ c[\psi] \leq b[\psi] + c[\psi] = \norm{\half{(B \sumform C)_\infty} \psi}^2
        \end{align*}
        Thus $A \sumform C \loewner B \sumform C$.

        \item Let $a,b$ be the forms that correspond to $A_\infty, B_\infty$ and $A \loewner B$. That means $a\leq b$, i.e., $\domain{b} \subseteq \domain{a}$ and $a[x] \leq b[x]$ for all $x \in \domain{b}$ by Proposition \ref{prop:orderEquivalence}.
        Then $\domain{b_M} = \{ \psi \in \hilbert \mid M\psi \in \domain{b}\} \subseteq \{ \psi \in \hilbert \mid M\psi \in \domain{a}\} = \domain{a_M}$.
        For all $\psi \in \domain{b_M}$ it is $a_M[\psi] = a[M\psi] \leq b[M\psi] = b_M[\psi]$, thus $a_M \leq b_M$ and thus $\formsandwich{M}{A} \loewner \formsandwich{M}{B}$ by Proposition \ref{prop:orderEquivalence}.\qedhere

    \end{itemize}
\end{proof}

Proof of Lemma \ref{lem:expectAlternativeDef}:
\begin{proof}
    In this proof we use $\expect{A}{\rho}$ to denote the expectation value defined via the spectral measure in Definition \ref{def:expectationValue} and $\expect{A_1}{\rho}$ and $\expect{A_2}{\rho}$ to denote the two equations in this lemma.

    We first show the statement for pure states $\rho = \ket{\psi}\bra{\psi}$.

    It is $\domain{\half{A_\infty}}=\{ \ket{\psi} \mid \int \abs{\lambda} d (tr(E_\lambda \ket{\psi}\bra{\psi})) < \infty\}$ by \cite[Sec. 10.2, Eq. 10.7]{unboundedSelfAdjointBook}.

    If $\ket{\psi} \not \in \domain{\half{A_\infty}} = \domain{a}$, then $\int \abs{\lambda} d (tr(E_\lambda \ket{\psi}\bra{\psi})) = \infty$ and thus $\expect{A}{\ket{\psi}\bra{\psi}} = \infty$ in all cases.

    If $\ket{\psi} \in \domain{\half{A_\infty}}$, then $\int \abs{\lambda} d (tr(E_\lambda \ket{\psi}\bra{\psi}))= \int \abs{\lambda} d (\bra{\psi}E_\lambda \ket{\psi})< \infty$. First we show equivalence between $\expect{A}{\ket{\psi}\bra{\psi}}$ and $\expect{A_1}{\ket{\psi}\bra{\psi}}$:

    We define $I_n(x)=\begin{cases}
        x & \text{if } x \leq n \\
        0 & \text{otherwise}
    \end{cases}$ for all $n >0$. Then $I_n(A_\infty),I_n(\half{A_\infty})$ is a bounded operator for all $n >0$ and $I_n(A_\infty) \to A_\infty, I_n(\half{A_\infty}) \to \half{A_\infty}$ strongly.
    \allowdisplaybreaks
    \begin{align*}
        & \norm{\half{A_\infty} \ket{\psi}}^2 && (I_n(\half{A_\infty}) \to \half{A_\infty}\text{ strongly})\\
        =& \lim_{n\to \infty} \norm{I_n(\half{A_\infty}) \ket{\psi}}^2 \\
        =& \lim_{n\to \infty} \norm{\half{(I_{n^2}(A_\infty))} \ket{\psi}}^2 && (\text{def. norm})\\
        =& \lim_{n\to \infty} \bra{\psi} (\half{(I_{n^2}(A_\infty))})^2 \ket{\psi} \\
        =& \lim_{n\to \infty} \bra{\psi} I_{n^2}(A_\infty) \ket{\psi} \\
        =& \lim_{n\to \infty} \int I_{n^2}(\lambda) d (\bra{\psi}E_\lambda \ket{\psi}) && (\text{dominated convergence theorem})\\
        =& \int \lim_{n\to \infty} I_{n^2}(\lambda) d (\bra{\psi}E_\lambda \ket{\psi}) \\
        =& \int \lambda d (\bra{\psi}E_\lambda \ket{\psi}) \\
        =& \int \lambda d (tr(E_\lambda \ket{\psi}\bra{\psi}))
    \end{align*}
    The dominated convergence theorem applies as $\abs{I_{n^2}(x)} \leq x$ for all $n>0, x$ because $\lim_{n \to \infty} I_{n^2}(x) = x$ for all $x$. Also $\int \abs{\lambda} d (\bra{\psi}E_\lambda \ket{\psi}) < \infty$ because $\ket{\psi} \in \domain{\half{A_\infty}}$.

    The equivalence between $\expect{A_1}{\ket{\psi}\bra{\psi}}$ and $\expect{A_2}{\ket{\psi}\bra{\psi}}$ follows from the definition of the form $a$ that corresponds to $A_\infty$ as $a[\ket{\psi}] = \norm{\half{A_\infty} \ket{\psi}}^2$ for all $\ket{\psi} \in \domain{a}$.

    Now we consider the case of mixed states $\rho = \sum_i p_i \ket{\psi_i}\bra{\psi_i}$ with $p_i > 0$.
    If any $\ket{\psi_i}$ is not in $\domain{\half{A_\infty}}$, then $\int \abs{\lambda} d (tr(E_\lambda \ket{\psi_i}\bra{\psi_i})) = \infty$ and thus $\expect{A}{\rho} = \infty$ in all cases.

   If $\ket{\psi_i} \in \domain{\half{A_\infty}} = \domain{a}$ for all $i$, then by \cite[Eq. 3.3]{ExpectationValuesKraus}
   \begin{align*}
   \expect{A}{\rho} = \sum_i p_i \expect{A}{\ket{\psi_i}\bra{\psi_i}} = \sum_i p_i \int \lambda d (tr(E_\lambda \ket{\psi_i}\bra{\psi_i}))
   \end{align*}
   As both $\expect{A_1}{\rho}$ and $\expect{A_2}{\rho}$ are defined as $\sum_i p_i \expect{A_1}{\ket{\psi_i}\bra{\psi_i}}$ and $\sum_i p_i \expect{A_2}{\ket{\psi_i}\bra{\psi_i}}$ respectively, and we have already shown the equivalence for pure states, we are done.
\end{proof}

Proof of Lemma \ref{lem:orderExpectation}:
\begin{proof}
    $\Rightarrow$: Assume $A \loewner B$. Then $\domain{\half{B_\infty}} \subseteq \domain{\half{A_\infty}}$ and $\norm{\half{A_\infty} \psi}^2 \leq \norm{\half{B_\infty} \psi}^2$ for all $\ket{\psi} \in \domain{\half{B_\infty}}$.
    Let $\rho \in \density$ be a density operator with decomposition $\rho = \sum_i p_i \ket{\psi_i}\bra{\psi_i}$.
    If there is any $\ket{\psi_i} \not \in \domain{\half{B_\infty}}$, then $\expect{B}{\rho} = \infty$ and thus $\expect{A}{\rho} \leq \expect{B}{\rho}$.
    Otherwise, it is
    \begin{align*}
        \expect{A}{\rho} =& \sum_i p_i \expect{A}{\ket{\psi_i}\bra{\psi_i}} = \sum_i p_i \norm{\half{A_\infty} \ket{\psi_i}}^2 \\
        \leq& \sum_i p_i \norm{\half{B_\infty} \ket{\psi_i}}^2 = \sum_i p_i \expect{B}{\ket{\psi_i}\bra{\psi_i}} = \expect{B}{\rho}
    \end{align*}
    $\Leftarrow$: Assume that for all $\rho \in \density$ it is $\expect{A}{\rho} \leq \expect{B}{\rho}$.
    Let $\ket{\psi} \in \domain{\half{B_\infty}}$ and consider the density operator $\rho = \ket{\psi}\bra{\psi}$.
    Then it is \begin{align*}
        \expect{A}{\ket{\psi}\bra{\psi}} \leq \expect{B}{\ket{\psi}\bra{\psi}}
    \end{align*}
    by assumption.
    As $\ket{\psi} \in \domain{\half{B_\infty}}$, it is $\expect{B}{\ket{\psi}\bra{\psi}} = \norm{\half{B_\infty} \ket{\psi}}^2 < \infty$, thus $\expect{A}{\ket{\psi}\bra{\psi}} < \infty$ and thus $\ket{\psi} \in \domain{\half{A_\infty}}$, i.e., $\domain{\half{B_\infty}} \subseteq \domain{\half{A_\infty}}$.
    Furthermore, it is \begin{align*}
        \norm{\half{A_\infty} \ket{\psi}}^2 = \expect{A}{\ket{\psi}\bra{\psi}} \leq \expect{B}{\ket{\psi}\bra{\psi}} = \norm{\half{B_\infty} \ket{\psi}}^2
    \end{align*}
    Overall, we have $\domain{\half{B_\infty}} \subseteq \domain{\half{A_\infty}}$ and $\norm{\half{A_\infty} \psi}^2 \leq \norm{\half{B_\infty} \psi}^2$ for all $\ket{\psi} \in \domain{\half{B_\infty}}$, thus $A \loewner B$.
\end{proof}

Proof of Proposition \ref{prop:expectationProperties}:
\begin{proof}
    \textbf{Linearity:}
    First we consider the sum of two expectations. Let $A,B \in \predicate$ and $\rho= \sum_i p_i \ket{\psi_i}\bra{\psi_i} \in \density$.
    We distinguish two cases:
    \begin{enumerate}
        \item There is at least one $\ket{\psi_i} \not \in \domain{\half{(A \sumform B)_\infty}}$.
        If there is any $\ket{\psi_i} \not \in \domain{\half{(A \sumform B)_\infty}}$, then $\expect{A\sumform B}{\rho} = \infty$. $\ket{\psi_i} \not \in \domain{\half{(A\sumform B)_\infty}} = \domain{\half{A_\infty}}\cap \domain{\half{B_\infty}}$ implies $\ket{\psi_i} \not \in \domain{\half{A_\infty}}$ or $\ket{\psi_i} \not \in \domain{\half{B_\infty}}$ and thus $\expect{A}{\rho} = \infty$ or $\expect{B}{\rho} = \infty$. Thus $\expect{A}{\rho} + \expect{B}{\rho} = \infty$.
        \item All $\ket{\psi_i} \in \domain{\half{(A \sumform B)_\infty}}$.
        We consider the forms $a,b$ that correspond to $A_\infty, B_\infty$ and $c$ for $ (A \sumform B)_\infty$.
        Then it is $\ket{\psi_i} \in \domain{a}$ and $\ket{\psi_i} \in \domain{b}$ for all $i$ and thus
    \begin{align*}
        \expect{A\sumform B}{\rho} =& \sum_i p_i \expect{A\sumform B}{\ket{\psi_i}\bra{\psi_i}} = \sum_i p_i c[\psi_i] = \sum_i p_i \left( a[\psi_i] + b[\psi_i] \right) \\
        =& \sum_i p_i a[\psi_i] + \sum_i p_i b[\psi_i] = \expect{A}{\rho} + \expect{B}{\rho}
    \end{align*}
    \end{enumerate}

    The linearity w.r.t. states follow from the definition (more details in the proof of Lemma \ref{lem:expectAlternativeDef}).

    \textbf{Scalar multiplication:}
    For states it follows again from the definition of the expectation value (more details in the proof of Lemma \ref{lem:expectAlternativeDef}).

    For the expectation, we distinguish two cases again. Let $A \in \predicate$ with corresponding form $a$ and $cA$ the form corresponding to $cA$ and $\rho= \sum_i p_i \ket{\psi_i}\bra{\psi_i} \in \density$.
    \begin{enumerate}
        \item There is at least one $\ket{\psi_i} \not \in \domain{\half{A_\infty}} = \domain{a} = \domain{ca} = \domain{\half{cA_\infty}}$.
        Then $\expect{cA}{\rho} = \infty = \expect{A}{\rho}$.
        \item All $\ket{\psi_i} \in \domain{\half{A_\infty}} = \domain{\half{cA_\infty}}$.
        Then it is
    \begin{align*}
        \expect{cA}{\rho} =& \sum_i p_i \cdot \expect{cA}{\ket{\psi_i}\bra{\psi_i}}
        = \sum_i p_i \cdot ca[\psi_i]  = \sum_i p_i c\cdot a[\psi_i] \\
        =& \sum_i p_i \cdot c \cdot \expect{A}{\ket{\psi_i}\bra{\psi_i}} \\
        =& \text{ } c \cdot \sum_i p_i \cdot \expect{A}{\ket{\psi_i}\bra{\psi_i}} = c \cdot \expect{A}{\rho}
    \end{align*}
    \end{enumerate}

    \textbf{Continuity:}
    We start with the sequence of expectations:
    Let $\{A_i\}_i \subseteq \predicate$ be an increasing chain of expectations, $\rho \in \density$ and $A = \bigvee_i A_i$ (existence follows by Theorem \ref{thm:omegaCompleteness}).
    Let $\rho = \sum_j p_j \ket{\psi_j}\bra{\psi_j}$ be a decomposition of $\rho$ with $p_j > 0$.
    We distinguish two cases:
    \begin{enumerate}
        \item There is at least one $\ket{\psi_j} \not \in \domain{\half{A_\infty}} = \domain{a}$. Then $\expect{A}{\rho} = \infty$. It also means that $a[\psi_j] = \lim_i a_i[\psi_j] = \infty$ as otherwise $\ket{\psi_j} \in \domain{a}$.
        If $\ket{\psi_j}$ not in $\domain{a_i}$ for at least one $i$, then $\expect{A_i}{\rho} = \infty$ and thus $\bigvee_i \expect{A_i}{\rho} = \infty$. Otherwise, it is $\ket{\psi_j} \in \domain{a_i}$ for all $i$ and thus $\expect{A_i}{\ket{\psi_j}\bra{\psi_j}} = a_i[\psi_j] \not = \infty$ for all $i$. As $\lim_i a_i[\psi_j] = \infty$, it is $\bigvee_i p_j \expect{A_i}{\ket{\psi_j}\bra{\psi_j}} = \infty$ and thus $\bigvee_i \expect{A_i}{\rho} \geq \bigvee_i p_j \expect{A_i}{\ket{\psi_j}\bra{\psi_j}} = \infty$.
        \item All $\ket{\psi_j} \in \domain{\half{A_\infty}}$.
        Then it is
    \begin{align*}
        \expect{A}{\rho} \overset{Lem. \ref{lem:expectAlternativeDef}}{=}& \sum_j p_j a[\psi_j] \\
        \overset{Def}{=}& \sum_j p_j \lim_i a_i[\psi_j]\\
         =& \sum_j \lim_i p_j a_i[\psi_j]\\
        \overset{*}{=}& \lim_i \sum_j p_j  a_i[\psi_j] \\
        \overset{Lem. \ref{lem:expectAlternativeDef}}{=}& \lim_i \sum_j p_j \expect{A_i}{\ket{\psi_j}\bra{\psi_j}} \\
        \overset{Incr. Seq.}{=}& \bigvee_i \sum_j p_j \expect{A_i}{\ket{\psi_j}\bra{\psi_j}} \\
        \overset{Lem. \ref{lem:expectAlternativeDef}}{=}& \bigvee_i \expect{A_i}{\rho}
    \end{align*}
    where the step marked with $*$ follows by the monotone convergence theorem as supremum and limit coincide here because $a_i[\psi_j]\leq a_{i+1}[\psi_j]$ for all $i,j$.
    \end{enumerate}

    Now we consider the sequence of states:
    Let $\{ \rho_i \}_i \subseteq \density$ be an increasing chain of density operators, $A \in \predicate$ and $\rho = \bigvee_i \rho_i$ (existence follows by \cite[Prop. 8.2.1]{YingPredicateTranserformerSemantics}).
    To show: $\expect{A}{\rho} = \bigvee_i \expect{A}{\rho_i}$.

    For each $n$, define $A_n = \int_0^n \lambda d E_\lambda$ where $E_\lambda$ is the spectral measure of $A_\infty$. Then $A_n$ is a bounded operator for all $n$ and $A_n \to A_\infty$ strongly. We have $\expect{A_n}{\rho} = tr(A_n \rho)$ and $\expect{A_n}{\rho_i} = tr(A_n \rho_i)$ for all $i,n$. This will be denoted by (1).
    As $\rho_i \to \rho$ in trace norm, it is $tr(A_n \rho_i) \to tr(A_n \rho)$ for all $n$, denoted by (2)

    As the expectation value is continuous in the expectation, it is $\expect{A}{\rho} = \bigvee_n \expect{A_n}{\rho}$ and $\expect{A}{\rho_i} = \bigvee_n \expect{A_n}{\rho_i}$ for all $i$. This will be denoted by (3). Overall we have
    \begin{align*}
        \expect{A}{\bigvee_i \rho_i} \overset{(3)}{=}& \bigvee_n \expect{A_n}{\bigvee_i \rho_i} \\
        \overset{(1)}{=}& \bigvee_n tr(A_n \bigvee_i \rho_i) \\
        \overset{(2)}{=}& \bigvee_n \bigvee_i tr(A_n \rho_i) \\
        =& \bigvee_i \bigvee_n tr(A_n \rho_i) \\
        \overset{(1)}{=}& \bigvee_i \bigvee_n \expect{A_n}{\rho_i} \\
        \overset{(3)}{=}& \bigvee_i \expect{A}{\rho_i}
    \end{align*}
    Exchanging the suprema is possible as both sequences (fixed i, varying n and fixed n, varying i) are increasing.
\end{proof}

Proof of Lemma \ref{lem:infSumLinExpectation}:
\begin{proof}
    We show the first statement. We consider finite sums $\sum_{n=0}^N A_n$, then $\sum_n A_n$ is $\bigvee_N \sum_{n=0}^N A_n$.
    \begin{align*}
        \expect{\sum_n A_n}{\rho} =&\text{ } \expect{\bigvee_N \sum_{n=0}^N A_n}{\rho} \\
        =& \bigvee_N \expect{\sum_{n=0}^N A_n}{\rho} && (\text{continuity of the expectation})\\
        =& \bigvee_N \sum_{n=0}^N \expect{A_n}{\rho} && (\text{linearity of the expectation})\\
        =& \sum_n \expect{A_n}{\rho}
    \end{align*}

    The second statement follows analogously with finite sums $\sum_{n=0}^N \rho_n$.
\end{proof}

Proof of Proposition \ref{prop:boundedOpinExpectation}:
\begin{proof}
    Let $a$ be the form-expectation of $A$ and $a_M$ the one of $\formsandwich{M}{A}$.
    Let $\rho = \sum_i p_i \ket{\psi_i}\bra{\psi_i}$ be a decomposition with $p_i > 0$.

    If $\ket{\psi_i} \not \in \domain{a_M}$ for at least one $i$, then $\expect{\formsandwich{M}{A}}{\rho} = \infty$. As $\ket{\psi_i} \not \in \domain{a_M} = \{ \psi \in \hilbert \mid M\psi \in \domain{a}\}$, it is $M\ket{\psi_i} \not \in \domain{a}$ and thus $\expect{A}{M\rho M^\dagger} = \infty$. Thus $\expect{\formsandwich{M}{A}}{\rho} = \expect{A}{M\rho M^\dagger}$.

    Otherwise, $\expect{\formsandwich{M}{A}}{\rho} = \sum_i p_i a_M[\psi_i] = \sum_i p_i a[M\psi_i] = \expect{A}{M \rho M^\dagger}$.
\end{proof}

\section{Details of Section \ref{sec:programs} - A Quantum Programming Language with Rewards}
\label{sec:appendix_pl}

Proof of Proposition \ref{prop:supCPmaps}:
\begin{proof}
    We show that each $g_i$ corresponds to a positive, symmetric and closed form $a_i$ and then we can write $g$ as a least upper bound of the $a_i$ and thus it is also a positive, symmetric and closed form by Proposition \ref{prop:OpstoForms} and Theorem \ref{thm:omegaCompleteness}.

    It is $a_i[\ket{\psi}] = g_i(\ket{\psi}\bra{\psi})$ for all $\ket{\psi} \in \hilbert$ (using $\domain{a_i} = \hilbert$ as $g_i$, and thus $a$, is bounded).
    Positivity and symmetry follow immediately from the co-domain of $g_i$.
    It remains to show that $a_i$ is closed. Let $\ket{\psi_n}$ be a sequence in $\hilbert$ such that $\lim_{n \to \infty} \ket{\psi_n} = \ket{\psi}$ and $\lim_{n,m \to \infty}a_i[\ket{\psi_n} - \ket{\psi_m}] = 0$. As $a_i$ is bounded, $\ket{\psi} \in \hilbert = \domain{a_i}$. It remains to show that $\lim_{n\to\infty} a_i[\ket{\psi_n} - \ket{\psi}] = 0$:
     \begin{align*}
        & \lim_{n\to\infty}\ket{\psi_n} = \ket{\psi}\\
        \Rightarrow & \lim_{n\to\infty} \ket{\psi_n} - \ket{\psi} = 0\\
        \Rightarrow & \lim_{n\to \infty} (\ket{\psi_n} - \ket{\psi})(\ket{\psi_n} - \ket{\psi})^\dagger = 0\\
        \Rightarrow & \lim_{n\to \infty} g_i((\ket{\psi_n} - \ket{\psi})(\ket{\psi_n} - \ket{\psi})^\dagger) = g_i(0)\\
        \Rightarrow & \lim_{n\to \infty} a_i[\ket{\psi_n} - \ket{\psi}] = 0
    \end{align*}
    because $g_i$ is continuous as it is bounded and positive.
\end{proof}

Proof of Proposition \ref{prop:semanticsProperties}:
\begin{proof}
    \begin{enumerate}
        \item See \cite[Lem. 5.1]{floydHoareLogic} as $\rewardR$ behaves like $\skipbf$.
        \item See \cite[Prop. 5.2]{floydHoareLogic} as $\rewardR$ behaves like $\skipbf$.
        \item We already know that $\semanticsD{S}$ is linear and trace non-increasing. Now we do an induction over the structure to show that it can be represented in Kraus form, which implies complete positivity.
        \begin{itemize}
            \item For $\skipbf, \qzero, \Uq$ and $\rewardR$, the semantics have Kraus form directly.
            \item For $\concat$, if $\semanticsD{S_1}$ and $\semanticsD{S_2}$ can be represented in Kraus form with operators $E_i$ and $F_j$ respectively, then $\semanticsD{S_2}(\semanticsD{S_1}(\rho)) = \sum_{i,j} F_j E_i \rho E_i^\dagger F_j^\dagger$ is a Kraus representation of $\semantics{\concat}$.
            \item For $\measure$, if $\semanticsD{S_m}$ can be represented in Kraus form with operators $E_{m,i}$ for all $m$, then we obtain a Kraus representation by
            \begin{equation*}
            \semanticsD{\measure}(\rho) = \sum_{m,i} E_{m,i} M_m \rho M_m^\dagger E_{m,i}^\dagger.
            \end{equation*}

            \item For $\while$, we define $\mathcal{M}_0(\rho) = M_0 \rho M_0^\dagger$ and $\mathcal{M}_1(\rho) = M_1 \rho M_1^\dagger$ for all $\rho \in \density$. Then we define
            \begin{align*}
                W_0 =& \text{ }\Omega = 0\\
                W_{i+1} =& \text{ }\semanticsD{\textbf{if }M[\overline{q}]=1 \textbf{ then } S; W_i} \\
                =& \text{ }\mathcal{M}_1 \circ \semanticsD{S} \circ W_i + \mathcal{M}_0
            \end{align*}
            Intuitively, $W_i$ corresponds to the semantics of the program that executes at most $i$ iterations of the loop and then terminates.
            Then we can show $W_i = \sum_{k=0}^{i-1} (\mathcal{M}_1 \circ \semanticsD{S})^k \circ \mathcal{M}_0$ by induction over $i$.
            \begin{itemize}
                \item $i=1$: $W_1 = \mathcal{M}_1 \circ \semanticsD{S} \circ W_0 + \mathcal{M}_0  = \mathcal{M}_0 = \sum_{k=0}^0 (\mathcal{M}_1 \circ \semanticsD{S})^k \circ \mathcal{M}_0$.
                \item $i \to i+1$: We have
            \begin{align*}
                W_{i+1} =& \text{ }\mathcal{M}_1 \circ \semanticsD{S} \circ W_i + \mathcal{M}_0 \\
                =& \text{ }\mathcal{M}_1 \circ \semanticsD{S} \circ \sum_{k=0}^{i-1} (\mathcal{M}_1 \circ \semanticsD{S})^k \circ \mathcal{M}_0 + \mathcal{M}_0 \\
                 =& \sum_{k=1}^{i} (\mathcal{M}_1 \circ \semanticsD{S})^k \circ \mathcal{M}_0 + \mathcal{M}_0 \\
                 =& \sum_{k=0}^{i} (\mathcal{M}_1 \circ \semanticsD{S})^k \circ \mathcal{M}_0
             \end{align*}
             \end{itemize}
             As $\mathcal{M}_1$ and $\semanticsD{S}$ can be represented in Kraus form, $W_i$ can be represented in Kraus form as well by the previous cases, i.e., it is completely positive.
             We can show that $W_{i+1}-W_i$ can be represented in Kraus form as well:
             \begin{align*}
                 W_{i+1} - W_i =& \sum_{k=0}^{i} (\mathcal{M}_1 \circ \semanticsD{S})^k \circ \mathcal{M}_0 - \sum_{k=0}^{i-1} (\mathcal{M}_1 \circ \semanticsD{S})^k \circ \mathcal{M}_0 \\
                 =& \text{ }(\mathcal{M}_1 \circ \semanticsD{S})^i \circ \mathcal{M}_0
             \end{align*}
             which can be represented in Kraus form by the previous cases as it is finite concatenation of Kraus-representable maps.

             Then we have $\bigvee_{i=0}^\infty W_i = \sum_{i=0}^{\infty} W_{i+1}-W_i$ where each summand $W_{i+1}-W_i$ can be represented in Kraus form, which implies that $\sum_{i=0}^\infty W_{i+1}-W_i$ can be represented in Kraus form as well by taking the union of the Kraus operators for all $i$ \cite{isabelleproofKraus}.

         \end{itemize}

         \item $\semanticsR{S}(\rho)$ is linear in $\rho$ can be shown by induction over the structure of $S$. Let $\rho, \sigma \in \density$ and $\alpha, \beta \in \C$ such that $\alpha \rho, \beta \sigma, \alpha \rho + \beta \sigma \in \density$.
        \begin{itemize}
            \item For $\skipbf, \qzero, \Uq$ it follows immediately from the definition ($=0$ always).
            \item For $\rewardR$, we have
            \begin{align*}
                & \text{ }\semanticsR{\rewardR}(\alpha \rho + \beta \sigma) \\
                =&  \text{ }c \cdot tr(\alpha \rho + \beta \sigma) \\
                =& \text{ }c\cdot (\alpha tr(\rho) + \beta tr(\sigma)) \\
                = & \text{ }\alpha \cdot \semanticsR{\rewardR}(\rho) + \beta \cdot \semanticsR{\rewardR}(\sigma)
            \end{align*}
            as the trace is linear.
            \item For $\concat$, we have
            \begin{align*}
                &\text{ }\semanticsR{\concat}(\alpha \rho + \beta \sigma) \\
                =& \text{ }\semanticsR{S_2}(\semanticsD{S_1}(\alpha \rho + \beta \sigma)) + \semanticsR{S_1}(\alpha \rho + \beta \sigma) \\
                =& \text{ }\semanticsR{S_2}(\alpha \semanticsD{S_1}(\rho) + \beta \semanticsD{S_1}( \sigma)) + \semanticsR{S_1}(\alpha \rho + \beta \sigma) \\
                =& \text{ }\alpha\semanticsR{S_2}( \semanticsD{S_1}(\rho)) + \beta \semanticsR{S_2}(\semanticsD{S_1}( \sigma)) + \alpha \semanticsR{S_1}(\rho) + \beta \semanticsR{S_1}(\sigma) \\
                =& \text{ }\alpha \cdot \semanticsR{S_2}(\semanticsD{S_1}(\rho),\semanticsR{S_1}(\rho)) + \beta \cdot  \semanticsR{S_2}(\semanticsD{S_1}( \sigma),\semanticsR{S_1}(\sigma)) \\
                =& \text{ }\alpha \cdot \semanticsR{\concat}(\rho) + \beta \cdot \semanticsR{\concat}(\sigma)
            \end{align*}
            \item For $\measure$, we have
            \begin{align*}
                &\semanticsR{\measure}(\alpha \rho + \beta \sigma) \\
                =& \sum_m \semanticsR{S_m}(M_m (\alpha \rho + \beta \sigma) M_m^\dagger) \\
                =& \sum_m \semanticsR{S_m}(\alpha M_m \rho M_m^\dagger + \beta M_m \sigma M_m^\dagger) \\
                =& \sum_m [\alpha\semanticsR{S_m}(M_m \rho M_m^\dagger) + \beta\semanticsR{S_m}(M_m \sigma M_m^\dagger)]\\
                =&\text{ } \alpha\sum_m\semanticsR{S_m}(M_m \rho M_m^\dagger) + \beta\sum_m\semanticsR{S_m}(M_m \sigma M_m^\dagger)\\
                =& \text{ } \alpha\cdot\semanticsR{\measure}(\rho) + \beta\cdot\semanticsR{\measure}(\sigma)
            \end{align*}
            \item For $\while$, we have
            \begin{align*}
                &\semanticsR{\while}(\alpha \rho + \beta \sigma) \\
                =& \lim_{i \to \infty} \semanticsR{\while^i}(\alpha \rho + \beta \sigma) \\
                =& \lim_{i \to \infty} \alpha\semanticsR{\while^i}(\rho) + \beta\semanticsR{\while^i}(\sigma) \\
                =& \text{ }\alpha\cdot\semanticsR{\while}(\rho) + \beta\cdot\semanticsR{\while}(\sigma)
            \end{align*}
            by induction hypothesis (for $\while^i$) and the fact that limits preserve linearity for positive sequences.
            \end{itemize}
            \item $a[\ket{\psi}] := \semanticsR{S}(\ket{\psi} \bra{\psi})$ for all $\ket{\psi} \in \domain{a} = \{\ket{\psi} \in \hilbert \mid \semanticsR{S}(\ket{\psi} \bra{\psi}) < \infty \}$ is a symmetric, positive and closed form. Positivity follows immediately from the previous points and symmetry follows from the co-domain of $\semanticsR{S}$. We show closeness by induction over the structure of $S$:
            \begin{itemize}
                \item For $\skipbf, \qzero$ and $\Uq$ is $a$ exactly the zero form, which is closed.

                \item For $\rewardR$, we distinguish between $c < \infty$ and $c = \infty$. If $c < \infty$, then $a$ is defined on $\hilbert$ and is closed (e.g., by Proposition \ref{prop:OpstoForms}as it corresponds to the self-adjoint operator $c\identityOp$).
                If $c = \infty$, then $a$ is the infinity form that is zero on the zero vector and $\infty$ everywhere else, which is also closed.

                \item For $\concat$, let $a_1$ and $a_2$ be the (closed, positive, symmetric) forms corresponding to $S_1$ and $S_2$ respectively. We know that $\semanticsD{S_1}$ has a Kraus representation by the previous points and thus we can write $\semanticsD{S_1}(\ket{\psi}\bra{\psi}) = \sum_i E_i \ket{\psi}\bra{\psi} E_i^\dagger$. Then we have $a[\ket{\psi}] = \semanticsR{S_1}(\ket{\psi}\bra{\psi}) + \semanticsR{S_2}(\semanticsD{S_1}(\ket{\psi}\bra{\psi})) = a_1[\ket{\psi}] + \sum_i a_2[E_i \ket{\psi}]$. Then $a$ is closed by Proposition \ref{prop:FormAddMult} and Lemma \ref{lem:infSumExpect}.

                \item For $\measure$, let $a_m$ be the (closed, positive, symmetric) form corresponding to $S_m$ for all $m$. We have $a[\ket{\psi}] = \sum_m a_m[M_m \ket{\psi}]$. Then $a$ is closed by Proposition \ref{prop:FormAddMult} and Lemma \ref{lem:infSumExpect}.

                \item For $\while$, we have $a[\ket{\psi}] = \lim_{i \to \infty} a_i[\ket{\psi}]$ where $a_i$ is the form corresponding to $\while^i$.
                We can write $a_i$ as a sum of forms corresponding to sequences of program statements (by induction hypothesis)without loops and thus we can write $a_i$ as a least upper bound of an increasing sequence of closed forms by Proposition \ref{prop:OpstoForms} and Theorem \ref{thm:omegaCompleteness}. \qedhere
            \end{itemize}
    \end{enumerate}
\end{proof}

\section{Details of Section \ref{sec:wp} - Weakest Pre-Expectations}
\label{sec:appendix_wp}

Proof of Proposition \ref{prop:hoareProperties}:
\begin{proof} \text{  }
    \begin{enumerate}
        \item By definition and because $\semanticsD{S}$ is trace non-increasing (thus the term in square brackets is non-negative) and for reward-free programs is $\semanticsR{S}(\rho)= 0$.
        \item $\expect{\zeroOp}{\rho} = 0 \leq \expect{A}{\semanticsD{S}(\rho)} + \semanticsR{S}(\rho)$ for all $\rho$.
        \item \begin{align*}
            &\expect{A}{\rho} \leq \expect{\identityOp}{\rho} = tr(\rho) \\
            \Rightarrow \text{ }&\expect{A}{\rho} \leq tr(\rho) = tr(\semanticsD{S}(\rho)) + [tr(\rho) - tr(\semanticsD{S}(\rho))] \\
            \Rightarrow \text{ }&\expect{A}{\rho} \leq \expect{\identityOp}{\semanticsD{S}(\rho)} + [tr(\rho) - tr(\semanticsD{S}(\rho))]\qedhere
        \end{align*}
    \end{enumerate}
\end{proof}

Proof of Proposition \ref{prop:wpProperties}:
\begin{proof}
    \begin{itemize}
        \item It is $\expect{\zeroOpVar{\hilbert}}{\semanticsD{S}(\rho)} = 0 = \expect{\qwp{S}{\zeroOpVar{\hilbert}}}{\rho}$ for all $\rho$, so $\qwp{S}{\zeroOpVar{\hilbert}} = \zeroOpVar{\hilbert}$ as $\semanticsR{S}(\rho) = 0$ for reward-free programs.

        \item Follows by the linearity of expectation values (Proposition \ref{prop:expectationProperties}):
        \begin{align*}
            \expect{\qwp{S}{c B}}{\rho} &= \expect{c B}{\semanticsD{S}(\rho)} = c \cdot \expect{B}{\semanticsD{S}(\rho)} \\
            &= c \cdot \expect{\qwp{S}{B}}{\rho} = \expect{c \cdot \qwp{S}{B}}{\rho} \\
            \expect{\qwp{S}{B_1 \sumform B_2}}{\rho} &= \expect{B_1 \sumform B_2}{\semanticsD{S}(\rho)} = \expect{B_1}{\semanticsD{S}(\rho)} + \expect{B_2}{\semanticsD{S}(\rho)} \\
            &= \expect{\qwp{S}{B_1}}{\rho} + \expect{\qwp{S}{B_2}}{\rho} \\
            & = \expect{\qwp{S}{B_1} \sumform \qwp{S}{B_2}}{\rho}
        \end{align*}
        This holds for all $\rho \in \density$ and we conclude that $\qwp{S}{c B} = c \cdot \qwp{S}{B}$ and $\qwp{S}{B_1 \sumform B_2} = \qwp{S}{B_1} \sumform \qwp{S}{B_2}$ by Lemma \ref{lem:orderExpectation} and because $\loewner$ is a partial order and thus in particular antisymmetric. Note that this property does not hold for programs with rewards as the reward term $\semanticsR{S}(\rho)$ is not $0$ then.

        \item Follows by Lemma \ref{lem:orderExpectation}:
        \begin{align*}
            &B_1 \loewner B_2 \\
            \Rightarrow &\text{ } \forall \rho \in \density: \expect{B_1}{\semanticsD{S}(\rho)} \leq \expect{B_2}{\semanticsD{S}(\rho)} \\
            \Rightarrow & \text{ }\forall \rho \in \density: \expect{\qwp{S}{B_1}}{\rho} + \semanticsR{S}(\rho) \leq \expect{\qwp{S}{B_2}}{\rho} + \semanticsR{S}(\rho)\\
            \Rightarrow & \text{ }\forall \rho \in \density: \expect{\qwp{S}{B_1}}{\rho}  \leq \expect{\qwp{S}{B_2}}{\rho}\\
            \Rightarrow & \text{ }\qwp{S}{B_1} \loewner \qwp{S}{B_2}
        \end{align*}

        \item The existence of both suprema follows by Theorem \ref{thm:omegaCompleteness} and the monotonicity. The equivalence follows by continuity of expectation values (Proposition \ref{prop:expectationProperties}):
        \begin{align*}
            \expect{\qwp{S}{\bigvee_n B_n}}{\rho} &= \expect{\bigvee_n B_n}{\semanticsD{S}(\rho)} + \semanticsR{S}(\rho) \\
            &= \bigvee_n \expect{B_n}{\semanticsD{S}(\rho)} + \semanticsR{S}(\rho) \\
            &= \bigvee_n \expect{\qwp{S}{B_n}}{\rho} = \expect{\bigvee_n \qwp{S}{B_n}}{\rho}
        \end{align*}
        and the same argument as given in the proof of linearity. \qedhere
    \end{itemize}
\end{proof}

Proof of Theorem \ref{thm:existenceFormsWp}:
\begin{proof}
    In the following we will use $a$ (and later $A$) for the precondition and $b$ (and $B$) for the postcondition.

    By Definition \ref{def:generalSemantics}, we can use the Kraus representation of $\semanticsD{S}$ and write $\semanticsD{S}(\rho) = \sum_i E_i \rho E_i^\dagger$ as a sum for some bounded operators $E_i$. They are not unique, but for now we fix one family $E_i$ (later we will show that it does not matter which representation we choose).
    Then we define
        \begin{align*}
            a_i &= \lambda \psi. b_{E_i}[\ket{\psi}]+ \semanticsR{S}(\ket{\psi}\bra{\psi})\\
            \hat{a_j} &= \sum_{i\leq j} a_i\\
            a &= \bigvee_j \hat{a_j}
        \end{align*}
    As $b\in \formPredicate$, $b_{E_i} \in \formPredicate$ by Proposition \ref{prop:FormAddMult} as it is just applying a bounded operator $E_i$. $\semanticsR{S}(\ket{\psi}\bra{\psi})$ corresponds also to a form-expectation by Definition \ref{def:generalSemantics}. Thus $a_i$ (as the sum of two form-expectations) is a form-expectation by Proposition \ref{prop:FormAddMult}.

    $\hat{a_j}$ is a finite sum, we can apply Proposition \ref{prop:FormAddMult} again and obtain that $\hat{a_j} \in \formPredicate$ for all $j$ again.
    As $a_i$ is positive for all $i$ , $\hat{a_j}$ is increasing in $j$.

    Then we can also consider the corresponding expectations $\hat{A}_j$ for each $j$. They form an increasing chain of expectations by Proposition\ref{prop:orderEquivalence}, thus the corresponding least upper bound $\bigvee_j \hat{A}_j$ exists by Theorem \ref{thm:omegaCompleteness} and is an expectation and $a$ is its corresponding form, so $a \in \formPredicate$.

    Now we show that $\qwp{S}{B}$ is the corresponding expectation of $a = \qwpForm{S}{b}$. By the above part, we know that there exists a corresponding expectation $A$ to $a$.

    First we show that $\expect{B}{\semantics{S}(\ket{\psi}\bra{\psi})} + \semanticsR{S}(\ket{\psi}\bra{\psi})= \expect{A}{\ket{\psi}\bra{\psi}} $ holds for all $\ket{\psi}$.
    Let $\ket{\psi}\in \hilbert$. Then $\semantics{S}(\ket{\psi}\bra{\psi}) = \sum_i E_i \ket{\psi}\bra{\psi} E_i^\dagger$.

    If there is at least one $E_i \ket{\psi} \not \in \domain{B}$ then $\ket{\psi} \not \in \domain{a}$ and both sides are $\infty$.

    If all $E_i \ket{\psi} \in \domain{B}$, then
    \begin{align*}
        &\text{ } \expect{B}{\semantics{S}(\ket{\psi}\bra{\psi})} + \semanticsR{S}(\ket{\psi}\bra{\psi})\\
        =& \text{ }\expect{B}{\sum_i E_i \ket{\psi} \bra{\psi}E_i^\dagger} + \semanticsR{S}(\ket{\psi}\bra{\psi})\\
        =& \sum_i \expect{B}{E_i \ket{\psi} \bra{\psi}E_i^\dagger} + \semanticsR{S}(\ket{\psi}\bra{\psi})\\
        =& \sum_i b[E_i \ket{\psi}]+ \semanticsR{S}(\ket{\psi}\bra{\psi})\\
        & = \sum_i a_i[\ket{\psi}] = a[\ket{\psi}] =\expect{A}{\ket{\psi}\bra{\psi}}
    \end{align*}

    We can conclude that $\expect{A}{\rho} = \expect{B}{\semanticsD{S}(\rho)} + \semanticsR{S}(\rho)$ for all $\rho = \sum_i p_i \ket{\psi_i}\bra{\psi_i}$: \begin{align*}
        \expect{B}{\semanticsD{S}(\rho)} + \semanticsR{S}(\rho) &= \expect{B}{\semanticsD{S}(\sum_i p_i \ket{\psi_i}\bra{\psi_i})} +  \semanticsR{S}(\sum_i p_i\ket{\psi_i}\bra{\psi_i})\\
        &= \expect{B}{\sum_i p_i \semanticsD{S}(\ket{\psi_i}\bra{\psi_i})}+ \sum_i p_i \semanticsR{S}(\ket{\psi_i}\bra{\psi_i})\\
        &= \sum_i p_i \expect{B}{\semanticsD{S}(\ket{\psi_i}\bra{\psi_i})} + \sum_i p_i \semanticsR{S}(\ket{\psi_i}\bra{\psi_i})\\
        &= \sum_i p_i \left(\expect{B}{\semanticsD{S}(\ket{\psi_i}\bra{\psi_i})} + \semanticsR{S}(\ket{\psi_i}\bra{\psi_i})\right)\\
        &= \sum_i p_i \expect{A}{\ket{\psi_i}\bra{\psi_i}} =  \expect{A}{\sum_i p_i \ket{\psi_i}\bra{\psi_i}} \\
        &= \expect{A}{\rho}
    \end{align*} as $\semanticsD{S}, \semanticsR{S}$ and $\mathbb{E}$ are linear.

    In the beginning, we fixed one Kraus representation $E_i$. Now lets assume there is another one $F_j$ with $\semanticsD{S}^F(\rho) := \sum_j F_j \rho F_j^\dagger =\semanticsD{S}(\rho)= \sum_i E_i \rho E_i^\dagger =: \semanticsD{S}^E(\rho)$ for all $\rho$.
    Let $A_F$ and $A_E$ be the corresponding expectations to $\qwpForm{S_F}{a}$ and $\qwpForm{S_E}{a}$ respectively.
    Then for all $\rho$
    \begin{align*}
        \expect{A_F}{\rho} = \expect{B}{\semanticsD{S}^F(\rho)} + \semanticsR{S}(\rho) = \expect{B}{\semanticsD{S}^E(\rho)} + \semanticsR{S}(\rho) = \expect{A_E}{\rho}.
    \end{align*}
    Together with the fact that $\loewner$ is a partial order and thus antisymmetric by Proposition \ref{prop:orderproperties} and Lemma \ref{lem:orderExpectation}, we conclude $\qwpForm{S_F}{a} = \qwpForm{S_E}{a}$ and thus the definition does not depend on the choice of Kraus operators.
\end{proof}

Proof of Proposition \ref{prop:wpDefinition}:
\begin{proof}
    We have to check two things:
    \begin{enumerate}
        \item $\qwp{S}{B}$ is an expectation, and
        \item $\expect{\qwp{S}{B}}{\rho} = \expect{B}{\semanticsD{S}(\rho)} + \semanticsR{S}(\rho)$ for all $\rho \in \density$.
    \end{enumerate}
    We show both statements together by an induction on the structure of $S$.
    Let $\rho = \sum_i p_i \ket{\psi_i}\bra{\psi_i} \in \density$ with $p_i>0$ and $B\in \predicate$ with corresponding form $b \in \formPredicate$.
    \begin{itemize}
        \item $S = \skipbf$: Trivial as $\qwp{\skipbf}{B} = B$ and $\semanticsD{\skipbf}(\rho)=\rho$, $\semanticsR{\skipbf}(\rho) = 0$ for all $\rho \in \density$.
        \item $S = \qzero$:
        \begin{enumerate}
        \item $\qwp{\qzero}{B} \in \predicate$ by Definition \ref{def:FormAddMult} as $\ket{n}\bra{0}$ are bounded operators. The sum is infinite and by Lemma \ref{lem:infSumExpect} in $\predicate$.

        \item It is \begin{align*}
            \expect{\qwp{\qzero}{B}}{\rho} =& \text{ }\expect{\sum_n \ket{n}\bra{0} \odot B \odot \ket{0}\bra{n}}{\rho} \\
            =& \sum_n \expect{\ket{n}\bra{0} \odot B \odot \ket{0}\bra{n}}{\rho} \\
            =& \sum_n \expect{B}{\ket{0}\bra{n} \rho \ket{n}\bra{0}} \\
            =& \text{ }\expect{B}{\sum_n \ket{0}\bra{n} \rho \ket{n}\bra{0}} \\
            =& \text{ }\expect{B}{\semanticsD{\qzero}(\rho)}
        \end{align*}
        as $\mathbb{E}$ is linear in both the state and the expectation (Lemma \ref{lem:infSumLinExpectation}) and we can move bounded operators from the expectation to the density operator by Proposition \ref{prop:boundedOpinExpectation}. As $\semanticsR{\qzero}(\rho) = 0$ for all $\rho$, we have $\expect{\qwp{\qzero}{B}}{\rho} = \expect{B}{\semanticsD{\qzero}(\rho)} + \semanticsR{\qzero}(\rho)$.

        \end{enumerate}

        \item $S = \Uq$:
        \begin{enumerate}
        \item $\qwp{\Uq}{B}\in \predicate$ by Definition \ref{def:FormAddMult} as $U$ is a bounded operator.
        \item Let $wp$ the corresponding form of $\qwp{\Uq}{B}_\infty$. It is $\domain{wp} = \{\ket{\psi} \in \hilbert \mid U\ket{\psi} \in \domain{b}\}$.
        \begin{enumerate}
            \item If there is (at least) one $\ket{\psi_i}\not \in \domain{wp}$, then $\expect{\qwp{S}{B}}{\rho} = \infty$.
            By the definition of $\domain{wp}$, we have $\expect{B}{\semanticsD{\Uq}(\ket{\psi_i}\bra{\psi_i})} = \infty$.
            Overall it is $\expect{\qwp{\Uq}{B}}{\rho} =\infty = \expect{B}{\semanticsD{\Uq}(\rho)} $ by linearity.
            \item If $\ket{\psi_i} \in \domain{wp}$ for all $i$,
            then \begin{align*}
            \expect{\qwp{\Uq}{B}}{\rho} & = \sum_i p_i wp[\ket{\psi_i}]\\
            &= \sum_i p_i b[U\ket{\psi_i}] \\
            &= \sum_i p_i \expect{B}{\semanticsD{\Uq}(\ket{\psi_i}\bra{\psi_i})} \\
            &= \expect{B}{\semanticsD{\Uq}(\rho)}
            \end{align*}
        \end{enumerate}
        Overall we have $\expect{\qwp{\Uq}{B}}{\rho} = \expect{B}{\semanticsD{\Uq}(\rho)} + \semanticsR{\Uq}(\rho)$ as $\semanticsR{\Uq}(\rho) = 0$ for all $\rho$.
        \end{enumerate}

            \item $S = \rewardR$:
            \begin{enumerate}
            \item $\qwp{\rewardR}{B} = B \sumform c \cdot \identityOp$ is in $\predicate$ by Definition \ref{def:FormAddMult}.
            \item It is $\expect{\qwp{\rewardR}{B}}{\rho} = \expect{B}{\semanticsD{\rewardR}(\rho)} + \semanticsR{\rewardR}(\rho)$ as
            \begin{align*}
                \expect{\qwp{\rewardR}{B}}{\rho} &=
                \expect{B \sumform c \cdot \identityOp}{\rho} \\
                & = \expect{B}{\semanticsD{\rewardR}(\rho)} + \expect{c \cdot \identityOp}{\rho} \\
                &= \expect{B}{\semanticsD{\rewardR}(\rho)} + c \cdot tr(\rho) \\
                &= \expect{B}{\semanticsD{\rewardR}(\rho)} + \semanticsR{\rewardR}(\rho)
                \end{align*}
                by linearity of expectation values (Proposition \ref{prop:expectationProperties}).
            \end{enumerate}

        \item $S = \concat$:
        \begin{enumerate}
        \item $\qwp{S_1}{\qwp{S_2}{B}} \in \predicate$ follows immediately from the induction hypothesis.
        \item \begin{align*}
            \expect{\qwp{\concat}{B}}{\rho} & = \expect{\qwp{S_1}{\qwp{S_2}{B}}}{\rho} \\
            &\overset{IH}{=} \expect{\qwp{S_2}{B}}{\semanticsD{S_1}(\rho)} + \semanticsR{S_1}(\rho) \\
            &\overset{IH}{=} \expect{B}{\semanticsD{S_2}(\semanticsD{S_1}(\rho))} + \semanticsR{S_2}(\semanticsD{S_1}(\rho)) + \semanticsR{S_1}(\rho) \\
            &= \expect{B}{\semanticsD{\concat}(\rho)} + \semanticsR{\concat}(\rho)
        \end{align*}
        \end{enumerate}

        \item $S = \measurePrime$:
        \begin{enumerate}
        \item By induction hypothesis, $\qwp{S_m'}{B} \in \predicate$ for all $m$. Then by Definition \ref{def:FormAddMult}, $\formsandwich{M_m}{\qwp{S_m'}{B}} \in \predicate$ for all $m$ as $M_m$ are bounded operators. We can define
        $\hat{f}_n = \sum_{m\leq n} \formsandwich{M_m}{\qwp{S_m'}{B}}$. As each $\formsandwich{M_m}{\qwp{S_m'}{B}} \in \predicate$, $\hat{f}_n \in \predicate$ for all $n$ as it is a finite sum by Definition \ref{def:FormAddMult}.
        As $\hat{f}_n$ is an increasing chain (each $\formsandwich{M_m}{\qwp{S_m'}{B}}$ is positive), by Theorem \ref{thm:omegaCompleteness}, the least upper bound $\bigvee_n \hat{A}_n$ exists and corresponds to $\sum_m \formsandwich{M_m}{\qwp{S_m'}{B}}$.

        \item Let $wp$ be the form-expectation of $\qwp{\measurePrime}{B}$ and $wp_m$ the one of $\qwp{S_m'}{B}$. It is $\domain{wp} = \bigcap_m \{\ket{\psi}\mid M_m \ket{\psi} \in \domain{wp_m}\}$.
        \begin{enumerate}
            \item If there is an $i$ such that $\ket{\psi_i}\not \in \domain{wp}$, then there must be a $j$ such that $M_j \ket{\psi_i} \not \in \domain{wp_j}$ and thus
            \begin{align*}
                \infty =& \expect{\qwp{S_j'}{B}}{\ket{M_j \psi_i}\bra{M_j \psi_i}} \\
                \overset{IH}{=}&\expect{B}{\semanticsD{S_j'}(\ket{M_j \psi_i}\bra{M_j \psi_i})} + \semanticsR{S_j'}(\ket{M_j \psi_i}\bra{M_j \psi_i})
            \end{align*}
            By linearity of semantics (Proposition \ref{prop:semanticsProperties}) and expectation values (Proposition \ref{prop:expectationProperties}, Lemma \ref{lem:infSumLinExpectation}):
            \begin{align*}
            &\expect{B}{\semanticsD{\measurePrime}(\rho)} + \semanticsR{\measurePrime}(\rho)\\
            =& \infty = \expect{\qwp{\measure}{B}}{\rho}.
            \end{align*}
            \item Assume $\ket{\psi_i} \in \domain{wp}$ for all $i$, then $M_m \ket{\psi_i} \in \domain{wp_m}$ for all $i,m$.
            Then
            \begin{align*}
                &\expect{\qwp{\measure}{B}}{\rho} \\
                 =& \sum_{m} \expect{\formsandwich{M_m}{\qwp{S_m'}{B}}}{\rho} \\
                = & \sum_{m} \sum_i p_i wp_m[M_m \ket{\psi_i}] \\
                \overset{IH}{=} &\sum_{m} \sum_i p_i \expect{B}{\semanticsD{S_m'}(\ket{M_m \psi_i}\bra{M_m \psi_i})} + \semanticsR{S_m'}(\ket{M_j \psi_i}\bra{M_j \psi_i}) \\
                 = &\sum_i p_i \sum_{m}  \expect{B}{\semanticsD{S_m'}(\ket{M_m \psi_i}\bra{M_m \psi_i})} + \semanticsR{S_m'}(\ket{M_j \psi_i}\bra{M_j \psi_i})\\
                 =& \sum_i p_i \expect{B}{\semanticsD{\measurePrime}(\ket{\psi_i}\bra{\psi_i})} \\
                & +\semanticsR{\measurePrime}(\ket{\psi_i}\bra{\psi_i}) \\
                =& \expect{B}{\semanticsD{\measurePrime}(\rho)} + \semanticsR{\measurePrime}(\rho)
            \end{align*}
              As $\expect{B}{\ket{\psi}\bra{\psi}}, \semanticsR{S}(\ket{\psi}\bra{\psi})$ and $p_i$ are non-negative for all $\ket{\psi}, S$, we can switch the sums in the forth step due to Fubini's theorem for infinite series. We also use linearity of semantics (Proposition \ref{prop:semanticsProperties}) and expectation values (Proposition \ref{prop:expectationProperties}, Lemma \ref{lem:infSumLinExpectation}).
        \end{enumerate}

        \end{enumerate}

        \item $S = \while'$: In this case we prove both parts together. First of all, we show $\qwp{S'}{A_n} \in \predicate$ for all $n$.
        Each $A_n \in \predicate$ because it is defined via sums and applications of bounded operators on expectations (see Definition \ref{def:FormAddMult}). Now it reminds to show that $\{A_n\}_n$ is an increasing chain, then the least upper bound $\bigvee_n A_n$ exists by Theorem \ref{thm:omegaCompleteness} and is in $\predicate$.

        First we show
        \begin{align} \label{align:expectAn}
            \expect{A_n}{\rho} = \expect{B}{\semanticsD{\while^n}(\rho)} + \semanticsR{\while^n}(\rho)
        \end{align}
        by induction on $n$:
        \begin{itemize}
            \item $n=0$:
            \begin{align*}
                \expect{A_0}{\rho} =& \text{ }\expect{\zeroOp}{\rho} = 0 \\
                =& \text{ }\expect{B}{\semanticsD{\Omega}(\rho)} + \semanticsR{\Omega}(\rho) \\
                =& \text{ }\expect{B}{\semantics{\while^0}(\rho)} + \semanticsR{\while^0}(\rho)
            \end{align*}
            \item $n \to n+1$:
            Let $wp_n$ be the form corresponding to $\qwp{S'}{A_n}$, $b$ to $B_\infty$ and $a_{n}$ the form corresponding to ${A_n}_\infty$.
            Then $\domain{a_{n+1}} = \{\psi \mid M_0 \psi \in \domain{b}\} \cap \{\psi \mid M_1 \psi \in \domain{wp_n}\}$.
            Let $\rho = \sum_i p_i \ket{\psi_i}\bra{\psi_i}$ be a decomposition of $\rho$ into pure states. We show the statement for each $\ket{\psi_i}\bra{\psi_i}$, then we can conclude it for $\rho$ by linearity of expectation values and the semantics (Proposition \ref{prop:expectationProperties}, Lemma \ref{lem:infSumLinExpectation} and Proposition \ref{prop:semanticsProperties}).
            We distinguish two cases:
            \begin{enumerate}
                \item Assume there is an $i$ such that $\ket{\psi_i} \not \in \domain{a_{n+1}}$, then $\expect{A_{n+1}}{\rho} = \infty$.

                If $M_0 \ket{\psi_i} \not \in \domain{b}$, then $\expect{B}{M_0 \ket{\psi_i}\bra{\psi_i} M_0^\dagger} = \infty$ and thus by linearity of expectation values and semantics (Proposition \ref{prop:expectationProperties}, Lemma \ref{lem:infSumLinExpectation}, Proposition \ref{prop:semanticsProperties}), we get $\expect{B}{\semanticsD{\while^{n+1}}(\rho)} = \infty$, thus the sum with $\semanticsR{\while^{n+1}}(\rho)$ is $\infty$ as well.

                If $M_1 \ket{\psi_i} \not \in \domain{wp_n}$, then
                \begin{align*}
                    \infty &= \expect{\qwp{S'}{A_n}}{\ket{M_1 \psi_i}\bra{M_1 \psi_i}} \\
                    & = \expect{A_n}{\semanticsD{S'}(\ket{M_1 \psi_i}\bra{M_1 \psi_i})} \\
                    &= \expect{B}{\semanticsD{\while^n}(\semanticsD{S'}(\ket{M_1 \psi_i}\bra{M_1 \psi_i}))}
                \end{align*}
                By linearity (Proposition \ref{prop:expectationProperties}, Lemma \ref{lem:infSumLinExpectation} and Proposition \ref{prop:semanticsProperties}), we have \\
                $\expect{B}{\semanticsD{\while^{n+1}}(\rho)} = \infty$ and thus the same for the addition with $\semanticsR{\while^{n+1}}(\rho)$.

                \item Assume $\ket{\psi_i} \in \domain{a_{n+1}}$ for all $i$.
                \begin{align*}
                    \expect{A_{n+1}}{\ket{\psi_i}\bra{\psi_i}} =& \text{ } \expect{\formsandwich{M_0}{B}}{\ket{\psi_i}\bra{\psi_i}} + \expect{\formsandwich{M_1}{\qwp{S'}{A_n}}}{\ket{\psi_i}\bra{\psi_i}} \\
                    =&\text{ } b[M_0 \ket{\psi_i}] + wp_n[M_1 \ket{\psi_i}] \\
                    =&\text{ } \expect{B}{\ket{M_0 \psi_i}\bra{M_0 \psi_i}} + \expect{\qwp{S'}{A_n}}{\ket{M_1 \psi_i}\bra{M_1 \psi_i}} \\
                    =&\text{ } \expect{B}{\ket{M_0 \psi_i}\bra{M_0 \psi_i}} \\
                    &+ \expect{A_n}{\semanticsD{S'}(\ket{M_1 \psi_i}\bra{M_1 \psi_i})} + \semanticsR{S'}(\ket{M_1 \psi_i} \bra{M_1 \psi_i})\\
                    =&\text{ } \expect{B}{\ket{M_0 \psi_i}\bra{M_0 \psi_i}} \\
                    &+ \expect{B}{\semanticsD{\while^n}(\semanticsD{S'}(\ket{M_1 \psi_i}\bra{M_1 \psi_i}))} \\
                    &+ \semanticsR{\while^n}(\semanticsD{S'}(\ket{M_1 \psi_i}\bra{M_1 \psi_i})) \\
                    & + \semanticsR{S'}(\ket{M_1 \psi_i} \bra{M_1 \psi_i}) \\
                    =&\text{ } \expect{B}{\semanticsD{\skipbf}(\ket{M_0 \psi_i}\bra{M_0 \psi_i})} \\
                    &+ \expect{B}{\semanticsD{S'; \while^n}(\ket{M_1 \psi_i}\bra{M_1 \psi_i})} \\
                    &+ \semanticsR{\skipbf}(\ket{M_0 \psi_i}\bra{M_0 \psi_i}) \\
                    &+ \semanticsR{S'; \while^n}(\ket{M_1 \psi_i}\bra{M_1 \psi_i}) \\
                    =&\text{ } \expect{B}{\semanticsD{\while^n}(\ket{\psi_i}\bra{ \psi_i})} \\
                    &+ \semanticsR{\while^n}(\ket{ \psi_i}\bra{ \psi_i})
                \end{align*}
            \end{enumerate}

        Now we can show that $\{A_n\}_n$ is an increasing chain:
        \begin{itemize}
            \item $n=0$: $A_0 = \zeroOp \loewner A_1$ and $\domain{A_1} \subseteq \hilbert =\domain{\zeroOp}$
            \item $n \to n+1$:  $A_{n+1} = \formsandwich{M_0}{B} + \formsandwich{M_1}{\qwp{S'}{A_n}}$
            \begin{align*}
                & \text{ }A_{n-1} \loewner A_n \\
                \Rightarrow & \text{ }\qwp{S'}{A_{n-1}} \loewner \qwp{S'}{A_n} \\
                \Rightarrow & \text{ } \formsandwich{M_1}{\qwp{S'}{A_{n-1}}} \loewner \formsandwich{M_1}{\qwp{S'}{A_n}} \\
                \Rightarrow & \text{ } \formsandwich{M_0}{B} + \formsandwich{M_1}{\qwp{S'}{A_{n-1}}}\\
                & \loewner \formsandwich{M_0}{B} + \formsandwich{M_1}{\qwp{S'}{A_n}} \\
                \Rightarrow & \text{ } A_n \loewner A_{n+1}
            \end{align*}
        \end{itemize} The first step holds by the monotonicity of $wp$ (shown in Proposition \ref{prop:wpProperties}).
        The next two steps hold by Proposition \ref{prop:orderproperties}. Thus we have shown $\bigvee_n A_n \in \predicate$.

        Finally, we can conclude the second part of the proof based on the previous results:
        \begin{align*}
            & \expect{\qwp{\while}{B}}{\rho} \\
            & = \expect{\bigvee_n A_n}{\rho}\\
             &= \bigvee_n \expect{A_n}{\rho} \\
             &\overset{Eq. \ref{align:expectAn}}{=} \bigvee_n [\expect{B}{\semanticsD{\while^n}(\rho)} + \semanticsR{\while^n}(\rho)]\\
             & = \bigvee_n \expect{B}{\semanticsD{\while^n}(\rho)} + \lim_n \semanticsR{\while^n}(\rho)\\
             &= \expect{B}{\semanticsD{\while}(\rho)} + \semanticsR{\while}(\rho)
        \end{align*}
        because both $\expect{B}{\semanticsD{\while^n}(\rho)}$ and $\semanticsR{\while^n}(\rho)$ are increasing in $n$. \qedhere
    \end{itemize}
    \end{itemize}
\end{proof}

Proof of Proposition \ref{prop:wpBoundedEquiv}:
\begin{proof}\text{ }
    \begin{itemize}
        \item First of all note that if $A \loewner \identityOp$ is bounded, then $\expect{A}{\rho} = tr(A\rho)$ for all partial density operators $\rho \in \density$ as $\domain{A} = \hilbert$ for bounded operators.
        \item Then the definition of total and partial correctness are exactly the same as in \cite{floydHoareLogic} for reward-free programs $S$ as $\semanticsR{S}(\rho) = 0$ for all $\rho$.
        \item The definition of weakest pre-expectation is also the same as in \cite{DHondtWeakestPreconditions}.
        \item Finally, the concrete representation of the weakest pre-expectation transformer $wp$ differs only by the addition and multiplication but for bounded operators they coincide by Proposition \ref{prop:BoundedAddition}. \qedhere
    \end{itemize}
\end{proof}

Proof of Proposition \ref{prop:park}:
\begin{proof}
    As a technical result, we can show that the weakest pre-expectation of a loop is equivalent to the least upper bound of its characteristic function:
    Let $\while$ be a loop in qrWhile and $B \in \predicate$ be a post-expectation.
    Then $\qwp{\while}{B} = \bigvee_n \psi_B^n(0)$.

    To show this equivalence, we show $\psi_B^n(0) = A_n$ for all $n$ where $A_n$ is defined as in Table \ref{tab:wp}:
    \begin{itemize}
        \item $n=0$: $\psi_B^0(0) = 0 = A_0$
        \item $n \to n+1$: \\
        \begin{align*}
            \psi_B^{n+1}(0) &= \psi_B(\psi_B^n(0)) \\
            &= \formsandwich{M_0}{B} \sumform \formsandwich{M_1}{\qwp{S}{\psi_B^n(0)}} \\
            &= \formsandwich{M_0}{B} \sumform \formsandwich{M_1}{\qwp{S}{A_n}} \\
            &= A_{n+1}
        \end{align*}
    \end{itemize}

    Now we continue by showing that $\psi_B$ is monotone: Let $A_1, A_2 \in \predicate$ with $A_1 \loewner A_2$, then
    \begin{align*}
       A_1 &\loewner A_2 \\
        \overset{Prop. \ref{prop:wpProperties}}{\Rightarrow} \qwp{S}{A_1} &\loewner \qwp{S}{A_2} \\
        \overset{Prop. \ref{prop:orderproperties}}{\Rightarrow} \formsandwich{M_1}{\qwp{S}{A_1}} &\loewner \formsandwich{M_1}{\qwp{S}{A_2}} \\
        \overset{Prop. \ref{prop:orderproperties}}{\Rightarrow} \formsandwich{M_0}{B} \sumform \formsandwich{M_1}{\qwp{S}{A_1}} &\loewner \formsandwich{M_0}{B} \sumform \formsandwich{M_1}{\qwp{S}{A_2}} \\
        \Rightarrow \psi_B(A_1) &\loewner \psi_B(A_2)
    \end{align*}

    We can show that $\psi_B(A)\loewner A$ implies $\psi^n_B(0) \loewner A$ for all $n\in \mathbb{N}$ by induction over $n$:
    \begin{itemize}
        \item $n=0$: $\psi^0_B(0) = 0 \loewner A$ as $A$ is positive.
        \item $n \to n+1$:
        \begin{align*}
            \psi_B(A) &\loewner A \\
            \overset{IH}{\Rightarrow} \psi^n_B(0) &\loewner A \\
            \Rightarrow \psi_B(\psi^n_B(0)) &\loewner \psi_B(A) \\
            \Rightarrow \psi^{n+1}_B(0) &\loewner \psi_B(A) \\
            \Rightarrow \psi^{n+1}_B(0) &\loewner A
        \end{align*}
        where the second step holds by monotonicity of $\psi_B$ and the last step holds by the assumption $\psi_B(A) \loewner A$ and the transitivity of $\loewner$ (Proposition \ref{prop:orderproperties}).
    \end{itemize}

     Finally, we can show that whole statement:
    \begin{align*}
        \psi_B(A)\loewner A &\Rightarrow \forall n\in \mathbb{N}: \psi^n_B(0) \loewner A \\
        & \Rightarrow \bigvee_n \psi^n_B(0) \loewner A \\
        & \Rightarrow \qwp{\while}{B} \loewner A
    \end{align*}
    where the first step holds by the above induction, the second one by the definition of $\bigvee$ and the last one by the technical result at the beginning of the proof.
\end{proof}

The proof of Proposition \ref{prop:parkWLP} works completely analogous.

\section{Details of Section \ref{sec:ertRewards} - Expected Runtime using Rewards}
\label{sec:appendix_ert_rewards}

Proof of Proposition \ref{prop:AstPastUsingWp}:
\begin{proof}
    \begin{itemize}
        \item $\forall \rho \in \density$ is $\semanticsR{S}(\rho) = 0$, thus
\begin{align*}
    tr(\semanticsD{S}(\rho)) = tr(\identityOp \semanticsD{S}(\rho)) = tr(\qwp{S}{\identityOp} \rho) = tr(\identityOp \rho) = tr(\rho).
\end{align*}
        \item $\domain{\qwp{\transform{S}}{\zeroOp}} = \hilbert$ means that $\expect{\qwp{\transform{S}}{\zeroOp}}{\rho} < \infty$ for all $\rho \in \density$, thus it is $\semanticsR{\transform{S}}(\rho) < \infty$ for all $\rho \in \density$ and thus $S$ is PAST. Conversely, if $S$ is PAST, then $\semanticsR{\transform{S}}(\rho) < \infty$ for all $\rho \in \density$, thus $\expect{\qwp{\transform{S}}{\zeroOp}}{\rho} < \infty$ for all $\rho \in \density$ and thus $\domain{\qwp{\transform{S}}{\zeroOp}} = \hilbert$. \qedhere
    \end{itemize}
\end{proof}

\section{Details of Section \ref{sec:ertCalculus} - ERT Calculus}
\label{sec:appendix_ert_calculus}

Proof of Proposition \ref{prop:ERTProperties}:
\begin{proof}
Let $S$ be a reward-free qrWhile program, $A,B \in \predicate$ and $c\geq 0$.
\begin{enumerate}
    \item[1. - 3.] Well-definedness \& Connection to wp \& Monotonicity:
    We show this together by induction over the structure of $S$. The connection to wp is shown by induction over the structure of $S$ as well, and monotonicity follows immediately from the connection to wp and the monotonicity of wp:
    Let $A \loewner B$. We show $\ert{S}{A} \loewner \ert{S}{B}$ based on 2. and the monotonicity of wp (Proposition \ref{prop:wpProperties}):
    \begin{align*}
        &\ert{S}{A} \\
        &\overset{2.}{=} \qwp{S}{A} \sumform \ert{S}{\zeroOp} \\
        &\loewner \qwp{S}{B} \sumform \ert{S}{\zeroOp} \\
        &\overset{2.}{=} \ert{S}{B}
    \end{align*}
    Induction over the structure of $S$:
    \begin{itemize}
        \item $S = \skipbf$: Clearly $\ert{\skipbf}{A} \in \predicate$.
        \begin{align*}
            \ert{\skipbf}{A} = A = A + \zeroOp = \qwp{\skipbf}{A} \sumform \ert{\skipbf}{\zeroOp}
        \end{align*}

        \item $S = \qzero$: It is $\ert{\qzero}{A} \in \predicate$ by Definition \ref{def:FormAddMult} as $\ket{n}\bra{0}$ are bounded operators. The sum is infinite and by Lemma \ref{lem:infSumExpect} in $\predicate$.
        \begin{align*}
            &\ert{\qzero}{A} \\
            &= \identityOp \sumform \sum_{n=0}^\infty \ket{n}\bra{0} \odot A \odot \ket{0}\bra{n} \\
            &= \identityOp \sumform \qwp{\qzero}{A} \\
            &= \qwp{\qzero}{A} \sumform \ert{\qzero}{\zeroOp}
        \end{align*}

        \item $S = \Uq$: It is $\ert{\Uq}{A} \in \predicate$ as $\predicate$ is closed under $\sumform$ and bounded-operator application.
        \begin{align*}
            \ert{\Uq}{A} = \formsandwich{U}{A} \sumform \identityOp = \qwp{\Uq}{A} \sumform \ert{\Uq}{\zeroOp}
        \end{align*}

        \item $S = \concat$:
        It is $\ert{\concat}{A}\in \predicate$ by induction hypothesis.
        \begin{align*}
            \ert{\concat}{A} &= \ert{S_1}{\ert{S_2}{A}} \\
            &\overset{IH}{=} \ert{S_1}{\qwp{S_2}{A} \sumform \ert{S_2}{\zeroOp}} \\
            &\overset{IH}{=} \qwp{S_1}{\qwp{S_2}{A} \sumform \ert{S_2}{\zeroOp}} \sumform \ert{S_1}{\zeroOp} \\
            &= \qwp{S_1}{\qwp{S_2}{A}} \sumform \qwp{S_1}{\ert{S_2}{\zeroOp}} \sumform \ert{S_1}{\zeroOp} \\
            &\overset{IH}{=}\qwp{S_1}{\qwp{S_2}{A}} \sumform \ert{S_1}{\ert{S_2}{\zeroOp}}\\
            &= \qwp{\concat}{A} \sumform \ert{ \concat}{\zeroOp}
        \end{align*}
        \item $S = \measurePrime$:

        To show $\ert{\measurePrime}{A} \in \predicate$:
        It is $\formsandwich{M_m}{\ert{S_m'}{A}} \in \predicate$ for all $m$ as $\predicate$ is closed under bounded-operator application and $\ert{S_m'}{A} \in \predicate$ by induction hypothesis.
        The sum is infinite and in $\predicate$ by Lemma \ref{lem:infSumExpect}. Adding $\identityOp$ is also fine as $\predicate$ is closed under $\sumform$.
        \begin{align*}
            &\ert{\measurePrime}{A} \\
            &= \identityOp \sumform \left(\sum_{m} \formsandwich{M_m}{\ert{S_m'}{A}} \right) \\
            &\overset{IH}{=} \identityOp \sumform \left(\sum_{m} \formsandwich{M_m}{\left(\qwp{S_m'}{A} \sumform \ert{S_m'}{\zeroOp}\right)} \right) \\
            &= \identityOp \sumform \left(\sum_{m} \formsandwich{M_m}{\qwp{S_m'}{A}} \right) \sumform \left(\sum_{m} \formsandwich{M_m}{\ert{S_m'}{\zeroOp}} \right) \\
            &= \qwp{\measurePrime}{A} \sumform \ert{\measurePrime}{\zeroOp}
        \end{align*}
        Distributivity for infinite sums holds by Lemma \ref{lem:distrInfSum} (and for finite ones by Proposition \ref{prop:ContinuityDistributivityFormSum}).
        \item $S = \while'$: We consider three different chains:
        \begin{align*}
            \ert{\while'}{A} &= \bigvee_n F_n \quad \text{where }\\
            F_0 &= \zeroOp,\\
            F_{n+1} &= \formsandwich{M_0}{A} \sumform \formsandwich{M_1}{\ert{S'}{F_n}} \sumform \identityOp\\
            \ert{\while'}{\zeroOp} &= \bigvee_n G_n \quad \text{where }\\
            G_0 &= \zeroOp,\\
            G_{n+1} &= \formsandwich{M_0}{\zeroOp} \sumform \formsandwich{M_1}{\ert{S'}{G_n}} \sumform \identityOp\\
            &= \formsandwich{M_1}{\ert{S'}{G_n}} \sumform \identityOp\\
            \qwp{\while'}{A} &= \bigvee_n H_n \quad \text{where }\\
            H_0 &= \zeroOp,\\
            H_{n+1} &= \formsandwich{M_0}{A} \sumform \formsandwich{M_1}{\qwp{S'}{H_n}}
        \end{align*}
        First of all, each $F_n\in \predicate$ by induction over $n$ as $\predicate$ is closed under $\sumform$ and $\formsandwich{M}{\cdot}$ for any $M$. We can also show that $F_n$ is increasing in $n$ by induction over $n$ as well:
        \begin{itemize}
            \item $n=0$: $F_0 = \zeroOp \loewner F_1$.
            \item $n \to n+1$:
            \begin{align*}
                F_{n} &= \formsandwich{M_0}{A} \sumform \formsandwich{M_1}{\ert{S'}{F_{n-1}}} \sumform \identityOp \\
                &\loewner \formsandwich{M_0}{A} \sumform \formsandwich{M_1}{\ert{S'}{F_{n}}} \sumform \identityOp = F_{n+1}
             \end{align*}
             by Proposition \ref{prop:orderproperties} and because $\ert{S'}{F_{n-1}} \loewner \ert{S'}{F_{n}}$ holds by the monotonicity of $\ert{S'}{\cdot}$ which follows by induction hypothesis.
        \end{itemize}
        Thus $F_n$ is an increasing sequence in $n$ and thus $\bigvee_n F_n$ is well-defined. (Analogously, we can show the same for $G_n$ and $H_n$.)

        We show by induction over $n$ that $F_n = H_n \sumform G_n$ for all $n$:
        \begin{itemize}
            \item $n=0$: $F_0 = \zeroOp = H_0 \sumform G_0$.
            \item $n \to n+1$:
            \begin{align*}
                F_{n+1} &= \formsandwich{M_0}{A} \sumform \formsandwich{M_1}{\ert{S'}{F_n}} \sumform \identityOp \\
                &\overset{IH}{=} \formsandwich{M_0}{A} \sumform \formsandwich{M_1}{\ert{S'}{H_n \sumform G_n}} \sumform \identityOp \\
                &\overset{IH}{=} \formsandwich{M_0}{A} \sumform \formsandwich{M_1}{\left(\qwp{S'}{H_n \sumform G_n} \sumform \ert{S'}{\zeroOp}\right)} \sumform \identityOp \\
                &= \formsandwich{M_0}{A} \sumform \formsandwich{M_1}{\left(\qwp{S'}{H_n} \sumform \qwp{S'}{G_n} \sumform \ert{S'}{\zeroOp}\right)} \sumform \identityOp \\
                &\overset{IH}{=} \formsandwich{M_0}{A} \sumform \formsandwich{M_1}{\left(\qwp{S'}{H_n} \sumform \ert{S'}{G_n}\right)} \sumform \identityOp \\
                &= \formsandwich{M_0}{A} \sumform \formsandwich{M_1}{\qwp{S'}{H_n}} \sumform \formsandwich{M_1}{\ert{S'}{G_n}} \sumform \identityOp \\
                &= H_{n+1} \sumform G_{n+1}
             \end{align*}
            \end{itemize}

            Thus we have $F_n = H_n \sumform G_n$ for all $n$. Finally, we can show the connection to wp:
            \begin{align*}
                &\ert{\while'}{A} = \bigvee_n F_n \\
                =& \bigvee_n (H_n \sumform G_n) = \bigvee_n H_n \sumform \bigvee_n G_n \\
                =& \text{ }\qwp{\while'}{A} \sumform \ert{\while'}{\zeroOp}
            \end{align*}
    \end{itemize}

    \item[4.] Continuity: Let $\{A_i\}_i$ be an increasing chain of expectations. We show $\ert{S}{\bigvee_i A_i} = \bigvee_i \ert{S}{A_i}$:
    \begin{align*}
        &\text{ } \ert{S}{\bigvee_i A_i}\\
        =& \text{ }\qwp{S}{\bigvee_i A_i} \sumform \ert{S}{\zeroOp} \\
        =& \bigvee_i \qwp{S}{A_i} \sumform \ert{S}{\zeroOp} \\
        =& \bigvee_i \left(\qwp{S}{A_i} \sumform \ert{S}{\zeroOp}\right) \\
        =& \bigvee_i \ert{S}{A_i}
     \end{align*}
     where the first step holds by 2., the second one by the continuity of $\qwp{S}{\cdot}$ (Proposition \ref{prop:wpProperties}) and the third one by the continuity of $\sumform$ (Proposition \ref{prop:ContinuityDistributivityFormSum}).

    \item[5.] Constant propagation: We show $\ert{S}{c\cdot A} = c\cdot A \sumform \ert{S}{\zeroOp}$
    \begin{align*}
        &\ert{S}{c\cdot A} \\
        &\overset{2.}{=} \qwp{S}{c\cdot A} \sumform \ert{S}{\zeroOp} \\
        &= c\cdot \qwp{S}{A} \sumform \ert{S}{\zeroOp} \\
        &\overset{2.}{=} c\cdot A \sumform \ert{S}{\zeroOp}
     \end{align*}
     where the second step holds as wp is linear by Proposition \ref{prop:wpProperties}.

     \item[6.] Preservation of infinity: It is $\ert{S}{\inftyOp} = \inftyOp$ because $\formsandwich{M}{\inftyOp} = \inftyOp$ for any (linear bounded) $M\neq \zeroOp$, $\inftyOp \sumform A = \inftyOp$ for any (linear) $A$ and $\bigvee \inftyOp = \inftyOp$, thus it follows by induction over the structure of $S$ that $\ert{S}{\inftyOp} = \inftyOp$. \qedhere
\end{enumerate}
\end{proof}
\newpage
Proof of Theorem \ref{thm:equivErtQwp}:
\begin{proof}
    We show the statement by induction over the structure of $S$:
    \begin{itemize}
        \item $S = \skipbf$: \begin{align*}
            \ert{\skipbf}{A} = A = \qwp{\skipbf}{A} = \qwp{\transform{\skipbf}}{A}
        \end{align*}

        \item $S = \qzero$: \begin{align*}
            &\text{ }\ert{\qzero}{A} \\
            =& \text{ }\identityOp \sumform \sum_{n=0}^\infty \ket{n}\bra{0} \odot A \odot \ket{0}\bra{n} \\
            =& \text{ }\identityOp \sumform \qwp{\qzero}{A} \\
            =& \text{ }\qwp{\reward{1}}{\qwp{\qzero}{A}}\\
            =& \text{ }\qwp{\reward{1}; \qzero}{A}\\
             =& \text{ }\qwp{\transform{\qzero}}{A}
        \end{align*}

        \item $S = \Uq$: \begin{align*}
            &\text{ } \ert{\Uq}{A} = \identityOp \sumform \formsandwich{U}{A}  \\
            =& \text{ }\identityOp \sumform \qwp{\Uq}{A} = \qwp{\reward{1}}{\qwp{\Uq}{A}} \\
            =& \text{ }\qwp{\transform{\Uq}}{A}
        \end{align*}

        \item $S = \concat$:
        \begin{align*}
            &\ert{\concat}{A} = \ert{S_1}{\ert{S_2}{A}} \\
            \overset{IH}{=} &\qwp{\transform{S_1}}{\qwp{\transform{S_2}}{A}}  = \qwp{\transform{\concat}}{A}
        \end{align*}

        \item $S = \measurePrime$:
        \begin{align*}
            &\text{ }\ert{\measurePrime}{A} \\
            =& \text{ }\identityOp \sumform \left(\sum_{m} \formsandwich{M_m}{\ert{S_m'}{A}} \right) \\
            \overset{IH}{=} &\identityOp \sumform \left(\sum_{m} \formsandwich{M_m}{\qwp{\transform{S_m'}}{A}} \right) \\
            =& \text{ }\identityOp \sumform \qwp{\measurePrime}{A} \\
            =& \text{ }\qwp{\reward{1}}{\qwp{\measurePrime}{A}} \\
            =& \text{ }\qwp{\transform{\measurePrime}}{A}
        \end{align*}

        \item $S = \while'$:
        For this case, we need an auxiliary result:
        \begin{lemma}
            Let $S$ be a reward-free qrWhile program and $A \in \predicate$. Then $\ert{\while}{A} = \bigvee F'_n$ where
            \begin{align*}
                F'_0 &= \identityOp\\
                F'_{n+1} &= \formsandwich{M_0}{A} \sumform \formsandwich{M_1}{\ert{S}{F'_n}} \sumform \identityOp.
            \end{align*}
        \end{lemma}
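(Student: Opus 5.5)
The plan is a sandwich argument: I will show that the chain $\{F'_n\}_n$ is interleaved with the chain $\{F_n\}_n$ from Table~\ref{tab:ert}. Then the two chains have the same least upper bound, and that bound is $\ert{\while}{A}$ by definition. Throughout, write
\[
\Phi(X) = \formsandwich{M_0}{A} \sumform \formsandwich{M_1}{\ert{S}{X}} \sumform \identityOp ,
\]
so that $F_{n+1} = \Phi(F_n)$ with $F_0 = \zeroOp$, and $F'_{n+1} = \Phi(F'_n)$ with $F'_0 = \identityOp$.

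First I will establish that $\Phi$ is monotone on $\predicate$. Suppose $X \loewner Y$. Then $\ert{S}{X} \loewner \ert{S}{Y}$ by the monotonicity item of Proposition~\ref{prop:ERTProperties}, which is available here because the lemma lives inside the inductive proof and $S$ is a subprogram. Applying $\formsandwich{M_1}{\cdot}$ preserves $\loewner$ by Proposition~\ref{prop:orderproperties}, and so does adding the fixed expectations $\formsandwich{M_0}{A}$ and $\identityOp$ (same proposition, using commutativity of $\sumform$, which is immediate from the form definition). Next I need the elementary fact that $X \loewner X \sumform Y$ for all $X,Y \in \predicate$. This follows from Lemma~\ref{lem:orderExpectation}, since $\expect{X\sumform Y}{\rho} = \expect{X}{\rho} + \expect{Y}{\rho} \geq \expect{X}{\rho}$ by Proposition~\ref{prop:expectationProperties}.

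The base cases then read:
\[
F_0 = \zeroOp \loewner \identityOp = F'_0, \qquad
F'_0 = \identityOp \loewner \formsandwich{M_0}{A} \sumform \formsandwich{M_1}{\ert{S}{\zeroOp}} \sumform \identityOp = F_1 .
\]
Here the first inequality holds because $\zeroOp$ is the least element, and the second is the fact just proved. Applying monotonicity of $\Phi$ $n$ times to these two inequalities yields $F_n \loewner F'_n \loewner F_{n+1}$ for all $n$. In the same way, $F'_0 = \identityOp \loewner F'_1$ holds by the same fact, and induction with monotonicity of $\Phi$ shows that $\{F'_n\}_n$ is increasing. By Theorem~\ref{thm:omegaCompleteness}, $\bigvee_n F'_n$ therefore exists in $\predicate$. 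We already know from the proof of Proposition~\ref{prop:ERTProperties} that $\{F_n\}_n$ is increasing and has supremum $\ert{\while}{A}$.

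Finally, the interleaving gives both inequalities between the suprema. From $F_n \loewner F'_n \loewner \bigvee_m F'_m$ for every $n$ we get $\bigvee_n F_n \loewner \bigvee_n F'_n$. From $F'_n \loewner F_{n+1} \loewner \bigvee_m F_m$ for every $n$ we get the reverse inequality. Antisymmetry of $\loewner$ (Proposition~\ref{prop:orderproperties}) then gives $\ert{\while}{A} = \bigvee_n F'_n$.

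The only step needing care is the monotonicity of $\Phi$. It relies on monotonicity of $\ert{S}{\cdot}$ for the loop body, so it must be justified as part of the surrounding structural induction rather than assumed for the loop itself. Everything else is order-theoretic bookkeeping.
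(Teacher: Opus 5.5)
Your proof is correct and follows the paper's argument: you interleave the two chains as $F_n \loewner F'_n \loewner F_{n+1}$ using monotonicity of $ert\llbracket S \rrbracket$ and Proposition~\ref{prop:orderproperties}, then conclude that the suprema coincide. You also spell out the fact $\identityOp \loewner X \sumform \identityOp$ and that $\{F'_n\}_n$ is increasing, both of which the paper treats as clear. One minor point: monotonicity of $ert\llbracket S \rrbracket$ is already established for every reward-free program in Proposition~\ref{prop:ERTProperties}, so you do not need to route it through the surrounding structural induction.
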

        \begin{proof}
            We have to show that $\bigvee F'_n$ and $\bigvee F_n$ are equal where $F_n$ is defined as in Table \ref{tab:ert} for $\while'$, i.e.
            \begin{align*}
                F_0 &= \zeroOp\\
                F_{n+1} &= \formsandwich{M_0}{A} \sumform \formsandwich{M_1}{\ert{S'}{F_n}} \sumform \identityOp
            \end{align*}
            The only difference is the definition of $F_0$.

            Clearly $F'_n$ is an increasing sequence as well, so we show by induction over $n$ that $F'_n \loewner F_{n+1}$ and $F_n \loewner F'_{n}$ using the induction hypothesis, monotonicity of $\ert{S'}{\cdot}$ and Proposition \ref{prop:orderproperties}. This implies $\bigvee_n F'_n = \bigvee_n F_n$.
            \begin{itemize}
            \item $F'_n \loewner F_{n+1}$:
            \begin{align*}
                F'_0  =& \text{ }\identityOp \loewner  \identityOp \sumform \formsandwich{M_0}{A} \sumform \formsandwich{M_1}{\ert{S'}{\zeroOp}} = F_1\\
                F'_{n+1} =& \text{ }\formsandwich{M_0}{A} \sumform \formsandwich{M_1}{\ert{S'}{F'_n}} \sumform \identityOp  \\
                \loewner& \text{ }\formsandwich{M_0}{A} \sumform \formsandwich{M_1}{\ert{S'}{F_{n+1}}} \sumform \identityOp = F_{n+2}
            \end{align*}
            \item $F_n \loewner F'_{n}$
            \begin{align*}
                F_0 = &\text{ } \zeroOp \loewner \identityOp = F'_0\\
                F_{n+1} =& \text{ }\formsandwich{M_0}{A} \sumform \formsandwich{M_1}{\ert{S'}{F_n}} \sumform \identityOp \\
                \loewner& \text{ }\formsandwich{M_0}{A} \sumform \formsandwich{M_1}{\ert{S'}{F'_n}} \sumform \identityOp = F'_{n+1} \qedhere
            \end{align*}
            \end{itemize}
        \end{proof}

        Using the auxiliary result, we can now show the statement for the while case. Recall that $\qwp{\transform{\while'}}{A} = \bigvee A_n$ where
        \begin{align*}
            A_0 &= \zeroOp,\\
            A_{n+1} &= \formsandwich{M_0}{A} \sumform \formsandwich{M_1}{\qwp{\transform{S'};\reward{1}}{A_n}}
        \end{align*}
        We show by induction over $n$ that $F'_n = A_n \sumform \identityOp$ for all $n$ which implies $\bigvee_n F'_n = \identityOp \sumform \bigvee_n A_n$.
        \begin{itemize}
            \item $n = 0$: \begin{align*}
                F'_0 &= \identityOp \\
                A_0 &= \zeroOp \\
                F'_0 &= A_0 \sumform \identityOp
            \end{align*}
            \item $n \to n+1$: \begin{align*}
                F'_{n+1} &= \formsandwich{M_0}{A} \sumform \formsandwich{M_1}{\ert{S'}{F'_n}} \sumform \identityOp \\
                &\overset{IH}{=} \formsandwich{M_0}{A} \sumform \formsandwich{M_1}{\ert{S'}{A_n \sumform \identityOp}} \sumform \identityOp \\
                &\overset{IH}{=} \formsandwich{M_0}{A} \sumform \formsandwich{M_1}{\qwp{\transform{S'}}{A_n \sumform \identityOp}} \sumform \identityOp \\
                &= \formsandwich{M_0}{A} \sumform \formsandwich{M_1}{\qwp{\transform{S'};\reward{1}}{A_n}} \sumform \identityOp \\
                &= A_{n+1} \sumform \identityOp \qedhere
        \end{align*}
        \end{itemize}

    \end{itemize}

\end{proof}

Proof of Theorem \ref{thm:ertEquivERT}:
\begin{proof}
    We show this by induction:
    \begin{itemize}
        \item $S = \skipbf$: $\expect{\ert{\skipbf}{\zeroOp}}{\rho} = \expect{\zeroOp}{\rho} = 0 = \ERT{\skipbf}{\rho}$.

        \item $S = \qzero$:
        \begin{align*}
            \text{ }\expect{\ert{\qzero}{\zeroOp}}{\rho}
=& \text{ }\expect{\identityOp \sumform \sum_{n=0}^\infty \ket{n}\bra{0} \odot \zeroOp \odot \ket{0}\bra{n}}{\rho} \\
=& \text{ }\expect{\identityOp}{\rho} + \expect{\sum_{n=0}^\infty \ket{n}\bra{0} \odot \zeroOp \odot \ket{0}\bra{n}}{\rho} \\
=& \text{ }tr(\rho) + \expect{\zeroOp}{\rho} = tr(\rho) = \ERT{\qzero}{\rho}
        \end{align*}

        \item $S = \Uq$: $\expect{\ert{\Uq}{\zeroOp}}{\rho} = \expect{\formsandwich{U}{\zeroOp} \sumform \identityOp}{\rho} = \expect{\identityOp}{\rho}= tr(\rho) = \ERT{\Uq}{\rho}$.

        \item $S = \concat$: We use the connection to $wp$ (Proposition \ref{prop:ERTProperties})
        \begin{align*}
            \expect{\ert{\concat}{\zeroOp}}{\rho}
            =&\text{ }\expect{\ert{S_1}{\ert{S_2}{\zeroOp}}}{\rho} \\
    =& \text{ }\expect{\qwp{S_1}{\ert{S_2}{\zeroOp}} \sumform \ert{S_1}{\zeroOp}}{\rho} \\
=& \text{ }\expect{\qwp{S_1}{\ert{S_2}{\zeroOp}}}{\rho} + \expect{\ert{S_1}{\zeroOp}}{\rho} \\
=& \text{ }\expect{\ert{S_2}{\zeroOp}}{\semantics{S_1}(\rho)} + \expect{\ert{S_1}{\zeroOp}}{\rho} \\
=& \text{ }\ERT{S_2}{\semantics{S_1}(\rho)} + \ERT{S_1}{\rho} \\
=& \text{ }\ERT{\concat}{\rho}
        \end{align*}

        \item $S = \measurePrime$:
        \begin{align*} &\text{ }\expect{\ert{\measurePrime}{\zeroOp}}{\rho}\\
    = &\text{ }\expect{\identityOp \sumform \left(\sum_{m} \formsandwich{M_m}{\ert{S_m'}{\zeroOp}} \right)}{\rho} \\
=& \text{ }\expect{\identityOp}{\rho} + \expect{\sum_{m} \formsandwich{M_m}{\ert{S_m'}{\zeroOp}}}{\rho} \\
=& \text{ }tr(\rho) + \sum_{m} \expect{\formsandwich{M_m}{\ert{S_m'}{\zeroOp}}}{\rho} \\
=& \text{ }tr(\rho) + \sum_{m} \expect{\ert{S_m'}{\zeroOp}}{M_m{\rho} M_m^\dagger} \\
=& \text{ }tr(\rho) + \sum_{m} \ERT{S_m'}{M_m \rho M_m^\dagger} \\
=& \text{ }\ERT{\measurePrime}{\rho}
        \end{align*}

        \item $S= \while'$: We show $\expect{F_n}{\rho} = \ERT{while^{[n]}}{\rho}$
        \begin{align*}
            \expect{F_0}{\rho} =& \text{ }\expect{\zeroOp}{\rho} = 0 = \ERT{while^{[0]}}{\rho} \\
            \expect{F_{n+1}}{\rho} =&\text{ }  \expect{\formsandwich{M_0}{\zeroOp} \sumform \formsandwich{M_1}{\ert{S'}{F_n}} \sumform \identityOp}{\rho} \\
=& \text{ } \expect{\zeroOp}{\rho} + \expect{\formsandwich{M_1}{\ert{S'}{F_n}}}{\rho} + \expect{\identityOp}{\rho} \\
=& \text{ } 0 + \expect{\ert{S'}{F_n}}{M_1 \rho M_1^\dagger} + tr(\rho) \\
=& \text{ }\expect{\qwp{S'}{F_n} + \ert{S'}{\zeroOp} }{M_1 \rho M_1^\dagger} + tr(\rho) \\
=& \text{ }\expect{\qwp{S'}{F_n}}{M_1 \rho M_1^\dagger} + \expect{\ert{S'}{\zeroOp}}{M_1 \rho M_1^\dagger} + tr(\rho) \\
=& \text{ }\expect{F_n}{\semantics{S'}(M_1 \rho M_1^\dagger)} + \expect{\ert{S'}{\zeroOp}}{M_1 \rho M_1^\dagger} + tr(\rho) \\
=& \text{ }0 + \ERT{while^{[n]}}{\semantics{S'}(M_1 \rho M_1^\dagger)} + \ERT{S'}{M_1 \rho M_1^\dagger} + tr(\rho) \\
=& \text{ }\ERT{\skipbf}{M_0 \rho M_0^\dagger} + \ERT{S';while^{[n]}}{M_1 \rho M_1^\dagger} + tr(\rho)\\
=& \text{ }\ERT{while^{[n+1]}}{\rho}
        \end{align*}
        Then \begin{align*}
            &\text{ } \expect{\ert{\while'}{\zeroOp}}{\rho} \\
            =& \text{ }\expect{\bigvee_n F_n}{\rho} = \sup_n \expect{F_n}{\rho} \\
            =& \sup_n \ERT{while^{[n]}}{\rho} \\
            =& \lim_n \ERT{while^{[n]}}{\rho} \\
            =& \text{ } \ERT{\while'}{\rho}
        \end{align*}
        where supremum and limit coincide since we have increasing sequences.
    \end{itemize}
\end{proof}

Proof of Lemma \ref{lem:PastImpliesAst}:
\begin{proof}
    Let $S$ be reward-free and $\rho \in \density$. Assume $S$ is PAST on $\rho$, that means $\semanticsR{\transform{S}}(\rho) < \infty$ by definition. We show by induction that $tr(\semanticsD{S}(\rho)) = tr(\rho)$, that means $S$ is AST on $\rho$.
    \begin{itemize}
        \item $S = \skipbf$: $tr(\semanticsD{\skipbf}(\rho)) = tr(\rho)$ by definition.
        \item $S = \qzero$:
        \begin{align*}
            &tr(\semanticsD{\qzero}(\rho)) = tr(\sum_n \ket{0}\bra{n} \rho \ket{n}\bra{0}) \\
            =& \sum_n tr( \ket{n}\bra{0} \ket{0}\bra{n} \rho ) = tr(\sum_n \ket{n}\bra{n} \rho ) = tr(\rho).
        \end{align*}
        \item $S = \Uq$: $tr(\semanticsD{\Uq}(\rho)) = tr(U \rho U^\dagger) =tr(U^\dagger U \rho )= tr(\rho)$ as $U$ is unitary.
        \item $S = \concat$: By $\semanticsR{\transform{\concat}}(\rho) = \semanticsR{\transform{S_1}}(\rho) + \semanticsR{\transform{S_2}}(\semanticsD{S_1}(\rho)) < \infty$, we have
        \begin{align*}
            \semanticsR{\transform{S_1}}(\rho) < \infty \quad \text{and} \quad \semanticsR{\transform{S_2}}(\semanticsD{S_1}(\rho)) < \infty.
        \end{align*}
        Thus $tr(\semanticsD{S_1}(\rho)) = tr(\rho)$ and $tr(\semanticsD{S_2}(\semanticsD{S_1}(\rho))) = tr(\semanticsD{S_1}(\rho))$ by induction hypothesis. Overall it is
        $tr(\semanticsD{\concat}(\rho)) = tr(\semanticsD{S_2}(\semanticsD{S_1}(\rho))) = tr(\semanticsD{S_1}(\rho)) = tr(\rho)$.
        \item $S = \measurePrime$: It is $\semanticsR{\transform{\measurePrime}}(\rho) = \sum_m \semanticsR{\transform{S_m'}}(M_m \rho M_m^\dagger) < \infty$, thus $\semanticsR{\transform{S_m'}}(M_m \rho M_m^\dagger) < \infty$ for all $m$. Then $tr(\semanticsD{S_m'}(M_m \rho M_m^\dagger)) = tr(M_m \rho M_m^\dagger)$ for all $m$ by induction hypothesis. Overall it is
        \begin{align*}
            tr(\semanticsD{\measurePrime}(\rho)) &= tr(\sum_m \semanticsD{S_m'}(M_m \rho M_m^\dagger)) \\
            &= \sum_m tr(\semanticsD{S_m'}(M_m \rho M_m^\dagger)) \\
            &\overset{IH}{=} \sum_m tr(M_m \rho M_m^\dagger) \\
            &= tr(\sum_m M_m^\dagger M_m \rho) \\
            &= tr(\rho)
        \end{align*}
        \item $S = \while'$:
        By $\ERT{\while'}{\rho}< \infty$, we have
        \begin{align*}
            \forall n:&\text{ }  \ERT{(\while')^{[n]}}{\rho}\\
             \leq &\lim_{n \to \infty} \ERT{(\while')^{[n]}}{\rho} \\
             = & \text{ } \ERT{\while'}{\rho} < \infty
        \end{align*}
        and thus by induction hypothesis, it is $tr(\semanticsD{(\while')^{[n]}} (\rho)) = tr(\rho)$ for all $n$.

        We define $\sigma_0 = \rho$ and $\sigma_{n+1} = \semanticsD{S'}(M_1 \sigma_n M_1^\dagger)$ for all $n$. Intuitively, $\sigma_n$ is the state entering the $n$-th iteration of the loop.
        We have
        \begin{align*}
            \infty > & \text{ }\ERT{(\while')^{[n]}}{\rho} \\
            = & \text{ } tr(\rho) + \ERT{\skipbf}{M_0 \rho M_0^\dagger} + \ERT{S'; (\while')^{[n-1]}}{M_1 \rho M_1^\dagger} \\
            =& \text{ }tr(\rho) + \ERT{S'}{M_1 \rho M_1^\dagger} + \ERT{(\while')^{[n-1]}}{\semanticsD{S'}(M_1 \rho M_1^\dagger)}\\
            \geq & \text{ }tr(\sigma_0) + \ERT{(\while')^{[n-1]}}{\sigma_1}\\
            \geq &\text{ }  tr(\sigma_0) + tr(\sigma_1) + \ERT{(\while')^{[n-2]}}{\sigma_2} \\
            \geq & \dots \\
            \geq & \text{ } tr(\sigma_0) + tr(\sigma_1) + \ldots + \ERT{(\while')^{[1]}}{\sigma_{n-1}} \geq \sum_{i=0}^{{n-1}} tr(\sigma_i)
        \end{align*}
        Since $\sum_{i=0}^{n} tr(\sigma_i) < \infty$ for all $n$, $\lim_{n \to \infty} tr(\sigma_n) = 0$.

        From the equation above we get
        $\ERT{(\while')^{[1]}}{\sigma_n} < \infty$ for all $n$, thus
        \begin{align*}
            \infty > & \text{ }\ERT{(\while')^{[1]}}{\sigma_n} \\
            =& \text{ }tr(\sigma_n) + \ERT{S'; \skipbf}{M_1 \sigma_n M_1^\dagger} + \ERT{\skipbf}{M_0 \sigma_n M_0^\dagger} \\
            \geq &\text{ } \ERT{S'}{M_1 \sigma_n M_1^\dagger}
        \end{align*}
        That means $S'$ is PAST on $M_1 \sigma_n M_1^\dagger$ for all $n$. By induction hypothesis for $S'$, we have $tr(\semanticsD{S'}(M_1 \sigma_n M_1^\dagger)) = tr(M_1 \sigma_n M_1^\dagger) $ for all $n$.

        Then we show that $tr(\rho) - tr(\semanticsD{(\while')^{n}} (\rho)) = tr(\sigma_n)$ by induction over $n$:
        \begin{itemize}
            \item $n=0$: $tr(\rho) - tr(\semanticsD{(\while')^{0}} (\rho)) = tr(\rho) - tr(0) = tr(\rho) = tr(\sigma_0)$
            \item $n \to n+1$: It is
            \begin{align*}
                &\text{ }  tr(\rho) - tr(\semanticsD{(\while')^{n+1}} (\rho)) \\
                = & \text{ } tr(\rho) - tr(\semanticsD{\skipbf}(M_0 \rho M_0^\dagger) + \semanticsD{S'; (\while')^{n}}(M_1 \rho M_1^\dagger)) \\
                = & \text{ } tr(\rho) - tr(\semanticsD{\skipbf}(M_0 \rho M_0^\dagger)) - tr(\semanticsD{S'; (\while')^{n}}(M_1 \rho M_1^\dagger)) \\
                = & \text{ } tr(\rho) - tr(M_0 \rho M_0^\dagger) - tr(\semanticsD{ (\while')^{n}}(\semanticsD{S'}(M_1 \rho M_1^\dagger))) \\
                & + tr(\semanticsD{S'}(M_1 \rho M_1^\dagger)) - tr(\semanticsD{S'}(M_1 \rho M_1^\dagger))\\
                =& \text{ } tr(\rho) - tr(M_0 \rho M_0^\dagger) - tr(M_1 \rho M_1^\dagger) + \\
                & \left(tr(\semanticsD{S'}(M_1 \rho M_1^\dagger))  - tr(\semanticsD{ (\while')^{n}}(\semanticsD{S'}(M_1 \rho M_1^\dagger)))\right) \\
                \overset{I.H.}{=}& 0+ tr(\sigma_{n+1})
            \end{align*}
            because $\rho = \sigma_0$, thus $tr(\semanticsD{S'}(M_1 \rho M_1^\dagger)) = tr(M_1 \rho M_1^\dagger)$ by induction hypothesis for $S'$. Also $tr(\rho) - tr(M_0 \rho M_0^\dagger) - tr(M_1 \rho M_1^\dagger) = 0$ because $M_0^\dagger M_0 + M_1^\dagger M_1 = \identityOp$.
        \end{itemize}
        Overall we have shown that $tr(\rho) - tr(\semanticsD{(\while')^{n}} (\rho)) = tr(\sigma_n)$ for all $n$ and as $\lim_{n \to \infty} tr(\sigma_n) = 0$, $\lim_{n \to \infty} tr(\semanticsD{(\while')^{n}} (\rho)) = tr(\rho)$ and thus $tr(\semanticsD{\while'} (\rho)) = tr(\rho)$ (limit and $\bigvee$ coincide as we have an increasing sequence of real numbers) and $S$ is AST on $\rho$.
    \end{itemize}
\end{proof}

Proof of Proposition \ref{prop:ertPark}:
\begin{proof}
    First we show an auxiliary result that $\ert{\while}{B} = \bigvee_n \Psi_B^n(0)$ where $\Psi_B^n$ is the $n$-fold composition of $\Psi_B$:

    We show by induction over $n$ that $\Psi_B^n(0) = F_n$ where $F_n$ is defined as in the definition of $\ert{\while}{B}$ in Table \ref{tab:ert}:
    \begin{itemize}
        \item $n=0$: $\Psi_B^0(0) = 0 = F_0$.
        \item $n \to n+1$:
        \begin{align*}
            \Psi_B^{n+1}(0) =&\text{ }  \Psi_B(\Psi_B^n(0)) \\
=& \text{ } \identityOp \sumform \formsandwich{M_0}{B} \sumform \formsandwich{M_1}{\ert{S}{\Psi_B^n(0)}}  \\
=& \text{ } \identityOp \sumform \formsandwich{M_0}{B} \sumform \formsandwich{M_1}{\ert{S}{F_n}} \\
=& \text{ } F_{n+1}
        \end{align*}
    \end{itemize}
    Thus we have $\Psi_B^n(0) = F_n$ for all $n$ and thus $\bigvee_n \Psi_B^n(0) = \bigvee_n F_n = \ert{\while}{B}$.

    The rest of the proof is now completely analogous to the proof of Park induction for wp (Proposition \ref{prop:park}) and thus we only give a sketch here:
    $\Psi_B$ is monotonous ($A_1 \loewner A_2 \Rightarrow \Psi_B(A_1) \loewner \Psi_B(A_2)$) by the monotonicity of ert and by Proposition \ref{prop:orderproperties}:
    \begin{align*}
        A_1 \loewner A_2 &\Rightarrow \formsandwich{M_1}{\ert{S}{A_1}} \loewner \formsandwich{M_1}{\ert{S}{A_2}} \\
        &\Rightarrow \identityOp \sumform \formsandwich{M_0}{B} \sumform \formsandwich{M_1}{\ert{S}{A_1}} \\
        & \loewner \identityOp \sumform \formsandwich{M_0}{B} \sumform \formsandwich{ M_1}{\ert{S}{A_2}} \\
        &\Rightarrow \Psi_B(A_1) \loewner \Psi_B(A_2)
    \end{align*}
    Then $\Psi_B(A) \loewner A$ implies $\Psi_B^n(0) \loewner A$ completely analogous as we have the same needed properties of $\Psi_B$ as in the wp case. Finally, we have $\ert{\while}{B} = \bigvee_n \Psi_B^n(0) \loewner A$.
\end{proof}

\section{Details of Section \ref{sec:examples} - Examples}
\label{sec:appendix_examples}

In this section, we provide additional details and calculations for the examples presented in Section~\ref{sec:examples}.

\subsection{Example 1: Probabilistic PAST vs. Quantum PAST}
\label{sec:appendix_example1}
To start the analysis, we can compute $\ert{\while_2}{\zeroOp}= \bigvee_n P_n$ of the second while loop. Intuitively, this loop decrements the quantum integer $q_1$ until it reaches $\ket{0}$, i.e., its runtime is the current value of $q_1$. We have:
\begin{align*}
    P_0 &= \zeroOp \\
    P_{n+1} &= M_0^\dagger \zeroOp M_0 + M_1^\dagger \ert{q_1 := U_{-1} q_1}{P_n} M_1 + \identityOp \\
    &= M_1^\dagger (U_{-1}^\dagger P_n U_{-1}+\identityOp) M_1 + \identityOp
\end{align*}
Iterating this a couple of times gives:
\begin{align*}
    P_1 &= 2\identityOp - \ket{0}\bra{0} \\
    P_2 &= 3\identityOp - \ket{1}\bra{1} - 2\ket{0}\bra{0} \\
    P_3 &= 4\identityOp - \ket{2}\bra{2} - 2\ket{1}\bra{1} - 3\ket{0}\bra{0}
\end{align*}
We can see that $P_n = (n+1)\identityOp - \sum_{k=0}^{n-1} (n-k)\ket{k}\bra{k}$. Thus, we have:
\begin{equation*}
    \ert{\while_2}{\zeroOp} = \bigvee_{n=0}^\infty P_n = \lim_{n\to\infty} \left((n+1)\identityOp - \sum_{k=0}^{n-1} (n-k)\ket{k}\bra{k}\right):= W_1
\end{equation*}
To compute this limit, we can look at the expectation value on a basis state $\ket{m}$:
\begin{align*}
    \expect{\ert{\while_2}{\zeroOp}}{\ket{m}\bra{m}} &= \lim_{n\to\infty} \left((n+1) - (n-m)\right) \\
    &= \lim_{n\to\infty} (m+1) = m+1
\end{align*}
That means the expected runtime of the second while loop on input state $\ket{m}$ is $m+1$ as expected. Note that this is only about variable $q_1$ here, the other variable do not matter for this loop, i.e., we have $\identityOp$ on the other variable.

Now we can analyze the first while loop. Let $\ert{while_1}{W_1\otimes \identityOp} = \bigvee_n Q_n$ with
\begin{align*}
    Q_0 =&\text{ } \zeroOp \\
    Q_{n+1} =& \text{ } (\identityOp \otimes M_0^\dagger) (W_1 \otimes \identityOp) (\identityOp \otimes M_0) \\
    & + (\identityOp \otimes M_1^\dagger) \ert{q_2:= Hq_2; q_1 := U_{2} q_1}{Q_n} (\identityOp \otimes M_1) + \identityOp \\
     =& \text{ } W_1 \otimes (\identityOp - \ket{1}\bra{1}) \\
    & + (\identityOp \otimes \ket{1}\bra{1}) \ert{q_2:= Hq_2; q_1 := U_{2} q_1}{Q_n}(\identityOp \otimes \ket{1}\bra{1}) + \identityOp
\end{align*}
with
\begin{align*}
    &\ert{q_2:= Hq_2; q_1 := U_{2} q_1}{Q_n} \\
    = & \text{ } (\identityOp \otimes H^\dagger) ((U_2^\dagger \otimes \identityOp) Q_n (U_2 \otimes \identityOp)+ \identityOp)( \identityOp \otimes H) + \identityOp\\
    =& \text{ } (U_2^\dagger \otimes H^\dagger) Q_n (U_2 \otimes H) + 2\identityOp
\end{align*}
Overall we have \begin{align*}
    Q_{n+1} &= W_1 \otimes (\identityOp - \ket{1}\bra{1}) +  (\identityOp \otimes \ket{1}\bra{1}) ((U_2^\dagger \otimes H^\dagger) Q_n (U_2 \otimes H) + 2\identityOp) (\identityOp \otimes \ket{1}\bra{1}) + \identityOp \\
    &= W_1 \otimes (\identityOp - \ket{1}\bra{1}) +  (\identityOp \otimes \ket{1}\bra{1}) (U_2^\dagger \otimes H^\dagger) Q_n (U_2 \otimes H) (\identityOp \otimes \ket{1}\bra{1}) + \identityOp + 2 \identityOp \otimes \ket{1}\bra{1} \\
\end{align*}

Thus we have:
\begin{align*}
    Q_0 =& \text{ } \zeroOp \\
    Q_1 =& \text{ }W_1 \otimes (\identityOp - \ket{1}\bra{1}) + (\identityOp \otimes \ket{1}\bra{1}) (U_2^\dagger \otimes H^\dagger) \zeroOp (U_2 \otimes H) (\identityOp \otimes \ket{1}\bra{1}) + \identityOp + 2 \identityOp \otimes \ket{1}\bra{1} \\
    =& \text{ } W_1 \otimes (\identityOp - \ket{1}\bra{1}) + \identityOp + 2 \identityOp \otimes \ket{1}\bra{1}\\
    Q_2 =& \text{ } (\identityOp \otimes \ket{1}\bra{1}) (U_2^\dagger \otimes H^\dagger) \left(W_1 \otimes (\identityOp - \ket{1}\bra{1}) + \identityOp + 2 \identityOp \otimes \ket{1}\bra{1}\right) (U_2 \otimes H) (\identityOp \otimes \ket{1}\bra{1}) \\
    & + \identityOp + 2 \identityOp \otimes \ket{1}\bra{1} + W_1 \otimes (\identityOp - \ket{1}\bra{1}) \\
    =& \text{ } (\identityOp \otimes \ket{1}\bra{1}) (U_2^\dagger \otimes H^\dagger) \left(W_1 \otimes (\identityOp - \ket{1}\bra{1}) \right) (U_2 \otimes H) (\identityOp \otimes \ket{1}\bra{1}) \\
    & + \identityOp + (2+1+2\cdot \frac{1}{2}) \identityOp \otimes \ket{1}\bra{1} + W_1 \otimes (\identityOp - \ket{1}\bra{1}) \\
    =& \text{ } U_2^\dagger W_1 U_2 \otimes \left(\ket{1}\bra{1} H^\dagger(\identityOp - \ket{1}\bra{1}) H \ket{1}\bra{1}\right) \\
    & + \identityOp + 4 \identityOp \otimes \ket{1}\bra{1} + W_1 \otimes (\identityOp - \ket{1}\bra{1}) \\
    =& \text{ } U_2^\dagger W_1 U_2 \otimes \left(\frac{1}{2} \ket{1}\bra{1}\right) + \identityOp + 4 \identityOp \otimes \ket{1}\bra{1} + W_1 \otimes (\identityOp - \ket{1}\bra{1}) \\
    =&  \left( 4 \identityOp + \frac{1}{2} U_2^\dagger W_1 U_2 \right)\otimes \ket{1}\bra{1} + \identityOp + W_1 \otimes (\identityOp - \ket{1}\bra{1}) \\
    Q_3 =& \text{ } (\identityOp \otimes \ket{1}\bra{1}) (U_2^\dagger \otimes H^\dagger) \left(\left( 4 \identityOp + \frac{1}{2} U_2^\dagger W_1 U_2 \right)\otimes \ket{1}\bra{1} + \identityOp + W_1 \otimes (\identityOp - \ket{1}\bra{1}) \right) (U_2 \otimes H) (\identityOp \otimes \ket{1}\bra{1}) \\
    & + \identityOp + 2 \identityOp \otimes \ket{1}\bra{1} + W_1 \otimes (\identityOp - \ket{1}\bra{1}) \\
    =&\text{ }  U_2^\dagger\left( 4 \identityOp + \frac{1}{2} U_2^\dagger W_1 U_2 \right) U_2 \otimes \frac{1}{2}\ket{1}\bra{1} + U_2^\dagger W_1 U_2 \otimes \ket{1}\bra{1} H^\dagger\left(\identityOp - \ket{1}\bra{1}\right)H\ket{1}\bra{1} \\
    & + \identityOp + 3 \identityOp \otimes \ket{1}\bra{1} + W_1 \otimes (\identityOp - \ket{1}\bra{1}) \\
    =& \text{ } U_2^\dagger\left( 4 \identityOp + \frac{1}{2} U_2^\dagger W_1 U_2 \right) U_2 \otimes \frac{1}{2}\ket{1}\bra{1} + U_2^\dagger W_1 U_2 \otimes \frac{1}{2}\ket{1}\bra{1} + \identityOp + 3 \identityOp \otimes \ket{1}\bra{1} + W_1 \otimes (\identityOp - \ket{1}\bra{1}) \\
    =& \left(5 \identityOp + \frac{1}{2}U_2^\dagger\left( \frac{1}{2} U_2^\dagger W_1 U_2 +W_1 \right)  U_2\right) \otimes \ket{1}\bra{1} + \identityOp + W_1 \otimes (\identityOp - \ket{1}\bra{1}) \\
    Q_4 =& (\identityOp \otimes \ket{1}\bra{1}) (U_2^\dagger \otimes H^\dagger) \\
    & \left(\left(5 \identityOp + \frac{1}{2}U_2^\dagger\left( \frac{1}{2} U_2^\dagger W_1 U_2 +W_1 \right)  U_2\right) \otimes \ket{1}\bra{1} + \identityOp + W_1 \otimes (\identityOp - \ket{1}\bra{1})\right) \\
    &(U_2 \otimes H) (\identityOp \otimes \ket{1}\bra{1}) \\
    & + \identityOp + 2 \identityOp \otimes \ket{1}\bra{1} + W_1 \otimes (\identityOp - \ket{1}\bra{1}) \\
    =& \text{ } U_2^\dagger \left(5 \identityOp + \frac{1}{2}U_2^\dagger\left( \frac{1}{2} U_2^\dagger W_1 U_2 +W_1 \right)  U_2\right) U_2 \otimes \frac{1}{2}\ket{1}\bra{1} + U_2^\dagger W_1 U_2 \otimes \frac{1}{2}\ket{1}\bra{1} \\
    & + \identityOp + 3 \identityOp \otimes \ket{1}\bra{1} + W_1 \otimes (\identityOp - \ket{1}\bra{1}) \\
    =& \left(5.5 \identityOp + \frac{1}{2} U_2^\dagger \left(\frac{1}{2}U_2^\dagger\left( \frac{1}{2} U_2^\dagger W_1 U_2 +W_1 \right)  U_2 + W_1\right) U_2 \right)\otimes \ket{1}\bra{1} + \identityOp +  W_1 \otimes (\identityOp - \ket{1}\bra{1}) \\
    =& \left(5.5 \identityOp + \sum_{k=1}^3 (\frac{1}{2} U_2^\dagger)^k W_1 U_2 \right)\otimes \ket{1}\bra{1} + \identityOp +  W_1 \otimes (\identityOp - \ket{1}\bra{1})
\end{align*}

We can see a pattern emerging here. For $n>0$, we have:
\begin{align*}
    Q_n = \left(a_n \identityOp + \sum_{k=1}^{n-1} (\frac{1}{2} U_2^\dagger)^k W_1 U_2^k \right)\otimes \ket{1}\bra{1} + \identityOp +  W_1 \otimes (\identityOp - \ket{1}\bra{1})
\end{align*}
with $a_1 = 2$, $a_2 = 4$, $a_3 = 5$ $a_4 = 5.5$. The sequence continues as $a_n = 3 + \frac{1}{2}a_{n-1}$, i.e., $a_n = 6 - \frac{8}{2^n}$. Thus, we have:
\begin{align*}
    &\ert{\while_1}{W_1 \otimes \identityOp} = \bigvee_{n=0}^\infty Q_n \\
    =& \lim_{n\to\infty} \left( \left(\left(6 - \frac{8}{2^n} \right) \identityOp + \sum_{k=1}^{n-1} (\frac{1}{2} U_2^\dagger)^k W_1 U_2^k \right)\otimes \ket{1}\bra{1} + \identityOp +  W_1 \otimes (\identityOp - \ket{1}\bra{1}) \right) := W_2
\end{align*}
To compute the expected value of $W_2$ on a basis state $\ket{m}\otimes \ket{b}$, we need to compute the expected value of $\sum_{k=1}^{n-1} (\frac{1}{2} U_2^\dagger)^k W_1 U_2^k$ on $\ket{1}\bra{1}$:
\begin{align*}
    & \expect{\sum_{k=1}^{n-1} (\frac{1}{2} U_2^\dagger)^k W_1 U_2^k}{\ket{1}\bra{1}}\\
    =& \sum_{k=1}^{n-1} \expect{(\frac{1}{2} U_2^\dagger)^k W_1 U_2^k}{\ket{1}\bra{1}} \\
    =& \sum_{k=1}^{n-1} \expect{W_1}{(\frac{1}{2} U_2)^k \ket{1}\bra{1} (U_2^\dagger)^k} \\
    =& \sum_{k=1}^{n-1} \frac{1}{2^k} \expect{W_1}{\ket{2^{k}}\bra{2^{k}}} \\
    =& \sum_{k=1}^{n-1} \frac{1}{2^k} (2^k -1) \\
    =& \sum_{k=1}^{n-1} (1 - \frac{1}{2^k}) = n-1 - (1 - \frac{1}{2^{n-1}}) = n-2 + \frac{1}{2^{n-1}}
\end{align*}
Overall we have:
\begin{align*}
    &\text{ } \expect{\ert{L_1}{\zeroOp}}{\ket{1}\bra{1} \otimes \ket{1}\bra{1}} \\
    =& \text{ } \expect{W_2}{\ket{1}\bra{1} \otimes \ket {1}\bra{1}} \\
    =& \text{ } \expect{\lim_{n\to\infty} \left( \left(\left(6 - \frac{8}{2^n} \right) \identityOp + \sum_{k=1}^{n-1} (\frac{1}{2} U_2^\dagger)^k W_1 U_2^k \right)\otimes \ket{1}\bra{1} + \identityOp +  W_1 \otimes (\identityOp - \ket{1}\bra{1}) \right)}{\ket{1}\bra{1} \otimes \ket {1}\bra{1}} \\
    =& \lim_{n\to\infty} \expect{\left(6 - \frac{8}{2^n} \right) \identityOp }{\ket{1}\bra{1} \otimes \ket{1}\bra{1}}+ \expect{\left( \sum_{k=1}^{n-1} (\frac{1}{2} U_2^\dagger)^k W_1 U_2^k \right)\otimes \ket{1}\bra{1}}{\ket{1}\bra{1} \otimes \ket{1}\bra{1}} \\
    & + \expect{\identityOp}{\ket{1}\bra{1} \otimes \ket {1}\bra{1}}  + \expect{W_1 \otimes (\identityOp - \ket{1}\bra{1}) }{\ket{1}\bra{1} \otimes \ket {1}\bra{1}} \\
    =& \lim_{n\to\infty} 6 - \frac{8}{2^n} + \expect{\sum_{k=1}^{n-1} (\frac{1}{2} U_2^\dagger)^k W_1 U_2^k}{\ket{1}\bra{1}} + 1 \\
    =& \lim_{n\to\infty} 7 - \frac{8}{2^n} + n-2 + \frac{1}{2^{n-1}}  = \infty.
\end{align*}

For the computation of AST, we first consider the second loop. We can compute $\qwp{\while_2}{\identityOp} = \bigvee_n P_n$ with
\begin{align*}
    P_0 &= \zeroOp \\
    P_{n+1} &= M_0^\dagger \identityOp M_0 + M_1^\dagger \qwp{S}{P_n} M_1 \\
    &= \ket{0}\bra{0} + (\identityOp - \ket{0}\bra{0}) \qwp{q_1 := U_{-1} q_1}{P_n} (\identityOp - \ket{0}\bra{0}) \\
    &= \ket{0}\bra{0} + (\identityOp - \ket{0}\bra{0}) U_{-1}^\dagger P_n U_{-1} (\identityOp - \ket{0}\bra{0})
\end{align*}
Then we have
\begin{align*}
    P_0 &= \zeroOp \\
    P_1 &= \ket{0}\bra{0} \\
    P_2 &= \ket{0}\bra{0} + (\identityOp - \ket{0}\bra{0}) U_{-1}^\dagger \ket{0}\bra{0} U_{-1} (\identityOp - \ket{0}\bra{0}) \\
    &= \ket{0}\bra{0} + \ket{1}\bra{1} \\
    P_3 &= \ket{0}\bra{0} + (\identityOp - \ket{0}\bra{0}) U_{-1}^\dagger (\ket{0}\bra{0} + \ket{1}\bra{1}) U_{-1} (\identityOp - \ket{0}\bra{0})\\
     &= \ket{0}\bra{0} + \ket{1}\bra{1} + \ket{2}\bra{2} \\
    &\dots \\
    P_n &= \ket{0}\bra{0} + \ket{1}\bra{1} + \dots + \ket{n-1}\bra{n-1}
\end{align*}
and overall we have $\qwp{\while_2}{\identityOp} = \bigvee_n P_n = \identityOp$.

Now we can analyze the first while loop. We have $\qwp{\while_1}{\identityOp \otimes \identityOp} = \bigvee_n P_n$ with
\begin{align*}
    P_0 &= \zeroOp \\
    P_{n+1} &= M_0^\dagger \identityOp M_0 + M_1^\dagger \qwp{S}{P_n} M_1 \\
    &= \identityOp \otimes \ket{0}\bra{0} + (\identityOp \otimes \ket{1}\bra{1}) \qwp{q := H q; q_1 := U_2 q_1}{P_n} (\identityOp \otimes \ket{1}\bra{1}) \\
    &= \identityOp \otimes \ket{0}\bra{0} + (\identityOp \otimes \ket{1}\bra{1}) (U_2^\dagger \otimes H^\dagger) P_n (U_2 \otimes H) (\identityOp \otimes \ket{1}\bra{1})
\end{align*}
Iterating this, we have
\begin{align*}
    P_0 &= \zeroOp \\
    P_1 &= \identityOp \otimes \ket{0}\bra{0} \\
    P_2 &= \identityOp \otimes \ket{0}\bra{0} + (\identityOp \otimes \ket{1}\bra{1}) (U_2^\dagger \otimes H^\dagger) (\identityOp \otimes \ket{0}\bra{0}) (U_2 \otimes H) (\identityOp \otimes \ket{1}\bra{1})  \\
    &= \identityOp \otimes \left(\ket{0} \bra{0} + \frac{1}{2} \ket{1}\bra{1}\right) \\
    P_3 &= \identityOp \otimes \ket{0}\bra{0} + (\identityOp \otimes \ket{1}\bra{1}) (U_2^\dagger \otimes H^\dagger) \left(\identityOp \otimes (\ket{0} \bra{0} + \frac{1}{2} \ket{1}\bra{1}) \right) (U_2\otimes H) (\identityOp \otimes \ket{1}\bra{1}) \\
    & = \identityOp \otimes \ket{0}\bra{0} + \identityOp \otimes \ket{1}\bra{1} H^\dagger \left((\ket{0} \bra{0} + \frac{1}{2} \ket{1}\bra{1}) \right) H \ket{1}\bra{1}  \\
    &= \identityOp \otimes\left(\ket{0}\bra{0} + \frac{3}{4} \ket{1}\bra{1}\right)  \\
    &\dots \\
    P_n &= \identityOp \otimes \left(\ket{0}\bra{0} + \left(1 - \frac{1}{2^n}\right) \ket{1}\bra{1}\right)
\end{align*}
and overall we have $\qwp{\while_1}{\identityOp} = \bigvee_n P_n = \identityOp$.

\subsection{Example 2: One-Sided Quantum Walk}
\label{sec:appendix_example2}
For the calculation of AST, we show that $A = (\zeroOp_{<0} + \identityOp_{\geq 0}) \otimes \identityOp = \identityOp_{\geq 0} \otimes \identityOp$ is a fix-point of $\psi_\identityOp$:
\begin{align*}
    \psi_\identityOp(\identityOp_{\geq 0}\otimes \identityOp) =& \ket{0}\bra{0} \otimes \identityOp + ((\identityOp - \ket{0}\bra{0})\otimes H)S^\dagger ( \identityOp_{\geq 0} \otimes \identityOp) S  ((\identityOp - \ket{0}\bra{0}) \otimes H)\\
    =& \ket{0}\bra{0} \otimes \identityOp + ((\identityOp - \ket{0}\bra{0})\otimes H) S^\dagger( \sum_{n=0}^{\infty} \ket{n}\bra{n} \otimes \ket{0}\bra{0} +\sum_{n=0}^{\infty} \ket{n}\bra{n} \otimes \ket{1}\bra{1} )S\\
    &((\identityOp - \ket{0}\bra{0}) \otimes H)\\
    =& \ket{0}\bra{0} \otimes \identityOp + ((\identityOp - \ket{0}\bra{0})\otimes H) ( \sum_{n=0}^{\infty} \ket{n+1}\bra{n+1} \otimes \ket{0}\bra{0} +\sum_{n=0}^{\infty} \ket{n}\bra{n} \otimes \ket{1}\bra{1} )\\
    &((\identityOp - \ket{0}\bra{0}) \otimes H)\\
    =& \ket{0}\bra{0} \otimes \identityOp + ((\identityOp - \ket{0}\bra{0})\otimes H)\\
    & ( \sum_{n=1}^{\infty} \ket{n}\bra{n} \otimes \ket{0}\bra{0} +\sum_{n=1}^{\infty} \ket{n}\bra{n} \otimes \ket{1}\bra{1} + \ket{0}\bra{0} \otimes \ket{1}\bra{1})\\
    &((\identityOp - \ket{0}\bra{0}) \otimes H)\\
    =& \ket{0}\bra{0} \otimes \identityOp + ((\identityOp - \ket{0}\bra{0})\otimes H) ( \sum_{n=1}^{\infty} \ket{n}\bra{n} \otimes \identityOp + \ket{0}\bra{0} \otimes \ket{1}\bra{1})((\identityOp - \ket{0}\bra{0}) \otimes H)\\
    =& \ket{0}\bra{0} \otimes \identityOp +  \sum_{n=1}^{\infty} \ket{n}\bra{n} \otimes \identityOp = \identityOp_{\geq 0} \otimes \identityOp
\end{align*}

For finding the invariant for the calculation of ERT, we used the ansatz \begin{align*}
A = \sum_{n=0}^\infty a_n \ket{n}\bra{n} \otimes \ket{0}\bra{0} + \sum_{n=0}^\infty b_n \ket{n}\bra{n} \otimes \ket{1}\bra{1} + \zeroOp|_{\geq 0}.
\end{align*}
This ansatz is now used to find the coefficients $a_n \geq 0$ and $b_n \geq 0$. We obtain:
\begin{align*}
    \Psi_\zeroOp(A) =&  \sum_{n=0}^{\infty} \ket{n}\bra{n} \otimes T_n + \zeroOp|_{\geq 0} \\
\end{align*}
with \begin{align*}
    T_n = \begin{cases}
        \identityOp & n=0 \\
        (a_{n-1}+3) \ket{+}\bra{+} + (b_n+3) \ket{-}\bra{-} & n>0
    \end{cases}
\end{align*}

To ensure that $\Psi_\zeroOp(A) \sqsubseteq A$, we need to ensure that $T_n \sqsubseteq a_n \ket{0}\bra{0} + b_n \ket{1}\bra{1}$ for all $n$. Thus, we obtain the following result:
\begin{lemma}
    For a given $a_n$, the smallest $a_{n+1}$ such that $T_{n+1} \sqsubseteq a_{n+1} \ket{0}\bra{0} + b_{n+1} \ket{1}\bra{1}$ has a solution is $a_{n+1} = a_n + 12 + 6\sqrt{2}$. For that $a_n$, there is only one solution for $b_{n+1}$, which is $b_{n+1} = a_n + 6 + 3\sqrt{2}$.
\end{lemma}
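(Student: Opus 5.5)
The plan is to reduce the operator inequality to an explicit positive-semidefiniteness condition on a real symmetric $2\times 2$ matrix, and then treat minimality of $a_{n+1}$ as a one-parameter optimisation. Since both $T_{n+1}$ and $a_{n+1}\ket{0}\bra{0}+b_{n+1}\ket{1}\bra{1}$ are bounded operators on $\hilbert_2$, Proposition~\ref{prop:BoundedAddition} and Lemma~\ref{lem:orderExpectation} let us read $\loewner$ as the ordinary Loewner order. So the condition is equivalent to
\[
D_{n+1} := \bigl(a_{n+1}\ket{0}\bra{0}+b_{n+1}\ket{1}\bra{1}\bigr) - T_{n+1} \;\geq\; 0 .
\]
I will write $T_{n+1}$ in the computational basis using $\ket{\pm}\bra{\pm}=\tfrac12\bigl(\identityOp \pm (\ket{0}\bra{1}+\ket{1}\bra{0})\bigr)$. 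This gives $D_{n+1}$ as a real symmetric matrix whose entries are affine in $a_{n+1}$ and $b_{n+1}$, with $a_n$ as a parameter.

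\textbf{Step 1: positivity criterion.} By Sylvester's criterion, $D_{n+1}\ge 0$ holds iff both diagonal entries are nonnegative and $\det D_{n+1}\ge 0$. The determinant is affine in $a_{n+1}$ and quadratic in $b_{n+1}$.

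\textbf{Step 2: feasible region in $a_{n+1}$.} Fix $a_{n+1}$ and ask whether some $b_{n+1}$ exists. Viewed as a polynomial in $b_{n+1}$, the determinant condition has real solutions, together with the diagonal conditions, exactly when its discriminant is nonnegative. The discriminant is itself a quadratic in $a_{n+1}-a_n$. Solving it yields the threshold; the $\sqrt{2}$ in the claimed value $12+6\sqrt2 = 6(2+\sqrt2)$ is the expected signature of this root. The infimum of feasible $a_{n+1}$ is then the larger root, and it is attained because the feasible set is closed.

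\textbf{Step 3: uniqueness of $b_{n+1}$.} At the threshold the discriminant vanishes, so the quadratic in $b_{n+1}$ has a double root. The only candidate is the vertex $b_{n+1}=a_n+6+3\sqrt2$. I then check that the diagonal entries are nonnegative there.

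The main obstacle I expect is Step~1: getting $T_{n+1}$ exactly right. Its $\ket{-}\bra{-}$ coefficient depends on $b_{n+1}$ itself, which couples the unknown into both sides of the inequality. A naive expansion suggests the degenerate choice with the off-diagonal term vanishing might already be feasible with a smaller increment. So before the optimisation I would re-derive $\Psi_\zeroOp(A)$ restricted to $\ket{n+1}\bra{n+1}\otimes\hilbert_2$ from the characteristic functional, tracking carefully which of $a_n, a_{n+1}, b_{n+1}$ enters each coefficient and the constant $3$ coming from $\identityOp$, $\ket{0}\bra{0}$ and the loop body. I would also make sure the ansatz constraints $a_n,b_n\ge 0$ are imposed. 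If the corrected matrix still admits a degenerate solution, the minimality claim must instead come from combining this constraint with the neighbouring constraint at index $n+2$ (where $b_{n+1}$ reappears). In that case I would set up the joint recursion and optimise over the pair.
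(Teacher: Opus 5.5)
Your overall strategy is sound: reduce to the ordinary Loewner order on the $2\times 2$ block, apply the principal-minor criterion, then optimise. The paper gives no written argument, only a Lean file, so there is no textual route to compare against. However, Step~2 rests on a false equivalence, and the doubt in your last paragraph is unfounded.

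\textbf{Step 2.} Put $u=b_{n+1}-a_n$ and $v=a_{n+1}-a_n$. Then $(D_{n+1})_{00}=v-\tfrac{u+6}{2}$, $(D_{n+1})_{11}=\tfrac{u-6}{2}$ and $(D_{n+1})_{01}=\tfrac{u}{2}$. Hence $4\det D_{n+1}=-2u^2+2vu+36-12v$, whose discriminant in $u$ is $4(v^2-24v+72)$. This is nonnegative on two rays, $v\le 12-6\sqrt2$ and $v\ge 12+6\sqrt2$.

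So feasibility is \emph{not} equivalent to a nonnegative discriminant: for $v=0$ the discriminant is positive, yet no $b_{n+1}$ works. Read literally, your Step~2 contradicts the minimality you want, and it does not justify ``the infimum is the larger root''.

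You must use the diagonal conditions to exclude the lower ray. $(D_{n+1})_{11}\ge 0$ gives $u\ge 6$, while $(D_{n+1})_{00}\ge 0$ gives $u\le 2v-6<6$ whenever $v<6$, and the lower ray lies below $6$.

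A cleaner route yields minimality and uniqueness in one stroke. Since $u=6$ gives $4\det D_{n+1}=-36<0$, we have $u>6$, and then $\det D_{n+1}\ge 0$ is equivalent to
\[
2v\;\ge\; u+6+\frac{u^2}{u-6}\;=\;2w+24+\frac{36}{w},\qquad w:=u-6>0 .
\]
By AM--GM the right-hand side is at least $24+12\sqrt2$, with equality iff $w=3\sqrt2$. Hence $a_{n+1}\ge a_n+12+6\sqrt2$. At equality, necessarily $b_{n+1}=a_n+6+3\sqrt2$, and there both diagonal entries are strictly positive.

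\textbf{Last paragraph.} The paper's $T_{n+1}=(a_n+3)\ket{+}\bra{+}+(b_{n+1}+3)\ket{-}\bra{-}$ is correct, and it admits no degenerate solution. The off-diagonal entry vanishes only for $b_{n+1}=a_n$, and then $(D_{n+1})_{11}=-3<0$. So the fallback is not needed.

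It would also be misguided. $b_{n+1}$ does not reappear at index $n+2$, since $T_{n+2}$ involves only $a_{n+1}$ and $b_{n+2}$. Moreover, the lemma concerns the single inequality at index $n+1$, so optimising jointly over several indices would prove a different statement.
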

The proof of this lemma is done in Lean and can be downloaded by clicking here: \textattachfile[color=0 0 1]{LemmaF1.lean}{\textcolor{blue}{\faDownload\ Download file}}
or by looking in the embedded files pane, depending on your PDF viewer.

\embedfile[desc={Proof of Lemma F.1}]{LemmaF1.lean}

Solving this recurrence gives
\begin{align*}
    a_n &= 1 + 2n(6+3\sqrt{2}) \\
    b_n &= \begin{cases} 1 + (2n-1)(6+3\sqrt{2}) & n>0 \\ 1 & n=0 \end{cases}
\end{align*}

Overall, we have the following invariant:
\begin{align*}
    A = \sum_{n=1}^\infty 2n(6+3\sqrt{2})\ket{n}\bra{n} \otimes \ket{0}\bra{0} + \sum_{n=1}^\infty (2n-1)(6+3 \sqrt{2}) \ket{n}\bra{n} \otimes \ket{1}\bra{1} + \identityOp + \zeroOp|_{\geq 0}.
\end{align*}

We can also show explicitly that $A$ is an upper-invariant of $\psi_\zeroOp$, i.e., $\psi_\zeroOp(A) \sqsubseteq A$:
\begin{align*}
    \Psi_\zeroOp(A) =& \sum_{n=0}^{\infty} \ket{n}\bra{n} \otimes \identityOp + \sum_{n=1}^\infty 2 \ket{n}\bra{n} \otimes \identityOp + ((\identityOp - \ket{0}\bra{0})\otimes H)S^\dagger \\
    & \left( \sum_{n=1}^\infty 2n(6+3\sqrt{2})\ket{n}\bra{n} \otimes \ket{0}\bra{0} + \sum_{n=1}^\infty (2n-1)(6+3 \sqrt{2}) \ket{n}\bra{n} \otimes \ket{1}\bra{1} + \identityOp + \zeroOp|_{\geq 0} \right) \\
    &S  ((\identityOp - \ket{0}\bra{0}) \otimes H) \\
    =& \sum_{n=0}^{\infty} \ket{n}\bra{n} \otimes \identityOp + \sum_{n=1}^\infty 2 \ket{n}\bra{n} \otimes \identityOp + ((\identityOp - \ket{0}\bra{0})\otimes H) \\
    &\left( \sum_{n=1}^\infty 2n(6+3\sqrt{2})\ket{n+1}\bra{n+1} \otimes \ket{0}\bra{0} + \sum_{n=1}^\infty (2n-1)(6+3 \sqrt{2}) \ket{n}\bra{n} \otimes \ket{1}\bra{1} + \identityOp \right) \\
    &((\identityOp - \ket{0}\bra{0}) \otimes H) +\zeroOp|_{\geq 0} \\
    =& \sum_{n=0}^{\infty} \ket{n}\bra{n} \otimes \identityOp + \sum_{n=1}^\infty 2 \ket{n}\bra{n} \otimes \identityOp + \sum_{n =1}^\infty\ket{n}\bra{n} \otimes \identityOp + \\
    &\sum_{n = 2}^\infty 2(n-1)(6+3\sqrt{2})\ket{n}\bra{n} \otimes \ket{+}\bra{+} + \sum_{n=1}^\infty (2n-1)(6+3 \sqrt{2}) \ket{n}\bra{n} \otimes \ket{-}\bra{-} + \zeroOp|_{\geq 0}\\
    =& \text{ } \identityOp + \sum_{n=1}^\infty 3 \ket{n}\bra{n} \otimes \identityOp - \sum_{n = 2}^\infty (6+3\sqrt{2})\ket{n}\bra{n} \otimes \ket{+}\bra{+} + (6+3 \sqrt{2}) \ket{1}\bra{1} \otimes \ket{-}\bra{-} +\\
    &\sum_{n = 2}^\infty (2n-1)(6+3\sqrt{2})\ket{n}\bra{n} \otimes \ket{+}\bra{+} + \sum_{n=2}^\infty (2n-1)(6+3 \sqrt{2}) \ket{n}\bra{n} \otimes \ket{-}\bra{-} +\zeroOp|_{\geq 0} \\
    =& \text{ } \identityOp + \sum_{n=1}^\infty 3 \ket{n}\bra{n} \otimes \identityOp - \sum_{n = 2}^\infty (6+3\sqrt{2})\ket{n}\bra{n} \otimes \ket{+}\bra{+} + (6+3 \sqrt{2}) \ket{1}\bra{1} \otimes \ket{-}\bra{-} +\\
    &\sum_{n=2}^\infty (2n-1)(6+3 \sqrt{2}) \ket{n}\bra{n} \otimes \identityOp  + \zeroOp|_{\geq 0}\\
    =& \text{ } \identityOp + \sum_{n=1}^\infty 3 \ket{n}\bra{n} \otimes \identityOp - \sum_{n = 2}^\infty (6+3\sqrt{2})\ket{n}\bra{n} \otimes \ket{+}\bra{+} + (6+3 \sqrt{2}) \ket{1}\bra{1} \otimes \ket{-}\bra{-} +\\
    &\sum_{n=2}^\infty (2n-1)(6+3 \sqrt{2}) \ket{n}\bra{n} \otimes \ket{0}\bra{0}+ \sum_{n=2}^\infty (2n-1)(6+3 \sqrt{2}) \ket{n}\bra{n} \otimes \ket{1}\bra{1} +\zeroOp|_{\geq 0}  \\
    =& \text{ } \identityOp + \sum_{n=1}^\infty 3 \ket{n}\bra{n} \otimes \identityOp - \sum_{n = 2}^\infty (6+3\sqrt{2})\ket{n}\bra{n} \otimes \left(\ket{+}\bra{+} + \ket{0}\bra{0}\right) + \\
    & (6+3 \sqrt{2}) \ket{1}\bra{1} \otimes \ket{-}\bra{-} + \sum_{n=2}^\infty 2n(6+3 \sqrt{2}) \ket{n}\bra{n} \otimes \ket{0}\bra{0}+ \\
    &\sum_{n=2}^\infty (2n-1)(6+3 \sqrt{2}) \ket{n}\bra{n} \otimes \ket{1}\bra{1} +\zeroOp|_{\geq 0}
\end{align*}
Now we show that $A - \Psi_\zeroOp(A) \sqsupseteq 0$:
\begin{align*}
    &A - \Psi_\zeroOp(A) \\
    =& \sum_{n=1}^\infty 2n(6+3\sqrt{2})\ket{n}\bra{n} \otimes \ket{0}\bra{0} + \sum_{n=1}^\infty (2n-1)(6+3 \sqrt{2}) \ket{n}\bra{n} \otimes \ket{1}\bra{1} + \identityOp + \zeroOp|_{\geq 0}- \\
    & \identityOp - \sum_{n=1}^\infty 3 \ket{n}\bra{n} \otimes \identityOp + \sum_{n = 2}^\infty (6+3\sqrt{2})\ket{n}\bra{n} \otimes (\ket{+}\bra{+} + \ket{0}\bra{0}) - (6+3 \sqrt{2}) \ket{1}\bra{1} \otimes \ket{-}\bra{-} - \\
    & \sum_{n=2}^\infty 2n(6+3 \sqrt{2}) \ket{n}\bra{n} \otimes \ket{0}\bra{0}- \sum_{n=2}^\infty (2n-1)(6+3 \sqrt{2}) \ket{n}\bra{n} \otimes \ket{1}\bra{1} - \zeroOp|_{\geq 0} \\
    =& \text{ } 2(6+3\sqrt{2})\ket{1}\bra{1} \otimes \ket{0}\bra{0} + (6+3 \sqrt{2}) \ket{1}\bra{1} \otimes \ket{1}\bra{1} + \\
    & \sum_{n = 2}^\infty (6+3\sqrt{2})\ket{n}\bra{n} \otimes (\ket{+}\bra{+} + \ket{0}\bra{0}) - \sum_{n=1}^\infty 3 \ket{n}\bra{n} \otimes \identityOp - (6+3 \sqrt{2}) \ket{1}\bra{1} \otimes \ket{-}\bra{-}\\
    =& \text{ } (6+3\sqrt{2})\ket{1}\bra{1} \otimes \ket{0}\bra{0} + (6+3 \sqrt{2}) \ket{1}\bra{1} \otimes \identityOp + \sum_{n = 2}^\infty (6+3\sqrt{2})\ket{n}\bra{n} \otimes (\ket{+}\bra{+} + \ket{0}\bra{0}) - \\
    & \sum_{n=1}^\infty 3 \ket{n}\bra{n} \otimes \identityOp - (6+3 \sqrt{2}) \ket{1}\bra{1} \otimes \ket{-}\bra{-} \\
    =& \text{ } (6+3\sqrt{2})\ket{1}\bra{1} \otimes \ket{0}\bra{0} + (6+3 \sqrt{2}) \ket{1}\bra{1} \otimes \ket{+}\bra{+} + \\
    & \sum_{n = 2}^\infty (6+3\sqrt{2})\ket{n}\bra{n} \otimes (\ket{+}\bra{+} + \ket{0}\bra{0}) - \sum_{n=1}^\infty 3 \ket{n}\bra{n} \otimes \identityOp \\
    =& \sum_{n = 1}^\infty (6+3\sqrt{2})\ket{n}\bra{n} \otimes (\ket{+}\bra{+} + \ket{0}\bra{0}) - \sum_{n=1}^\infty 3 \ket{n}\bra{n} \otimes \identityOp \\
    =& \sum_{n = 1}^\infty 3 \ket{n}\bra{n} \otimes \left((2+\sqrt{2})(\ket{+}\bra{+} + \ket{0}\bra{0}) - \identityOp \right) \sqsupseteq 0
\end{align*}

We have shown that $A$ is an upper-invariant of $\psi_\zeroOp$. Thus, we have $\ert{\while}{\zeroOp} \sqsubseteq A$ by Park induction. To get a bound on the expected runtime of the whole program, we have to consider the initialization of $c:= \ket{0}$:
By definition and monotonicity of $ert$, we have \begin{align*}
    &\ert{c:= \ket{0}; \while}{\zeroOp} \\
    =& \text{ } \ert{c:= \ket{0}}{\ert{\while}{\zeroOp}} \\
    \sqsubseteq& \text{ } \ert{c:= \ket{0}}{A} \\
    =&\sum_{n=1}^\infty 2n(6+3\sqrt{2})\ket{n}\bra{n} \otimes \identityOp + \zeroOp|_{\geq 0} + \identityOp
\end{align*}
thus we have $\expect{\ert{P_1}{\zeroOp}}{\ket{n} \otimes \psi} \leq 2n(6+3\sqrt{2}) +1$ for $n\geq 0$ and any state $\psi$.

Additionally, we show that $\ert{P_1}{\zeroOp}$ is not bounded from above by proving the lower bound $B = \sum_{n=0}^\infty n \ket{n}\bra{n} \otimes \identityOp$. This is done by considering $\ert{\while}{\zeroOp} = \bigvee_n F_n$ with
\begin{align*}
    F_0 &= \zeroOp \\
    F_{n+1} &= M_0^\dagger \zeroOp M_0 + M_1^\dagger \ert{S}{F_n} M_1 + \identityOp \\
    &= 3 \identityOp - 2 \ket{0}\bra{0} \otimes \identityOp + ((\identityOp - \ket{0}\bra{0})\otimes H)S^\dagger F_n S  ((\identityOp - \ket{0}\bra{0}) \otimes H).
\end{align*}

We define $G_k = \sum_{n=0}^{\infty} \min{\{n,k-1\}} \ket{n}\bra{n} \otimes \identityOp$ and show that $G_k \sqsubseteq F_k$ for all $k$ by induction:
\begin{itemize}
    \item $k=0$: $G_0 = \zeroOp \sqsubseteq F_0 = \zeroOp$
    \item $k \to k+1$: Assume $G_k \sqsubseteq F_k$. Then we have
    \begin{align*}
        F_{k+1} \sqsupseteq &\text{ } \identityOp + ((\identityOp - \ket{0}\bra{0})\otimes H)S^\dagger G_k S  ((\identityOp - \ket{0}\bra{0}) \otimes H) \\
        =& \text{ }\identityOp + \sum_{n=0}^{\infty} \min{\{n,k-1\}} \ket{n+1}\bra{n+1} \otimes \ket{+}\bra{+} + \sum_{n=1}^{\infty} \min{\{n,k-1\}} \ket{n}\bra{n} \otimes \ket{-}\bra{-} \\
        =&\text{ } \identityOp + \sum_{n=1}^{\infty} \min{\{n-1,k-1\}} \ket{n}\bra{n} \otimes \ket{+}\bra{+} + \sum_{n=1}^{\infty} \min{\{n,k-1\}} \ket{n}\bra{n} \otimes \ket{-}\bra{-} \\
        \sqsupseteq & \text{ } \identityOp + \sum_{n=1}^{\infty} \min{\{n-1,k-1\}} \ket{n}\bra{n} \otimes \identityOp \\
        \sqsupseteq & \sum_{n=0}^{\infty} \min{\{n,k\}} \ket{n}\bra{n} \otimes \identityOp  = G_{k+1}\\
    \end{align*}
\end{itemize}
As $G_k \sqsubseteq F_k$ for all $k$, we have $ B = \sum_{n=0}^{\infty} n \ket{n}\bra{n} \otimes \identityOp= \bigvee_{k=0}^\infty G_k \sqsubseteq \bigvee_{k=0}^\infty F_k = \ert{\while}{\zeroOp}$.
Again, we need to consider the initialization of $c:= \ket{0}$:
\begin{align*}
    &\ert{c:= \ket{0}; \while}{\zeroOp} \\
     =&\text{ }  \ert{c:= \ket{0}}{\ert{\while}{\zeroOp}} \\
    \sqsupseteq& \text{ } \ert{c:= \ket{0}}{B} = \sum_{n=1}^\infty n\ket{n}\bra{n} \otimes \identityOp + \identityOp \sqsupseteq B
\end{align*}

%% file: arxiv.bbl
\begin{thebibliography}{10}
\providecommand{\url}[1]{\texttt{#1}}
\providecommand{\urlprefix}{URL }
\providecommand{\doi}[1]{https://doi.org/#1}

\bibitem{expectationTransformer}
Avanzini, M., Moser, G., Pechoux, R., Perdrix, S., Zamdzhiev, V.: Quantum expectation transformers for cost analysis. In: Proceedings of the 37th Annual ACM/IEEE Symposium on Logic in Computer Science. LICS '22, Association for Computing Machinery, New York, NY, USA (2022). \doi{10.1145/3531130.3533332}

\bibitem{hardnessQuantitative}
Avanzini, M., Moser, G., Péchoux, R., Perdrix, S.: On the hardness of analyzing quantum programs quantitatively (2023), \url{https://arxiv.org/abs/2312.13657}

\bibitem{LicsNewPaper}
Barthe, G., Gao, M., Khan, J.K.A., Muis, M., Renison, I., Sakabe, K., Walter, M., Xu, Y., Yu, T., Zhou, L.: {Complete Relational Logic for Infinite-Dimensional Quantum Programs with Unbounded Assertions}. In: Faggian, C., Katoen, J.P. (eds.) 41st Annual Symposium on Logic in Computer Science (LICS 2026). Leibniz International Proceedings in Informatics (LIPIcs), vol.~380, pp. 15:1--15:28. Schloss Dagstuhl -- Leibniz-Zentrum f{\"u}r Informatik, Dagstuhl, Germany (2026). \doi{10.4230/LIPIcs.LICS.2026.15}

\bibitem{dualityTheorem}
Barthe, G., Gao, M., Khan, J.K.A., Muis, M., Renison, I., Sakabe, K., Walter, M., Xu, Y., Zhou, L.: {A Duality Theorem for Classical-Quantum States with Applications to Complete Relational Program Logics}  (10 2025), \url{https://arxiv.org/abs/2510.07051v1}

\bibitem{relationalHoareLogicTransport}
Barthe, G., Gao, M., Wang, T., Zhou, L.: Complete quantum relational hoare logics from optimal transport duality. In: 2025 40th Annual ACM/IEEE Symposium on Logic in Computer Science (LICS). pp. 884--925 (2025). \doi{10.1109/LICS65433.2025.00072}

\bibitem{lenaPOPL2023}
Batz, K., Kaminski, B.L., Katoen, J.P., Matheja, C., Verscht, L.: A calculus for amortized expected runtimes. Proc. ACM Program. Lang.  \textbf{7}(POPL) (Jan 2023). \doi{10.1145/3571260}

\bibitem{boyd2004convex}
Boyd, S., Vandenberghe, L.: Convex Optimization. Cambridge University Press (2004). \doi{10.1017/CBO9780511804441.016}

\bibitem{conway2000courseOperator}
Conway, J.B.: A Course in Operator Theory. Graduate Studies in Mathematics, American Mathematical Society (2000)

\bibitem{conway1994}
Conway, J.B.: A Course in Functional Analysis. Graduate Texts in Mathematics, Springer New York (2007), \url{https://doi.org/10.1007/978-1-4757-4383-8}

\bibitem{DENG202273}
Deng, Y., Feng, Y.: Formal semantics of a classical-quantum language. Theoretical Computer Science  \textbf{913},  73--93 (2022), \url{https://doi.org/10.1016/j.tcs.2022.02.017}

\bibitem{DHondtWeakestPreconditions}
D'Hondt, E., Panangaden, P.: Quantum weakest preconditions p. 429–451 (2006). \doi{10.1017/S0960129506005251}

\bibitem{Dijkstra75}
Dijkstra, E.W.: Guarded commands, nondeterminacy and formal derivation of programs. Commun. ACM  \textbf{18}(8),  453–457 (1975). \doi{10.1145/360933.360975}

\bibitem{Dijkstra76}
Dijkstra, E.W.: A Discipline of Programming. Prentice-Hall (1976)

\bibitem{FengNondeterministicQuantumVerification}
Feng, Y., Xu, Y.: Verification of nondeterministic quantum programs. In: Proceedings of the 28th ACM International Conference on Architectural Support for Programming Languages and Operating Systems, Volume 3. p. 789–805. ASPLOS 2023, Association for Computing Machinery, New York, NY, USA (2023). \doi{10.1145/3582016.3582039}

\bibitem{FengQHLClassicalVars}
Feng, Y., Ying, M.: Quantum {H}oare logic with classical variables. ACM Transactions on Quantum Computing  \textbf{2}(4) (2021). \doi{10.1145/3456877}

\bibitem{krausChoi}
Friedland, S.: Infinite dimensional generalizations of {C}hoi’s theorem. Special Matrices  \textbf{7},  67--77 (07 2019). \doi{10.1515/spma-2019-0006}

\bibitem{bayesianInf}
Gehnen, C., Unruh, D., Katoen, J.P.: {Bayesian Inference in Quantum Programs}. In: Censor-Hillel, K., Grandoni, F., Ouaknine, J., Puppis, G. (eds.) 52nd International Colloquium on Automata, Languages, and Programming (ICALP 2025). Leibniz International Proceedings in Informatics (LIPIcs), vol.~334, pp. 157:1--157:18. Schloss Dagstuhl -- Leibniz-Zentrum f{\"u}r Informatik, Dagstuhl, Germany (2025). \doi{10.4230/LIPIcs.ICALP.2025.157}

\bibitem{hall2013}
Hall, B.C.: Quantum Theory for Mathematicians. No.~267 in Graduate Texts in Mathematics, Springer New York, \url{https://doi.org/10.1007/978-1-4614-7116-5}

\bibitem{hoare69}
{H}oare, C.A.R.: An axiomatic basis for computer programming. Commun. ACM  \textbf{12}(10),  576–580 (Oct 1969). \doi{10.1145/363235.363259}

\bibitem{probRuntime}
Kaminski, B.L., Katoen, J.P., Matheja, C., Olmedo, F.: Weakest precondition reasoning for expected run--times of probabilistic programs. In: Thiemann, P. (ed.) Programming Languages and Systems. pp. 364--389. Springer Berlin Heidelberg, Berlin, Heidelberg (2016), \url{https://doi.org/10.1007/978-3-662-49498-1_15}

\bibitem{ProbERTJournal}
Kaminski, B.L., Katoen, J.P., Matheja, C., Olmedo, F.: Weakest precondition reasoning for expected runtimes of randomized algorithms. J. ACM  \textbf{65}(5) (Aug 2018). \doi{10.1145/3208102}

\bibitem{kato}
Kat{\=o}, T.: Perturbation Theory for Linear Operators. Die Grundlehren der mathematischen Wissenschaften in Einzeldarstellungen, Springer-Verlag (1966), \url{https://books.google.de/books?id=N_HysgEACAAJ}

\bibitem{KOZEN1985162}
Kozen, D.: A probabilistic {PDL}. Journal of Computer and System Sciences  \textbf{30}(2),  162--178 (1985). \doi{https://doi.org/10.1016/0022-0000(85)90012-1}

\bibitem{ExpectationValuesKraus}
Kraus, K., Schr{\"o}ter, J.: Expectation values of unbounded observables. International Journal of Theoretical Physics  \textbf{7}(6),  431--442 (1973). \doi{10.1007/BF00713245}, \url{https://doi.org/10.1007/BF00713245}

\bibitem{LiuRuntime}
Liu, J., Zhou, L., Barthe, G., Ying, M.: Quantum weakest preconditions for reasoning about expected runtimes of quantum programs. In: Proceedings of the 37th Annual ACM/IEEE Symposium on Logic in Computer Science. LICS '22, Association for Computing Machinery, New York, NY, USA (2022). \doi{10.1145/3531130.3533327}

\bibitem{McIverWpProb}
McIver, A., Morgan, C.: Abstraction, Refinement and Proof for Probabilistic Systems. Monographs in Computer Science, Springer (2005). \doi{10.1007/b138392}

\bibitem{Nielsen_Chuang_2010}
Nielsen, M.A., Chuang, I.L.: Quantum Computation and Quantum Information: 10th Anniversary Edition. Cambridge University Press (2010). \doi{10.1017/CBO9780511976667}

\bibitem{olmedoRuntime}
Olmedo, F., Díaz-Caro, A.: Runtime analysis of quantum programs: A formal approach (2019), \url{https://arxiv.org/abs/1911.11247}

\bibitem{park}
Park, D.: Fixpoint induction and proofs of program properties. Machine Intelligence (5) (1969)

\bibitem{reed1980methods}
Reed, M., Simon, B.: Methods of Modern Mathematical Physics: Functional analysis. No. Bd. 1 in Methods of Modern Mathematical Physics, Academic Press (1980), \url{https://books.google.de/books?id=bvuRuwuFBWwC}

\bibitem{AST78}
Saheb-Djahromi, N.: Probabilistic lcf. In: Winkowski, J. (ed.) Mathematical Foundations of Computer Science 1978. pp. 442--451. Springer Berlin Heidelberg, Berlin, Heidelberg (1978), \url{https://doi.org/10.1007/3-540-08921-7_92}

\bibitem{unboundedSelfAdjointBook}
Schmüdgen, K.: Unbounded Self-adjoint Operators on Hilbert Space. Springer Dordrecht (2012). \doi{10.1007/978-94-007-4753-1}

\bibitem{philippHigherMoments}
Schr{\"o}er, P., Katoen, J.P.: Highly incremental: A simple programmatic approach for many objectives. In: Sampaio, A., Stoelinga, M. (eds.) Formal Methods. pp. 559--577. Springer Nature Switzerland, Cham (2026), \url{https://doi.org/10.1007/978-3-032-26204-2_29}

\bibitem{takesaki1979theory}
Takesaki, M.: Theory of Operator Algebras I. No. Bd. 1 in Encyclopaedia of Mathematical Sciences, Springer New York (1979), \url{https://doi.org/10.1007/978-1-4612-6188-9}

\bibitem{isabelleproofTraceclassSum}
Unruh, D.: The tensor product on {H}ilbert spaces. Arch. Formal Proofs  \textbf{2024}, \url{https://www.isa-afp.org/entries/Hilbert\_Space\_Tensor\_Product.html}, {L}emma Trace\_Class.trace\_class\_decomp\_4pos

\bibitem{heisenbergdualityUnruh}
Unruh, D.: Quantum references  (2024), \url{https://arxiv.org/abs/2105.10914}

\bibitem{isabelleproofKraus}
Unruh, D.: Kraus maps. Arch. Formal Proofs  (June 2025), \url{https://isa-afp.org/entries/Kraus\_Maps.html}, {L}emma kraus-map-infsum

\bibitem{floydHoareLogic}
Ying, M.: Floyd--{H}oare logic for quantum programs. ACM Trans. Program. Lang. Syst.  (2012). \doi{10.1145/2049706.2049708}

\bibitem{YingPredicateTranserformerSemantics}
Ying, M., Duan, R., Feng, Y., Ji, Z.: Predicate transformer semantics of quantum programs. Semantic Techniques in Quantum Computation  (2010). \doi{10.1017/CBO9781139193313.009}

\bibitem{ZhouAppliedQHL}
Zhou, L., Yu, N., Ying, M.: An applied quantum {H}oare logic. In: Proceedings of the 40th ACM SIGPLAN Conference on Programming Language Design and Implementation. p. 1149–1162. PLDI 2019, Association for Computing Machinery, New York, NY, USA (2019). \doi{10.1145/3314221.3314584}

\end{thebibliography}
